\begin{document}
\newtheorem{cor}{Corollary}[section]
\newtheorem{theorem}[cor]{Theorem}
\newtheorem{prop}[cor]{Proposition}
\newtheorem{lemma}[cor]{Lemma}
\newtheorem{sublemma}[cor]{Sublemma}
\theoremstyle{definition}
\newtheorem{defi}[cor]{Definition}
\theoremstyle{remark}
\newtheorem{remark}[cor]{Remark}
\newtheorem{example}[cor]{Example}

\newcommand{\cD}{{\mathcal D}}
\newcommand{\cF}{{\mathcal F}}
\newcommand{\cM}{{\mathcal M}}
\newcommand{\cN}{{\mathcal N}}
\newcommand{\cT}{{\mathcal T}}
\newcommand{\cCP}{{\mathcal C\mathcal P}}
\newcommand{\cML}{{\mathcal M\mathcal L}}
\newcommand{\cFML}{{\mathcal F\mathcal M\mathcal L}}
\newcommand{\cGH}{{\mathcal G\mathcal H}}
\newcommand{\cQF}{{\mathcal Q\mathcal F}}
\newcommand{\C}{{\mathbb C}}
\newcommand{\N}{{\mathbb N}}
\newcommand{\R}{{\mathbb R}}
\newcommand{\Z}{{\mathbb Z}}
\newcommand{\At}{\tilde{A}}
\newcommand{\Kt}{\tilde{K}}
\newcommand{\Mt}{\tilde{M}}
\newcommand{\dr}{{\partial}}
\newcommand{\betab}{\overline{\beta}}
\newcommand{\kappab}{\overline{\kappa}}
\newcommand{\pib}{\overline{\pi}}
\newcommand{\taub}{\overline{\tau}}
\newcommand{\ub}{\overline{u}}
\newcommand{\Sigmab}{\overline{\Sigma}}
\newcommand{\gd}{\dot{g}}
\newcommand{\diff}{\mbox{Diff}}
\newcommand{\dev}{\mbox{dev}}
\newcommand{\devb}{\overline{\mbox{dev}}}
\newcommand{\devt}{\tilde{\mbox{dev}}}
\newcommand{\vol}{\mbox{Vol}}
\newcommand{\hess}{\mathrm{Hess}}
\newcommand{\cb}{\overline{c}}
\newcommand{\db}{\overline{\partial}}
\newcommand{\Sigmat}{\tilde{\Sigma}}

\newcommand{\cunc}{{\mathcal C}^\infty_c}
\newcommand{\cun}{{\mathcal C}^\infty}
\newcommand{\dd}{d_D}
\newcommand{\dmin}{d_{\mathrm{min}}}
\newcommand{\dmax}{d_{\mathrm{max}}}
\newcommand{\Dom}{\mathrm{Dom}}
\newcommand{\dn}{d_\nabla}
\newcommand{\ded}{\delta_D}
\newcommand{\delmin}{\delta_{\mathrm{min}}}
\newcommand{\delmax}{\delta_{\mathrm{max}}}
\newcommand{\hyp}{\mathbb{H}}
\newcommand{\hmin}{H_{\mathrm{min}}}
\newcommand{\maxi}{\mathrm{max}}
\newcommand{\oL}{\overline{L}}
\newcommand{\oP}{{\overline{P}}}
\newcommand{\xb}{{\overline{x}}}
\newcommand{\yb}{{\overline{y}}}
\newcommand{\Ran}{\mathrm{Ran}}
\newcommand{\tgamma}{\tilde{\gamma}}
\newcommand{\cotan}{\mbox{cotan}}
\newcommand{\area}{\mbox{Area}}
\newcommand{\lambdat}{\tilde\lambda}
\newcommand{\xt}{\tilde x}
\newcommand{\Ct}{\tilde C}
\newcommand{\St}{\tilde S}
\newcommand{\tr}{\mbox{\rm tr}}
\newcommand{\tgh}{\mbox{th}}
\newcommand{\sh}{\mathrm{sinh}\,}
\newcommand{\ch}{\mathrm{cosh}\,}
\newcommand{\grad}{\mathrm{grad}}
\newcommand{\Iso}{\mathrm{Isom}}
\newcommand{\hc}{H\C^3}
\newcommand{\cp}{\C P^1}
\newcommand{\rp}{\R P^1}
\newcommand{\ret}{\mathbf r}
\newcommand{\fut}{\mathrm{I}^+}
\newcommand{\past}{\mathrm{I}^-}

\newcommand{\II}{I\hspace{-0.1cm}I}
\newcommand{\III}{I\hspace{-0.1cm}I\hspace{-0.1cm}I}
\newcommand{\note}[1]{{\small {\color[rgb]{1,0,0} #1}}}

\newcommand{\ZZ}{\mathbb{Z}}
\newcommand{\QQ}{\mathbb{Q}}
\newcommand{\RR}{\mathbb{R}}
\newcommand{\CC}{\mathbb{C}}
\newcommand{\MM}{\mathbb{M}}
\newcommand{\NN}{\mathbb{N}}

\newcommand{\skp}[2]{\langle {{#1}} , {{#2}}\rangle}

\title{Recovering the geometry of a flat spacetime from background radiation}

\author{Francesco Bonsante}
\address{FB: Universit\`a degli Studi di Pavia\\
Via Ferrata, 1\\
27100 Pavia, Italy}
\email{bonsante@sns.it}
\thanks{F.B. is partially supported by the A.N.R. through project Geodycos.}
\author{Catherine Meusburger}
\address{CM: Department Mathematik,  Friedrich-Alexander Universit\"at  Erlangen-N\"urnberg, Cauerstr.~11, 91058 Erlangen, Germany}
\email{catherine.meusburger@math.uni-erlangen.de}
\thanks{C.M. was supported by the DFG Emmy-Noether fellowship ME 3425/1-1.}
\author{Jean-Marc Schlenker}
\thanks{J.-M. S. was partially supported by the A.N.R. through projects
GeomEinstein, ANR-09-BLAN-0116-01 and ETTT, ANR-09-BLAN-0116-01, 2009-13.}
\address{JMS: University of Luxembourg, Campus Kirchberg,
Mathematics Research Unit, BLG 6, rue Richard Coudenhove-Kalergi,
L-1359 Luxembourg}
\email{jean-marc.schlenker@uni.lu}
\date{v2, June 2013}

\begin{abstract}
We consider  globally hyperbolic flat spacetimes in 2+1 and 3+1 dimensions, in which a uniform light signal is
emitted on the $r$-level surface of the cosmological time for $r\to 0$.
We show that the frequency of this signal, as perceived by a fixed observer,
is a well-defined, bounded function which is
generally not continuous.  This defines a  model with anisotropic background radiation that contains information about initial singularity of the spacetime. 
In dimension $2+1$, we show that this observed frequency
function is stable under suitable perturbations of the spacetime, and that,
under certain conditions, it contains sufficient information to recover
its geometry and topology. We compute an approximation
of this frequency function for  a few simple examples.
\end{abstract}

\maketitle

\tableofcontents

\section{Introduction}


\subsection{Motivation}

There is considerable interest in  the cosmic  background radiation
as an indicator of the history and structure of the universe.
Its anisotropy is explained by quantum fluctuations
early in the history of the universe, whose classical remnants  became visible
 when the universe became transparent to electromagnetic radiation after decoupling. Background radiation has also been used  to determine the topology of the universe
  \cite{Gott:1986uz, cornish, gott, luminet:2003}. 
  
  In the framework of general relativity, such measurements of cosmic background radiation are described by a spacetime  in which light signals are emitted near the initial singularity and received by an observer.   
  The question to what degree  the geometry of a  spacetime can be reconstructed from such  light signals 
  is of interest both from the mathematics and the physics perspective. However, it has not been investigated systematically yet,  even for simple
  examples such as constant curvature spacetimes  or lower-dimensional models.

A further motivation to investigate the  properties of such light signals are possible applications  in 2+1 (and higher-dimensional) quantum gravity. 
To give a physical interpretation to a quantum theory of gravity, it is essential to relate 
 the variables that describe the physical  phase space of the theory and serve as the basic building blocks in quantization to concrete measurements that could be performed by an observer.
 From a mathematics perspective, this amounts to a (partial) classification of spacetimes  in terms of light signals and other general relativistic quantities measured by observers or, equivalently, to reconstructing the geometry of the spacetime from such measurements.

In this article, we investigate this question  for a  class of simple examples, namely flat globally hyperbolic spacetimes
in  3+1 and 2+1 dimensions. We consider 
a uniform  light signal that is emitted from a hypersurface of constant cosmological
time $\epsilon$  where  $\epsilon \to 0$.  
The observer  who receives the signal at a cosmological time $T>\epsilon$ is modeled by a point 
$p\in M$ of cosmological time $T$  and a unit, future-oriented timelike vector $v$, which specifies his velocity.  
The frequency of the light signal measured by the observer then defines  a frequency function on, respectively,  $S^2$ and $S^1$, which depends on both the spacetime $M$ and the observer $(p,v)$.

If the spacetime $M$ is conformally static, i.~e.~characterized by  a linear 
 holonomy representation, then the associated frequency function is isotropic,
and  contains no relevant  information on $M$. However,  as soon as the holonomy
representation of $M$ has a non-trivial translation component, which corresponds to a universe whose geometry changes with the cosmological time, 
the frequency  function  contains essential  information about the underlying spacetime. Under certain
conditions, this information allows the observer to recover the geometry and topology of $M$ as well as his motion  relative to the initial singularity.

The models under consideration in this article  are  unrealistic insofar as they are purely classical --- we do not consider any quantum fluctuations near the initial singularity --- and as their initial singularities  are  not of the same type as those in cosmological models.  However,  they  allow one to investigate the mathematical properties of such light signals emitted near the initial singularity systematically for a large class of spacetimes.  Moreover, it  turns out that
 the resulting frequency functions have rich and subtle properties and contain interesting information about the underlying spacetimes.  In particular, they exhibit anisotropies, which are due  entirely to the classical geometry of the spacetime $M$ or, more specifically, the initial singularity of the universal cover
of $M$, which is a domain of dependence in Minkowski space. 

The model spaces considered here could actually be more relevant than they might appear at first sight.
Indeed, Mess \cite{mess} proved that {\it any} maximal globally hyperbolic Minkowski space of dimension 2+1
is of this form. Moreover, any globally hyperbolic flat space-like slice must embed isometrically into one of
those maximal globally hyperbolic Minkowski space-times. So, if we consider for instance a universe $U$
with a very strong curvature at small cosmological time, but suppose that this curvature decreases fast
enough so that it can be considered zero outside a neighborhood $\Omega$ of the initial singularity, then 
$U\setminus \Omega$ will embed in a globally hyperbolic Minkowski manifold as studied here. The presence
of the curvature might of course change the signal emitted close to the initial singularity but, if 
the curved region $\Omega$ is thin enough, it is conceivable that the model used here remains relevant.

An additional reason to investigate flat globally hyperbolic spacetimes  is  their role in 2+1 gravity. This theory plays an essential role as a toy model for quantum gravity in higher dimensions (see \cite{carlipbook} and references therein) because it allows one to investigate  important questions of quantum gravity in a fully quantized theory. 
The  classification result by Mess \cite{mess} implies  that any globally hyperbolic vacuum solution  Einstein's equations in 2+1 dimensions is a flat globally hyperbolic spacetime of the type considered in this article and can be characterized in terms of its holonomy representation. The holonomies associated with closed curves in $M$ are diffeomorphism invariant observables and serve as
the fundamental building blocks in the quantization of the theory.

Characterizing the holonomy representation of a spacetime in terms of light signals measured by an observer thus  allows one to give a physical interpretation to  these variables  and to model cosmological measurements. The characterization
of  observables of 2+1 gravity in terms  of light signals has been explored to some degree 
in \cite{meusburger:cosmological}, but the  measurements of background radiation considered in this article provide a richer and more realistic model.

\subsection{The frequency function}

After  recalling the relevant
 background material on flat maximally globally hyperbolic spacetimes and their description in terms of domains of dependence in   Section  \ref{sc:background}, we  introduce
 the  rescaled frequency function of a domain of dependence $\Mt$ in Section \ref{sc:light}. 
The rescaled frequency function is defined 
 in 
Section \ref{ssc:defi}. It is given  in terms of the limit  $\epsilon\to 0$ of a uniform  light signal emitted from the
hypersurface of cosmological time $T=\epsilon$ of $\Mt$ and received by a free-falling observer in the spacetime. Sections \ref{ssc:ex1} and \ref{ssc:ex2} contain an explicit description of the frequency function for the two basic examples, namely 
the future of a point and the future of a spacelike line.  

These two basic examples  are the  building blocks in the analysis of the frequency function for a general domain $\Mt$, whose properties are investigated in 
Section \ref{ssc:basic}. The central result is Proposition \ref{pr:conv-intensity}, which  asserts that the frequency function is well-defined and
locally bounded. It should be stressed that even at this  point, the mathematical analysis
of the frequency function is not as simple as it may appear at first sight, and some care is needed.
The situation is simpler for domains $\Mt$ whose initial singularity  is closed, which are investigated  in Section \ref{ssc:closedsing}.  In this case,  the associated frequency function is
continuous. Note, however, that many relevant examples are not of this type.

Section \ref{ssc:generic21} analyses the properties of the frequency functions for  generic domains in 2+1  dimensions. This case is more accessible than its 3+1-dimensional counterpart due to a simple description of
domains of dependence, discovered by Mess \cite{mess}, in terms of a measured lamination 
on the hyperbolic plane. Domains of dependence, which are the universal covers of globally
hyperbolic flat spacetimes, are obtained from  measured geodesic laminations on closed
hyperbolic surfaces.  By using these results we prove (Proposition \ref{pr:meagre}) that the frequency
function is lower semi-continuous, and that its discontinuity set is meagre.

\subsection{Stability}

In Section \ref{sc:stability} we investigate  the stability properties of the frequency function.
We analyze the variation of the frequency function under small deformations 
 of the domain of dependence $\Mt$  and changes of the observer. 
Note that stability of the frequency function, at least with respect to small changes of the observer, is a minimum requirement for  assigning any  physical meaning to it.  
 
Again, this question turns out to be more subtle than it appears and 
some care is needed in the analysis. This is illustrated in Section \ref{ssc:example}, where
we show in a very simple example that if a domain of dependence $\Mt$ is the limit (in the
Hausdorff sense) of a decreasing sequence of {\it finite domains} $\Mt_n$ (domains which are the 
intersection of the futures of a finite set of lightlike planes) the frequency function of $\Mt$ does not necessarily coincide with  the limit of the frequency functions of the domains $\Mt_n$. 

With this example in mind, we  introduce in Section \ref{ssc:flat} a notion of domain
of dependence with a {\it flat boundary}.  Important examples of this are 
 finite domains, which always have flat boundary, and universal covers of globally
hyperbolic flat spacetimes in 2+1 dimensions  (Proposition \ref{pr:2+1flat}).
We prove (Theorem \ref{stability:thrm}) that if a sequence of domains with flat boundary
converges to a domain with flat boundary, then the limit of the frequency functions the frequency function  of the limit.

If a domain $D$ does not have flat boundary, and if $(D_n)_{n\in \N}$ is sequence of domains with
flat boundary converging to $D$, then the sequences of the frequency functions $\iota_n$ of $D_n$ always converges
to a limit frequency function $\iota_{lim}$. This limit frequency function is not necessarily equal to the frequency function $\iota$ of $D$, but it is
independent of the sequence $(D_n)_{n\in\N}$. We prove (Theorem \ref{tm:non-flat}) that frequency  function $\iota$ is
at least equal to the limit frequency function  $\iota_{lim}$, and at most equal to $d\,\iota_{lim}$, where $d$ is the dimension of the spacetime. In particular, this
 shows that the example of Section \ref{ssc:example}  exhibits the worst possible behavior with respect to this limit, as  the ratio of the frequency function of the domain to its limit frequency is the largest
possible in dimension $2+1$.

\subsection{Recovering the spacetime geometry and topology}

In Section \ref{sc:2+1} we turn to the question of reconstructing the geometry and topology 
of a globally hyperbolic flat spacetime $M$ from the frequency  function of the background
radiation as seen by an observer.  We investigate this question in 2+1 dimensions, and there 
 are two basic remarks regarding the general situation.
\begin{itemize}
\item  Reconstructing the geometry or topology of the spacetime from the observed frequency function is only possible for observers that receive the light signal  at a sufficiently  large cosmological
time. If the observer is too close to the initial singularity,  he might see
only a small part of $M$ 
and could  infer little from the observed background radiation.
\item The observer can only determine parts of  the initial singularity of the universal cover $\Mt$ of $M$. Therefore, there is no way for the observer to be sure, 
at any given time, that what he observes is really the topology of $M$. It could happen that
$M$ is ``almost'' a finite cover of a globally hyperbolic flat spacetime $M'$, with only a
tiny difference in a part not ``seen'' by the observer. In this case the observer could
only conclude that the spacetime is either $M'$ or one of its finite covers.
\end{itemize}

To obtain results we make a (presumably) technical assumption, which simplifies the 
situation to some extend. We only
consider spacetimes obtained by ``grafting'' a hyperbolic surface along a {\it rational}
measured lamination, that is, a measured lamination with support on a finite set of 
simple closed curves. Under those hypothesis, we prove (Proposition \ref{pr:rational}) that
the observer can reconstruct the part of the lamination corresponding to the part
of the initial singularity that intersects his past. We also prove (Proposition 
\ref{pr:construct}) that the observer can  reconstruct the
whole geometry and topology of the spacetime in finite eigentime up to the above-mentioned problem with  finite covers. 

\subsection{Computations for examples}

In Section \ref{sc:2+1} we present explicit computations for the frequency function seen by an observer  for  three different globally hyperbolic flat 2+1-dimensional spacetimes. These spacetimes 
are chosen for their simplicity. Two are obtained by grafting a hyperbolic surface along
a rational measured lamination, the third by grafting along an irrational lamination. 
For each of those spacetimes, we provide pictures of the frequency function as seen by an observer
located at different points in the spacetime. This allows one to observe explicitly the variation of the frequency function depending on the cosmological time.

In dimension $3+1$, we only consider one example, described in Section \ref{sc:3+1}. This is due to the fact that 3+1-dimensional
globally hyperbolic flat spacetimes are much more difficult to construct than their $2+1$-dimensional counterparts. In both cases, they are associated to first-order deformations of the flat conformal
structure underlying a hyperbolic manifold. However,  hyperbolic manifolds are flexible in
dimension $2$, while they are rigid in dimension $3$. Consequently, it becomes more difficult  to find an adequate deformation
cocycle in dimension $3+1$. The example considered  in
Section \ref{ssc:apanasovex} is due to Apanasov \cite{apanasov:deformations}, and it has
the relatively rare property of admitting several distinct deformation cocycles.
We provide some pictures of the frequency function measured by an observer in a spacetime constructed
from this example.

\subsection{Possible extensions}

In this article, we consider only flat globally hyperbolic spacetimes. However,  it should be possible to perform a similar analysis for globally hyperbolic  de Sitter or anti-de Sitter spacetimes, 
which have a similar structure, at least with respect to the geometry of their initial singularity.


\section{Globally hyperbolic Minkowski space-times}
\label{sc:background}

\subsection{Minkowski space and domain of dependences}
\label{ssc:minkowski}

In this section, we collect some  properties of Minkowski space and refer the reader to \cite{mess, bonsante}
for details.
Minkowski space in $n+1$ dimensions, denoted  $\mathbb R^{1,n}$ in the following, 
is the manifold $\mathbb R^{n+1}$ equipped with the flat Lorentzian form
   $ \eta=-dx_0^2+dx_1^2+\ldots+dx_n^2$,
often referred to as  Minkowski metric. 

\subsubsection*{Isometry group}

Isometries of Minkowski space are affine transformations of
  $\mathbb R^{n+1}$ whose linear part preserves the Minkowski
  metric. We denote by $O(1,n)$ the group of linear transformations
  of $\mathbb R^{n+1}$ which  preserve the Minkowski metric (Lorentz
  group in $n+1$ dimensions) and by $\Iso(n,1)$ the group of isometries of Minkowski space
  (Poincar\'e group in $n+1$ dimensions).  $O(n,1)$ is a ${n(n+1)}/{2}$-dimensional
  Lie group with four connected components, and  we denote by $SO^+(n,1)$
  its identity component, which  contains linear orthochronous
  transformations with positive determinant.  The
  dimension of $\Iso(n,1)$ is
  ${n(n+1)}/{2}+(n+1)={(n+1)(n+2)}/{2}$, and for $n\geq 3$ this group has four connected
  components. The identity component, denoted  $\Iso_0(1,n)$,
  contains the transformations that preserve both the orientation and the
  time orientation.

\subsubsection*{Flat spacetimes}

It is well-known that every flat spacetime is locally modeled
  on Minkowski space.  For globally hyperbolic flat spacetimes, a more
  precise result holds (see \cite{mess,mess-notes}).  For every flat spacetime  $M$ with a closed Cauchy surface,  there is a discrete group
  of isometries $\Gamma\subset \Iso_0(n,1)$ and a convex domain
  $ D\subset\mathbb R^{1,n}$ such that $ D$ is
  $\Gamma$-invariant and $M$ embeds into the quotient $ D/\Gamma$.
The domain $ D$ is a domain of dependence, in the sense that it
is the intersection of the futures of one or more lightlike planes.  Domains of
dependence play an essential role in  this paper, and will be described in more detail  
 below. The quotient space
$ D/\Gamma$ is called a maximal globally hyperbolic flat
space-time with compact Cauchy surfaces, for which we use the acronym
MGHFC.

\subsubsection*{Hyperbolic representations}

The unit timelike vectors in $\mathbb R^{1,n}$ form a smooth
  hypersurface, $H\subset \mathbb R^{1,n}$, which contains two connected
  components: the component  $H^+$ that contains future oriented unit vectors, and $H^-$
  that contains past oriented unit vectors.  Both $H^+$ and $H^-$ are
  achronal spacelike smooth surfaces. The Minkowski metric induces a Riemannian metric of constant 
  curvature $-1$ on $H^+$ and $H^-$. Equipped with this metric $H^+$ and $H^-$  are isometric to the  $n$-dimensional hyperbolic space $\mathbb H^n$.  
The group $SO^+(1,n)$ acts by isometries on $H^+$, and it is identified
with the identity component of the isometry group of $H^+$.
Every geodesic of $H^+$ is given as the
intersection of $H^+$ with a timelike linear $2$-dimensional plane.

\subsection{Domains of dependence}

A domain of dependence (called regular domain in \cite{bonsante}) $ D$ is
a convex domain of $\mathbb R^{1,n}$ that is given as the
intersection of the future (or the past) of a number of lightlike $n$-planes. We will
exclude two limit cases: the whole space and the future of a single
lightlike $n$-plane.  In other words,  we  require that $\mathbb
R^{1,n}\setminus D$ contains at least two non-parallel lightlike
$n$-planes.

Simple examples of domains of dependence are the future of a point, or
the future of a spacelike $(n-1)$-plane, whereas   the future of a
spacelike $n$-plane is not a domain of dependence.  Examples with interesting geometrical properties are
the universal covers of maximal globally hyperbolic flat manifolds with compact Cauchy
surface (MGHFC manifolds). Figure \ref{fig:dd} shows two
of those more complex examples, corresponding to the domains of dependence
described in sections \ref{sssc:torus1} and \ref{sssc:torus4}.

\begin{figure}[ht]
  \begin{center}
  \includegraphics[width=8cm]{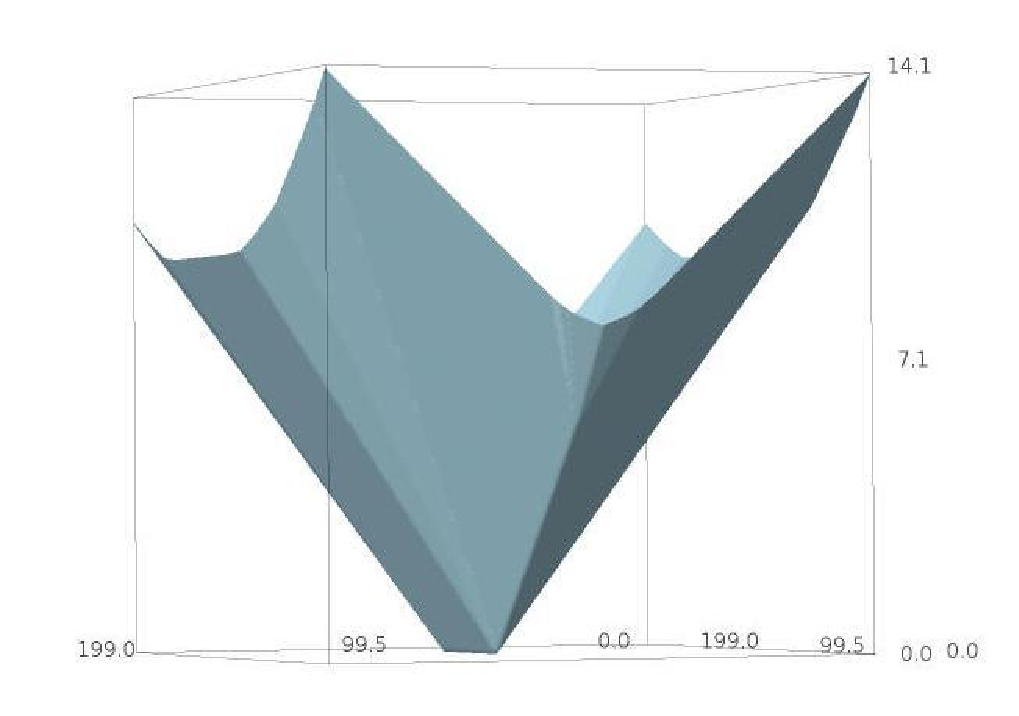}
  \includegraphics[width=8cm]{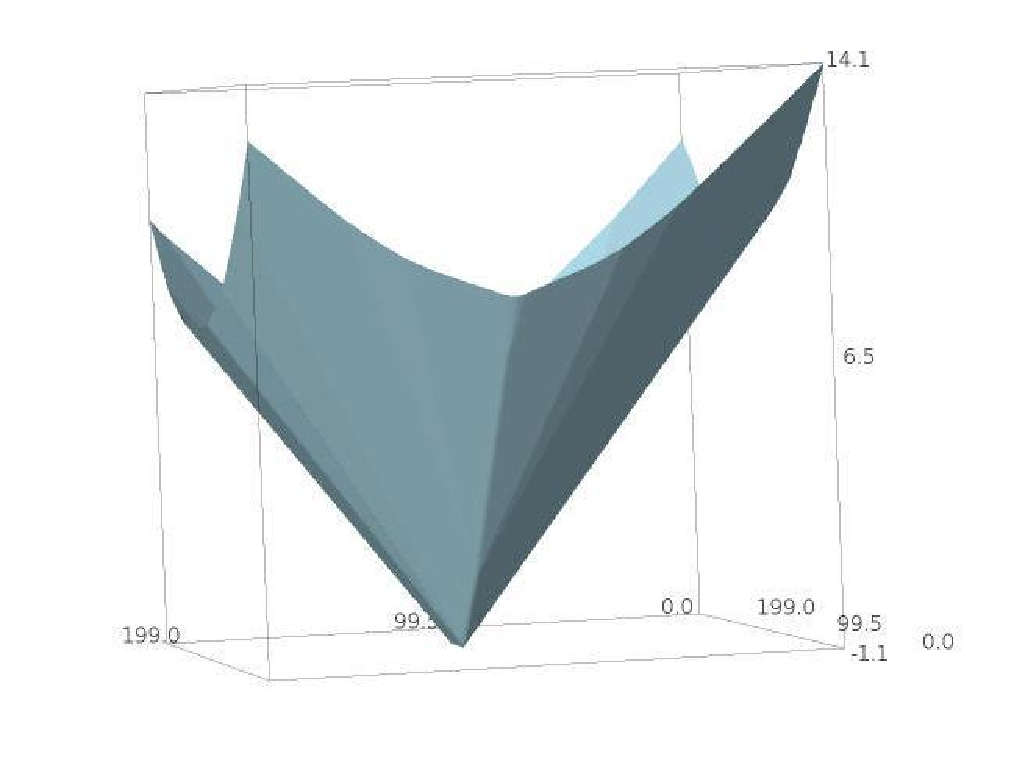}
  \caption{Two examples of domains of dependence}
  \label{fig:dd}
  \end{center}
\end{figure}

Let us recall that for any Lorentzian manifold $M$, the cosmological
time is a function $\tau:M\rightarrow(0,+\infty]$ whose value  at a point
$p\in M$ is the supremum of the length of causal curves in $M$
ending at $p$:
\[
   \tau(p)=\sup\{\ell(c)|c\textrm{ is a casual curve ending at }p\}\,.
\]
One of the main features of domains of dependence is that their
cosmological time is a regular function. This means that $\tau$ is
finite-valued and $\mathrm C^{1,1}$.  In fact, if $ D$ is a domain
of dependence and $p\in  D$, there is a unique point $r=\ret(p)\in\partial
 D\cap\past(p)$ such that $\tau(p)=|p-r|$.  Level surfaces
$H_a=\tau^{-1}(a)$ of the cosmological time are spacelike Cauchy surfaces, and their normal vector  at a point  $p\in  H_a$ is the vector $p-\ret(p)$.

\begin{example}\label{domainex} $\quad$

\begin{itemize}
\item
If $ D$ is the future of a point $r_0$, then $\ret(p)=r_0$ for all $p\in D$, and
the cosmological time $\tau(p)$ coincides with the distance of $p$  from $r_0$.
In this case, the cosmological time is smooth (real analytic in fact), 
and the induced metric on the level surface  $ H_a$ has constant curvature $-1/a^2$.

\item
If $ D$ is the future of a spacelike affine plane $l_0$ of dimension $k\leq n-1$
then $ D$ is a domain of dependence. For $p\in D$, 
$\ret(p)$ is the intersection point of $l_0$ with the affine subspace
orthogonal to $l_0$  passing through $p$. 
Also in this case $\tau$ is smooth. The level surface
$H_a$ are isometric to $\mathbb R^k\times\mathbb H^{n-k}$.
If $n=2$ and $k=1$, this implies that the metric is flat.

\item
If $ D\subset\mathbb R^{n,1}$ is the future of a spacelike segment
$[p_0,p_1]$, then $ D$ is divided into three regions by two
timelike hyperplanes $P_0,P_1$ orthogonal to $[p_0,p_1]$ and passing, respectively, 
through $p_0$ and $p_1$. The first region is the half-space $ D_0$ bounded by
$P_0$ which does not contain $p_1$, the second is the half-space $D_1$ bounded by $P_1$ which does not 
contain $p_0$, and the third is the intersection $V$ of the other two half-spaces bounded
by $P_0$ and by $P_1$. 

For $p\in D_0$, one has
$\ret(p)=p_0$, for $p\in D_1$ $\ret(p)=p_1$ and for
$p\in V$, $\ret(p)$ is the intersection point of $[p_0,p_1]$
with the plane orthogonal to $[p_0,p_1]$ that passes through $p$.
In this case $\tau$ is smooth outside the boundaries of the regions $D_0,D_1$ and $V$
and is only $C^{1,1}$ on their boundaries. Level surfaces are divided
in three regions:  the regions $H_0(a)=H_a\cap D_0$ and
$H_1(a)=H_a\cap D_1$ are isometric to half-spaces of
constant curvature $-1/a^2$, while
$B_a=H_a\cap V$ is isometric to the product of the hyperbolic
space of dimension $n-1$ with an interval of length equal to $|p_1-p_0|$.
(For $n=2$, this is a flat strip of width  $|p_1-p_0|$). \end{itemize}
These examples are illustrated in Figure \ref{fig:domainex}.
\end{example}

\begin{figure}[ht]
  \begin{center}
  \includegraphics[width=6cm,height=5cm]{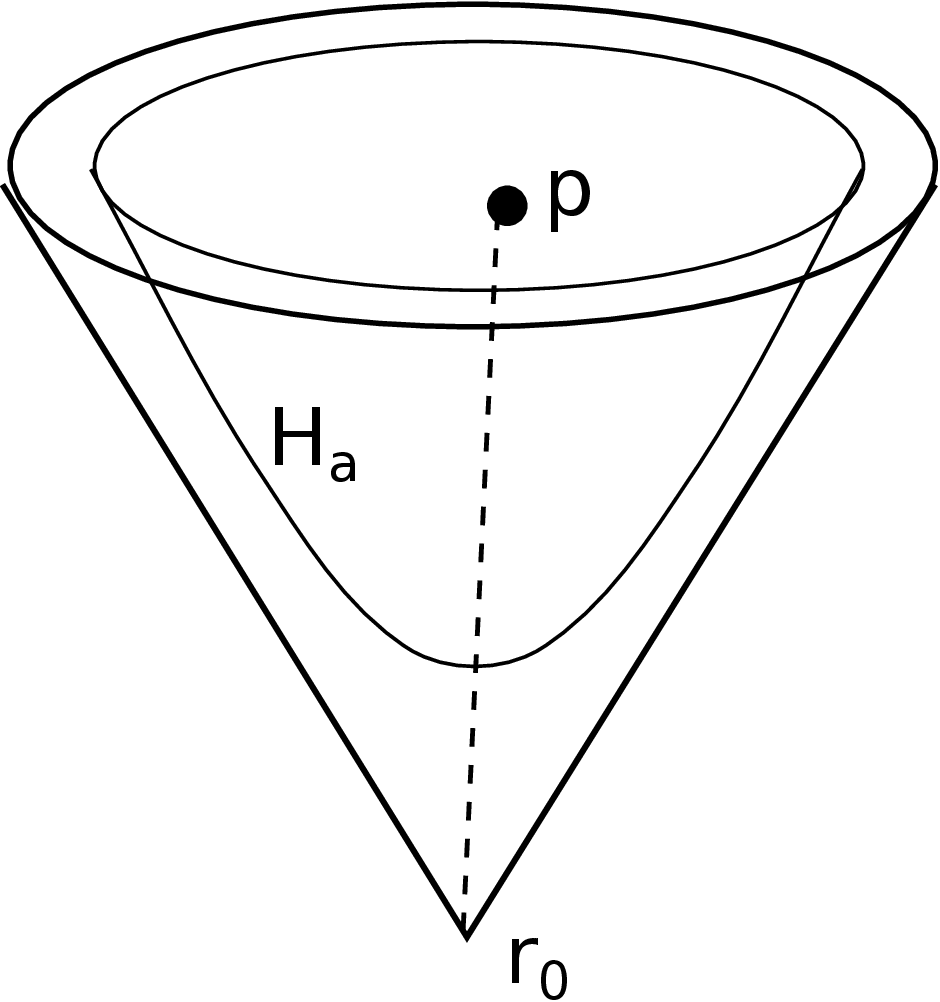}
  \includegraphics[width=8cm,height=5cm]{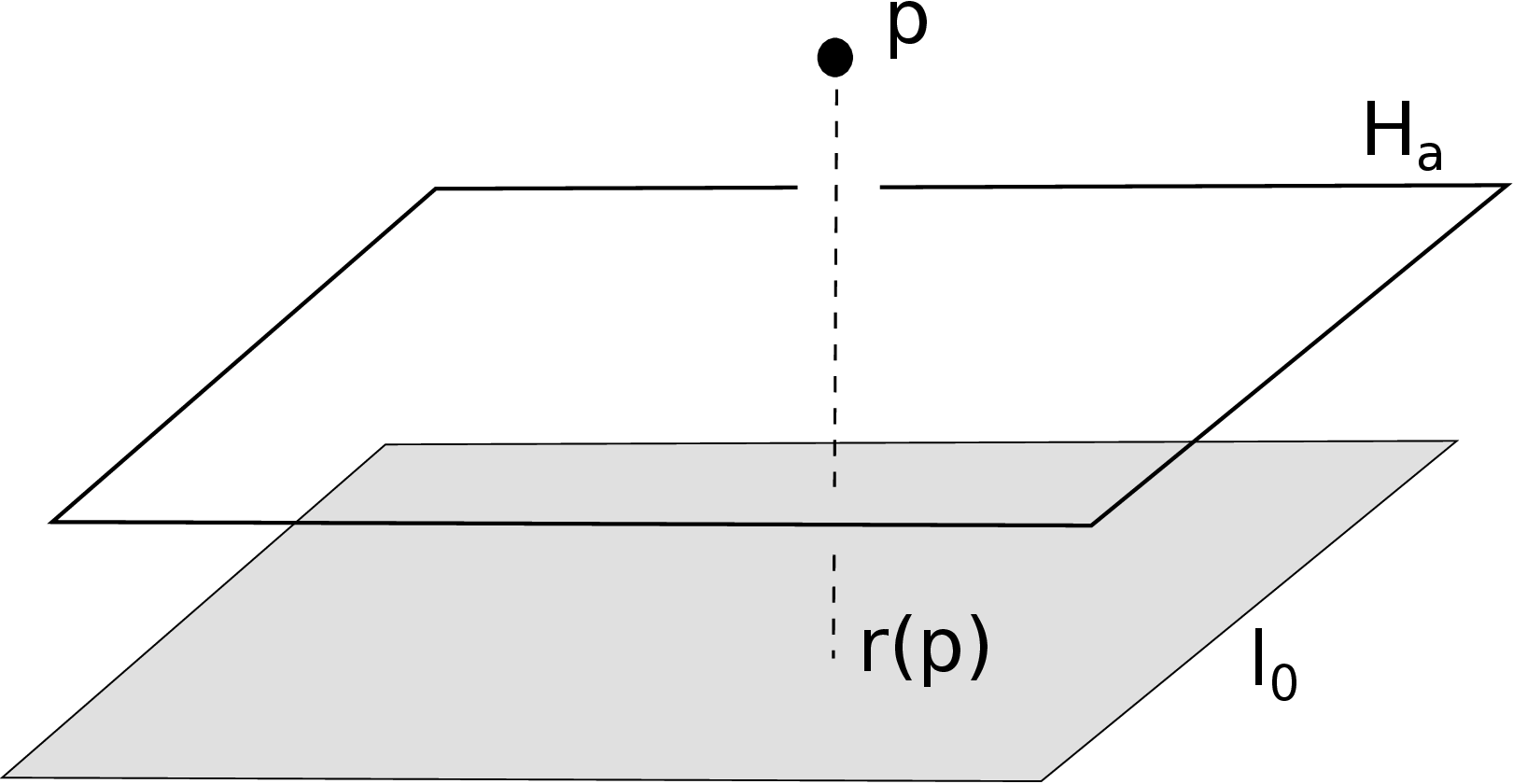}
  \includegraphics[width=8cm,height=6cm]{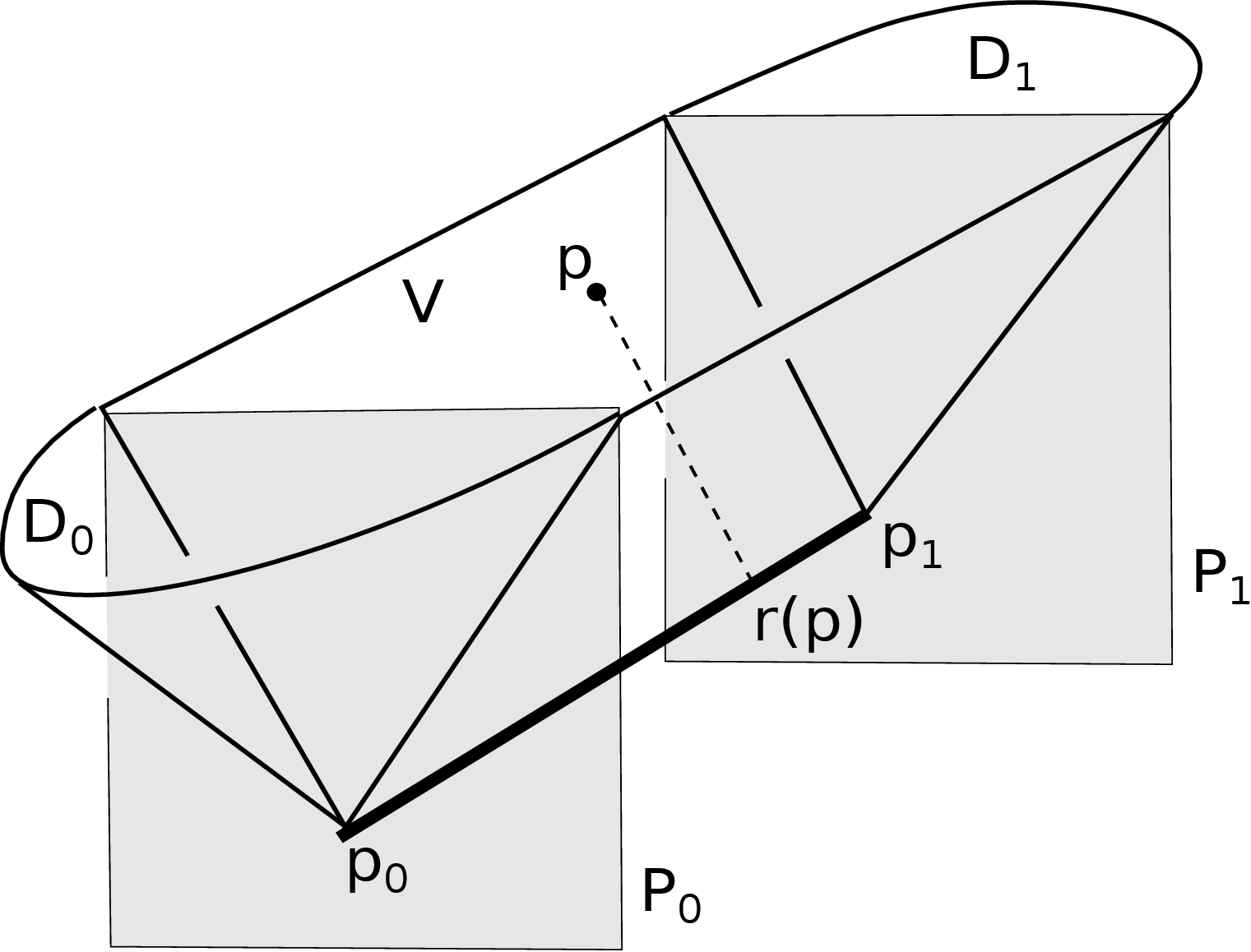}
  \caption{The domains in Example \ref{domainex}}
  \label{fig:domainex}
  \end{center}
\end{figure}

\subsection{The boundary of a domain of dependence and the initial singularity}

 In this
section we recall important facts about the geometry of the boundary 
$\partial D$ of a  domain of dependence $ D\subset \mathbb R^{1,n}$.  

We start by summarizing  a useful description of the boundary $\partial D$.
If $P$ is a spacelike $(n-1)$-dimensional plane in $\RR^{1,n}$, the
orthogonal projection $\pi:\partial D\to P$ is $1$-to-$1$,
so $\partial D$ can be regarded as the graph of a convex function $u$
on $P$. Since $\partial D$ is achronal, $u$ turns out to be $1$-Lipschitz.
More precisely, one finds  that
 the graph of a convex function on $P$ is the boundary
of a domain of dependence if and only if $||\grad u||=1$ at each point
where $u$ is differentiable.

For any point
$r\in\partial D$, there is at least one future directed lightlike
half-line $l$ contained in $\partial D$  that passes through $r$.  It is
important to note that there are always points on $\partial D$
from which at least two lightlike half-lines contained in $D$ originate.
Indeed this occurs exactly when $p$ is in the image of the map 
$\ret: D\rightarrow\partial D$
introduced in the previous section.
This subset is called the \emph{initial singularity}\footnote{Note that the use of the term {\em initial singularity} here differs from the one in the physics literature. The set of points of $\overline D$ at which causal curves cannot be extended into the past  is the entire boundary  $\partial D$.} of the domain $D$, and will be denoted
by $T$. It is the smallest subset of $\overline{ D}$ such that $ D=I^+(T)$.

If one regards $ D$ as the graph of a convex function $u$,  
lightlike lines in $\partial D$ correspond to integral lines of the gradient of $u$,  whereas
the initial singularity corresponds to  the set of points in which $u$ is not differentiable.
In many interesting cases, the shape of the initial singularity can be quite
 complicated.
For instance, it was shown by Mess \cite{mess} that  for the universal covering $D$ of a generic (2+1)-dimensional MGHFC spacetime, the image of $\ret$ is a dense subset of $\partial D$. 
On the boundary $\partial D$, we consider the pseudo-distance defined as follows:
\begin{itemize}
\item given a Lipschitz arc $k$ contained in $\partial D$, its
  velocity (defined a.e.) is not timelike.  So we can define the
  length of $k$ as
\[
   \ell(k)=\int\sqrt{\langle \dot k(t),\dot k(t)\rangle}\mathrm{d}t\,.
\] 
\item
Given $r_1,r_2\in\partial D$ the space of Lipschitz arcs $\mathcal
K(r_1,r_2)$ joining them is not empty. So we can define
\[
   d_0(r_1,r_2)=\inf\{\ell(k)|k\in\mathcal K(r_1,r_2)\}~.
\]
\end{itemize}
As the boundary of $ D$ contains lightlike segments (whose
length is clearly $0$), the pseudo-distance $d$ is not a genuine distance, since there exist pairs of points with $d(p,q)=0$. 
However, the following lemma shows that this can occur only if the images of $p$ and $q$ under the  map $r$ coincides.



\begin{lemma}[\cite{bonsante}] \label{initsing:lm}
If $ D$ is a domain of dependence and $p,q\in D$ 
then 
 $\ret(p)\neq\ret(q)$ implies $d_0(\ret(p),\ret(q))\neq 0$.
\end{lemma}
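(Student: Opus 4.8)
The plan is to establish the contrapositive in a quantitative form: if $\ret(p) = r_1 \neq r_2 = \ret(q)$, then every Lipschitz arc $k$ joining $r_1$ to $r_2$ in $\partial D$ has length bounded below by a strictly positive constant depending only on $r_1, r_2$ and the geometry of $D$. The key geometric input is the description of $\partial D$ as the graph of a convex $1$-Lipschitz function $u$ on a spacelike plane $P$ with $\|\grad u\| = 1$ where differentiable, together with the fact that the points in the image of $\ret$ are exactly the points where $u$ fails to be differentiable --- equivalently, the points through which at least two distinct lightlike half-lines of $\partial D$ pass.

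First I would use the defining property of the initial singularity: since $r_i = \ret$ of some point, there are (at least) two distinct future lightlike half-lines $l_i, l_i'$ emanating from $r_i$ inside $\partial D$. In terms of the convex function $u$, this says that the subdifferential $\partial u$ at the projection $\pi(r_i)$ is not a single point but a nondegenerate convex set (a segment, or higher-dimensional face) in the unit sphere --- the gradients of $u$ at nearby differentiable points cluster around this whole set. Geometrically, this means that near $r_i$ the boundary $\partial D$ looks like a ``corner'': it contains a whole fan of lightlike directions. The crucial consequence is a \emph{lower bound on the Lorentzian length of curves escaping from a neighborhood of $r_i$}: because the light cone structure degenerates at $r_i$ from both sides, any spacelike-or-null Lipschitz arc leaving a small ball $B(r_i, \rho)$ must accumulate a definite amount of length. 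I would make this precise by projecting to $P$ and using that, for $x$ near $\pi(r_1)$ and $y$ near $\pi(r_2)$ with the two subdifferentials disjoint, one has $u(y) - u(x) \le \|y - x\| - c$ on one side and $u(x) - u(y) \le \|y-x\| - c$ on the other, for some $c > 0$; this strict gap is exactly what makes the induced length functional $\int \sqrt{\langle \dot k, \dot k\rangle}$ bounded away from $0$.

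The heart of the argument is therefore a local analysis at each of the two points $r_1, r_2$: one shows that there is a small $\rho > 0$ and a constant $c > 0$ such that any Lipschitz arc in $\partial D$ starting at $r_i$ and reaching $\partial B(r_i, \rho)$ has length $\ge c$. Granting this at both endpoints, any arc $k \in \mathcal{K}(r_1, r_2)$ either stays within $B(r_1, \rho)$ --- impossible if $\rho < |r_1 - r_2|/2$ since it must also reach $r_2$ --- or it exits that ball, so $\ell(k) \ge c$; taking the infimum gives $d_0(r_1, r_2) \ge c > 0$. I expect the main obstacle to be the local estimate itself: one must carefully exploit that two distinct lightlike half-lines through $r_i$ lie in $\partial D$ to produce the uniform length lower bound, handling the fact that the arc $k$ may wiggle and that the ``velocity not timelike'' condition only controls the length integrand pointwise. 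The convexity of $D$ and the $C^{1,1}$ regularity of the cosmological time (so that $\ret$ is well-behaved) are the tools that should tame this; alternatively, one can quote the structure of $\partial D$ near a singular point from \cite{bonsante} directly. A cleaner route, if available from the cited reference, is to use that $d_0$ restricted to $T$ (the initial singularity) is a genuine length metric making $T$ into a well-understood metric space (in $2+1$ dimensions, an $\mathbb{R}$-tree), in which distinct points are at positive distance by construction; but even then one must check that the ambient pseudodistance $d_0$ on $\partial D$ does not create shortcuts between $r_1$ and $r_2$ through the lightlike part of the boundary, which again reduces to the same local ``corner'' estimate.
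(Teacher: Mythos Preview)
The paper does not prove this lemma; it is quoted from \cite{bonsante} without proof, so there is no ``paper's own argument'' to compare against. I will therefore just assess your outline on its merits.

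Your central local claim --- that there exist $\rho>0$ and $c>0$ such that every Lipschitz arc in $\partial D$ starting at $r_i$ and reaching $\partial B(r_i,\rho)$ has Lorentzian length at least $c$ --- is simply false. Through $r_i\in\partial D$ there is always a future lightlike ray contained in $\partial D$; moving along that ray you exit any Euclidean ball about $r_i$ with length exactly $0$. The presence of \emph{two} lightlike rays at $r_i$ does not change this: nothing forces the arc to sample both directions, so the ``corner'' picture gives no local lower bound. You flag the ``shortcut through the lightlike part'' issue at the end, but then say it ``reduces to the same local corner estimate'' --- that is precisely the estimate that fails.

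The projected inequality you propose, $u(y)-u(x)\le \|y-x\|-c$, is likewise unavailable: since $\|\grad u\|=1$ almost everywhere, along each lightlike ray one has $u(y)-u(x)=\|y-x\|$ on the nose, with no slack; disjointness of the subdifferentials at $\pi(r_1)$ and $\pi(r_2)$ does not produce a uniform gap of this kind.

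What actually forces $d_0(r_1,r_2)>0$ is a two--endpoint argument, not a one--endpoint one. Because $r_i=\ret(p_i)$, the vectors $\nu_i=(p_i-r_i)/\tau(p_i)\in\mathbb H^n$ give \emph{spacelike} support hyperplanes at the $r_i$, with
\[
\langle r_2-r_1,\nu_1\rangle\le 0,\qquad \langle r_1-r_2,\nu_2\rangle\le 0.
\]
Any arc $k$ from $r_1$ to $r_2$ has $\dot k$ non--timelike, and the interplay of these two support conditions with that constraint is what produces a positive lower bound on $\ell(k)$; the argument in \cite{bonsante} runs through the Gauss map/normal field and convexity in this global way. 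If you want to repair your outline, replace the local ball--escape estimate by an estimate that uses both support planes simultaneously (for instance, control $\int\langle\dot k,\nu_2-\nu_1\rangle\,dt$ together with the non--timelike condition on $\dot k$), rather than trying to trap length near a single endpoint.
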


In other words, this lemma states that the restriction of $d_0$ to the initial 
singularity is a distance. 
Let
us  also remark that the topology we consider on $T$ is the one induced by
the distance $d_0$, which in general is different from the topology
induced by Minkowski space.

%

\begin{example} $\quad$
\begin{itemize}
\item 
If $ D$ is the future of a point $r_0$, the initial singularity contains only one point
that can be identified with $r_0$.

\item If $ D$ is the future of a affine subspace $E$ of dimension $k\leq (n-1)$, then
the initial singularity is isometric to $E$.

\item If $ D$ is the future of a segment in $\mathbb R^{2,1}$, then the initial singularity is the segment
itself.

\item If $ D\subset\mathbb R^{2,1}$ is the intersection of three
half-spaces bounded by lightlike planes, the initial singularity is
the union of three spacelike rays starting from the intersection point
of the planes.  Note that in this case, the initial singularity is
not a submanifold. In fact, generically the initial singularity does
not have a manifold structure.  
The geometry of the initial singularities of domains of dependence in
dimension $2+1$ is discussed in more depth in the next section. 
\end{itemize}
\end{example}

\subsection{The $2+1$-dimensional case: the Mess construction}
\label{ssc:construction}

Mess \cite{mess} discovered an efficient way to construct regular domains in
$\mathbb R^{2,1}$.  This construction is general in the sense that
every regular domain can be constructed in this way. It also has the major advantage that
 the geometrical features of the initial singularities are readily apparent.

We will describe the Mess construction in simple cases, namely for domains obtained by grafting along weighted multicurves.
These simple cases are dense, in the sense that every domain of
dependence can be approximated by domains of dependence obtained
in this way.

Let us start from a collection of disjoint geodesics of $H^+$,
$L=l_1\cup\ldots\cup l_k$, and a collection of positive numbers $a_1,....a_k$. 
Every geodesic is given as  the intersection of  $H^+$ with a 
timelike linear plane $P_1,\ldots, P_k$.
The planes $P_i$ disconnect $H^+$ into a collection
of regions $ D_1,\ldots, D_h$. (Note that each of them is the cone on some
component of $H^+\setminus L$).

For each plane $P_i$, 
let $v_i$ be the vector in $\mathbb R^{1,2}$ characterized by 
the following conditions:
\begin{itemize}
\item it is orthogonal to $P_i$ with respect to the Minkowski metric (in particular it is spacelike);
\item its norm is equal to $a_i$;
\item it points to the component of $\mathbb R^{1,2}$ that does not contain
$ D_1$.
\end{itemize}
Now for any region $ D_j$  take the sum of all vectors $v_i$
associated with  to planes  $P_i$ that separate $ D_j$ from $ D_1$:
\[
    w_j=\sum_{i:P_i\textrm{ separates }  D_1\textrm{ from } D_j}v_i\,.
\]
Translating  each region $ D_j$ by the vector $w_j$ yields  a collection of
disjoint domains $ D_1',\ldots, D_h'$ which are convex cones
with vertices at  $w_j$.  In particular,  note that if $ D_{j_1}$
is adjacent to $ D_{j_2}$,  then $w_{j_1}-w_{j_2}$ is a vector
orthogonal to the plane $P_i$ separating $ D_{j_1}$ from
$ D_{j_2}$ and is of norm  $a_i$.

In order to connect the domains $ D_j'$ we consider the domains
$V_i$ obtained as follows.  If $ D_{j_1}$ and $ D_{j_2}$
are adjacent along $P_i$, then $V_i$ is the region of the future of
the segment $s_i=[w_{j_1}, w_{j_2}]$ bounded by the two timelike
planes orthogonal to the segment $s_i$ through its end-points.

It turns out that $ D=\bigcup D_j'\cup\bigcup V_i$ is a
domain of dependence. The map $\ret$ can be easily defined on each
piece: $\ret$ sends all points of $ D_j'$ onto $w_j$,
while it sends points of $V_i$ to the segment $s_i=[w_{j_1}, w_{j_2}]$.
The level surface $H_a$ can be
decomposed into different regions: the regions $H_a\cap D'_i$, which have
constant curvature $-1/a^2$, and the regions   $H_a\cap V_i$, which  are 
Euclidean strips of width  $a_i$.

The initial singularity is then given as  the union of the line segments $s_i$ and the 
vertices $w_j$.  In particular,  it is a graph with a vertex for every
region of $H^+\setminus L$. Two vertices $w_{j_1}$ and $w_{j_2}$ are
connected by one edge if and only if the corresponding regions are
adjacent.  Combinatorially the singularity is a tree, that is, a graph
which does not contain any closed loop.  Notice that the length of
each segment $s_i$ is precisely $a_i$.

Although we summarized this  construction  for a finite number of geodesics,
it works analogously  also when $L$ is an infinite family of disjoint geodesics
that is locally finite (i.e. every compact subset of $H^+$ meet only a finite number of $l_i$).

\subsection{The equivariant construction} \label{ssc:equivariant}

Using the construction from the  previous subsection,  one can construct the universal coverings 
 of MGHFC spacetimes
different from $\fut(0)$ as follows.  Take a hyperbolic
surface $F$ and consider the metric universal covering
$\pi:H^+\rightarrow F$ and covering group $\Gamma<SO^+(1,2)$.
Consider on $F$ a disjoint collection of simple closed geodesics
$c_1\ldots c_k$ and positive numbers $a_1\ldots a_k$.  Then the
preimage $L=\pi^{-1}(c_1\cup\ldots\cup c_k)$  is a union of
infinitely many disjoint geodesics.  The weight of each  geodesic $\tilde l_i\subset L$ 
is the number corresponding to $\pi(\tilde l_i)$.  As
above, the geodesics $\tilde l_i$ correspond to planes that cut $\fut(0)$
into infinitely many pieces $ D_j$.  By the invariance of $L$
under the action of $\Gamma$, elements of $\Gamma$ permute the regions
$ D_j$.

The construction explained in the previous subsection then produces a domain
$ D$, and Mess showed that there is an affine deformation $\Gamma'$
of $\Gamma$, so that $ D$ is $\Gamma'$-invariant and the quotient
is a MGHFC spacetime.  Namely any $\gamma\in\Gamma$ is changed by adding a
translation part of vector $w(\gamma)$ which is the sum of all vectors $w_i$
corresponding to the planes $P_i$ disconnecting $ D_1$ from
$\gamma D_1$.

\begin{remark}
In the example above, it  can be seen that each $ D_j$ bounds infinitely many
planes $P_i$. This implies that the vertex in the initial singularity corresponding to $ D_j$
is the end-point of infinitely many edges, or equivalently has infinite valence.
\end{remark}

In the examples illustrated in the previous section it turns out that
the initial singularity of domains of dependence in $\mathbb R^{2,1}$ is
always a graph, and in fact a tree (possibly with vertices of infinite valence).
In fact, there are more complicated examples in which the initial singularity
does not have a simple graph structure, but it is always a real tree according to the following
definition.

\begin{defi}
A metric space $(T,d)$ is a {\bf real tree} if for every $p,q\in T$
there is a unique arc $k\subset T$ joining them.
Moreover $k$ is the image of an isometric immersion $I\rightarrow T$
where $I$ is an interval of length equal to $d(p,q)$.
\end{defi}

Real trees are generalizations of the usual trees (which, by contrast, are
often called simplicial trees).
The domains of dependence whose singularity is a simplicial tree are exactly 
those constructed in the previous section \cite{benedetti-guadagnini, benedetti-bonsante}.
In particular, every domain of dependence with simplicial tree as initial  singularity
is determined by a simplicial measured geodesic lamination of $H^+$,
which, by definition,  is a
locally finite union $L$ of disjoint geodesics $l_i$, each equipped with  a weight $a_i>0$.

\begin{prop}\cite{benedetti-guadagnini}
If $ D$ is a domain of dependence in $\mathbb R^{2,1}$ then
its initial singularity $T$ is a real tree.
Moreover, the vertices of $T$ are those points in $\partial D$  at which at least three lightlike
segments  in $\partial D$ originate. 
\end{prop}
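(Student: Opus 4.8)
The plan is to establish both statements simultaneously by first reducing to the local structure of $\partial D$ and then exploiting the density of the "multicurve" domains described in Section \ref{ssc:construction} together with the Lipschitz control on the defining convex function $u$. I would begin by fixing a spacelike $(n-1)=2$-plane $P$ and writing $\partial D$ as the graph of a $1$-Lipschitz convex function $u$ on $P$ with $\|\grad u\|=1$ a.e. The first step is to show that $d_0$ restricted to $T$ is a genuine geodesic distance realized by Lipschitz arcs of finite length: existence of a minimizing arc between $r_1,r_2\in T$ follows from Arzel\`a-Ascoli applied to a minimizing sequence in $\mathcal K(r_1,r_2)$ (the arcs have uniformly bounded length since $\partial D$ is a Lipschitz graph over $P$ and $P$ is complete), and the limit arc has length $d_0(r_1,r_2)$ by lower semicontinuity of length; Lemma \ref{initsing:lm} guarantees $d_0$ separates points of $T$. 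The uniqueness of the arc — the "tree" property — is the heart of the matter.

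For uniqueness I would argue by contradiction: suppose two distinct injective Lipschitz arcs $k_1,k_2$ of minimal length join $r_1$ to $r_2$ inside $T$. Bounding a region of $\partial D$, they cobound (together with $\partial D$) a disk-type region whose boundary projects under $\pi$ to a closed curve in $P$. The key geometric input is convexity of $u$ combined with $\|\grad u\|=1$: along any spacelike arc in $\partial D$ the length is computed by $\ell(k)=\int\sqrt{1-\|\dot\gamma\|_P^2}\,dt$ where $\gamma=\pi\circ k$, and I would show that a length-minimizing spacelike path between two points of $\partial D$ must be a concatenation of pieces each contained in a single "flat" face of $\partial D$ — i.e. a maximal affine piece — so that two distinct minimizing arcs would force two distinct faces of $\partial D$ sharing the same two endpoints, contradicting convexity of the graph of $u$. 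Concretely: on each flat face of $\partial D$, which is a piece of a spacelike plane, $d_0$ agrees with the intrinsic flat metric; at an edge of $\partial D$ (a lightlike segment) the two adjacent faces meet along a $0$-length segment, so $d_0$ "contracts" that edge; passing to the quotient $\partial D / \!\sim$ collapsing lightlike segments yields exactly $T$ with its $d_0$-metric, and this quotient is a limit of the finite trees from the grafting construction. I would then invoke density: approximate $D$ by domains $D_m$ obtained by grafting along locally finite multicurves (Section \ref{ssc:construction}), whose initial singularities are genuine simplicial trees, and show that $(T_m,d_0)\to(T,d_0)$ in the pointed Gromov-Hausdorff sense; since a Gromov-Hausdorff limit of trees with uniformly controlled geometry is a real tree (the "$0$-hyperbolicity" / four-point condition passes to limits), $(T,d_0)$ is a real tree.

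For the second assertion, that the vertices of $T$ are precisely the points of $\partial D$ from which at least three lightlike half-lines emanate, I would characterize a point $r\in\partial D$ in the image of $\ret$ as one where the supporting cone of the convex region $D$ has a nontrivial lightlike boundary: if exactly one lightlike half-line passes through $r$ then $r$ lies in the relative interior of a lightlike segment of $\partial D$ and so is identified in $T$ with an interior point of an edge (valence $2$), whereas if $k\geq 3$ distinct lightlike half-lines emanate from $r$ then locally $\partial D$ near $r$ looks like $k$ lightlike "fins" separated by $k$ flat faces, and the corresponding point of $T$ is met by $k$ edges, hence is a branch point. The analytic translation, which I would carry out via the support function of $D$, is that $r=\ret(p)$ iff $r$ is a non-differentiability point of $u$ (already noted in the excerpt), and the number of lightlike half-lines through $r$ equals the number of extreme points of the (convex) subdifferential $\partial u$ at that point — which exceeds $2$ exactly at branch points of $T$. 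The main obstacle I anticipate is making rigorous the claim that length-minimizing spacelike arcs in $\partial D$ decompose into affine pieces inside flat faces: this requires a careful first-variation argument on the Lipschitz convex graph, handling the a.e.-differentiability of $u$ and the possibility of the arc running along the non-differentiability locus, and it is here that one genuinely needs the structural results of Mess and \cite{bonsante} rather than soft topology.
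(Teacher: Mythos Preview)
The paper does not prove this proposition: it is stated with a citation to \cite{benedetti-guadagnini} and no argument is given. The surrounding discussion in the paper indicates the intended route, which is via the dual measured geodesic lamination: to each $r\in T$ one associates the convex set $\mathcal F_r\subset H^+$ of unit normals to support planes at $r$, obtaining a decomposition of $H^+$ into strata whose boundary geodesics form a lamination $L$ with a transverse measure $\mu$ built from $d_0$; the initial singularity is then identified with the dual tree of $(L,\mu)$, and the real-tree property follows from the general theory of trees dual to measured laminations. The vertex statement is immediate from this identification, since vertices of the dual tree correspond to strata $\mathcal F_r$ of dimension $2$, i.e.\ points where the cone of support planes is $3$-dimensional, i.e.\ where at least three lightlike directions emanate.

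Your approach via Gromov--Hausdorff limits of simplicial approximations is a legitimate alternative in spirit --- $0$-hyperbolicity does pass to GH limits, and the multicurve domains from Section~\ref{ssc:construction} do have simplicial trees as singularities --- but the proposal has real gaps. First, the length formula you wrote, $\ell(k)=\int\sqrt{1-\|\dot\gamma\|_P^2}\,dt$, is not correct: for a curve $t\mapsto(\gamma(t),u(\gamma(t)))$ on the graph, the Minkowski norm of the velocity is $\sqrt{|\dot\gamma|^2-(\nabla u\cdot\dot\gamma)^2}$, so the degenerate directions are those tangent to the lightlike foliation, not those with $|\dot\gamma|=1$. Second, the uniqueness argument (``two distinct minimizing arcs would force two distinct faces sharing the same two endpoints, contradicting convexity'') is not a proof: convexity of $u$ does not obviously forbid this configuration, and you would need to explain why a loop in $T$ projects to something impossible in $P$. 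Third, and most seriously, you assert the GH convergence $(T_m,d_0)\to(T,d_0)$ without justification; proving this requires exactly the structural control on $\ret$ and on $d_0$ that the lamination picture provides. The lamination route is both shorter and avoids these analytic difficulties.
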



Given a point $r\in T$, let $ D_r$ be the convex hull in
Minkowski space of the lightlike lines contained in $\partial D$ which start at 
 $r$.  Then $D_r$  is a convex subset of $\overline\fut(r)$.
Notice that the dimension of $ D_r$ is $3$ if and only if $r$ is a vertex,
otherwise $ D_r$ is the intersection between $ D$ and the
timelike plane containing the two lightlike 
rays starting at $r$.

If $\tau_r$ is the translation which send $r$ to $0$, we denote by
$\mathcal F_r$ the intersection of $H^+$ with $\tau_r( D_r)$. 
Note that $\mathcal F_r$ can be interpreted as the set of unit normals of the support
planes of $ D$ at $r$. A number of consequences follow  directly. 
\begin{itemize}
\item If $r$ is a vertex then $\mathcal F_r$ is a region of $H^+$ bounded by disjoint geodesics.
\item If $r$ is not a vertex then $\mathcal F_r$ is a complete geodesic.
\item If $r\neq s$ then $\mathcal F_r$ and $\mathcal F_s$ have disjoint interiors. $\mathcal F_r$ and $\mathcal F_s$ can be disjoint, they can coincide if they are both lines, or they can meet along a 
boundary component.
\end{itemize}

In particular the set $L=\bigcup_{r\textrm{ is a
vertex}}\partial\mathcal F_r\ \cup \bigcup_{r\textrm{ is not a
vertex}}\mathcal F_r$ is a union of disjoint geodesics that are
called the leaves of $L$.
In general,  the set $L$ can be quite complicated. The intersection of a
geodesic arc in $H^+$ and $L$ can be uncountable (and sometimes a
Cantor set).

The simplest case is when the singularity is a tree. In this case, the set 
$L$ is the union of isolated geodesics: any compact subset of $H^+$ meets
only a finite number of leaves.
In this case,  it is also  evident that components of $H^+\setminus L$ corresponds to
vertices of $T$, whereas each leaf of $L$ corresponds to an edge of $T$.

In addition to $L$ we can construct a \emph{transverse measure} that is the 
assignment of a non-negative number for any arc transverse to the leaves of $L$
which verify some additivity conditions, see e.g. \cite{bonahon:laminations}.
If $k$ is an arc on $H^+$that joins two points  in $H^+\setminus L$ and
meets each leaf at most once (for instance if $k$ is a geodesic
segment that is not contained in any leaf), we define
$\mu(k)=d_0(r_0,r_1)$ where $r_0$ and $r_1$ are the points on $T$ such that
the end-points of $k$ are contained in $\mathcal F_{r_0}$ and $\mathcal F_{r_1}$.

If  the lamination is locally finite,  for each leaf $l$ there
is a number $a(l)$ that coincides with the measure of any arc $k$ transversely 
meeting only $l$. Any transverse arc $k$ can be subdivided into a finite number
of arcs $k_1,\ldots,k_p$ such that each $k_i$ meets every leaf at most once. So we can define
$\mu(k)=\sum\mu(k_i)$.

Mess \cite{mess} showed that the data $(L,\mu)$ determines $ D$ up to translation.
In the simple case where the lamination is locally finite, the construction of  $ D$ from 
$(L,\mu)$ is the one summarized in Section \ref{ssc:construction}.

\begin{remark}
In dimension $n+1\geq 4$,  it is no longer true that the initial singularity is
a tree. In fact the geometry of the initial singularity is still not understood.
In \cite{bonsante}  a description of the singularity is given in some special cases.
\end{remark}

\subsection{Holonomies of domains of dependence and hyperbolic structures}
\label{ssc:holonomies}

In Section \ref{ssc:minkowski}, we summarized the construction which assigns a domain of dependence to each flat  MGHFC  manifold. There is also a deep relation between
holonomies of flat Lorentzian manifolds and first-order deformations
of holonomy representations of hyperbolic manifolds, which was already used in dimension $2+1$ for instance in \cite{goldman-margulis}. In the following, we summarize this relation, which behaves somewhat differently in dimension $2+1$ and in higher dimension.

\subsubsection*{Dimension 2+1}

In this subsection, we recall  how a flat  (2+1)-dimensional MGHFC 
 manifold can be obtained from a point in 
Teichm\"uller space together with a deformation 1-cocycle.
For this, note that $\R^{2,1}$ can be identified with the Lie algebra $sl(2,\R)$ 
with its Killing metric. The canonical action of $SO(2,1)$ on $\R^{2,1}$ 
corresponds to the adjoint action of $SL(2,\R)$ 
on $sl(2,\R)$. 
For each  representation of $\pi_1S$ in $SO(2,1)$, it determines a vector bundle over
$S$ with fiber $\R^{2,1}$, which corresponds to the $sl(2,\R)$-bundle over $S$ defined
by the adjoint representation. 

\begin{prop}\cite{mess}
Let $M$ be a flat MGHFC manifold homeomorphic to $S\times \R$, where $S$ is a 
closed surface of genus at least $2$, and let $h:\pi_1S\rightarrow \Iso(2,1)$
be its holonomy representation. Then $h$ decomposes in $\Iso(2,1)=SO(2,1)\ltimes \R^{2,1}$
as $h=(\rho,\tau)$ where $\rho:\pi_1S\rightarrow SO(2,1)$ has maximal Euler number, and
$\tau:\pi_1S\rightarrow sl(2,\R)$ is a 1-cocycle for $\rho$. Conversely, any
couple $(\rho,\tau)$ where  $\rho:\pi_1S\rightarrow SO(2,1)$ has maximal Euler number and
$\tau:\pi_1S\rightarrow sl(2,\R)$ is a 1-cocycle for $\rho$ defines a representation
of $\pi_1S$ in $\Iso(2,1)$ which is the holonomy representation of a flat MGHFC 
manifold.
\end{prop}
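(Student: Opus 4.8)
The plan is to prove the two implications separately, first isolating the purely algebraic content. By the structural description recalled in Section~\ref{ssc:minkowski}, a flat GHMCF manifold $M$ homeomorphic to $S\times\R$ is of the form $D/\Gamma$ with $D\subset\R^{2,1}$ a domain of dependence and $\Gamma\subset\Iso_0(2,1)$ discrete; since $D$ is convex, hence contractible, and $\Gamma$ acts freely and properly discontinuously, the holonomy $h$ identifies $\pi_1S$ with $\Gamma$ and is in particular faithful. Writing $\Iso(2,1)=SO(2,1)\ltimes\R^{2,1}$ and $h(\gamma)=(\rho(\gamma),\tau(\gamma))$, the relation $h(\gamma_1\gamma_2)=h(\gamma_1)h(\gamma_2)$ is \emph{equivalent} to $\rho$ being a homomorphism together with $\tau(\gamma_1\gamma_2)=\tau(\gamma_1)+\rho(\gamma_1)\tau(\gamma_2)$, i.e.\ to $\tau$ being a $1$-cocycle for $\rho$; no argument is needed here. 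The content of the statement is thus that $\rho$ has maximal Euler number, and, conversely, that every such pair is realized.

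For the forward implication I would use the Gauss map of a Cauchy surface. Fix a spacelike, convex, $\Gamma$-invariant Cauchy surface $\Sigma\subset D$ with $\Sigma/\Gamma$ a closed surface; since $\pi_1(\Sigma/\Gamma)\cong\Gamma\cong\pi_1S$ it is homeomorphic to $S$. A level surface $H_1=\tau^{-1}(1)$ of the cosmological time works, but for the argument it is convenient to take $\Sigma$ smooth and strictly convex (for instance a constant-mean-curvature Cauchy surface, or a strictly convex smoothing of $H_1$). The future unit normal map $\nu\colon\Sigma\to H^+$ is $\rho$-equivariant, because the elements of $\Gamma$ act by isometries with linear parts $\rho(\gamma)$, so unit normals transform by $\rho$; for $\Sigma=H_1$ one has explicitly $\nu(p)=p-\ret(p)$, and equivariance reflects the equivariance of $\ret$ together with the cancellation of the translation parts in the difference. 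Since the sets $\mathcal F_r$ of support-plane normals described in Section~\ref{ssc:equivariant} have disjoint interiors and cover $H^+$, the map $\nu$ is surjective; for $\Sigma$ smooth and strictly convex it is moreover a local diffeomorphism (invertible shape operator), hence a $\rho$-equivariant diffeomorphism onto $H^+\cong\hyp^2$. Passing to the quotient shows that $\rho(\pi_1S)$ acts on $\hyp^2$ freely, properly discontinuously and cocompactly, so it is a cocompact torsion-free Fuchsian group; faithfulness of $\rho$ can also be read off directly, as $\ker\rho$ is normal in the surface group $\pi_1S$ and the cocycle $\tau$ restricts to an \emph{injective} homomorphism $\ker\rho\to\R^{2,1}$ (injective because $h$ is faithful), so $\ker\rho$ is abelian and hence trivial. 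A cocompact Fuchsian representation of $\pi_1S$ into $SO^+(2,1)\cong PSL(2,\R)$ has Euler number $\pm\chi(S)=\pm(2g-2)$ by Goldman's characterization of maximal representations; this is the claimed maximality.

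For the converse, let $\rho\colon\pi_1S\to SO(2,1)$ have maximal Euler number and let $\tau$ be a $1$-cocycle for $\rho$. By Goldman's theorem $\rho$ is discrete and faithful with cocompact image, so $F=\hyp^2/\rho(\pi_1S)$ is a closed hyperbolic surface homeomorphic to $S$. The cocycle condition makes $h=(\rho,\tau)$ a homomorphism; it is faithful ($h(\gamma)=\mathrm{id}$ forces $\rho(\gamma)=\mathrm{id}$, hence $\gamma=1$) and discrete ($h(\gamma_n)\to\mathrm{id}$ forces $\rho(\gamma_n)\to\mathrm{id}$, hence $\gamma_n=1$ for large $n$), so $\Gamma=h(\pi_1S)$ is a discrete subgroup of $\Iso_0(2,1)$. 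It remains to produce a $\Gamma$-invariant domain of dependence $D$ with $D/\Gamma$ maximal globally hyperbolic and with compact Cauchy surface. For this I would invoke the construction recalled in Sections~\ref{ssc:construction}--\ref{ssc:equivariant}: the cohomology class of $\tau$ corresponds to a $\rho$-invariant measured geodesic lamination on $H^+$, and grafting along it exactly as in the equivariant Mess construction, in the general (not necessarily locally finite) case, yields $D$ together with the affine deformation $\Gamma$ of $\rho$; replacing $\tau$ by a cohomologous cocycle merely translates $D$. One then checks that $\Gamma$ acts freely and properly discontinuously on $D$, which, via the $\rho$-equivariant Gauss map $D\to\hyp^2$ whose fibres are timelike, reduces to the corresponding properties of $\rho(\pi_1S)$ on $\hyp^2$, and that the level surfaces of the cosmological time of $D$ descend to compact Cauchy surfaces; thus $D/\Gamma$ is a flat GHMCF manifold with holonomy $h$, necessarily homeomorphic to $S\times\R$.

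I expect the main obstacle to be the maximality of the Euler number in the forward implication: one must show that the linear holonomy $\rho$ is a cocompact Fuchsian group, and although the Gauss map of a Cauchy surface is the right tool, the canonical cosmological-time level surfaces are only $C^{1,1}$ and their Gauss maps collapse the Euclidean strips, so the cleanest route is to run the argument on a smooth strictly convex Cauchy surface, where $\nu$ is a genuine equivariant diffeomorphism onto $\hyp^2$; once that is in place Goldman's theorem finishes the argument. In the converse the only non-formal point is the construction of the domain of dependence attached to a general cocycle and the verification that the resulting action is properly discontinuous, both of which are contained in the Mess-type results recalled in the preceding sections.
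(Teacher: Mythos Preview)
The paper does not prove this proposition: it is stated with a citation to Mess \cite{mess} and no proof is given. So there is no ``paper's own proof'' to compare against; the result is imported as background.

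That said, your sketch is a faithful outline of the argument as it appears in Mess's work and is essentially correct. A couple of remarks on points where you should be careful:

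\begin{itemize}
\item For the forward direction, invoking a smooth strictly convex Cauchy surface (e.g.\ a CMC slice) risks circularity: the existence of such slices in flat GHMCF spacetimes is itself a theorem whose known proofs typically \emph{use} Mess's classification. It is safer to run the Gauss-map argument directly on the $C^{1,1}$ level set $H_a$ of the cosmological time. The normal map $\nu\colon H_a\to H^+$ is well-defined, continuous, $\rho$-equivariant, and surjective (by the description of support planes recalled in Section~\ref{ssc:equivariant}); since $H_a/\Gamma$ is compact, $\nu$ descends to a proper surjection onto $\hyp^2/\rho(\pi_1S)$, which forces the latter to be compact. One does not need $\nu$ to be a diffeomorphism to conclude cocompactness of $\rho$.
\item In the converse, the statement that ``the cohomology class of $\tau$ corresponds to a $\rho$-invariant measured geodesic lamination'' is itself one of the substantial results of \cite{mess}, not a formality. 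Your phrasing correctly flags that you are invoking the equivariant construction of Sections~\ref{ssc:construction}--\ref{ssc:equivariant}, but be aware that this is where the real work lies; the rest of your converse argument (faithfulness and discreteness of $h$, proper discontinuity via the Gauss map) is straightforward once $D$ is in hand.
\end{itemize}

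With these caveats, your proposal is a sound reconstruction of Mess's proof.
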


One way to obtain a 1-cocycle is by considering  first-order deformations of a surface group
representation in $SO(2,1)$. This is summarized in the following proposition, which allows one to construct the holonomy representation of a flat 
MGHFC is as a first-order deformation of the holonomy representation of a 
hyperbolic metric on a surface.

\begin{prop}
Let $(\rho_t)_{t\in [0,1]}$ be a smooth one-parameter family of morphisms from 
$\pi_1(S)$ to $PSL(2,\R)$. Then the map $\tau=\rho_0^{-1}(d\rho/dt)_{t=0}$ from 
$\pi_1(S)$ to $sl(2,\R)$ is a 1-cocycle for $\rho_0$.   
\end{prop}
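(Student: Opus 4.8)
The statement is essentially a one-line computation, so the plan is to differentiate the homomorphism relation $\rho_t(\gamma_1\gamma_2)=\rho_t(\gamma_1)\rho_t(\gamma_2)$ at $t=0$ and rearrange. Before doing so I would make precise the definition of $\tau$: for $\gamma\in\pi_1(S)$ the curve $t\mapsto\rho_t(\gamma)$ is a smooth path in $PSL(2,\R)$, hence $\xi(\gamma):=\frac{d}{dt}\rho_t(\gamma)|_{t=0}$ is a tangent vector at $\rho_0(\gamma)$, and translating it to the identity by the differential of left multiplication by $\rho_0(\gamma)^{-1}$ yields a well-defined element $\tau(\gamma)$ of $T_ePSL(2,\R)$, which we identify with $sl(2,\R)$ through the Lie algebra isomorphism induced by the covering $SL(2,\R)\to PSL(2,\R)$. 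Since $\pi_1(S)$ is discrete there is no continuity in the variable $\gamma$ to worry about. It is convenient to run the actual computation after lifting each path $t\mapsto\rho_t(\gamma)$ locally (near $t=0$) to $SL(2,\R)$, where the group operations are ordinary matrix multiplication.

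Next I would fix $\gamma_1,\gamma_2\in\pi_1(S)$ and apply the Leibniz rule to $\rho_t(\gamma_1\gamma_2)=\rho_t(\gamma_1)\rho_t(\gamma_2)$, obtaining $\xi(\gamma_1\gamma_2)=\xi(\gamma_1)\rho_0(\gamma_2)+\rho_0(\gamma_1)\xi(\gamma_2)$. Multiplying this identity on the left by $\rho_0(\gamma_1\gamma_2)^{-1}=\rho_0(\gamma_2)^{-1}\rho_0(\gamma_1)^{-1}$ and simplifying produces $\tau(\gamma_1\gamma_2)=\mathrm{Ad}_{\rho_0(\gamma_2)^{-1}}\tau(\gamma_1)+\tau(\gamma_2)$; reading the derivative off with the right-invariant trivialization instead (that is, working with $\xi(\gamma)\rho_0(\gamma)^{-1}$, which is the convention compatible with the semidirect product $SO(2,1)\ltimes\R^{2,1}$ appearing in the previous proposition) gives the equivalent form $\tau(\gamma_1\gamma_2)=\tau(\gamma_1)+\mathrm{Ad}_{\rho_0(\gamma_1)}\tau(\gamma_2)$. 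Either way, this is precisely the assertion that $\tau$ is a $1$-cocycle for $\rho_0$, and there is nothing more to check.

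There is no real obstacle in this proof; its content is just the product rule. The only points that require a little care are bookkeeping: checking that $\tau(\gamma)$ is well defined as an element of $sl(2,\R)$ even though $PSL(2,\R)$ is not a linear group --- this is handled by passing to the double cover $SL(2,\R)$, on which the differential of the covering map is a Lie algebra isomorphism --- and keeping the left/right translation conventions consistent so that the resulting crossed-homomorphism relation carries the $\mathrm{Ad}$-action on the side matching the cocycle condition implicit above. One could alternatively deduce the statement from the standard identification of the tangent space to $\mathrm{Hom}(\pi_1(S),PSL(2,\R))$ at $\rho_0$ with the space $Z^1(\pi_1(S);sl(2,\R))$ of $sl(2,\R)$-valued $1$-cocycles, but the direct differentiation is shorter and self-contained.
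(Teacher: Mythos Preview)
Your argument is correct and is exactly the standard computation; the paper states this proposition without proof, treating it as a well-known fact, so there is nothing to compare against. Your care in distinguishing the left and right trivializations is appropriate: the paper's notation $\tau=\rho_0^{-1}(d\rho/dt)_{t=0}$ indicates the left trivialization, giving the relation $\tau(\gamma_1\gamma_2)=\mathrm{Ad}_{\rho_0(\gamma_2)^{-1}}\tau(\gamma_1)+\tau(\gamma_2)$, while the semidirect product $SO(2,1)\ltimes\R^{2,1}$ in the preceding proposition uses the other convention; both are legitimate cocycle conditions and are interchanged by $\gamma\mapsto\gamma^{-1}$, so your remark that either form suffices is on point.
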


\subsubsection*{MGHFC spacetimes as first-order deformations in higher dimension}

We will now consider the construction of flat MGHFC spacetimes as first-order deformations in dimension $n+1>3$. For this, we consider
 a closed, orientable, hyperbolic $n$-dimensional manifold $M$, with 
fundamental group $\Gamma$. The holonomy representation of $M$ is a homomorphism
$\rho_0:\Gamma\rightarrow SO(n,1)$. It is rigid by Mostow's theorem, and also 
infinitesimally rigid, in the sense that any deformation cocycle for $\rho_0$
vanishes. 

However, $M$ can be considered as a totally geodesic hypersurface in a complete,
non-compact hyperbolic manifold $N$ of dimension $n+1$. This corresponds to extending
$\rho_0$ to a representation $\rho:\Gamma\rightarrow SO_0(n+1,1)$ with image in
$SO_0(n,1)\subset SO_0(n+1,1)$. 
Now consider a deformation $(\rho_t)_{t\in [0,1]}$ of $\rho$. As in dimension $2+1$
one obtains the map
$$ \rho_1:=\rho(0)^{-1}\rho'(0):\Gamma\rightarrow o(n+1,1)~, $$
which is a deformation cocycle for $\rho$. 
Moreover, there is an orthogonal decomposition $o(n+1,1)=o(n,1)\oplus \R^{n,1}$, and
we can decompose $\rho_1$ along this direct sum. The component in $o(n,1)$ is a 
deformation cocycle for $\rho_0$, so it vanishes by the infinitesimal rigidity of 
$\rho_0$, and thus $\rho_1$ determines a $\R^{n,1}$-valued cocycle.

This cocycle then determines a MGHFC spacetime, with holonomy representation 
$(\rho_0,\rho_1)$ considered as a homomorphism from $\Gamma$ to $\Iso(\R^{n,1})=
SO_0(n,1)\ltimes \R^{n,1}$. Moreover, the  holonomy 
representations of all MGHFC spacetimes can be obtained in this way, see \cite{mess, bonsante}. 

\begin{prop} \label{pr:defo-higher}
The MGHFC spacetimes for which the linear part of the holonomy is equal to $\rho_0$
are in one-to-one correspondence with the deformations of $\rho$.
\end{prop}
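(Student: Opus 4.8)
The plan is to establish the bijection between GHMCF spacetimes with linear holonomy $\rho_0$ and deformations of $\rho$ by breaking the correspondence into two maps and checking they are mutually inverse. In one direction, a deformation $(\rho_t)_{t\in[0,1]}$ of $\rho$ produces, as explained in the paragraph preceding the statement, a cocycle $\rho_1 = \rho(0)^{-1}\rho'(0)$ with values in $o(n+1,1) = o(n,1)\oplus\R^{n,1}$; the $o(n,1)$-component is a deformation cocycle for $\rho_0$ and hence vanishes by Mostow/Weil infinitesimal rigidity, so we obtain an $\R^{n,1}$-valued cocycle $u$ for $\rho_0$. Feeding $(\rho_0, u)$ into the Mess-type construction (recalled in Section~\ref{ssc:minkowski} and cited to \cite{mess, bonsante}) yields a GHMCF spacetime whose linear holonomy is exactly $\rho_0$. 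In the other direction, given such a spacetime, its holonomy is $(\rho_0, u)$ for some $\R^{n,1}$-valued cocycle $u$, and one recovers a deformation of $\rho$ by exhibiting a path $\rho_t$ of representations into $SO_0(n+1,1)$ with $\rho_0^{-1}\rho_t'(0)$ equal to $u$ under the decomposition above.

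The main technical point is the construction of this deformation path and the verification that the two constructions are inverse to each other. For surjectivity of ``deformation $\mapsto$ cocycle'', one uses that the space of $\R^{n,1}$-valued $1$-cocycles for $\rho_0$ is precisely the image of the derivative map $d/dt|_{t=0}$ applied to smooth paths in $\mathrm{Hom}(\Gamma, SO_0(n+1,1))$: given a cocycle $u$, the cocycle condition means $u$ is a tangent vector to the representation variety at $\rho$ (there is no obstruction at first order since we only need a formal/smooth path, not an actual deformation of hyperbolic structures), so one can integrate it to a smooth path $\rho_t$. For injectivity, one must check that two deformations of $\rho$ that induce the same $\R^{n,1}$-cocycle give isometric spacetimes: the $o(n,1)$-components being forced to vanish (infinitesimal rigidity) means the first-order data is entirely captured by the $\R^{n,1}$-part, and the Mess classification (cited above) says the spacetime is determined up to isometry by the pair $(\rho_0, u)$ up to coboundary — while coboundaries correspond exactly to changing the path $\rho_t$ by conjugation, i.e.\ to equivalent deformations.

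The hard part will be making precise the sense in which ``deformations of $\rho$'' are counted — whether up to conjugation, whether as germs or actual one-parameter families, and ensuring the equivalence relation on the deformation side matches the equivalence (isometry, equivalently cohomology) on the spacetime side. Concretely, I would: (1) fix the identifications $\R^{n,1}\cong o(n+1,1)/o(n,1)$ via the Killing-form orthogonal splitting and record that $\rho_0$-equivariance of the splitting makes the projection a chain map on cocycles; (2) invoke infinitesimal rigidity of $\rho_0$ (Weil rigidity, a consequence of Mostow) to kill the $o(n,1)$-component; (3) quote the Mess/Bonsante classification that the assignment (GHMCF spacetime with linear part $\rho_0$) $\leftrightarrow$ ($\R^{n,1}$-cocycle for $\rho_0$ modulo coboundary) is a bijection; (4) match this, via the cocycle construction $\rho_1 = \rho_0^{-1}\rho'(0)$ and integration of cocycles to paths, with deformations of $\rho$ modulo the natural equivalence. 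Steps (1)–(3) are essentially bookkeeping plus citation; the genuine content is step (4), the identification of the deformation-theoretic and cohomological equivalence relations, which is where I expect the argument to need the most care.
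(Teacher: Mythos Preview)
The paper does not actually prove this proposition: it is asserted immediately after the sentence ``Moreover, the holonomy representations of all GHMCF spacetimes can be obtained in this way, see \cite{mess, bonsante}'', and the proof is left to those references. So there is no detailed argument in the paper to compare against.

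Your sketch correctly isolates the three real ingredients --- the $\rho_0$-equivariant orthogonal splitting $o(n+1,1)=o(n,1)\oplus\R^{n,1}$, Weil's infinitesimal rigidity $H^1(\Gamma,o(n,1))=0$, and the Mess/Bonsante classification identifying GHMCF spacetimes with linear part $\rho_0$ with $H^1(\Gamma,\R^{n,1})$ --- and these combine to give
\[
H^1(\Gamma,o(n+1,1))\;\cong\;H^1(\Gamma,o(n,1))\oplus H^1(\Gamma,\R^{n,1})\;=\;H^1(\Gamma,\R^{n,1}),
\]
which is the statement. This is the intended argument, and your steps (1)--(3) capture it.

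Where you introduce an unnecessary difficulty is step (4). In the paper's usage, ``deformations of $\rho$'' means \emph{infinitesimal} deformations --- i.e.\ the cohomology space $H^1(\Gamma,o(n+1,1))$ --- not equivalence classes of honest one-parameter families. With that reading the correspondence is purely cohomological and there is nothing to integrate. If instead you insist on reading ``deformation'' as an actual smooth path $(\rho_t)$, then your claim that any $1$-cocycle ``can be integrated to a smooth path $\rho_t$'' is a genuine gap: integrability of a first-order deformation to an honest path is obstructed by $H^2(\Gamma,o(n+1,1))$, and you have not argued that the representation variety is smooth at $\rho$ or that the obstructions vanish. So either adopt the infinitesimal reading (which is what the paper means and which makes step (4) evaporate), or supply the missing smoothness/unobstructedness argument.
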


There is another geometrical  interpretation of the deformation cocycle $\rho$, namely as a 
first-order deformation of the flat conformal structure on $M$ underlying its hyperbolic
metric. Indeed, in the situation described above, where $M$ is considered as a totally geodesic
submanifold of $N$, the conformal structure at infinity of $N$ remains conformally flat
under a deformation. Conversely, any first-order deformation of the conformally flat structure
on $M$ determines a deformation of its developing map in $S^n$ and, by taking the 
convex hull of the complement of its image, one obtains a first-order deformation of $N$.


One direct consequence Proposition \ref{pr:defo-higher} is that it is much more difficult to
construct examples of MGHFC  spacetimes in higher dimensions than in dimension $2+1$. An (n+1)-dimensional MGHFC spacetime is uniquely determined by a closed $n$-dimensional hyperbolic manifold
along with a $\R^{n,1}$-valued deformation cocycle. 
For $n=2$, the latter corresponds to  a tangent vector to the Teichm\"uller 
space for a surface $S$ of given genus $g$, which implies that the MGHFC spacetimes homeomorphic to $S\times \R$ 
form a manifold of dimension $12g-12$. 
For $n\geq 3$, finding a MGHFC spacetime homeomorphic to $M\times \R$ is more difficult. 
Any closed manifold $M$ admits at most one hyperbolic metric $g$ by Mostow's rigidity theorem. 
Finding a deformation cocycle is equivalent to finding a first-order deformation
of the warped product hyperbolic metric $dt^2+\cosh^2(t)g$ on $\R\times M$. 
For many choices of $(M,g)$, such a deformation cocycle does not exist. However there are also
many examples where $(M,g)$ does admit a $\R^{n,1}$-valued cocycle. 
\begin{itemize}
\item This occurs whenever $(M,g)$ contains a closed, totally geodesic submanifold, and the
cocycle corresponds to ``bending'' along this totally geodesic manifold, see \cite{kourouniotis,johnson-millson}.
There are many (arithmetic) examples of closed hyperbolic manifolds (in any dimension) containing a closed 
totally geodesic surface.
\item Other examples of deformation cocycles can be found in specific cases, see e.g. 
\cite{kapovich:deformations,apanasov:deformations,scannell:deformations}.
\end{itemize}

In Section \ref{sc:3+1} we investigate the examples constructed by Apanasov in \cite{apanasov:deformations} and
 show how the frequency function encodes  information on the holonomy
representation and hence on the topology of the spacetime.
 
\subsection{Reconstructing a domain of dependence from its holonomy representation}
\label{ssc:reconstructing_holo}

In Section \ref{sc:2+1} and \ref{sc:3+1} we  compute  domains of
dependence which are universal covers of MGHFC spacetimes. This requires a practical way of reconstructing (to a good approximation) the
shape of a domain  from the holonomy representation of the   MGHFC spacetimes. For this, we use another characterization of those domains due to Barbot \cite{barbot:globally}.

We consider a MGHFC spacetime $M$ of dimension $n+1$ with fundamental group $\Gamma$. 
As explained in Section \ref{ssc:minkowski}, 
the universal cover of $M$ can be identified
isometrically with a future-complete domain of dependence $ D\subset \R^{1,n}$. The fundamental group   $\Gamma$ acts isometrically on $D$ with a quotient  $D/\Gamma$ isometric to $M$. Moreover all elements of $\Gamma$
except the unit element  act on $\R^{1,n}$ as loxodromic elements. 

\begin{defi}
For  $g\in \Gamma$, we denote by $ D_g$ the set of points $x\in \R^{1,n}$ such that $g^p(x)-x$ is spacelike for all $p\in\mathbb Z$. 
\end{defi}

This is a simpler version of the definition at the beginning of Section 7 in  
\cite{barbot:globally}, but both definitions are equivalent in our case because 
the linear part of each nontrivial element $g\in \Gamma$ is loxodromic.
In (2+1) dimensions, it is easy to give a more explicit description of the set $D_g$. If the linear part of $g$ is loxodromic, then there is a unique
spacelike line $l_g$ in $\RR^{2,1}$ which is invariant under the action of $g$.
It is proved in  \cite{barbot:globally} that the set $ D_g$ is then equal to the union of the
past and the future of $l_g$.

\begin{prop}[Barbot \cite{barbot:globally}] \label{pr:barbot}
The domain of dependence $ D$ is one of the two connected components of $\cap_{g\in \Gamma} D_g$.
\end{prop}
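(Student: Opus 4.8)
The plan is to show the two inclusions $D \subset \bigcap_{g\in\Gamma} D_g$ and $\bigcap_{g\in\Gamma} D_g \subset D \cup D'$, where $D'$ is a second domain of dependence (the ``past'' counterpart), and then argue that $\bigcap_{g\in\Gamma} D_g$ has exactly these two connected components. First I would establish the easy inclusion: if $x\in D$, then because $D$ is a future-complete domain of dependence on which $\Gamma$ acts properly discontinuously by isometries with cocompact quotient, every orbit point $g^p(x)$ also lies in $D$; since $D$ lies strictly in the future of its initial singularity and all the level sets of cosmological time are spacelike Cauchy surfaces, two distinct points $x$ and $g^p(x)$ of $D$ cannot be causally related (a timelike or lightlike segment joining them would project to a closed causal curve in the compact quotient $M$, contradicting global hyperbolicity). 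Hence $g^p(x)-x$ is spacelike for all $p\in\ZZ$, i.e.\ $x\in D_g$ for every $g\in\Gamma$.

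The reverse inclusion is the substantive part. Here I would invoke the explicit structure recalled in the excerpt: each nontrivial $g\in\Gamma$ is loxodromic, so its linear part $\rho(g)\in SO^+(1,n)$ has an invariant spacelike direction, and (at least in the $2+1$ case, following \cite{barbot:globally}) the set $D_g$ is the union $\fut(l_g)\cup\past(l_g)$ of the future and past of the invariant spacelike line $l_g$; its complement is a ``wedge'' bounded by two lightlike hyperplanes through $l_g$. The point is that the lightlike hyperplanes bounding $\RR^{1,n}\setminus D$ arise precisely as limits of the lightlike hyperplanes bounding the wedges $\RR^{1,n}\setminus D_g$: the initial singularity $T$ of $D$ is the convex hull data determined by the support planes of $D$, and the closure of the union of the translates of the axes $l_g$ accumulates onto a dense set of spacelike support lines of $\partial D$ (using that $\ret(D)$ is dense in $\partial D$ for generic domains, and more generally that the $\Gamma$-orbit structure recovers the lamination data $(L,\mu)$). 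Thus any point not in $D$ (nor in its past-complete partner $D'$) lies, for some $g$, in the wedge complementary to $D_g$, which gives $\bigcap_g D_g \subset D\cup D'$. I would phrase this via the convex-function description: $\partial D$ is the graph of a convex $1$-Lipschitz $u$ with $\|\grad u\|=1$ a.e., and each $D_g$ contributes the ``slab'' over the region where the affine function supporting $l_g$ does not cut off $x$; the intersection over all $g$ of these conditions forces $x$ to lie over the region where $u$ is the sup of these supporting functions, i.e.\ in $\overline D$.

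Finally, to pin down that $D$ is one of the \emph{two connected components}, I would note that $D$ and $D'$ are disjoint (one is future-complete, the other past-complete, and they are separated by the region swept out near the ``middle''), each is open and convex hence connected, and $\bigcap_g D_g = D\cup D'$ with no other points, so these are exactly the connected components. The main obstacle I anticipate is the reverse inclusion, specifically controlling the limit in which the discretely many wedges $\RR^{1,n}\setminus D_g$ approximate the possibly ``large'' complement $\RR^{1,n}\setminus(D\cup D')$ when the initial singularity is complicated (e.g.\ a real tree with vertices of infinite valence, or a lamination with Cantor-set transverse structure); one must check that no point of the complement of $D\cup D'$ survives in every $D_g$, which is where Barbot's analysis in \cite{barbot:globally} does the real work, and where in higher dimensions one needs the more general definition of $D_g$ rather than the explicit ``future and past of a line'' description available only when $n=2$.
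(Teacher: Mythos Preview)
The paper does not give its own proof of this proposition. Immediately after the statement it says: ``A proof can be found --- in a more general setting --- in Barbot's work \cite{barbot:globally}, see Section 7 for the definitions and Section 10 for the statements corresponding to Proposition \ref{pr:barbot}.'' So there is nothing to compare your proposal against; the result is imported as a black box.

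That said, your sketch is reasonable. The forward inclusion $D\subset\bigcap_g D_g$ is essentially correct: for $p\neq 0$ and nontrivial $g$, a non-spacelike displacement $g^p(x)-x$ would give, by convexity of $D$, a causal segment in $D$ between two points in the same $\Gamma$-orbit, hence a closed causal curve in the globally hyperbolic quotient $M=D/\Gamma$, which is impossible. For the reverse inclusion you correctly identify the difficulty and defer to Barbot; your heuristic that the invariant lines $l_g$ accumulate onto the spacelike support directions of $\partial D$ is the right $2+1$-dimensional picture, but turning it into a proof that \emph{every} point of $\RR^{1,n}\setminus(D\cup D')$ is excluded by some $D_g$ --- particularly in higher dimension where the explicit ``future and past of a line'' description of $D_g$ is unavailable --- is exactly the content of Sections~7 and~10 of \cite{barbot:globally}, as the paper indicates.
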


A proof can be found --- in a more general setting --- in Barbot's work \cite{barbot:globally}, see
Section 7 for the definitions and Section 10 for the statements corresponding to Proposition
\ref{pr:barbot}.


\section{Light emitted by the initial singularity}
\label{sc:light}

\subsection{Definitions} \label{ssc:defi}

We consider a domain of dependence $M$  in  (n+1)-dimensional Minkowski space, 
as described in the previous section. 

An observer in free fall in $M$ is characterized by his worldline, which is  a future-oriented timelike geodesic in $M$. This geodesic is specified 
by the choice of a point $p\in M$ and a future-directed timelike unit vector $v\in \mathbb H^n$, 
where we use the identification of $\mathbb H^n$ with the set of future directed timelike unit vectors $H^+$ from section \ref{sc:background}. The point $p\in M$ corresponds to a given event on the worldline of the observer, while the vector $v$
is  his velocity unit vector.  

We consider a uniform light signal emitted near the initial singularity of $M$  which is received by the observer at the point $p\in M$. 
The quantity  measured by the observer is the frequency of this 
light signal, which 
 depends  on the observer and on the direction in which the light
is observed. We can construct this quantity as follows. 
The space of lightlike rays arriving at $p$ can be identified
with the set of unit spacelike vectors orthogonal to $v$ and hence with
 $T^1_{v}\mathbb H^n$, which corresponds to the (n-1)-dimensional sphere $S^{n-1}$.  
We associate  to each unit  vector $u\in T^1_{v}\hyp^n$ the ray through $p$ with the direction given by the lightlike vector $u-v$.
The basic idea is to  define the (rescaled) frequency function as a function
\[
    \rho_{p,v}:T^1_{v}\mathbb H^n\rightarrow\mathbb R\,,
\]
which is given as the renormalized  limit of the functions that
measure the frequency of the light emitted from the surface 
surface $H_\epsilon$  of constant cosmological time $\epsilon$:
\[
   \rho_{p,v}(u)=\lim_{\epsilon\rightarrow 0}\epsilon\rho_{p,v}^\epsilon(u).
\]
The functions $\rho_{p,v}^\epsilon$  are defined by the rule 
\[
   \rho_{p, v}^\epsilon(u)=\langle v, \nu^\epsilon_{p,v}(u)\rangle
\]
where $\nu^\epsilon_{p,v}(u)$ is the normal of the  surface $
H_\epsilon$ at the intersection point of $H_\epsilon$ with the light ray $p+\mathbb R(u-v)$:
$\{q_\epsilon(u)\}=H_\epsilon\cap(p+\mathbb R (-v+u))$.
It is clear that the frequency function describes a frequency shift due to the motion of the observer relative to the initial singularity and is closely related to  the red-shift observed in astronomical observations. 

We consider first the $(2+1)$-dimensional case. In this situation, the frequency function
 of a domain that is the future of a finite spacelike tree
can be understood by considering two main examples. The first is a domain
that is the future of a point, i.~e.~ a light cone,  and the second is a domain which is
the future of a spacelike line. We first consider these two examples and then use them as the building blocks to analyze the general situation. 

\subsection{Example 1: future of a point} \label{ssc:ex1}

We consider the domain of dependence $D$ which is the future of $0\in\mathbb
R^{2,1}$ together with an observer in $D$ which is given by a point  $p\in D$ and a future directed timelike unit vector  $v\in
\hyp^2$ as shown in Figure  \ref{fig:obs1}.  The cosmological time $\tau$ of the event $p$ is then determined by
\begin{align}
\skp p p =-\tau^2.
\end{align}
We also consider the  quantity  $\delta$ defined
by 
\begin{align}\label{deltadef}
 \langle p, v\rangle=-\tau\ch\delta,
\end{align}
which is the hyperbolic distance between $v$ and the point $p/\tau$ (see Figure  \ref{fig:obs1}). 
and measures the discrepancy of the observer's eigentime
and the cosmological time. The cosmological time coincides with the
observer's eigentime up to a time translation if and only if  $\delta=0$. We can also
interpret $\delta$ as the rapidity of the boost from the worldline of the observer to the
 the geodesic through $p$ and the origin.

For a given unit vector $u\in T^1_{v}\mathbb H^2$ we also introduce a  parameter $\phi$
defined by
\begin{align}\label{phidef}
  \langle p, u\rangle=\tau \sh\phi\,,\qquad \phi\in[-\delta,\delta]~.
\end{align}
Geometrically, $\phi$ is the hyperbolic distance of the point $p/\tau$
from the geodesic orthogonal to $u$. It becomes maximal 
 when $u\in T^1_v\hyp^2$ points in the direction of
$p/\tau\in\hyp^2$ and minimal when $u$ points away from
it.

We denote by $t_\epsilon\in\mathbb R^+$ the parameter that characterizes the intersection point $q_\epsilon(u)$ of the light ray $p+\mathbb R(u-v)$ with the surface $H_\epsilon$ of constant cosmological time $\epsilon$. As the latter is the set of points $$H_\epsilon=\{x\in\mathbb R^{2,1}|\; \langle x,x\rangle=-\epsilon^2\},$$
the parameter $t_\epsilon$ is characterized uniquely as the positive solution of the equation
\[
  \langle p+t_\epsilon(-v+u),p+t_\epsilon(-v+u)\rangle=-\epsilon^2.
\]
Inserting the parameters $\delta$ and $\phi$ defined in \eqref{deltadef} and \eqref{phidef} and solving the equation for $t_\epsilon$, we obtain
\[
  t_\epsilon(u)=\frac{\tau^2-\epsilon^2}{2\tau(\cosh\delta+\sinh\phi(u))}~.
\]
The  unit normal vector $\nu^\epsilon_{p, v}$ in the intersection point of $p+\mathbb R(u-v)$ and $H_\epsilon$ is given by
\[
   \nu^\epsilon_{p,v}(u)=\frac{1}{\epsilon}(p+ t_\epsilon(-v+u)),
\]
 and the function $\rho_{p,v}^\epsilon$ by
\[
   \rho_{p,v}^\epsilon(u)=-\left\langle v, \frac{p+t_\epsilon(-v+u)}{\epsilon}
   \right\rangle\,.
\]
A direct computation then shows that the rescaled frequency function then takes the form 
\[
  \rho_{p,v}(u)=\frac{\tau}{2(\ch\delta+\sh\phi(u))}\,.
\]
Using the fact that the function $\phi$ takes values  $\phi(u)\in [-\delta,  \delta]$, one finds  that the maximum and minimum frequency are given by
\begin{equation} \label{eq:rho1}
\rho^{max}_{p, v}=\frac{\tau}{2} e^{\delta}~, \qquad \rho^{min}_{p,
  v}=\frac{\tau}{2} e^{-\delta}~.
\end{equation}
These equations show that the frequency function $\rho_{p, v}$ allows  one to re-construct
the cosmological time $\tau(p)$ of the observer at the reception of the light signal and the discrepancy between his eigentime and the cosmological time, which is given by $\delta$. Moreover, by determining  the direction of the maximum, the observer can deduce the  direction of $p/\tau\in\mathbb H^2$.

\begin{figure}[ht]
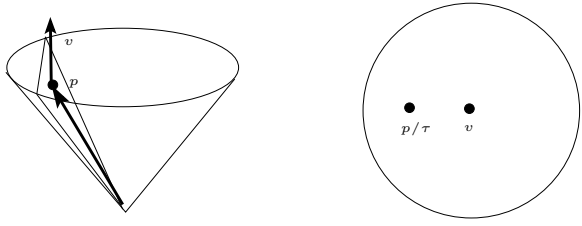

\begin{center}
\input futzero.pstex_t
\end{center}
\caption{Description of an observer for Example 1.}
\label{fig:obs1}
\end{figure}

\subsection{Example 2: future of a spacelike line} \label{ssc:ex2}

We consider the domain of dependence $D$ which is the future of a  spacelike
line $\RR e$ in $\mathbb R^{2,1}$ together with an observer given by a
point $p\in D$ and a timelike future directed timelike unit vector $v\in \hyp^2$. 
Using the symmetry of the system under translations in the
direction of the line and Lorentz transformations with the
line as their axis, we can parameterize the data for the observer as follows
\begin{align}\label{param}
e=(0,0,1)\qquad p=\tau(\cosh\delta,\sinh\delta,0)\qquad v=(\cosh\xi,0,\sinh\xi).
\end{align}
If we denote by $x_0\in\hyp^2$ the timelike unit vector corresponding
to the shortest line in $\mathbb R^{2,1}$ from $\RR e$ to $p$, then
$\xi$ is the hyperbolic distance of $v\in \hyp^2$ from from the
geodesic through $x_0$ that is orthogonal to $\RR e$. The parameter
$\delta$ is the hyperbolic distance from $x_0$ to  the projection of $v$ onto
this geodesic, as shown in Figure \ref{fig:obs2}.

The rescaled frequency function is defined as in Example 1. For $u\in T^1_v\hyp^2$, we have
\begin{align}
\rho_{p,v}(u)=\lim_{\epsilon\rightarrow 0} \epsilon \rho_{p,v}^\epsilon(u)\qquad
\rho_{p,v}^\epsilon(u)=\langle v, \nu_{p,v}(u)\rangle,
\end{align}
where $\nu_{p,v}(u)$ is the unit normal vector to the constant
cosmological time surface $H_\epsilon$ at the intersection point
$\{q_\epsilon(u)\}=H_\epsilon\cap (p+\RR(u-v)).$ As the constant
cosmological time surface $H_\epsilon$ is of the form
\begin{align}
H_\epsilon=\{ x\in \mathbb R^{2,1}\;|\; \skp x x -{\skp x e}^2=-\epsilon^2 \}~,
\end{align}
the intersection point $q_\epsilon(u)$ is given by the equations
\begin{align}
q_\epsilon(u)=p+t_\epsilon(u-v),\qquad \langle p+ t_\epsilon(u-v), p+ t_\epsilon(u-v)\rangle - {\skp
  {p+t_\epsilon(u-v)} e}^2=-\epsilon^2.
\end{align}
Using the parameterization \eqref{param},  in particular  the identities
$\skp {u-v}{u- v}=0$, $\skp p e=0$,  we obtain a quadratic equation in
$t_\epsilon$
\begin{align}
t_\epsilon^2 \langle e, u-v\rangle^2-2 t_\epsilon \langle p, u-v\rangle +\tau^2-\epsilon^2=0
\end{align}
with solutions
\begin{align}
t_\epsilon^\pm=\frac{\langle p, u-v\rangle\pm \sqrt{ {\skp p {u-v}}^2
    -({\tau^2-\epsilon^2}){\skp e{u- v}}^2}}{\skp e {u-v}^2 }.
\end{align}
In the limit $\epsilon\rightarrow 0$ this reduces to
\begin{align}
t_\pm=\frac{\langle p, u-v\rangle\pm \sqrt{ {\skp p {u-v}}^2 -\tau^2{\skp  e{u- v}}^2)}}{ {\skp  e {u-v}}^2 },
\end{align}
and the rescaled frequency function is given by
\begin{align}
\rho=-\skp v p - t_\epsilon (1-\skp {u-v} e \skp v e).
\end{align}
To obtain a concrete parametrization for $\rho$, we parameterize the unit vector $u\in T^1_{v}(\hyp^2)$ as
\begin{align}
u=\cos\theta (\sh \xi, 0, \ch \xi)+\sin\theta(0,1,0)~.
\end{align} 
This implies
\begin{align}
&\skp v p=-\tau\cosh\delta\cosh\xi~,\qquad \skp v e =\sh\xi\,, \nonumber \\
&\langle p, u-v\rangle=-\tau\ch\delta(\cos\theta\sh\xi-\ch\xi)+\tau\sh\delta\sin\theta\,,\nonumber\\
&\skp e {u- v}=\cos \theta \ch \xi-\sh\xi\,,\nonumber
\end{align}
and the expression for $t_\pm$ becomes
\begin{align}\label{texp}
t_\pm=\tau e^{\mp\delta}\,\frac{\ch\xi-\sh\xi\cos\theta\mp\sin\theta}{(\ch\xi\cos\theta-\sh\xi)^2}~.
\end{align}
If we introduce an ``angle variable" $\theta_\xi$ defined by
\begin{align}
\tan\frac{\theta_\xi} 2=e^\xi\qquad \text{with}\;\theta_\xi\in[0,\pi/2],
\end{align}
then we obtain 
\begin{align}\label{tauexp}
t_\pm=\frac{\tau e^{\mp \delta}\sin \theta_\xi}{2\sin^2\left(\frac{\theta\pm \theta_\xi}{2}\right)}.
\end{align}
Note that $t_\pm\geq 0$ and that $t_\pm$  diverges for
$\theta=\mp\theta_\xi$. The two cases for $t_\pm$, $\rho_\pm$
correspond to the intersection points of the ray $p+\RR(u-v)$ with the
two lightlike planes $Q_\pm$ containing $\RR e$. The relevant
intersection point is the one that is closer to $p$, i.e. the one with
$t=\text{min}(t_\pm)$. From \eqref{texp} it follows that this is the
one associated with $t_+$ if
\begin{align}\label{condi}
\left
(\frac{\sin\left(\frac{\theta-\theta_\xi}{2}\right)}{\sin\left(\frac{\theta+\theta_\xi}{2}\right)}\right)^2\leq
e^{2\delta}
\end{align}
and the one for $t_-$ otherwise.
For the associated frequency functions, we obtain
\begin{align}
\rho_\pm(\theta)=&\frac \tau 2
\cosh\xi\left(e^{\pm\delta}-e^{\mp\delta}\left(\frac{\ch\xi-\sh\xi\cos\theta\mp\sin\theta}{\ch\xi\cos\theta-\sh\xi}\right)^2\right)=\frac
\tau 2
\ch\xi\left(e^{\pm\delta}-e^{\mp\delta}\left(\frac{\sin\left(\frac{\theta\mp\theta_\xi}{2}\right)}
{\sin\left(\frac{\theta\pm\theta_\xi}{2}\right)}\right)^2\right).\nonumber
\end{align}
Clearly, $\rho_+(\theta)\geq 0$ if and only if \eqref{condi} is
satisfied, and $\rho_-(\theta)\geq 0$ otherwise. The frequency function is therefore given by
\begin{align}
\rho_{p,v}(\theta)=\text{max}(\rho_+(\theta),\rho_-(\theta))=\begin{cases}
\rho_+(\theta) & \text{for}\;\left
(\frac{\sin\left(\frac{\theta-\theta_\xi}{2}\right)}{\sin\left(\frac{\theta+\theta_\xi}{2}\right)}\right)^2\leq
e^{2\delta}\\ \rho_-(\theta) & \text{for}\;\left
(\frac{\sin\left(\frac{\theta-\theta_\xi}{2}\right)}{\sin\left(\frac{\theta+\theta_\xi}{2}\right)}\right)^2\geq
e^{2\delta}.
\end{cases}\end{align}
The frequency function  has local maxima in
\begin{align}
\phi_\pm^{max}=\pm\theta_\xi,
\end{align}
where it takes the values
\begin{align} \label{eq:rho2}
\rho_\pm^{max}=\frac \tau 2 \cosh\xi e^{\pm\delta},
\end{align}
and it  vanishes if and only if
\begin{align}
\left (\frac{\sin\left(\frac{\theta-\theta_\xi}{2}\right)}{\sin\left(\frac{\theta+\theta_\xi}{2}\right)}\right)^2= e^{2\delta}~.
\end{align}
This corresponds to $t_+=t_-$ or, equivalently,
\begin{align}
\skp {p-\tau e} {u-v}=0.
\end{align}
This  condition is satisfied if and only if  light ray $p+\mathbb R(u-v)$ intersects the line $\RR\cdot e$,
which, for each observer, happens for exactly two values of $\theta$.

The observer can therefore extract all relevant information from the
function $\rho_{p,v}(\theta)$. He can determine the cosmological time $\tau$ at the reception of the light signal, his position relative to the line, which is given by $\delta$, and his velocity
relative to the line, which is given by $\xi$.
The development of the measured  frequency function  in terms of the eigentime of a moving
observer is given by the dependence of his cosmological time and the
parameter $\delta$ on his eigentime. For an observer with a worldline specified by $p\in D$ and $v\in\mathbb H^2$, his  position  at an eigentime $t$ after the event $p$ is given by $p'=p+t v$. This implies that the cosmological time of $p'$ and the associated parameter $\delta$ are given by
\begin{align}
\tau(t)=\sqrt{\tau^2+t^2-2\skp{p}{v}}=\sqrt{\tau^2+t^2+\tau\cosh\delta\cosh\xi}~,\qquad \coth \delta(t)=
\coth\delta+t \frac{\ch \xi}{\ch \delta}~.
\end{align}
The time development of $\tau$ with the eigentime corresponds to an overall rescaling of
the frequency function. The time development of $\delta$ changes
the relation between its constant and its angle-dependent part.

\begin{figure}
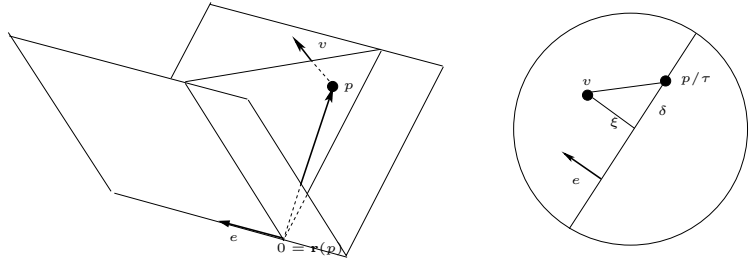

\begin{center}
\input futline.pstex_t
\end{center}
\caption{Description of an observer for Example 2.}
\label{fig:obs2}
\end{figure}


\subsection{The rescaled density for a general domain} \label{ssc:basic}

The results in the previous subsections allow one to construct the
rescaled frequency function for a domain which can be obtained from the
light cone by the Mess construction \cite{mess}, i.~e.~via grafting along a weighted multicurve.  To show that the
rescaled density is well-defined also for the case of a general
geodesic lamination, one has to prove that the limit
$\lim_{\epsilon\rightarrow 0} \epsilon \rho_{p,v}^\epsilon$ exists for
these domains and for a general observer $(p,v)$.
As we will see, in general the problem is more subtle than it appears and some care is needed to
pass to the limit.

Let $D$ be a generic domain of dependence in $\mathbb R^{n,1}$.
Fix an observer in $D$  by specifying a point $p\in D$ and a future directed
timelike unit vector $v$. To analyze the behavior of the frequency function, it is 
convenient to express the function  $\epsilon\rho_{p,v}(\epsilon)$ 
as the composite of two functions  $q_\epsilon:T_v^1\mathbb H^2\rightarrow H_\epsilon$
and $\iota_\epsilon: H_\epsilon\rightarrow \mathbb R$, where
$q_\epsilon(e)$ is the intersection of the light ray $p+\mathbb R\cdot(e-v)$
with the level  surface of the cosmological time  $H_\epsilon=\tau^{-1}(\epsilon)$, and 
\[
\iota_\epsilon(q)=-\epsilon\langle \nu_\epsilon(q), v\rangle\,,
\]
where $\nu(q)$ denotes the unit normal vector of $H_\epsilon$ in $q$. 
It is clear that  the maps $q_\epsilon: T^1_v\mathbb H^2\to H_\epsilon$ converge to a map $q_0:T^1_v\mathbb H^2\to\partial D$   as $\epsilon\rightarrow 0$.
The idea is to show that the maps $\iota_\epsilon: H_\epsilon\to\mathbb R$ converge to a function $\iota: \partial D\to\mathbb R$ as 
$\epsilon\rightarrow 0$. 
Note, however,  that functions $\iota_\epsilon: H_\epsilon\to\mathbb R$ are defined on different domains, so we need to make this statement more precise.

Let $P_v$ be the hyperplane in $\mathbb R^{n,1}$ orthogonal to $v$.
By \cite{mess, bonsante},
the surfaces $H_\epsilon$ can be realized as  the graphs of  convex functions
$u_\epsilon: P_v\rightarrow\mathbb R$. More precisely, the level surfaces of the cosmological time can be identified with the  set of points $H_\epsilon=\{q=x+u_\epsilon(x)v|x\in P_v\}$. As $\epsilon\rightarrow 0$,
the functions $u_\epsilon$ converge to a convex function $u_0$, whose graph is 
the boundary $\partial D$.

Thus, there is a natural identification between 
$P_v$ and $H_\epsilon$ given by $x\mapsto x+u_\epsilon(x)v$.
In particular,  we may consider the functions $\iota_\epsilon$ as functions defined on $P_v$.
The following result then shows that the frequency function can be defined pointwise on the
boundary of any domain of dependence.

\begin{prop}\label{pr:conv-intensity}
For a fixed $x\in P_v$ the function $\epsilon\mapsto\iota_\epsilon(x)$ is increasing.
Moreover the function
\[
   \iota(x)=\lim_{\epsilon\rightarrow 0}\iota_\epsilon(x)=
   \inf_{\epsilon}\iota_\epsilon(x)
\]
is finite-valued at each point and locally bounded.
\end{prop}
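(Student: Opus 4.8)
The plan is to reduce the statement about a general domain of dependence $D$ to the two basic examples, the future of a point and the future of a spacelike line, computed in Sections~\ref{ssc:ex1} and~\ref{ssc:ex2}. The key geometric observation is that for a fixed $x\in P_v$ and a fixed $\epsilon>0$, the intersection point $q_\epsilon(x)=x+u_\epsilon(x)v\in H_\epsilon$ lies on a light ray through $p$, and this ray meets $\partial D$ at a point $r=q_0(x)$ from which at least one lightlike half-line emanates. The cosmological time surface $H_\epsilon$ near $q_\epsilon(x)$ is controlled, from the past, by the future light cone of the retraction point $\ret(q_\epsilon(x))\in T$; more precisely, $H_\epsilon$ is the graph of the convex function $u_\epsilon$, and at each point $q$ its normal $\nu_\epsilon(q)$ equals $(q-\ret(q))/\epsilon$. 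So $\iota_\epsilon(x) = -\langle q_\epsilon(x)-\ret(q_\epsilon(x)), v\rangle$.

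First I would establish monotonicity of $\epsilon\mapsto\iota_\epsilon(x)$. Fix $x$ and let $\epsilon'<\epsilon$. The ray $p+\R(u-v)$ (with $u$ the point of $T^1_v\hyp^n$ corresponding to $x$) meets $H_{\epsilon'}$ at a point $q'$ that lies \emph{further along the ray}, hence strictly in the past of $q_\epsilon(x)$ along the ray direction. I would compare $\iota_{\epsilon'}(x)=-\langle q'-\ret(q'), v\rangle$ with $\iota_\epsilon(x)$ by noting that the map $q\mapsto q-\ret(q)$ is the gradient of the cosmological-time-squared function (up to a factor), which is convex; equivalently, one can argue directly that as one moves down the light ray toward $\partial D$, the vector $q-\ret(q)$ becomes ``more lightlike'' relative to $v$, so that $-\langle q-\ret(q),v\rangle$ decreases. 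This is exactly the behaviour already verified by the explicit formulas in the two examples, where $t_\epsilon$ and the resulting $\epsilon\rho^\epsilon_{p,v}$ are manifestly monotone in $\epsilon$; the general case follows because, locally, $H_\epsilon$ near $q_\epsilon(x)$ is sandwiched between the cosmological time surfaces of the future of the point $\ret(q_\epsilon(x))$ and, when $\ret(q_\epsilon(x))$ lies on a segment of $T$, of the future of the line carrying that segment. Since monotone bounded-below sequences converge, the limit exists and equals the infimum, which is the content of the displayed formula for $\iota(x)$.

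It remains to prove that $\iota(x)$ is finite (not $-\infty$, equivalently bounded below by $0$) and locally bounded above. Finiteness is immediate from the monotone limit once we know $\iota_\epsilon(x)\ge 0$, which holds because $-\langle \nu_\epsilon(q),v\rangle\ge 0$ for a future-pointing timelike $v$ and a future-pointing timelike unit normal $\nu_\epsilon(q)$, with equality only in degenerate limits. For local boundedness above, I would fix the observer $(p,v)$ and a compact set $K\subset P_v$, and bound $\iota_\epsilon(x)$ uniformly for $x\in K$ and all small $\epsilon$. The bound $\iota_\epsilon(x)\le\iota_{\epsilon_0}(x)$ for any fixed $\epsilon_0$ (by monotonicity) reduces this to showing $\iota_{\epsilon_0}$ is bounded on $K$; and $\iota_{\epsilon_0}(x)=-\epsilon_0\langle\nu_{\epsilon_0}(q_{\epsilon_0}(x)),v\rangle$ is a continuous function of $x$ on the compact set $K$ because $q_{\epsilon_0}$ and $\nu_{\epsilon_0}$ are continuous (the cosmological time of a domain of dependence is $C^{1,1}$, so its level surfaces are $C^1$ and their normals are continuous), hence bounded. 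The only subtlety is to check that the intersection points $q_{\epsilon_0}(x)$ stay in a compact region of $H_{\epsilon_0}$ as $x$ ranges over $K$ — this follows because the light rays $p+\R(u-v)$ with $u$ in the compact image of $K$ in $T^1_v\hyp^n$ form a compact family, and each meets $H_{\epsilon_0}$ in a single point depending continuously on $u$.

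The main obstacle is the monotonicity step in the general case: while the two examples make it transparent by explicit computation, in general one must argue that the cosmological time surface $H_\epsilon$ of an arbitrary domain $D$, in a neighbourhood of $q_\epsilon(x)$, is pinched between the cosmological time surfaces of simpler domains (the future of $\ret(q_\epsilon(x))$ from above, and, when $\ret(q_\epsilon(x))$ is a vertex of $T$, a union of light cones, or when it lies in the relative interior of an edge, the future of the supporting spacelike line) — and that this pinching is compatible with the retraction, so that the comparison of $-\langle q-\ret(q),v\rangle$ along the light ray is governed by the example formulas. Making this comparison rigorous, using only that $D=I^+(T)$ and the $C^{1,1}$ regularity of $\tau$, is where the ``some care is needed'' warning of the text applies.
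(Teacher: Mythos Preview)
There is a genuine confusion in your monotonicity argument. When the paper fixes $x\in P_v$ and considers $\epsilon\mapsto\iota_\epsilon(x)$, the point $q_\epsilon(x)=x+u_\epsilon(x)v$ moves along the \emph{vertical timelike line} $x+\mathbb R v$, not along a light ray through $p$. For different values of $\epsilon$, the points $q_\epsilon(x)$ and $q_{\epsilon'}(x)$ lie on the same vertical line but on \emph{different} light rays from $p$; there is no single direction $u\in T^1_v\mathbb H^n$ ``corresponding to $x$'' independently of $\epsilon$. So your argument ``the ray $p+\mathbb R(u-v)$ meets $H_{\epsilon'}$ at a point $q'$ that lies further along the ray'' is comparing the wrong points, and the subsequent comparison with the explicit examples (where monotonicity was computed along a fixed light ray) does not apply. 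The pinching-by-model-domains idea, as you set it up, would at best give bounds rather than monotonicity, and in any case it is aimed at the wrong family of points.

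You actually glance at the correct mechanism when you write that $q\mapsto q-\ret(q)$ is (up to a factor) the gradient of $\tau^2$, which is convex; this is exactly the content of the paper's Lemma~\ref{lm:df}. The paper's proof makes this the whole argument: one shows, by a one-sided comparison with the future of the single point $r=\ret(q)$, that $f_x(s)=\tau^2(x+sv)$ is convex in $s$; then, since $-\epsilon\nu_\epsilon=\tfrac12\grad(\tau^2)$, one has $\iota_\epsilon(x)=\tfrac12 f_x'(u_\epsilon(x))$, and monotonicity of $\iota_\epsilon(x)$ in $\epsilon$ follows from monotonicity of $f_x'$ together with monotonicity of $u_\epsilon(x)$. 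No second model domain or two-sided pinching is needed.

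Your local-boundedness argument is fine in substance (bound by $\iota_{\epsilon_0}$, which is continuous because $\tau$ is $C^{1,1}$), though you again complicate it unnecessarily by bringing in light rays: the map $x\mapsto x+u_{\epsilon_0}(x)v$ is continuous on $P_v$, so compact sets go to compact sets without any reference to $p$.
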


The proof of this proposition will be based on the following technical lemma.

\begin{lemma}\label{lm:df}
The square of the cosmological time $\tau^2$ is convex along
each timelike line.
\end{lemma}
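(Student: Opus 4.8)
The plan is to parametrize a timelike line, use the explicit formula for the cosmological time in terms of the retraction map $\ret$, and reduce the convexity of $\tau^2$ to a statement about the behavior of $|p - \ret(p)|^2$ along the line. Let $\ell(t) = p_0 + tw$ be a timelike line, where $w$ is a timelike vector; write $r(t) = \ret(\ell(t)) \in \partial D$. By the basic properties of domains of dependence recalled in Section~\ref{sc:background}, $\tau(\ell(t))^2 = |\ell(t) - r(t)|^2 = -\langle \ell(t) - r(t), \ell(t) - r(t)\rangle$ (with the sign convention making this positive), and moreover $r(t)$ is the unique point of $\partial D \cap \past(\ell(t))$ realizing the supremum.

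The key idea is that $\tau^2$ is an \emph{infimum} of explicitly convex functions. For each fixed point $r \in \partial D$, consider the function $f_r(t) = -\langle \ell(t) - r, \ell(t) - r\rangle$, provided $\ell(t) \in \fut(r)$. Expanding, $f_r(t) = -\langle p_0 - r, p_0 - r\rangle - 2t\langle p_0 - r, w\rangle - t^2 \langle w, w\rangle$; since $w$ is timelike, $\langle w,w\rangle < 0$, so $-\langle w,w\rangle > 0$ and each $f_r$ is a quadratic with positive leading coefficient, hence convex. Now I would argue that $\tau(\ell(t))^2 = \inf_{r} f_r(t)$ where the infimum is over all $r \in \partial D$ in the past of $\ell(t)$ --- because the retraction point maximizes $|\ell(t) - r|$ among $r$ in the past cone, but $\tau^2 = |\ell(t) - r(t)|^2$ equals $f_{r(t)}(t)$, and for any other past point $r$ one has $|\ell(t) - r| \le \tau(\ell(t))$, i.e. $f_r(t) \le \tau(\ell(t))^2$...

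Wait --- that gives $\tau^2$ as a \emph{supremum}, not an infimum, and a supremum of convex functions is convex, which is exactly what we want. So the cleaner statement is: $\tau(\ell(t))^2 = \sup\{ f_r(t) : r \in \partial D,\ \ell(t) \in \overline{\fut(r)}\}$, and each $f_r$ is convex in $t$ on the interval where it is defined (and nonnegative there). The subtlety is that the index set (which $r$ are "below" $\ell(t)$) varies with $t$; I would handle this by instead taking the supremum over \emph{all} $r \in \partial D$ of the function $g_r(t) := f_r(t)$ when $\ell(t) - r$ is causal past-pointing and $g_r(t) := 0$ (or $-\infty$) otherwise --- but one must check this truncation preserves convexity. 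A slicker route: fix $t_0 < t_1$ and $s \in (t_0, t_1)$; let $r = \ret(\ell(s))$. Then $\ell(s) \in \fut(r)$, and I claim $\ell(t_0), \ell(t_1) \in \overline{\fut(r)}$ as well, because $\ell(t_0)$ and $\ell(t_1)$ are causally related to $\ell(s)$ and $r$ is in the past of $D$ --- more carefully, $\fut(r)$ is convex and contains $\ell(s)$; since $D = \fut(T) \subseteq \fut(r)$... no, $r$ is a single point of the initial singularity, $\fut(r)$ need not contain all of the segment.

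So the genuine obstacle is the varying domain of $f_r$. The plan to resolve it: show that for $r = \ret(\ell(s))$ with $t_0 < s < t_1$, the whole segment $\ell([t_0, t_1])$ lies in $\overline{\fut(r)}$. This should follow because $r \in \partial D$, the segment lies in $D$, and $D \subseteq \overline{\fut(r)}$ is \emph{false} in general --- but what \emph{is} true is that the support plane of $D$ at $r$ (a past-pointing lightlike or spacelike half-space condition) separates $r$ from $D$, so $D$ lies on the future side, hence $\ell([t_0,t_1]) \subseteq D$ lies in the future side of that support hyperplane, which means $\langle \ell(t) - r, \ell(t) - r\rangle$ controlled appropriately; combined with the timelike direction $w$ this forces $\ell(t) - r$ to stay future-causal. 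Granting this, $f_r$ is defined and convex on all of $[t_0, t_1]$, so
\[
\tau(\ell(s))^2 = f_r(s) \le (1-\lambda) f_r(t_0) + \lambda f_r(t_1) \le (1-\lambda)\tau(\ell(t_0))^2 + \lambda \tau(\ell(t_1))^2,
\]
where $s = (1-\lambda)t_0 + \lambda t_1$, and the last inequality uses $f_r(t_i) \le \tau(\ell(t_i))^2$ by definition of the cosmological time as a supremum of lengths of causal curves (the segment from $r$ to $\ell(t_i)$ being one such curve). This is the full proof modulo the separation/causality claim, which is where I expect to spend the real effort and which I would justify using the description of $\partial D$ via support planes of lightlike type from Section~\ref{sc:background}.
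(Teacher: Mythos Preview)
Your global ``sup of convex quadratics'' strategy is correct and in fact cleaner than the paper's argument, but you have misdiagnosed the remaining obstacle. The formula $f_r(t)=-\langle \ell(t)-r,\ell(t)-r\rangle$ defines a polynomial on all of $\mathbb R$, so there is no ``varying domain of $f_r$''; convexity of $f_r$ is immediate everywhere. What you genuinely need for the last inequality in your chain is $f_r(t_i)\le\tau(\ell(t_i))^2$ for $r=\ret(\ell(s))$, and this holds without your causality claim, by a short case split on the type of $\ell(t_i)-r$: if it is future-causal, the half-open segment $(r,\ell(t_i)]$ lies in $D$ (standard convexity: $r\in\overline D$, $\ell(t_i)\in D=\mathrm{int}\,\overline D$) and furnishes a causal curve of length $\sqrt{f_r(t_i)}$, whence $f_r(t_i)\le\tau(\ell(t_i))^2$; if it is spacelike, $f_r(t_i)<0\le\tau(\ell(t_i))^2$; and past-causal is impossible since $D$ is future-complete and $r\notin D$. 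Your stated claim $\ell([t_0,t_1])\subset\overline{\fut(r)}$ is actually \emph{false} in general (already for $D$ the future of a spacelike segment, as in Example~\ref{domainex}), and the support-plane argument you sketch cannot rescue it: lying in the future of a lightlike support plane through $r$ is strictly weaker than lying in $\overline{\fut(r)}$. So drop that claim entirely; your displayed chain of inequalities then stands as written.

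For comparison, the paper takes a different, local route. It uses that $\tau$ is $C^{1,1}$, so $f(t)=\tau^2(\ell(t))$ has Lipschitz derivative; at each $t_0$ it compares $f$ with $g(t)=\tau_r^2(\ell(t))$ for $r=\ret(\ell(t_0))$: on a neighbourhood of $t_0$ one has $f\ge g$ with $f(t_0)=g(t_0)$ and $f'(t_0)=g'(t_0)$, forcing $f''(t_0)\ge g''(t_0)=-2\langle w,w\rangle>0$ wherever $f''$ exists, and then integrates. This yields the quantitative lower bound $f''\ge -2\langle w,w\rangle$ almost everywhere, though only bare convexity is used downstream. Your approach is more elementary in that it avoids invoking the $C^{1,1}$ regularity of $\tau$ and the formula for $\grad\tau$.
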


\begin{proof}
Take $r\in \mathbb R^{n,1}$ and
consider the cosmological time function on $I^+(r)$ which is defined
by
\[
   \tau_r(p)=\sqrt{-\langle p-r, p-r\rangle}
\]
It is clear that the restriction of $\tau_r^2$  
along every line $p+\mathbb R\cdot w$
is a degree two  polynomial function of the affine parameter
with leading coefficient  $-\langle w,w\rangle$.
In particular, the function $\tau_r$ is convex along all  timelike directions.

Given a point $q\in D$, let $r=\ret(q)$ be the corresponding point
on the singularity. Then $I^+(r)\subset D$ and on $I^+(r)$ we have  $\tau\geq\tau_r$.
Moreover, the cosmological time of $q$ and its gradient are given by $\tau(q)=\tau_r(q)$ and 
$\grad\tau(q)=\grad\tau_r(q)=\frac{1}{\tau}(q-r)$.

Take a timelike vector $w$ and consider the functions
$f(t)=\tau^2(q+tw)$ and $g(t)=\tau_r^2(q+tw)$. 
They are $C^{1,1}$-functions, which coincide with their derivatives at $t=0$.
As $f(t)\geq g(t)$ we deduce 
that if $f''$ exists in $0$ then $f''(0)\geq g''(0)>0$.
Thus  $f''(t)>0$ for all $t$ for which $f''$ exists. Since $f'$ is Lipschitz, for $s<t$
\[
  f'(t)-f'(s)=\int_s^tf''(x)dx>0~,
\]
and hence $f$ is convex.
\end{proof}

\begin{proof}[Proof of Proposition \ref{pr:conv-intensity}]
Let $D$ be a domain with cosmological time function $\tau$. It then follows from the results in \cite{bonsante} 
that  for all points  $q\in H(\epsilon)$ one has  $\grad\tau(q)=-\nu_\epsilon(q)$.
This implies
 $-\epsilon\nu_\epsilon=\tau\grad\tau=\frac{1}{2}\grad(\tau^2)$,
and we deduce 
\[
  \iota_\epsilon(x)=\frac{1}{2}\langle\grad\tau^2(x+u_\epsilon(x)v), v\rangle~.
\]
For a given point  $x\in P_v$,  we can consider the restriction of
$\tau^2$ to the vertical line $x+\mathbb R\cdot v$ 
\[
   f_x(s)=\tau^2(x+sv),
\]
which is a convex function of $s$ by Lemma \ref{lm:df}. As we have
\[
 \iota_\epsilon(x)=\frac{1}{2}(f_x)'(u_\epsilon(x)),
\]
 the monotonicity of $\iota_\epsilon$ then follows from the monotonicity of
$(f_x)'$.
\end{proof}

Proposition \ref{pr:conv-intensity} allows us to define 
 the rescaled frequency function for  a domain $D$ and an  observer $(p,v)$ as the
map 
\[
\rho_{(p,v)}: T^1_v\mathbb H^2\rightarrow\mathbb R_{\geq 0},\quad   \rho(e)=\rho_{(p,v)}(e)=\iota_v(q_0(e))~.
\]
Note, however,  that  it is in general not true that $  \rho_{\epsilon}(e)\rightarrow\rho(e)$ as $\epsilon \rightarrow 0$,  since the convergence $\iota_\epsilon\rightarrow\iota$
is  not necessarily uniform.
In the next section, we will investigate the regularity of $\iota$ and
show  that the convergence  of $\rho_{\epsilon}$ to $\rho$
holds for  generic observers and  generic directions.

\subsection{Domains with a closed singularity in dimension $2+1$} \label{ssc:closedsing}

To analyze the convergence properties of the rescaled frequency functions $\rho_\epsilon$, we  first consider the  case where the initial singularity $T$ is a closed subset of $\partial D$. In this situation,  the frequency function simplifies considerably. Note, however, 
that this condition never holds
for the universal covering of MGHFC spacetimes, as will be proved in the next section.
Nevertheless,  the results for this case are useful to compute the frequency function for domains of dependence that are the intersection of a finite number of half-spaces.

The simplifications in the case of a closed initial singularity arise from the fact that there is  an extension of the map $\ret$ on the boundary
of $D$, based on the following geometric idea.
For each point $q_0\in\partial D$,
there is a lightlike ray $R$ through $q_0$ which is contained in
$\partial D$. We will suppose that the lightlike ray $R$ is maximal with respect to
inclusion. The ray $R$ can always be extended to infinity in the future, but it has a
past endpoint $r_0\in\partial D$. The ray $R$ is  unique unless  $q_0=r_0$, which implies that
the point $r_0$ is uniquely determined by $q_0$. 

This defines a natural
map $\ret_0:\partial D\rightarrow\partial D$, and 
it  follows directly from its definition that $\ret_0\circ\ret_0=\ret_0$.
The image of $\ret_0$ is called the {\it extended singularity} and denoted by $\hat T$.
It contains all points which are the past endpoint of a maximal lightlike ray contained
in $\partial D$. In particular, the initial singularity $T$ is contained in $\hat T$.
Note that, unless $\hat T$ is closed,  the map $\ret_0$ cannot be continuous.
We will see in the next section that this non-continuity occurs in many interesting and relevant  examples.

\begin{prop} \label{pr:cont-int}
Consider the function $I_v:\partial D\rightarrow\mathbb R_{\geq 0}$ defined by
\[
     I_v(q_0)=\skp{q_0-\ret_0(q_0)}{v}~.
\]
If the singularity is closed in $\partial D$ then $T=\hat T$ and
$\iota_\epsilon$ uniformly converges to $I_v$. 
As a consequence, the function $\iota_v=I_v$ is  continuous. 
\end{prop}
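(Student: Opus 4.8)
\textbf{Proof plan for Proposition \ref{pr:cont-int}.}

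The plan is to work point by point on $P_v$ and then upgrade to uniform convergence using a Dini-type argument. First I would establish the claim $T=\hat T$: if $q_0\in\hat T$ is the past endpoint of a maximal lightlike ray $R\subset\partial D$, I need to show $q_0\in T$, i.e. that at least two distinct lightlike rays of $\partial D$ emanate from $q_0$. Since $T$ is closed, $\partial D\setminus T$ is open, and every point of $\partial D\setminus T$ lies in the relative interior of a unique maximal lightlike segment of $\partial D$; following the gradient flow of the convex function $u_0$ backwards from a point just past $q_0$ along $R$, the flow cannot stay in $\partial D\setminus T$ forever (the ray $R$ has a past endpoint by hypothesis), so it must hit $T$, and by maximality of $R$ this first hit is exactly $q_0$. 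Hence $\ret_0$ maps into $T$, so $\hat T\subseteq T$; the reverse inclusion $T\subseteq\hat T$ is the last sentence of the paragraph before the statement. This also identifies $\ret_0$ with (an extension of) $\ret$ and shows $I_v$ is the honest limit candidate.

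Next I would identify $\iota_\epsilon$ and $I_v$ explicitly in the graph coordinate. Writing $H_\epsilon$ as the graph of $u_\epsilon$ over $P_v$ and using $-\epsilon\nu_\epsilon=\tfrac12\grad(\tau^2)$ from the proof of Proposition \ref{pr:conv-intensity}, we have $\iota_\epsilon(x)=\tfrac12 (f_x)'(u_\epsilon(x))$ where $f_x(s)=\tau^2(x+sv)$. As $\epsilon\to 0$, $u_\epsilon(x)\downarrow u_0(x)$, and $q_0(x):=x+u_0(x)v\in\partial D$ with $\ret_0(q_0(x))=\ret(q_0(x))$ the past endpoint of the lightlike ray through $q_0(x)$; a direct computation with $\tau_r^2(p)=-\skp{p-r}{p-r}$ gives $\tfrac12(f_x)'(u_0(x))=\skp{q_0(x)-\ret(q_0(x))}{v}=I_v(q_0(x))$. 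Combined with Lemma \ref{lm:df} (convexity of $f_x$, hence monotonicity and right-continuity of $(f_x)'$), this yields pointwise convergence $\iota_\epsilon(x)\to I_v(q_0(x))$ for every $x\in P_v$.

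The main obstacle is the passage from pointwise to uniform convergence, and this is where closedness of $T$ is essential. The functions $\iota_\epsilon$ are continuous on $P_v$ (each $H_\epsilon$ is $C^{1,1}$ with continuous normal), they decrease to the limit function $x\mapsto I_v(q_0(x))$, and the plan is to invoke Dini's theorem; for this I must show the limit function is continuous. Here I would argue that when $T$ is closed the map $x\mapsto \ret_0(q_0(x))$ is continuous: if $x_n\to x$ then $q_0(x_n)\to q_0(x)$ in $\partial D$ (convergence of the convex graphs), the points $\ret_0(q_0(x_n))$ lie in the closed set $T$ and stay in a bounded region, so a subsequential limit $r$ exists in $T$; the lightlike ray from $\ret_0(q_0(x_n))$ through $q_0(x_n)$ converges to a lightlike ray in $\partial D$ from $r$ through $q_0(x)$, and by Lemma \ref{initsing:lm} and the uniqueness of the maximal backward lightlike ray one gets $r=\ret_0(q_0(x))$. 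Continuity of $I_v\circ q_0$ follows, Dini applies on compact subsets of $P_v$, giving local uniform convergence; since the $H_\epsilon$ exhaust $\partial D$ properly, this is the asserted uniform convergence, and $\iota_v=I_v$ is continuous as a locally uniform limit of continuous functions. The one point needing care is that $\ret_0$ itself is only defined on $\partial D$ and may be discontinuous in general — the argument must use that, under the hypothesis $T=\hat T$ closed, discontinuities cannot occur, which is exactly the compactness-plus-Lemma \ref{initsing:lm} input above.
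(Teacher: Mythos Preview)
Your Dini-based strategy is a legitimate alternative route, but the three preparatory steps all lean on one fact you never actually establish: that $\ret(q_\epsilon(x))\to\ret_0(q_0(x))$ as $\epsilon\to 0$. In step~2 the ``direct computation with $\tau_r^2$'' only gives the inequality $(f_x)'(u_0(x)^+)\geq -2\langle q_0-\ret_0(q_0),v\rangle$ (from $\tau\geq\tau_{r_0}$ and $f_x(u_0(x))=g(u_0(x))=0$); the matching upper bound requires knowing that the retractions $r_\epsilon=\ret(q_\epsilon(x))$ actually converge to $\ret_0(q_0(x))$, which is not a computation but a compactness argument using closedness of $T$. The same issue undermines step~1: the ``backward gradient flow'' argument does not yield a contradiction, because a priori nothing prevents a point $q_0\notin T$ from being the past endpoint of a maximal ray (the backward extension can simply exit $\bar D$ rather than hit $T$). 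The honest proof that $\hat T\subseteq T$ again goes through the retraction limit. In step~3, Lemma~\ref{initsing:lm} is not the relevant input; what you need is that if $r\in T$ then there is a second support plane at $r$, so the backward extension of the ray through $q_0(x)$ would leave $\bar D$, forcing $r=\ret_0(q_0(x))$.

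The paper collapses all of this into one diagonal argument: take \emph{any} $x_n\to x$ and $\epsilon_n\to 0$, use $q_\epsilon=r_\epsilon+\epsilon\nu_\epsilon$ to write $\iota_{\epsilon_n}(x_n)=-\langle q_{\epsilon_n}(x_n)-r_{\epsilon_n}(x_n),v\rangle$, extract a subsequential limit $r_0$ of $r_{\epsilon_n}(x_n)$, note $r_0\in T$ by closedness, and argue as above that $r_0=\ret_0(q_0(x))$. This single step simultaneously yields $\hat T=T$, the pointwise identification $\iota_v=I_v$, and locally uniform convergence, with no need for Dini. Your plan can be made to work, but every step is powered by exactly this argument, so the detour through Dini buys nothing.
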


\begin{proof}

For any point $x\in P_v$ we denote by $q_\epsilon(x)$ the point
$x+u_\epsilon(x)v\in H_\epsilon$ and by  $r_\epsilon(x)$ be the projection
of $q_\epsilon(x)$ on the initial singularity.
The results in  \cite{bonsante} imply that these points are related by the following equation \begin{equation}\label{eq:cl}
q_\epsilon(x)=r_\epsilon(x)+\epsilon\nu_\epsilon(x)~.
\end{equation}

Take now any sequence $x_n \in P_v$ that converges to $x$ and $\epsilon_n\rightarrow 0$. Then the associated sequence 
 $q_\epsilon(x)$ converges to
$q_0(x)=x+u_0(x)v$, and we obtain
\[
  \iota_{\epsilon_n}(x_n)=-\langle q_{\epsilon_n}(x_n)-r_{\epsilon_n}(x_n), v\rangle~.
\]
To prove that $\iota_\epsilon$  converges uniformly to $I_v$, it is then
sufficient to check that  $r_n=r_{\epsilon_n}(x_n)$ converges to $\ret_0(q_0)$
For this, note that the sequence
$r_n$
is contained  in a compact subset of the boundary $\partial D$ and hence has a subsequence
$(r_{n_k})_{k\in\NN}$ which converges to a point $r_0\in\partial D$. By the assumption 
on $T$, the point $r_0$ is also contained in $T$. 
We prove in the next paragraph that $r_0=\ret(q_0(x))$. The uniqueness of the limit implies
that the whole sequence $r_n$ converges to $r_0$.

That $r_0=\ret_0(q_0)$ can be established as follows.
The sequence  of timelike vectors 
$q_{n_k}-r_{n_k}$  converges to $q_0-r_0$.
This implies that $q_0-r_0$ is not spacelike. 
As $\partial D$ is an achronal surface, it must be lightlike and
the lightlike ray $R=r_0+\mathbb R_{\geq 0}\cdot(q_0-r_0)$  is contained
in $\partial D$. Since $r_0$ is on the singularity,  we obtain that the ray $R$ is maximal
so that $r_0=\ret(q_0)$.
This also shows that the image of $\ret_0$ is contained
in $T$.
\end{proof}

\begin{remark}
In the general case, we cannot conclude because the limit 
point $r_0$ may not be on the singularity.
However, it is always true that the ray $q_0+\mathbb R\cdot (q_0-r_0)$
is contained in $\partial D$, so it is contained in the maximal lightlike
ray through $q_0$. In other words the point $r_0$ lies on the segment
$[q_0, \ret_0(q_0)]$.
This  shows that in general
\[
  \limsup \iota_{\epsilon_n}(x_n)\leq I_v(x)
 \]
for any sequence $\epsilon_n\rightarrow 0$ and $x_n\rightarrow x$.
In particular, it implies  $\iota_v(x)\leq I_v(x)$ and hence that $\iota_v$ is zero
on the initial singularity.
\end{remark}

\subsection{Generic domains in dimension $2+1$} \label{ssc:generic21}

Although at a first sight,  the hypothesis of Proposition \ref{pr:cont-int}
could appear to hold generally, this is not the case. Indeed,  
 if $D$ is the universal covering of a MGHFC
spacetime whose holonomy representation  is not linear, then the condition cannot be
satisfied.

\begin{remark}\label{rk:sing-struct}  \cite{mess}
Let $D$ be the universal covering of of a MGHFC spacetime. Then 
$T$ is never closed in $\partial D$.
\end{remark}


The following proposition describes  the regularity  properties of 
of the functions $\iota_v$ and $I_v$ for in general domains in 2+1 dimensions.
The result is that at generic points,  these functions are continuous and coincide.

\begin{prop} \label{pr:meagre}
The following properties hold for the functions $\iota_v$ and $I_v$:
\begin{itemize}
\item The function $\iota_v$ is upper  semicontinuous.
\item The set of discontinuity points of $\iota_v$ is meagre.
\item The function $I_v$ is upper semicontinuous.
\end{itemize}
\end{prop}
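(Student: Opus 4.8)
The plan is to prove the three statements separately, working throughout with the description of $\partial D$ as a graph over the hyperplane $P_v$ and with the function $\ret_0$ and the extended singularity $\hat T$ from Section \ref{ssc:closedsing}. First I would establish upper semicontinuity of $I_v$. Recall $I_v(q_0)=\langle q_0-\ret_0(q_0),v\rangle$, and $\ret_0(q_0)$ is the past endpoint of the maximal lightlike ray $R$ in $\partial D$ through $q_0$. The key point is that the map $q_0\mapsto \ret_0(q_0)$ is ``upper semicontinuous'' in the following sense: if $q_n\to q_0$ and $\ret_0(q_n)\to r$, then, since $q_n-\ret_0(q_n)$ is lightlike (or zero) and $\partial D$ is achronal and closed, $r$ lies on the maximal lightlike ray through $q_0$ between $q_0$ and $\ret_0(q_0)$ --- this is exactly the mechanism used in the Remark after Proposition \ref{pr:cont-int}. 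Hence $\langle q_0-r,v\rangle \ge \langle q_0-\ret_0(q_0),v\rangle$, because moving from $q_0$ toward $\ret_0(q_0)$ along a past-directed lightlike direction decreases $\langle\cdot,v\rangle$ relative to... wait, I must be careful with the sign: along the ray from $\ret_0(q_0)$ to $q_0$ the pairing with the future timelike vector $v$ is monotone, so $r\in[\ret_0(q_0),q_0]$ gives $\langle q_0-r,v\rangle\le\langle q_0-\ret_0(q_0),v\rangle = I_v(q_0)$. So $\limsup_n I_v(q_n)=\limsup_n\langle q_n-\ret_0(q_n),v\rangle \le I_v(q_0)$, which is upper semicontinuity of $I_v$.

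Next, for $\iota_v$: the Remark after Proposition \ref{pr:cont-int} already gives $\limsup_{\epsilon_n\to0,\,x_n\to x}\iota_{\epsilon_n}(x_n)\le I_v(x)$ and in particular $\iota_v(x)\le I_v(x)$. To get upper semicontinuity of $\iota_v$ itself I would argue: fix $x$ and a sequence $x_n\to x$; for each $n$ pick $\epsilon_n\to 0$ slowly enough that $|\iota_v(x_n)-\iota_{\epsilon_n}(x_n)|<1/n$ (possible since $\iota_v(x_n)=\inf_\epsilon\iota_\epsilon(x_n)=\lim_{\epsilon\to0}\iota_\epsilon(x_n)$ by Proposition \ref{pr:conv-intensity}, a genuine limit for each fixed $n$). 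Then $\limsup_n\iota_v(x_n)=\limsup_n\iota_{\epsilon_n}(x_n)\le I_v(x)$. But this only bounds the $\limsup$ by $I_v(x)$, not by $\iota_v(x)$, so it is not yet enough. The genuinely new input must be an \emph{increasing} family argument: $\iota_v=\inf_\epsilon\iota_\epsilon$ is a pointwise infimum of the functions $\iota_\epsilon$; if each $\iota_\epsilon$ is continuous (or at least upper semicontinuous) on $P_v$, then the infimum of a family of upper semicontinuous functions is upper semicontinuous. So the real step is to check that for each fixed $\epsilon>0$, $x\mapsto\iota_\epsilon(x)=\tfrac12\langle\grad\tau^2(x+u_\epsilon(x)v),v\rangle$ is continuous on $P_v$: this follows because $u_\epsilon$ is (locally Lipschitz, even $C^{1,1}$) continuous, $H_\epsilon$ is a $C^{1,1}$ Cauchy surface whose unit normal $\nu_\epsilon=-\grad\tau$ depends continuously on the point, hence $\iota_\epsilon$ is continuous. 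Then $\iota_v=\inf_\epsilon\iota_\epsilon$ is an infimum of continuous functions, hence upper semicontinuous. That is the clean proof of the first bullet, and it does not even use the $2+1$ hypothesis.

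For the second bullet --- that the discontinuity set of $\iota_v$ is meagre --- I would use the standard fact that an upper semicontinuous function (or more symmetrically, the $\limsup$ minus $\liminf$ oscillation function) has discontinuity set equal to a countable union of closed sets $F_k=\{x: \mathrm{osc}_{\iota_v}(x)\ge 1/k\}$, so it suffices to show each $F_k$ has empty interior. Here is where the $2+1$-specific structure enters through Remark \ref{rk:sing-struct} and the Mess description: the lower semicontinuity claimed in the introduction (Proposition \ref{pr:meagre} is stated there as asserting $\iota$ is lower semicontinuous with meagre discontinuity set) is established by showing that at a dense set of points the convergence $\iota_\epsilon\to\iota$ is good, using that the extended singularity $\hat T$ --- the ``bad'' locus --- has empty interior in $\partial D$, which in $2+1$ dimensions follows from the real-tree structure of $T$ and the locally-finite-lamination approximation. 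Concretely, I would show: off $\hat T$ (i.e., at points $q_0$ lying in the interior of a maximal lightlike ray whose past endpoint is a genuine singular point), the map $\ret_0$ is locally continuous, $I_v$ is continuous there, and moreover $\iota_v=I_v$ there by an argument like the one in Proposition \ref{pr:cont-int} localized to a neighborhood; since $\hat T$ is meagre in $\partial D$ (it is a countable union of the boundary-of-$\mathcal F_r$ pieces plus isolated points, each of which is nowhere dense), the complement is comeagre and $\iota_v$ is continuous there, proving $F_k\subset \hat T$ has empty interior.

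\textbf{Main obstacle.} The delicate point --- and the one I expect to be the real obstacle --- is the meagreness of the discontinuity set: one must show that the ``bad'' set where $\iota_\epsilon$ fails to converge to its naive limit, equivalently where the projection $r_\epsilon(x)$ can escape to a non-singular limit point, is exactly controlled by $\hat T\setminus T$ together with vertices, and that this set is nowhere dense in $\partial D$ in the $d_0$-induced or the ambient topology. This requires the Mess/Benedetti–Bonsante picture of the initial singularity as a real tree and the structure of the associated geodesic lamination $L$ (whose leaves are the $\mathcal F_r$): one shows that a point of $\partial D$ is a continuity point of $\iota_v$ unless it projects under $\ret_0$ to a vertex of $T$ or lies on a ``jump'' ray, and the set of such special points, being contained in a countable union of the nowhere-dense sets $\partial\mathcal F_r$, is meagre. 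Getting the right topology on $\partial D$ and carefully matching it against the $d_0$-topology on $T$ (which, as the excerpt stresses, differ in general) is where the care is needed; the semicontinuity statements themselves are comparatively soft once one has continuity of each $\iota_\epsilon$ in hand.
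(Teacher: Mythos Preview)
Your arguments for the first and third bullets agree with the paper. For $\iota_v$, the clean route you eventually settle on --- each $\iota_\epsilon$ is continuous (since $H_\epsilon$ is a $C^{1,1}$ graph with continuously varying normal) and $\iota_v=\inf_\epsilon\iota_\epsilon$ is therefore upper semicontinuous --- is exactly the paper's proof (the paper writes ``supremum'', an evident typo). For $I_v$, your achronality argument that any subsequential limit $r$ of $\ret_0(q_n)$ lies on the segment $[\ret_0(q_0),q_0]$, forcing $\limsup I_v(q_n)\le I_v(q_0)$, is precisely what the paper does.

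For the second bullet, however, you take a much harder and more structure-dependent route than necessary. The paper's argument is one line: since $\iota_v$ is the pointwise limit of the continuous functions $\iota_\epsilon$ (Proposition~\ref{pr:conv-intensity}), it is of Baire class~1, and a classical theorem of Lebesgue says that any Baire class~1 function has a meagre set of discontinuity points. No information about $\hat T$, the real-tree structure of the singularity, or the Mess lamination is required. What you identify as the ``main obstacle'' --- proving that $\hat T$ (or the locus where $\ret_0$ fails to be continuous) is meagre in $\partial D$ --- is thus not an obstacle at all; it is entirely bypassed by the general theorem. Your approach, if carried through, would give more: an explicit description of where the discontinuities can occur. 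But as a proof of the bare meagreness statement it is unnecessarily heavy, and the step you flag as delicate (that $\hat T$ is nowhere dense in the ambient topology) is not actually established in your outline.
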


\begin{proof}
The first property holds since $f$ is the supremum of a family of continuous functions.
Moreover, as  it is the limit of continuous functions, by a classical result of Lebesgue,
its discontinuity points form a meagre set.

Let us prove that   $I_v$ is upper semi-continuous.
For this, take a  sequence of points  $x_n\in P_v$ that converges to $x$.
Up to passing to a subsequence we may assume that
$\limsup I_v(x_n)=\lim I_v(x_n)$.
If $\lim I_v(x_n)=0$, then clearly
$I_v(x)\geq \limsup I_v(x_n)$. 

Let us treat the case where $\lim I_v(x_n)>0$.
The  sequence of points
$q_n=x_n+u(x_n)v\in\partial D$ converges to
$q=x+u(x)v \in\partial D$. 
By the assumption on $\lim I_v(x_n)$ we have that $\ret_0(q_n)\neq q_n$ for $n$ sufficiently  large. 
Consider the sequence of lightlike rays $R_n$ containing
$q_n$ and $\ret_0(q_n)$. Up to passing to a subsequence,  
we may assume that it converges to a lightlike ray $R$ 
through $q$. The sequence $(\ret_0(q_n))_{n\in \NN}$ converges to the past endpoint
of $R$. Since $R$ is contained in $\partial D$ we deduce that
$r_1=\lim\ret_0(q_n)$ is a point on the segment $[\ret_0(q),q]$ and
we  have
\[
  \lim_{n\rightarrow+\infty} I_v(x_n)=-\langle q-r_1, v\rangle\leq \langle q-\ret_0(q), v\rangle=I_v(q)~.
\]
%
%
%
%
\end{proof}


\section{Stability of the frequency function} \label{sc:stability}

In this section we  investigate the stability of the frequency function.
Given a sequence of  domains of dependence 
$D_n$ that converges to $D$ and  an observer $(p,v)$ in $D$, then $(p,v)$ is also an observer in $D_n$ for $n$ sufficiently large. Let  $\rho^n_v$ be the frequency function for  $D_n$ as seen by the observer $(p,v)$ and $\rho$ the associated frequency function for $D$. 
We will investigate  under which conditions the frequency functions  $\rho_v^n$ converge to the frequency  function $\rho$.

We will first show by a counterexample in subsection \ref{ssc:example}  that in general there is no
convergence even in the weak sense.  However, we identify a subclass of
domains of dependence, called domains of dependence with a flat boundary, which includes the interesting examples.  In
subsection \ref{ssc:flat} we  prove that the convergence holds for these
domains.  In subsection \ref{ssc:coverings} we will then  prove that universal
coverings of MGHFC spacetimes in dimension 2+1 are contained in this class.
As in the previous subsection, it is  advantageous to
work with  the frequency functions $\iota_v$ and $\iota^n_v$ introduced there.

\subsection{An example} \label{ssc:example}

We fix coordinates $x_0, x_1, x_2$ on $\mathbb R^{2,1}$, so that
the Minkowski metric takes the form $-dx_0^2+dx_1^2+dx_2^2$ and consider the timelike vector  $v=(1,0,0)$. Let $P$ be the horizontal plane
at height equal to $1$.  Then the  intersection of $P$ with the cone
$I^+(0)$ is a circle $C$ of radius $1$. Let $C_k$ be the regular
polygon with $k$ edges tangent to $C$. Clearly,  $C_k$ converges to $C$
in the Hausdorff sense as $k\rightarrow+\infty$.

Now observe that  for each edge of $C_k$ the plane  that contains $0$ and this edge  is
lightlike, since it is tangent to $I^+(0)$. Let $D_k$ be the
intersection of the future of the $k$ lightlike planes containing $0$ and the
edges of $C_k$. Then $D_k$ is a domain of dependence and converges
 to $D$ for $k\rightarrow \infty$  on compact subsets.

We denote by  $\iota^k_v$ and $\iota_v$, respectively,  the frequency function of $D_k$ and $D$ with respect to the observer $(p,v)$ and  prove that $\iota^k_v$ does not converge to $\iota_v$ on a set of positive measure. 
Regarding $L^1$-functions as
continuous functionals on the set of continuous functions with compact
support, we will show that $\iota_k^v$ weakly converges to
$\frac{1}{2}\iota_v$. In other words, for every continuous function $\phi$
with compact support we have
\[
   \int_{P_v}\phi \iota^n_v dV\rightarrow\frac{1}{2}\int_{P_v}\phi \iota_v dV
\]
where $dV$ is the area measure of the horizontal plane $P_v$.  In
particular,  in this example $\iota^k_v$ does not converge to $\iota_v$ even in
this weak sense.

The computation of $\iota_v$ can be performed as follows. The initial
singularity of $I^+(0)$ reduces to the point $0$, and hence is a closed subset. 
It turns out that, for every point $x\in P_v$ the frequency function $\iota_v$ is given by $\iota_v(x)=-\langle x+\|x\|v,v\rangle=\|x\|$.

Consider now the domain $D_k$. The initial singularity of $D_k$ is the set of
spacelike lines joining $0$ to the vertices of the polygon $C_k$. So
when we project on the plane $P_v$, the initial singularity appears as
the union of $k$ rays $s_1,\ldots s_k$ starting from $0$, so that the
angle between $s_j$ and $s_{j+1}$ is $2\pi/k$.  Let
$f_k:P_v\rightarrow\mathbb R$ the function whose graph if the boundary
of $D_k$.  On the region $P_j$ of $P_v$ that is bounded by $s_j$ and
$s_{j+1}$, the function $f_k$ is differentiable, and the gradient of $f_k$ is a unit
vector whose angle with $s_j$ is equal to $\pi/k$. The integral lines of the gradient are parallel lines that form 
an angle equal to $\pi/k$ with both $s_j$ and $s_{j+1}$.
If we denote by  $r(x)$ the intersection point of the line through $x$ with
the singularity, then the frequency function is given by $\iota_v^k(x)=f_k(x)-f_k(r(x))=||x-r(x)||$.

Let now $s'_{j}$ be the unique line of this foliation which starts at $0$.
Clearly,  it is the bisector of $P_j$. If $x$ is on the right of $s'_j$ then 
$r(x)\in s_j$. If $x$ is on the left of $s'_j$ then $r(x)\in s_{j+1}$.
 We now consider the set 
$$ E_j^k=\{x\in P_j~|~ \iota_v^k(x)\leq \|x\|/2\}~. $$ 

For each point $x\in P_j$, consider the triangle with vertices
at $x$, $r(x)$ and $0$. Note that $\iota_v^k(x)$ is the length of the edge joining
$x$ to $r(x)$. The sine formula of Euclidean triangles then shows that
\[
       \iota_v^k(x)=\|x\|\frac{\sin\phi}{\sin(\pi/k)}
\]
where $\phi$ is the angle at the vertex $0$ in the above triangle.
Thus, let $\phi_k$ be such that $\sin\phi_k=\frac{1}{2}\sin(\pi/k)$.
Let $s''_j$ and $s'''_j$ be the rays in $P_j$ forming an angle $\phi_k$
respectively with $s_j$ and $s_{j+1}$. Then, $E_j^k$ is the union of two sectors
bounded respectively by $s_j$ and $s''_j$ and by $s_{j+1}$ and $s'''_j$.

By the concavity of the function $\sin$ in $[0,\pi/2]$ we have
$\phi_k>\pi/2k$, so for any radius $R$ the area of
$E^k_j\cap B(0,R)$ is bigger than $\frac 1 2$ times the area of $P_j\cap B(0,R)$.
Now let us consider the set
$$ E^k=\{x\in P_v| \iota_v^k(x)\leq \|x\|/2\}=\bigcup E^k_j~. $$ 
The area of $E^k\cap B(0,R)$ is the sum of the areas of the
$E^k_j\cap B(0,R)$, so that
\[
Area(E^k\cap B(0,R))\geq \frac{1}{2}\sum Area(P_j)=\frac{\pi R^2}{2}~
\] 
and, consequently, 
\[
  \int_{B(0,R)}(\iota_v-\iota^k_v)dV\geq
  \int_{B(0,R)\cap E^k}(\iota_v-\iota^k_v)dV\geq
  \int_{B(0,R)\cap E^k}\frac{\| x\|}{2}dV~.
\]
As $E_k$ is a cone from the origin and  the function $x\to \| x\|/2$ depends only on the distance from the origin, it follows that
\[
\int_{B(0,R)}\iota_v-\iota^k_v\geq \frac 12\int_{B(0,R)}\frac{\| x\|}{2} = \pi R^2/4~.
\]
In particular, this shows that 
$\iota^k_v$ does not converge weakly to $\iota_v$.

\begin{prop} \label{pr:weakly}
The sequence $\iota_v^k$ weakly converges to $\iota_v/2$ in $L^1_{loc}(P_v)$. 
\end{prop}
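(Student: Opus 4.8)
The plan is to exploit the rotational symmetry of the configuration in order to reduce the statement to a one–dimensional weak-$*$ convergence on the circle. Since the observer vector $v=(1,0,0)$ lies on the axis of the regular $k$-gon $C_k$, the domain $D_k$ is invariant under the rotation of angle $2\pi/k$ about this axis, so the intensity function $\iota_v^k$, regarded as a function on $P_v\cong\RR^2$, is invariant under the Euclidean rotation $R_k$ of angle $2\pi/k$. Writing points of $P_v$ in polar coordinates $(r,\theta)$, the computation carried out just above shows that on each sector $P_j$ one has $\iota_v^k(r,\theta)=r\,g_k(\psi)$, where $\psi$ is the angle measured from the ray $s_j$ and
\[
g_k(\psi)=\frac{\sin\bigl(\min(\psi,\tfrac{2\pi}{k}-\psi)\bigr)}{\sin(\pi/k)}\,,\qquad 0\le g_k\le 1~.
\]
Since $\iota_v(r,\theta)=r=\|x\|$, this means $\iota_v^k=h_k\cdot\iota_v$, where $h_k\colon P_v\to[0,1]$ depends only on $\theta$, is $\tfrac{2\pi}{k}$-periodic in $\theta$, and coincides with $g_k$ on each sector. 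Because $\iota_v\,\varphi\in L^1(P_v)$ for every $\varphi\in L^\infty_c(P_v)$, the proposition will follow once we show $h_k\rightharpoonup\tfrac12$ weakly-$*$ in $L^\infty(P_v)$, and by Fubini in polar coordinates this reduces to $h_k\rightharpoonup\tfrac12$ weakly-$*$ in $L^\infty(S^1)$.

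The key point is the asymptotics of the mass of one bump,
\[
\lambda_k:=\int_0^{2\pi/k}g_k(\psi)\,d\psi=\frac{2}{\sin(\pi/k)}\int_0^{\pi/k}\sin\psi\,d\psi=\frac{2\bigl(1-\cos(\pi/k)\bigr)}{\sin(\pi/k)}~,
\]
from which $\lambda_k\to 0$ while, using $1-\cos(\pi/k)\sim\tfrac12(\pi/k)^2$ and $\sin(\pi/k)\sim\pi/k$, one gets $k\lambda_k\to\pi$; equivalently the angular average $\tfrac1{2\pi}\int_{S^1}h_k=\tfrac{k\lambda_k}{2\pi}$ tends to $\tfrac12$. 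Note that the precise (non-constant) profile of $g_k$ never enters the computation: only its total mass $\lambda_k$ matters, which is exactly what produces the limiting intensity $\tfrac12\iota_v$ rather than $\iota_v$. One then tests $h_k$ against indicator functions of arcs: an arc $I\subset S^1$ of length $\ell$ contains $N_k=\tfrac{\ell k}{2\pi}+O(1)$ full sectors of $D_k$ and meets at most two of them partially, and since $g_k\ge 0$ each partial sector contributes at most $\lambda_k$; hence
\[
\int_I h_k=N_k\lambda_k+O(\lambda_k)=\frac{\ell}{2\pi}\,(k\lambda_k)+o(1)\;\longrightarrow\;\frac{\ell}{2}=\tfrac12\,|I|~.
\]
Since finite linear combinations of indicators of arcs are dense in $L^1(S^1)$ and $\|h_k\|_{L^\infty(S^1)}\le 1$ for all $k$, a standard $3\varepsilon$-argument yields $\int_{S^1}h_k\,g\to\tfrac12\int_{S^1}g$ for every $g\in L^1(S^1)$, i.e. $h_k\rightharpoonup\tfrac12$ weakly-$*$. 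Unwinding the reduction, for every $\varphi\in L^\infty_c(P_v)$ (in particular every $\varphi\in C_c(P_v)$) we obtain $\int_{P_v}\varphi\,\iota_v^k\,dV=\int_{P_v}h_k\,(\varphi\,\iota_v)\,dV\to\tfrac12\int_{P_v}\varphi\,\iota_v\,dV$, which is the asserted weak convergence in $L^1_{loc}(P_v)$.

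The proof is essentially elementary, so the real work is organizational, and I expect the main point requiring care to be the reduction itself: one must check that the angular profile $g_k$ is literally the same on every sector — this is where the rotational symmetry of the regular polygon, together with the fact that $v$ is aligned with its axis, enters — and that passing to polar coordinates and integrating out the radial variable is legitimate, which is why it is convenient to test against $L^\infty_c$ (or $C_c$) functions so that $\varphi\,\iota_v\in L^1$. The remaining ingredient, the limit $k\,(1-\cos(\pi/k))/\sin(\pi/k)\to\pi/2$, is a one-line Taylor expansion, and the weak-$*$ convergence of an $L^\infty$-bounded, fast-oscillating sequence to its mean is classical.
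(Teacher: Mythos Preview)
Your proof is correct and takes a genuinely different route from the paper's. You exploit the explicit rotational symmetry to factor $\iota_v^k=h_k\cdot\iota_v$ with $h_k$ depending only on the angular variable, compute the angular mean of $h_k$ via an elementary Taylor expansion, and then invoke the classical fact that a uniformly bounded, rapidly oscillating periodic sequence converges weakly-$*$ to its mean. This is self-contained and essentially a one-variable argument.

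The paper instead proceeds via a differential-forms/integration-by-parts identity: setting $\omega_k=*du_k$, it shows $\int df\wedge(\iota_v^k\omega_k)=-\int f\,dV$ for the polygonal domains (because $u_k$ is affine on each sector, so $\Delta u_k=0$ there), while for the cone one computes $d(\iota_v\omega)=2\,dV$ (the extra $dV$ coming from $\iota_v\Delta u=\iota_v/u=1$). Comparing these two identities forces any weak limit $J$ of $\iota_v^k$ to satisfy $\int df\wedge J\omega=\tfrac12\int df\wedge\iota_v\omega$, hence $J=\iota_v/2$. The advantage of the paper's approach is strategic: it introduces exactly the mechanism (the identity $\int df\wedge(\iota\omega)=-\int f\,dV$ for domains with flat boundary) that drives the general stability Theorem~\ref{stability:thrm} and Proposition~\ref{mainflat:prop}, and makes transparent \emph{why} the factor $1/2$ appears---it is the contribution of $\iota_v\Delta u$ which is absent when the boundary is flat. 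Your argument, while cleaner for this specific example, is tied to the radial structure $\iota_v=\|x\|$ and the exact $2\pi/k$-periodicity, and does not generalise.
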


\begin{proof}
Note that since $\iota_v^k(x)\leq\| x\|$, up to passing to a subsequence,
there is a weak limit 
in $L^1_{loc}(P_v)$, say $J$.
We will prove that $J=\iota_v/2$. This is 
sufficient to deduce that the whole sequence converges to $\iota_v/2$.

For this, we consider the following sequence $   \omega_k=*du_k$
of $1$-forms on $P_v$, where $u_k$ is the function whose graph is $\partial D_k$ and $*$ is the  Hodge operator.
Note that $\omega_k$ is a $L^\infty$ $1$-form defined in the complement of the singularity.

As $du_k\rightarrow du$ at every differentiable point of $u$ and $||du_k||\leq 1$, 
by the Dominated Convergence Theorem we have   that $du_k\rightarrow du$ strongly in 
$L^1_{loc}(P_v)$ as $k\rightarrow+\infty$.
This implies that $\omega_k\rightarrow\omega=*du$ strongly in $L^1_{loc}(P_v)$.

We claim that for any  compact supported smooth function $f$ the following formula holds:
\begin{equation}\label{mainex:eq}
\int_{P_v}df\wedge J\omega= \frac{1}{2}\int_{P_v}df\wedge \iota_v\omega~.
\end{equation}
Since $df\wedge\omega=\frac{\partial f}{\partial\rho} dV$, this formula implies 
\[
  \int_{P_v}\frac{\partial f}{\partial\rho}(\iota_v/2-J)dV=0,
\]
and by a simple density argument we conclude that $J=\frac{1}{2}\iota_v$.

To prove the claim, first note that
\[
\int_{P_v}df\wedge J\omega=\lim\int_{P_v}df\wedge \iota_v^k\omega_k~.
\]
Now, note that on each region $P_j$ of the complement of the
singularity,  $\iota^k_v\omega_k$ is a smooth $1$-form, and its differential
is equal to 
\begin{eqnarray*}
d(\iota^k_v\omega_k) &=& 
d\iota^k_v\wedge \omega_k + \iota^k_vd\omega_k  \\
& = &  d\iota^k_v\wedge \omega_k + \iota^k_vd(*du_k)  \\
& = & d\iota^k_v\wedge\omega_k+\iota^k_v\Delta u_kdV~. 
\end{eqnarray*}
Since  the
function $u_k$ is affine on $P_j$, the last term vanishes. Moreover, we have
$d\iota^k_v\wedge\omega_k= d\iota^k_v(\grad u_k)dV=dV$, since, on the integral
lines of the gradient of $u_k$, $\iota^k_v$ is an affine function with
derivative equal to $1$.

Using the fact that $\iota^k_v\wedge\omega_k$ vanishes on the singularity, we
 obtain 
\[
  \int_{P_j}df\wedge J\omega=-\int_{P_j}fdV,
\]
which implies $\int_{P_v}df\wedge \iota^k_v\omega_k=-\int_{P_v}fdV$, and we conclude that
\begin{equation}\label{J:eq}
\int_{P_v}df\wedge J\omega=\lim\int_{P_v}df\wedge \iota^k_v\omega_k=-\int_{P_v}fdV~.
\end{equation}
On the other hand 
note that on $P_v\setminus\{0\}$ we have the identity
$$ d(\iota_v\wedge\omega)=d\iota_v\wedge\omega+\iota_v\wedge d\omega~. $$ 
Now as before
$d\iota_v\wedge\omega=dV$, but $\iota_vd\omega=\iota_v\Delta u dV=\frac{\iota_v}{u}dV=dV$, where the last equality
holds since $\iota_v=u$ in this  example. 
So if $B_\epsilon$ is the disk centered at $0$ with radius $\epsilon$ we
have
\[
\int_{P_k\setminus B_{\epsilon}}df\wedge \iota_v\omega=-\int_{P_k\setminus B_{\epsilon}}2fdV-\int_{\partial B_\epsilon}f\iota_v\omega~.
\]
As $\omega$ is bounded, the last term vanishes as $\epsilon\rightarrow 0$ so we deduce
\[
\int_{P_k}df\wedge \iota_v\omega=-\int_{P_k}2fdV.
\]
Equation \eqref{mainex:eq} then follows by comparing this equation with \eqref{J:eq}. 
\end{proof}

\begin{remark}
Proposition \ref{pr:weakly} makes it clear that the reason why $\iota_v^k$ does not converge to $\iota$
is the fact that $\Delta u$ is not concentrated on the singularity, 
whereas $\Delta u_k$  vanishes outside the singularity.
This remark will lead us below to introduce the notion of domain of dependence with flat boundary, where this problem is excluded.  We will then  prove (Theorem \ref{stability:thrm}) that, for domains of dependence with
flat boundary, the convergence does hold. We will then show in Section \ref{ssc:bounds} that, without this
hypothesis, although the sequence  $\iota_v^k$ do not converge to $\iota_v$, it does have a limit, and this limit differs only by bounded factor (attained in the example presented above) from $\iota_v$.  
\end{remark}

We  conclude this section with a simple remark.
In the example above we have seen a sequence of domains of dependence
$D_n$ which converges  to a domain $D$, but for which the corresponding  sequence of
 frequency functions $\iota^n_v$ does not converge to the frequency $\iota_v$ of $D$.
However,  in it is clear that 
\[
   \iota_v(x)\geq\limsup_{n\rightarrow+\infty}\iota_v^n(x)
\]
This estimate holds in general and is a consequence of two facts:
 \begin{itemize}
 \item
 The frequency functions $\iota_\epsilon^n$
of the  surfaces of $H_\epsilon\subset D_n$ converge to the frequency function of
the $H_\epsilon\subset D$. 
\item
The frequency function of any domain is the infimum of the     frequency functions     of  its surfaces $H_\epsilon$ of constant cosmological time.
\end{itemize}
We  include a proof  of the first statement for the sake of completeness.

\begin{prop}\label{pr:usc}
Let $D_k$ be a sequence of domain of dependence converging to a domain $D$.
Denote by $\iota_v^k$ the frequency function of $D$ and by $\iota_v$ the frequency function
of $D$. Then for every $x\in P_v$ we have 
$\iota_v(x)\geq\limsup_{k\rightarrow+\infty}\iota_v^k(x)$.
\end{prop}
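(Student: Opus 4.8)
The plan is to sandwich both $\iota_v^k$ and $\iota_v$ between the intensity functions of the constant cosmological time surfaces $H_\epsilon$, which are far more stable under the convergence $D_k\to D$ than the limiting intensity on $\partial D$. Recall from Proposition \ref{pr:conv-intensity} (and its proof) that for any domain of dependence $D$ and any $x\in P_v$ one has $\iota_v(x)=\inf_{\epsilon>0}\iota_\epsilon(x)$ and $\iota_\epsilon(x)=\frac12(f_x)'(u_\epsilon(x))$, where $f_x(s)=\tau^2(x+sv)$ is the restriction of the squared cosmological time to the vertical line through $x$ and $u_\epsilon(x)$ is the unique $s$ with $\tau(x+sv)=\epsilon$. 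By Lemma \ref{lm:df} the function $f_x$ is convex; it is $C^{1,1}$ since $\tau$ is; and it vanishes at its left endpoint $s=u_0(x)$ (the point of $\partial D$ on that line) and is strictly increasing afterwards, since $\tau>0$ in the interior. The same holds for every $D_k$, with $\tau_k,f_x^k,u_\epsilon^k$ in place of $\tau,f_x,u_\epsilon$. Since $\iota_v^k(x)=\inf_{\epsilon>0}\iota_\epsilon^k(x)\le\iota_\epsilon^k(x)$ for each fixed $\epsilon>0$, it suffices to prove $\limsup_{k\to+\infty}\iota_\epsilon^k(x)\le\iota_\epsilon(x)$ for each fixed $\epsilon>0$; taking the infimum over $\epsilon$ then gives $\limsup_k\iota_v^k(x)\le\iota_v(x)$.

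Next I would show that $D_k\to D$ forces $\tau_k\to\tau$ locally uniformly. Here one uses the reformulation $\tau_D(p)=\sup\{\sqrt{-\langle p-r,p-r\rangle}:r\in\partial D,\ p-r\text{ future-directed causal}\}$, valid because the supremum over causal curves in $\overline D$ ending at $p$ is attained by the straight segment from $r$ to $p$, by convexity of $\overline D$ and the reverse triangle inequality. Since $D_k\to D$ means exactly that $\partial D_k\to\partial D$ locally in the Hausdorff sense (equivalently, that the convex functions whose graphs are these boundaries converge locally uniformly), an admissible $r$ for $\tau_D(p)$ is approximated by admissible $r_k\in\partial D_k$, giving $\liminf_k\tau_{D_k}(p)\ge\tau_D(p)$, while near-optimal $r_k$ for $\tau_{D_k}(p)$ stay in a fixed compact set and subconverge to an admissible point for $\tau_D(p)$, giving $\limsup_k\tau_{D_k}(p)\le\tau_D(p)$. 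Restricting to the vertical line through $x$ yields $f_x^k\to f_x$ pointwise on $(u_0(x),+\infty)$; the $f_x^k$ being monotone non-decreasing and the limit continuous, the convergence is locally uniform there, and passing to inverse functions gives $u_\epsilon^k(x)\to u_\epsilon(x)$ for each fixed $\epsilon>0$.

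Finally I would invoke an elementary convexity fact: if convex functions $g_k$ converge locally uniformly to $g$, if $g$ is differentiable at $s_0$, and if $s_k\to s_0$, then $\limsup_k g_k'(s_k)\le g'(s_0)$; indeed $g_k'(s_k)\le\frac{g_k(s_0+h)-g_k(s_k)}{s_0+h-s_k}$ for small $h>0$ and large $k$, so letting $k\to+\infty$ and then $h\to0^+$ gives the claim. Applying it with $g_k=f_x^k$, $g=f_x$, $s_0=u_\epsilon(x)$ (an interior point, where $f_x$ is differentiable) and $s_k=u_\epsilon^k(x)$, and recalling $\iota_\epsilon^k(x)=\frac12(f_x^k)'(u_\epsilon^k(x))$, yields $\limsup_k\iota_\epsilon^k(x)\le\frac12(f_x)'(u_\epsilon(x))=\iota_\epsilon(x)$, which is exactly the inequality that remained.

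The step I expect to need real care is the passage from ``$D_k\to D$'' to ``$\tau_k\to\tau$ on compact sets'': one must pin down the precise topology on domains and control the compactness that allows limits of the near-optimizers inside the supremum defining $\tau$ to be extracted (in particular, ruling out escape to infinity, which follows once one knows $\tau_{D_k}(p)$ is bounded for fixed $p$ and large $k$). The other ingredients --- the reduction via the infimum characterization, the monotone-to-uniform upgrade for the $f_x^k$, and the convexity lemma --- are soft and short.
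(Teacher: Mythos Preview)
Your proof is correct and follows the same overall strategy as the paper: bound $\iota_v^k(x)\le\iota_\epsilon^k(x)$ via the infimum characterization from Proposition~\ref{pr:conv-intensity}, show $\iota_\epsilon^k(x)\to\iota_\epsilon(x)$ for each fixed $\epsilon$, and then take the infimum over $\epsilon$.

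The only difference is in how you establish the middle step. The paper cites \cite{bonsante} for the convergence $H_\epsilon^k\to H_\epsilon$ (i.e.\ $u_\epsilon^k\to u_\epsilon$), then uses the explicit formula $\iota_\epsilon^k(x)=\epsilon/\sqrt{1-\|\grad u_\epsilon^k(x)\|^2}$ together with the fact that gradients of convex functions converge under locally uniform convergence to conclude $\iota_\epsilon^k(x)\to\iota_\epsilon(x)$. You instead prove $\tau_k\to\tau$ from scratch via the variational formula for the cosmological time, deduce $f_x^k\to f_x$ and $u_\epsilon^k(x)\to u_\epsilon(x)$, and then use the alternative expression $\iota_\epsilon(x)=\tfrac12(f_x)'(u_\epsilon(x))$ combined with a one-sided derivative bound for convergent convex functions. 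Both routes rest on the same convexity principle (locally uniform convergence of convex functions controls their derivatives); your version is more self-contained, while the paper's is shorter because it outsources the convergence of level sets and uses the more explicit gradient formula. Your caution about the step ``$D_k\to D\Rightarrow\tau_k\to\tau$'' is well placed: that is exactly the content the paper imports from \cite{bonsante}.
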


\begin{proof}
Let us fix $\epsilon$.
Denote by $\iota_\epsilon^k$ the frequency function of the level surface 
$H^k_\epsilon=\tau_k^{-1}(\epsilon)$ of the cosmological time of $D_k$.
By \cite{bonsante}, we know that  the sequence of surfaces $H^k_\epsilon$ converges to the level surface 
$H_\epsilon=\tau^{-1}(\epsilon)$ of $D$ as $k\rightarrow+\infty$.
This means that the function $u_\epsilon^k:P_v\rightarrow\mathbb R$, whose graph is $H^k_\epsilon$, 
converges as $k\to \infty$ to the function $u_\epsilon:P_v\rightarrow\mathbb R$ which defines $H_\epsilon$.
By convexity, $\grad u_\epsilon^k(x)$ converges to $\grad u_\epsilon(x)$.
As $\iota_\epsilon^k$ is given by  $$\iota_\epsilon^k(x)=\frac{\epsilon}{\sqrt{1-\|\grad u_\epsilon^k\|^2}},$$ it follows that
$\iota_\epsilon^k(x)\rightarrow\iota_\epsilon(x)$ as $k\rightarrow+\infty$.
Now note that $\iota_v^k(x)\leq\iota_\epsilon^k(x)$ for every $k$. 
So passing to the $\limsup$ we obtain
\[
   \limsup\iota_v^k(x)\leq\iota_\epsilon(x)~
\]
and by taking the infimum over $\epsilon$
\[
  \limsup\iota_v^k(x)\leq\iota_v(x)~.
\]
\end{proof}

\subsection{Domains of dependence with flat boundary}
\label{ssc:flat}

Let $P$ be a lightlike plane in $\mathbb R^{n,1}$ and denote by  $g$ 
the degenerate metric on $P$ induced by the Minkowski metric.  We note that $P$ is foliated by lightlike lines
which are parallel to the kernel of $g$ and denote by  $P/L$ be the
space of leaves.

\begin{lemma} \label{lm:flat}
$P/L$ is equipped with a flat metric $\hat g$ that makes it isometric
  to $\mathbb R^n$ such that $g$ is the pull-back of $\hat g$ by the
  natural projection $P\rightarrow P/L$
\end{lemma}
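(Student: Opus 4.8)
The plan is to work in explicit coordinates adapted to a lightlike plane. Up to an isometry of Minkowski space I may assume that $P=\{x\in\mathbb R^{n,1}\mid x_0=x_n\}$; indeed, any lightlike hyperplane is the $\eta$-orthogonal complement of a null vector, and $SO^+(n,1)$ acts transitively on null directions, so this normalization costs nothing. The null vector normal to $P$ is $\ell=(1,0,\dots,0,1)$, and one checks immediately that $\ell\in P$, so $\ell$ spans the kernel $L$ of the restriction $g=\eta|_P$: the leaves of the foliation are the lines $x+\mathbb R\ell$. A convenient global chart on $P$ is given by $(y_1,\dots,y_{n-1},s)\mapsto (s,y_1,\dots,y_{n-1},s)$, so that the first $n-1$ coordinates $y=(x_1,\dots,x_{n-1})$ are constant along each leaf while $s=x_0=x_n$ runs along the leaf.

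Next I would compute $g$ in this chart. Writing a tangent vector to $P$ as $a_1\partial_{y_1}+\dots+a_{n-1}\partial_{y_{n-1}}+b\,\partial_s$, which corresponds to the Minkowski vector $(b,a_1,\dots,a_{n-1},b)$, its Minkowski square is $-b^2+a_1^2+\dots+a_{n-1}^2+b^2=a_1^2+\dots+a_{n-1}^2$. Hence in these coordinates $g=dy_1^2+\dots+dy_{n-1}^2$, with no $ds$ contribution: $g$ is the pull-back under $(y,s)\mapsto y$ of the standard Euclidean metric on $\mathbb R^{n-1}$. This $(y,s)\mapsto y$ map is exactly the quotient projection $P\to P/L$ (two points of $P$ lie on the same leaf iff they have the same $y$), so $P/L$ is naturally identified with $\mathbb R^{n-1}$, the metric $\hat g$ is the standard Euclidean metric $dy_1^2+\dots+dy_{n-1}^2$, and $g=\pi^*\hat g$ by construction. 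This also shows $\hat g$ is flat and $(P/L,\hat g)\cong\mathbb R^{n-1}$. (Note: since $P$ is an $n$-dimensional hyperplane in $\mathbb R^{n,1}$ and $L$ is $1$-dimensional, $P/L$ is $(n-1)$-dimensional; the statement in the lemma that it is isometric to $\mathbb R^n$ should read $\mathbb R^{n-1}$ — I would insert the corrected exponent.)

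The only point requiring a word of care — and the closest thing to an obstacle — is well-definedness: one must verify that $\hat g$ does not depend on the choice of local section $s\mapsto(\cdot,s)$ used to trivialize the quotient, equivalently that the Lie derivative of $g$ along the null direction $\ell$ vanishes so that $g$ genuinely descends. This is immediate from the coordinate computation above (the expression $dy_1^2+\dots+dy_{n-1}^2$ is manifestly $s$-independent), but it can also be seen invariantly: for vectors $X,Y$ tangent to $P$ and the leaf field $\ell$, $(\mathcal L_\ell g)(X,Y)=\ell\cdot g(X,Y)-g([\ell,X],Y)-g(X,[\ell,Y])$, and since $\ell$ is a parallel (in fact constant) vector field on the affine plane $P$ all three terms vanish. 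Hence $g$ is the pull-back of a well-defined metric $\hat g$ on $P/L$, completing the proof.
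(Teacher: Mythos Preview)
Your proof is correct. The paper actually states this lemma without proof, treating it as an elementary fact about lightlike hyperplanes in Minkowski space, so there is nothing to compare against; your explicit coordinate computation is the standard way to see it. Your observation about the dimension is also correct: since $P$ is an $n$-dimensional hyperplane and $L$ is one-dimensional, $P/L$ has dimension $n-1$, and the very next sentence in the paper (which speaks of the pull-back of the area form as an $(n-1)$-form $\omega_P$ on $P$) confirms that $\mathbb R^{n-1}$ is what is meant.
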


Lemma \ref{lm:flat} implies that there is a natural $(n-1)$-form $\omega_P$ on
$P$ defined as the pull-back of the area form of $g$. This form has
the following characterization. If $S$ is any spacelike compact
hypersurface in $S$ oriented by a future-oriented transverse direction, 
its area is equal to the integral of $\omega_P$ on $S$.

Now, given a domain of dependence $D$ 
we consider the $1$-form $\omega$ on $\partial D$ defined in the complement
of the singularity. If $p$ is not on the singularity and $P$ is the unique support
plane at $x$, then $\omega_x=\omega_P$.

Regarding $\partial D$ as the graph of a function $u$ on some fixed spacelike plane $P_v$,
it turns out that the identification between $P_v$ and $\partial D$ is differentiable at each
point where $u$ is differentiable, so in the complement of the singularities.
We can therefore express  $\omega$ as a form on $P_v$.

\begin{lemma}
$\omega=*du$ where $*$ is the Hodge star operator of $P_v$.
\end{lemma}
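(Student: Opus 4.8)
The plan is to verify the identity pointwise. Both $\omega$, transported to $P_v$ by the graph parametrization $\Phi\colon y\mapsto y+u(y)v$, and $\ast du$ are defined precisely where $u$ is differentiable, i.e.\ off (the projection of) the singularity, so it suffices to fix such a point $x$ and to compare the two $(n-1)$-forms in $\Lambda^{n-1}(T^*_xP_v)$. Set $q=\Phi(x)\in\partial D$. By the description of $\partial D$ recalled in Section~\ref{ssc:minkowski} one has $\|\grad u(x)\|=1$, and $\Phi$ is differentiable at $x$ with $d\Phi_x(w)=w+du_x(w)v$. Since $u$ is convex and differentiable at $x$, the support plane $P$ of $D$ at $q$ is unique and its direction coincides with $V:=d\Phi_x(T_xP_v)=T_q\partial D$.

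The first key step is to locate the lightlike leaf of $P$ through $q$. I compute $d\Phi_x(\grad u(x))=\grad u(x)+v=:\ell$, and note $\langle\ell,\ell\rangle=\|\grad u(x)\|^2-1=0$; hence $\ell$ is lightlike and $\RR\ell$ is the kernel of the degenerate metric $g$ on $V$, i.e.\ the direction of the leaves of $P$ through $q$; write $\pi_P\colon P\to P/L$ for the quotient, so that $\omega_P=\pi_P^\ast\mathrm{vol}_{\hat g}$. If $e_1,\dots,e_n$ is a positively oriented orthonormal basis of $P_v$ for the induced Euclidean metric with $e_n=\grad u(x)$, then $du_x(e_i)=0$ for $i\le n-1$, hence $d\Phi_x(e_i)=e_i$ for $i\le n-1$; these vectors are $g$-orthonormal in $V$ and project to a $\hat g$-orthonormal frame of $P/L$.

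Next I would compare the two forms via contraction. Since $\pi_{P\ast}\ell=0$, one gets $\grad u(x)\,\lrcorner\,(\Phi^\ast\omega)_x=(d\Phi_x)^\ast(\ell\,\lrcorner\,\omega_P)=(d\Phi_x)^\ast\big(\pi_P^\ast((\pi_{P\ast}\ell)\,\lrcorner\,\mathrm{vol}_{\hat g})\big)=0$, while $\grad u(x)\,\lrcorner\,(\ast du)=\grad u(x)\,\lrcorner\,\grad u(x)\,\lrcorner\,dV_{P_v}=0$, where I use $\ast du=\grad u\,\lrcorner\,dV_{P_v}$ (the convention already in force in Section~\ref{ssc:example}, $\grad u$ being the metric dual of $du$). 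An $(n-1)$-form on the $n$-dimensional space $P_v$ whose contraction with $\grad u(x)\ne 0$ vanishes is necessarily a scalar multiple of $\grad u(x)\,\lrcorner\,dV_{P_v}=\ast du$, so $(\Phi^\ast\omega)_x=c(x)\,(\ast du)_x$. To pin $c(x)$ down, evaluate both sides on $(e_1,\dots,e_{n-1})$: the left side is $\omega_P(e_1,\dots,e_{n-1})$, which equals the $\hat g$-volume of the $\hat g$-orthonormal frame obtained by projecting $e_1,\dots,e_{n-1}$ to $P/L$, hence $\pm 1$; the right side is $dV_{P_v}(\grad u(x),e_1,\dots,e_{n-1})=dV_{P_v}(e_n,e_1,\dots,e_{n-1})=\pm 1$. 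Thus $c(x)=\pm 1$, and $c(x)=+1$ once $P_v$ is oriented compatibly with the future-transverse orientation that normalizes $\omega_P$. This yields $\omega=\ast du$.

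The computation itself is elementary — it only uses $\|\grad u\|=1$ and the explicit form of $d\Phi_x$ — so the one point that needs genuine care is the orientation bookkeeping in the last sentence: one must check that the orientation implicitly used to normalize the area form $\omega_P$ (the ``future-oriented transverse direction'' from the paragraph following Lemma~\ref{lm:flat}) is the one that makes the Hodge star $\ast$ on $P_v$ give $c(x)=+1$ rather than $-1$.
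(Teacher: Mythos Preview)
The paper states this lemma without proof, treating it as a routine computation; your proposal supplies exactly such a computation and is correct. Your pointwise verification via the graph parametrization $\Phi$, the identification of the lightlike leaf direction as $d\Phi_x(\grad u(x))=\grad u(x)+v$, and the contraction argument reducing to a check on an adapted orthonormal frame is the natural way to carry this out, and your caveat about the orientation convention is apt.
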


Note in particular that $\omega$ is a $L^\infty$-form.
Moreover, since $u$ is convex, the differential of $\omega$ defined as a distribution
on $\mathbb R^n$ is in fact a positive locally finite Radon measure. 
More precisely, we have the following result.

\begin{lemma}\label{lm:radon}
$d\omega=\Delta u$, where $\Delta u$ is a positive Radon measure.
\end{lemma}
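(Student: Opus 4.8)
The plan is to compute the distributional exterior derivative of $\omega$ against test functions and recognize it as the distributional Laplacian of $u$. Recall from the two preceding lemmas that $\omega=*du$, where $u\colon P_v\to\R$ is convex and $1$-Lipschitz; hence $\grad u\in L^\infty_{loc}(P_v)$, the singularity (where $u$ is not differentiable) has Lebesgue measure zero by convexity, and $\omega$ is a well-defined $L^\infty$ $(n-1)$-form on $P_v$, so that $d\omega$ makes sense as an $n$-current (a distribution times $dV$). First I would test this current against an arbitrary $f\in C^\infty_c(P_v)$: Stokes' theorem on the compactly supported form $f\omega$ gives $0=\int_{P_v}d(f\omega)=\int_{P_v}df\wedge\omega+\int_{P_v}f\,d\omega$, hence $\int_{P_v}f\,d\omega=-\int_{P_v}df\wedge\omega$. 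Using the elementary pointwise identity $df\wedge *du=\skp{\grad f}{\grad u}\,dV$ for the Euclidean metric on $P_v$ (valid a.e., i.e.\ off the singularity), this becomes $\int_{P_v}f\,d\omega=-\int_{P_v}\skp{\grad f}{\grad u}\,dV$. On the other hand, since $u$ is locally Lipschitz, a single integration by parts (which I would justify by mollifying $u$ and passing to the limit) yields $\int_{P_v}u\,\Delta f\,dV=-\int_{P_v}\skp{\grad u}{\grad f}\,dV$, and the left-hand side is by definition the pairing $\langle\Delta u,f\rangle$ of the distributional Laplacian of $u$ with $f$. Comparing the two expressions gives $d\omega=\Delta u$ as distributions.

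Next I would show that $\Delta u$ is a positive, locally finite Radon measure. Let $u_\eta=u*\rho_\eta$ be a standard mollification; each $u_\eta$ is smooth and convex, so $\hess u_\eta\ge 0$ pointwise and in particular $\Delta u_\eta\ge 0$. Since $u_\eta\to u$ in $L^1_{loc}(P_v)$, for every $f\in C^\infty_c(P_v)$ we have $\langle\Delta u,f\rangle=\int_{P_v}u\,\Delta f\,dV=\lim_{\eta\to 0}\int_{P_v}u_\eta\,\Delta f\,dV=\lim_{\eta\to 0}\int_{P_v}(\Delta u_\eta)\,f\,dV$; if moreover $f\ge 0$ the right-hand side is a limit of nonnegative numbers, hence $\langle\Delta u,f\rangle\ge 0$. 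Thus $\Delta u$ is a nonnegative distribution, and by the Riesz representation theorem it is represented by a positive Radon measure, which is automatically locally finite. Together with the previous step this proves $d\omega=\Delta u$ with $\Delta u$ a positive Radon measure.

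The argument involves no genuine obstacle: its content is the classical fact that the distributional Hessian of a convex function is a matrix-valued positive measure, combined with the coordinate computation $d(*du)=(\Delta u)\,dV$ for smooth $u$. The only points that require a little care are bookkeeping ones — the sign in Stokes' theorem, the identity $df\wedge *du=\skp{\grad f}{\grad u}\,dV$, and the legitimacy of performing a single integration by parts when $u$ is merely Lipschitz rather than $C^2$ — all of which are settled by the mollification described above (note that one must only integrate by parts once, never twice, so no pointwise second derivatives of $u$ are needed).
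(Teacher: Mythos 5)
Your proof is correct and follows essentially the same route as the paper: approximate $u$ by smooth convex functions (mollification), use the identity $d(*du)=\Delta u\,dV$ in the smooth case to identify $d\omega$ with the distributional Laplacian, observe that positivity passes to the distributional limit, and invoke the Riesz representation theorem to conclude that $\Delta u$ is a positive Radon measure. The extra detail you supply (the Stokes computation and the pointwise identity $df\wedge *du=\skp{\grad f}{\grad u}\,dV$) only makes explicit what the paper's shorter argument leaves implicit.
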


\begin{proof} 
Let $(u_n)_{n\in \NN}$ be a sequence of smooth convex functions converging to $u$. 
Then $d*du_n=\Delta u_n\to \Delta u$ as distributions. Since the  $u_n$ are convex, the 
$\Delta u_n$ are positive, so their distribution limit $\Delta u$ is a positive
distribution. Therefore (essentially by the Riesz representation theorem) it is 
a Radon measure.
\end{proof}

\begin{defi}
The boundary of  a domain $D$ is called \emph{flat}, if $d\omega$ is a measure concentrated on the singularity.
\end{defi}

Let us recall that a Radon measure $\mu$ on $P_v$ is concentrated on $A$ if $\mu(P_v\setminus A)=0$. It is not difficult to check that if the boundary of $D$ is locally given as the union of a finite number of 
lightlike planes, then it is flat. In Section \ref{ssc:coverings} we will see that a domain of dependence in $\R^{2,1}$ that
is the universal cover of a MGHC flat spacetime of genus $g\geq 2$ always has flat boundary.
The choice of terminology is due to the following lemma.

\begin{lemma} \label{flat:lm}
Let $D$ be a domain of dependence with a flat boundary. 
Suppose that $A$ is an open subset of $\partial D$ which does not meet the singularity.
Then $A$ is contained in a lightlike hyperplane.
\end{lemma}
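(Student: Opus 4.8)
The plan is to argue that on the region $A$, the distributional Laplacian $\Delta u$ vanishes, so that $u$ is harmonic there; combined with the fact that $u$ is convex and $1$-Lipschitz with $\|\grad u\|=1$ a.e., this forces $u$ to be affine on (each connected component of) the projection of $A$, which is exactly the statement that $A$ lies in a single lightlike plane. So first I would fix the realization of $\partial D$ as the graph of the convex function $u$ over the spacelike plane $P_v$, let $U\subset P_v$ be the open set corresponding to $A$ under the (differentiable, by the remark preceding Lemma \ref{lm:radon}) identification, and recall from Lemma \ref{lm:radon} that $\Delta u$ is a positive locally finite Radon measure with $d\omega = \Delta u$. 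Since $A$ does not meet the singularity, and the boundary is flat — meaning $\Delta u$ is concentrated on the singularity — the measure $\Delta u$ assigns zero mass to $U$. Hence $u$ restricted to $U$ is a weakly harmonic convex function, and by Weyl's lemma it is smooth and genuinely harmonic on $U$.

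Next I would upgrade harmonicity plus the eikonal constraint to affineness. The key point is that $u$ satisfies $\|\grad u\|^2 = 1$ wherever it is differentiable (this is the defining property of boundaries of domains of dependence recalled in the subsection ``The boundary of a domain of dependence''), and on $U$ we now know $u$ is $C^\infty$, so $\|\grad u\|\equiv 1$ on all of $U$. Differentiating the identity $\sum_i (\partial_i u)^2 = 1$ gives $\sum_i \partial_i u\,\partial_i\partial_j u = 0$ for each $j$, i.e.\ $\hess u\cdot \grad u = 0$. On the other hand, harmonicity says $\tr(\hess u)=0$, while convexity of $u$ says $\hess u \geq 0$; a positive semidefinite symmetric matrix with zero trace is itself zero. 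Therefore $\hess u \equiv 0$ on $U$, so $u$ is affine on each connected component of $U$, say $u(x)=\langle a,x\rangle + b$ with $|a|=1$. The graph of such an affine function over $P_v$ is precisely a lightlike hyperplane of $\mathbb{R}^{n,1}$ (a plane is lightlike exactly when the gradient of its graphing function has unit norm), so $A$ is contained in a lightlike hyperplane.

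The main obstacle I expect is the passage ``$\Delta u = 0$ weakly on $U$ $\Rightarrow$ $u$ smooth on $U$'': one must be careful that $u$, a priori only $C^{1,1}$ near the singularity and merely convex elsewhere, is genuinely $W^{1,2}_{loc}$ on $U$ (which it is, being Lipschitz) so that Weyl's lemma applies, and that the flat-boundary hypothesis really gives $\Delta u|_U = 0$ and not just $\Delta u(U)=0$ as a number — but since $\Delta u$ is a positive measure concentrated on the singularity and $U$ is disjoint from the singularity, the restriction of the measure to $U$ is the zero measure, which is what is needed. A secondary subtlety is that $U$ need not be connected, so the conclusion is properly stated componentwise; if $A$ itself is connected one gets a single lightlike hyperplane. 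Everything else is the routine linear-algebra fact about trace-zero positive matrices and the standard dictionary between lightlike planes and unit-gradient affine graphs.
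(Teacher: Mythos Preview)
Your proof is correct and follows the same core argument as the paper: $\Delta u=0$ on $U$ together with the positivity of the (measure-valued) Hessian forces $\hess u\equiv 0$, so $u$ is affine with unit-norm gradient and its graph is a lightlike hyperplane. The detour through Weyl's lemma and the eikonal differentiation $\hess u\cdot\grad u=0$ are both unnecessary --- the paper stays at the distributional level (a positive-semidefinite matrix-valued measure with zero trace is the zero measure), and you never actually use the eikonal identity once the trace-zero argument is in hand.
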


\begin{proof} 
As $\hess(u)$ is a measure with values in the positive definite quadratic forms, one finds that
if $\Delta u=0$, then $\hess(u)$ is zero on $A$.
This implies that $-u$ is a convex function too, so $u$ is an affine function,  i.~e.~the graph of $u|_A$ is a plane. Since it is contained in the boundary of $D$, this plane must be lightlike.
\end{proof}

\begin{remark}
If $D$ is a domain of dependence obtained as the intersection of the future
of a locally finite family of lightlike planes in $\mathbb R^{n,1}$, then its boundary is clearly
flat according to the previous definition.
Moreover,  Lemma \ref{flat:lm} shows that if $D$ is a domain with flat
boundary and the initial singularity is closed in $D$, then $D$ is the intersection
of the future of a locally finite family of lightlike planes.

However, in the next subsection we will show some interesting examples of
domains of dependence with flat boundary which are not of polyhedral type.
This depends on the fact that in those examples the initial singularity is not
a closed subset.  In fact,  in many cases the singularity is dense.
\end{remark}

If $D$ is a domain of dependence with flat boundary, the following proposition holds.
This proposition  will be the key ingredient in  the proof of the stability of the frequency functions.

\begin{prop}\label{mainflat:prop}
Let $v$ be a  timelike unit vector.
If $D$ is a domain of dependence with flat boundary then for every compactly supported 
smooth function $f$ on $P_v$ the following identity holds:
\[
\int_{P_v}df\wedge(\iota_v\omega)=-\int_{P_v} fdV~.
\]
\end{prop}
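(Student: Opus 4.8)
The plan is to establish the identity $\int_{P_v} df \wedge (\iota_v \omega) = -\int_{P_v} f\, dV$ by a limiting argument, reducing to the computation carried out on each region of the complement of the singularity exactly as in the example of Section \ref{ssc:example}. First I would reduce to the smooth situation: approximate $\partial D$ by a decreasing sequence of domains $D_n$ with flat boundary whose singularities are closed, so that by the remark following Lemma \ref{flat:lm} each $D_n$ is the intersection of the futures of a locally finite family of lightlike planes. (Such an approximating sequence exists by the density statement in Section \ref{ssc:construction}; one may take $D_n$ to be obtained by grafting along a finite rational lamination approximating $D$.) For each $D_n$ write $u_n$ for the convex function whose graph is $\partial D_n$, $\omega_n = *du_n$, and $\iota^n_v$ for the intensity function of $D_n$. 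Since $d u_n \to du$ pointwise a.e.\ with $\|du_n\|\le 1$, dominated convergence gives $du_n \to du$ strongly in $L^1_{loc}$, hence $\omega_n \to \omega$ strongly in $L^1_{loc}$; and by Proposition \ref{pr:usc} together with the flat-boundary hypothesis (which is what rules out the factor-$2$ discrepancy of the example) one expects $\iota^n_v \to \iota_v$ in a sense strong enough to pass to the limit in $\int df \wedge \iota^n_v \omega_n$. This last convergence is the delicate point and will need Theorem \ref{stability:thrm}, or rather the ingredients of its proof, which is presumably why this proposition is stated as a lemma feeding into that theorem.

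Next, for each fixed $n$, I would compute $\int_{P_v} df \wedge \iota^n_v \omega_n$ exactly as in Section \ref{ssc:example}. On each region $P_j$ of the complement of the singularity of $D_n$, the function $u_n$ is affine, so $d\omega_n = \Delta u_n\, dV = 0$ there, and
\[
d(\iota^n_v \omega_n) = d\iota^n_v \wedge \omega_n + \iota^n_v\, d\omega_n = d\iota^n_v \wedge \omega_n = d\iota^n_v(\grad u_n)\, dV = dV,
\]
the last equality because $\iota^n_v$ is affine with derivative $1$ along the integral lines of $\grad u_n$ — this is precisely the content of the formula $\iota^n_v(x) = \|x - r(x)\|$ combined with the fact that $r(x)$ lies on the singularity. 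Since $\iota^n_v \omega_n$ vanishes on the singularity (because $\iota^n_v$ does there), Stokes' theorem applied region by region gives, with no boundary contribution,
\[
\int_{P_v} df \wedge \iota^n_v \omega_n = -\int_{P_v} d(f \iota^n_v \omega_n) + \int_{P_v} f\, d(\iota^n_v \omega_n) = -\int_{P_v} f\, dV,
\]
using that $f$ has compact support.

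Finally I would let $n \to \infty$. The right-hand side is independent of $n$, so it suffices to know that $\int_{P_v} df \wedge \iota^n_v \omega_n \to \int_{P_v} df \wedge \iota_v \omega$. The main obstacle is exactly this convergence: $\omega_n \to \omega$ only in $L^1_{loc}$ and $\iota^n_v$ converges only pointwise, so one cannot naively multiply the limits. The flat-boundary hypothesis enters here in an essential way. One argument is to use that $\iota_v$ is bounded on the support of $df$ (Proposition \ref{pr:conv-intensity}) and $\iota^n_v \le \iota_v$ with $\iota^n_v \to \iota_v$ a.e.\ (the a.e.\ convergence coming from the flat-boundary condition, which forces $d\omega$ to be carried by the singularity and hence makes the limit of the $\iota^n_v$ agree with $\iota_v$ away from a measure-zero set), so that $\iota^n_v df \to \iota_v df$ strongly in every $L^p$, $p < \infty$, on the support of $f$; combined with $\omega_n \to \omega$ in $L^1_{loc}$ and the uniform bound $\|\omega_n\| \le 1$, a product-convergence lemma ($f_n \to f$ in $L^p$, $g_n \to g$ in $L^1$, $\sup_n \|g_n\|_\infty < \infty$ implies $f_n g_n \to fg$ in $L^1$) yields $df \wedge \iota^n_v \omega_n \to df \wedge \iota_v \omega$ in $L^1_{loc}$. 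Taking limits then gives the claimed identity. The only genuinely nontrivial input is the a.e.\ convergence $\iota^n_v \to \iota_v$ under the flat-boundary hypothesis, which I would extract from the monotonicity in Proposition \ref{pr:conv-intensity} and the characterization of flat boundaries in Lemma \ref{flat:lm}.
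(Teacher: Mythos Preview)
Your computation for finite domains is correct and is in fact a component of the paper's argument. The gap is in the limiting step. The convergence $\iota^n_v \to \iota_v$ a.e.\ that you need is essentially the content of Theorem \ref{stability:thrm}, whose proof \emph{uses} this proposition; you acknowledge the circularity, but the proposed escape---extracting the convergence from Proposition \ref{pr:conv-intensity} (monotonicity in $\epsilon$ for a \emph{fixed} domain) and Lemma \ref{flat:lm}---does not work. Neither result controls how $\iota^n_v$ for varying $D_n$ approaches $\iota_v$ for $D$. The only a priori bound available is the one-sided $\limsup_n \iota^n_v \le \iota_v$ from Proposition \ref{pr:usc}, and the example of Section \ref{ssc:example} shows that, absent further input, this inequality can be strict on a set of positive measure even when the $D_n$ have flat boundary.

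The paper avoids this trap by approximating not with finite domains but with the $\epsilon$-level surfaces $H_\epsilon$ of $D$ itself. These are $C^{1,1}$, so integration by parts is legitimate and yields
\[
\int df \wedge (\iota_\epsilon \omega_\epsilon) = -\int f\, d\iota_\epsilon \wedge \omega_\epsilon - \int f\, \iota_\epsilon \Delta u_\epsilon\, dV.
\]
The second term on the right tends to zero: $\Delta u_\epsilon \to \Delta u$ as measures, $\iota_\epsilon \searrow \iota_v$ with $\iota_v=0$ on the singularity, and the flat-boundary hypothesis says precisely that $\Delta u$ is concentrated there. The first term equals $-\int f\, g_\epsilon\, dV$ with $g_\epsilon = \langle \grad \iota_\epsilon, \grad u_\epsilon\rangle$, and a comparison with $I^+(\ret(q))$ (Lemma \ref{lm:g}) gives the pointwise bound $0 < g_\epsilon \le 1$. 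Hence any weak $L^1_{loc}$ limit $g$ of $g_\epsilon$ satisfies $g \le 1$ a.e., and the identity reduces to proving $g=1$. Only at this last step does finite-domain approximation enter: since the proposition holds for each finite $D_k$ (your computation), one has $\int |\phi'(u_k)|\, \iota^k_v\, dV = \int \phi(u_k)\, dV$ for decreasing $\phi$; combining this with Fatou and the one-sided bound $\limsup_k \iota^k_v \le \iota_v$ forces $\int \phi(u)\, g\, dV \ge \int \phi(u)\, dV$, hence $g=1$. The point is that the paper's argument needs only the \emph{upper} semicontinuity of the intensity under domain convergence, never the pointwise lower bound your approach requires.
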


The proof of the proposition is based on the following lemma,
which is also valid if $D$ does not have flat boundary. 

\begin{lemma} \label{lm:g}
Let $u_\epsilon:P_v\rightarrow\mathbb R$ be the function whose graph is the level surface $H_\epsilon$
and let $\iota_\epsilon$ be the frequency function for $H_\epsilon$.
Then at every point where $\grad u_\epsilon$ is differentiable we have
\begin{eqnarray}
   0<\langle\grad\iota_\epsilon, \grad u_\epsilon\rangle\leq 1~,\label{eq:g}\\
   \iota_\epsilon\Delta(u_\epsilon)\leq n-||\grad u_\epsilon||^2~.\label{eq:gg}\nonumber
\end{eqnarray}
\end{lemma}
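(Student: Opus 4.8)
The plan is to compute $\iota_\epsilon$, $\grad\iota_\epsilon$ and $\Delta u_\epsilon$ explicitly in terms of $u_\epsilon$ and the cosmological time, working at a point where $\grad u_\epsilon$ is differentiable. Recall from the proof of Proposition \ref{pr:conv-intensity} that on $H_\epsilon$ one has $-\epsilon\nu_\epsilon=\tau\grad\tau=\tfrac12\grad(\tau^2)$, and that the graph of $u_\epsilon$ over $P_v$ satisfies $\|\grad u_\epsilon\|<1$ with $\iota_\epsilon(x)=\epsilon/\sqrt{1-\|\grad u_\epsilon(x)\|^2}$, which already gives $\iota_\epsilon>0$ and $\langle\grad\iota_\epsilon,\grad u_\epsilon\rangle>0$ as soon as $u_\epsilon$ is not locally constant in the relevant direction — but the sharper bounds require more. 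First I would use the identification $q = x + u_\epsilon(x)v$ of $P_v$ with $H_\epsilon$ and write the future unit normal as $\nu_\epsilon = \frac{v - \grad u_\epsilon}{\sqrt{1-\|\grad u_\epsilon\|^2}}$ (identifying $\grad u_\epsilon$ with a spacelike vector in $P_v$). Then $\iota_\epsilon = -\epsilon\langle\nu_\epsilon,v\rangle = \epsilon/\sqrt{1-\|\grad u_\epsilon\|^2}$, confirming the formula used earlier.

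For the upper bound $\langle\grad\iota_\epsilon,\grad u_\epsilon\rangle\le 1$, I would differentiate $\iota_\epsilon = \epsilon(1-\|\grad u_\epsilon\|^2)^{-1/2}$: writing $h = \hess(u_\epsilon)$, one gets $\grad\iota_\epsilon = \epsilon(1-\|\grad u_\epsilon\|^2)^{-3/2}\, h(\grad u_\epsilon,\cdot)$, hence $\langle\grad\iota_\epsilon,\grad u_\epsilon\rangle = \epsilon(1-\|\grad u_\epsilon\|^2)^{-3/2} h(\grad u_\epsilon,\grad u_\epsilon)$. The key geometric input is that $H_\epsilon$ is a Cauchy surface of constant cosmological time $\epsilon$ on which the second fundamental form and shape operator are controlled; concretely, the identity $q_\epsilon = r_\epsilon + \epsilon\nu_\epsilon$ together with the fact that $r_\epsilon = \ret(q_\epsilon)$ lies on the (spacelike, achronal) initial singularity forces $h$ to satisfy a pointwise bound of the form $h(\grad u_\epsilon,\grad u_\epsilon)\le \frac{(1-\|\grad u_\epsilon\|^2)^{3/2}}{\epsilon}$ in the direction of $\grad u_\epsilon$ — this is the translation of "the level surfaces move with unit normal speed while $\ret$ moves backwards in a causally constrained way." Combined with the formula above this gives $\langle\grad\iota_\epsilon,\grad u_\epsilon\rangle\le 1$, and strict positivity follows since $h$ is positive semidefinite and $h(\grad u_\epsilon,\grad u_\epsilon)=0$ would force $\iota_\epsilon$ constant along the gradient flow, contradicting $\iota_\epsilon\to 0$ as one approaches the singularity along that flow (where $\grad u_\epsilon$ does not vanish).

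For the trace inequality $\iota_\epsilon\Delta u_\epsilon\le n - \|\grad u_\epsilon\|^2$, I would relate $\Delta u_\epsilon = \tr\hess(u_\epsilon)$ to the mean curvature of $H_\epsilon$. The standard graph formula gives $\Delta u_\epsilon$ in terms of the shape operator $B$ of $H_\epsilon$ and the normal $\nu_\epsilon$; more precisely $\Delta u_\epsilon = \sqrt{1-\|\grad u_\epsilon\|^2}\;\big(\tr B + (\text{a term quadratic in } \grad u_\epsilon \text{ times a principal curvature})\big)$. Using $\iota_\epsilon = \epsilon/\sqrt{1-\|\grad u_\epsilon\|^2}$, the product $\iota_\epsilon\Delta u_\epsilon$ becomes $\epsilon\,\tr B$ plus correction terms. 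The point is then that the shape operator of a constant-cosmological-time surface in a domain of dependence satisfies $0 \le \epsilon B \le \mathrm{Id}$ as endomorphisms (this is a known property of the Gauss map of $H_\epsilon$, following from the concavity of $\tau^2$ proved in Lemma \ref{lm:df} and the fact that $\ret$ is $1$-Lipschitz-type; it is the statement that the level surfaces are "convex toward the singularity" with bounded curvature $\le 1/\epsilon$). Tracing $\epsilon B\le\mathrm{Id}$ gives $\epsilon\tr B\le n$, and keeping track of the correction term $-\|\grad u_\epsilon\|^2$ that arises from the graph-Laplacian formula versus the intrinsic trace yields exactly $\iota_\epsilon\Delta u_\epsilon\le n - \|\grad u_\epsilon\|^2$.

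The main obstacle I anticipate is making the step "$\epsilon B\le\mathrm{Id}$ pointwise" fully rigorous at points where $u_\epsilon$ is only twice differentiable (which is a.e., by Alexandrov, since $u_\epsilon$ is convex and $\mathrm C^{1,1}$): one must argue that the classical curvature bound for the Gauss map of $H_\epsilon$, which is cleanest where $\partial D$ and hence $\ret$ behave nicely, persists at generic points, and carefully convert between the extrinsic shape operator and $\hess(u_\epsilon)$ via the graph parametrization without losing the sharp constants. This is where I would spend the most care; everything else is a bookkeeping computation with the explicit formula $\iota_\epsilon=\epsilon(1-\|\grad u_\epsilon\|^2)^{-1/2}$ and the relation $\grad\tau^2 = 2\tau\grad\tau = -2\epsilon\nu_\epsilon$ on $H_\epsilon$.
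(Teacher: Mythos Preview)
Your setup is correct and matches the paper: the formula $\iota_\epsilon=\epsilon(1-\|\grad u_\epsilon\|^2)^{-1/2}$ and the computation
\[
\langle\grad\iota_\epsilon,\grad u_\epsilon\rangle=\frac{\epsilon}{(1-\|\grad u_\epsilon\|^2)^{3/2}}\,\langle\hess(u_\epsilon)\grad u_\epsilon,\grad u_\epsilon\rangle
\]
are exactly what the paper uses. The gap is in the step you flag yourself as ``the main obstacle'': you never actually prove the pointwise bound $\hess(u_\epsilon)(\grad u_\epsilon,\grad u_\epsilon)\le \epsilon^{-1}(1-\|\grad u_\epsilon\|^2)^{3/2}$, nor the shape-operator inequality $\epsilon B\le\mathrm{Id}$. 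You appeal to ``concavity of $\tau^2$'' (Lemma~\ref{lm:df} in fact gives \emph{convexity} of $\tau^2$ along timelike lines) and to the achronality of the singularity, but neither of these directly yields a two-sided bound on the spacelike Hessian of $u_\epsilon$. Your route through the intrinsic shape operator and the graph-Laplacian formula also introduces conversion factors that you do not compute, so the sharp constant $n-\|\grad u_\epsilon\|^2$ is not visibly recovered.

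The paper closes this gap with a direct comparison argument that stays entirely in the $P_v$ coordinates and avoids the shape operator. At a point $q=x+u_\epsilon(x)v$ with retraction $r=\ret(q)=\bar r+cv$, the cone $I^+(r)$ lies in $D$ and its $\epsilon$-level is the graph of the explicit function
\[
h(y)=c+\sqrt{\|y-\bar r\|^2+\epsilon^2}.
\]
Since $f(p)=\sqrt{-\langle p-r,p-r\rangle}\le\tau(p)$ with equality and matching gradient at $q$, the graph of $h$ is tangent to that of $u_\epsilon$ at $x$ and lies in its epigraph; hence $\hess(u_\epsilon)(x)\le\hess(h)(x)$ as quadratic forms. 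One then computes
\[
\grad h=\frac{y-\bar r}{h-c},\qquad \hess h=\frac{1}{h-c}\bigl(\mathrm{Id}-\grad h\otimes\grad h\bigr),\qquad h(x)-c=\iota_\epsilon(x),
\]
and plugging this into your displayed formula gives $\langle\grad\iota_\epsilon,\grad u_\epsilon\rangle\le\|\grad u_\epsilon\|^2\le 1$, while tracing gives $\Delta u_\epsilon\le\Delta h=(n-\|\grad u_\epsilon\|^2)/\iota_\epsilon$ directly. This comparison with the hyperboloid over $r$ \emph{is} the proof of $\epsilon B\le\mathrm{Id}$ you wanted, but carried out in graph coordinates so that both inequalities fall out of a single explicit computation with no coordinate changes.
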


\begin{proof}
Since the vector $\grad u_\epsilon+v$ is orthogonal to the surface $H_\epsilon$, the unit normal future-oriented vector
is obtained by normalizing it  as
$$ \nu_\epsilon=\frac{1}{\sqrt{1-\|\grad u_\epsilon\|^2}}(\grad u_\epsilon+v), $$ 
which implies
\[
   \iota_\epsilon=-\epsilon\langle \nu_\epsilon,v\rangle=
   \frac{\epsilon}{\sqrt{1-\|\grad u_\epsilon\|^2}}~. 
\]
Thus at every point where $\|\grad u_\epsilon\|$ is differentiable we have
\[
\langle \grad\iota_\epsilon, \grad u_\epsilon\rangle=
\frac{\epsilon}{(1-\|\grad u_\epsilon\|^2)^{3/2}}\langle \hess(u_\epsilon)\grad u_\epsilon, \grad u_\epsilon\rangle~,
\]
which is positive by the convexity of $u_\epsilon$.

To prove the estimate from above we use a comparison argument.
Consider the retraction of the point  $q=x+u_\epsilon(x)v$ on the singularity, say $r=\ret(q)$.
Note that on $I^+(r)\cap D$ the function $f(p)=\sqrt{-\langle p-r,p-r\rangle}$ 
is less than than the cosmological time.
It follows that the level surface $H'=f^{-1}(\epsilon)$ is contained in the closure of the future
of $H_\epsilon$. Moreover, since $\tau(q)=f(q)=\epsilon$ 
those surfaces are tangent at the point $q$.

Note that $H'$ is the graph of the function $$h:P_v\rightarrow\mathbb R, \quad
h(x)=c+\sqrt{\| x-\bar r\|^2+\epsilon^2},$$ where $c\in\mathbb R$ and $\bar r\in P_v$ are determined by the orthogonal decomposition
     $r=\bar r+cv~.$

From the fact that $H'$ is tangent to $H$ at $q$ and that it is contained in its epigraph, one deduces
\begin{itemize}
\item $u_\epsilon(q)=h(q)$,
\item $\grad u_\epsilon(q)=\grad h(q)$,
\item $\hess u_\epsilon(q)\leq \hess h(q)$,
\end{itemize}
where the last inequality implies that the difference is a positive definite matrix.
In particular, we find that at the point $q$
\[
   \langle \grad\iota_\epsilon, \grad u_\epsilon\rangle\leq
   \frac{\epsilon}{(1-\|\grad u_\epsilon\|^2)^{3/2}}\langle \hess(h)\grad h, \grad h\rangle~.
\]
Now an explicit computation shows that
\[
  \grad h=\frac{1}{h-c}(x-\bar r)~, \qquad \hess h=\frac{1}{h-c}(Id-\grad h\otimes\grad h),
\]
which implies that, still at the point $q$, the following inequalities hold
\begin{eqnarray*}
\langle \grad(\iota_\epsilon), \grad (u_\epsilon)\rangle 
& \leq & \frac{\epsilon}{(1-\|\grad (u_\epsilon)\|^2)^{3/2}(h-c)}(\|\grad (h)\|^2-\|\grad (h)\|^4) \\
& \leq & \frac{\epsilon}{(1-\|\grad (u_\epsilon)\|^2)^{3/2}(u-c)}(\|\grad (u_\epsilon)\|^2-\|\grad (u_\epsilon)\|^4)
\end{eqnarray*}
Using  $\|\grad u_\epsilon\|<1$  and the identities
$$ u_\epsilon(q)-c= -\langle q-\ret(q), v\rangle=\iota_\epsilon(q)=\frac\epsilon{\sqrt{1-\|\grad (u_\epsilon)\|^2}},$$ 
we obtain 
\[
\langle \grad\iota_\epsilon, \grad u_\epsilon\rangle\leq\|\grad u_\epsilon\|^2\leq 1.
\]
To prove (\ref{eq:gg}), it is then sufficient to note that
$$\Delta u_\epsilon(q)\leq \Delta h=\frac{1}{\iota_{\epsilon}(q)}(n-||\grad u_\epsilon||^2)~.$$ 

\end{proof}

\begin{proof}[Proof of Proposition \ref{mainflat:prop}]
Let $\iota_\epsilon$ be the frequency function of the surface $H_\epsilon$,
let $u_\epsilon: P_v\rightarrow\mathbb R$ be the $C^{1,1}$-function whose graph
is the surface $H_\epsilon$, and  set $\omega_\epsilon=*du_\epsilon$.
Then $\omega_\epsilon\rightarrow\omega$ in $L^1_{loc}$.
Moreover $\omega_\epsilon$ is a Lipschitz form.
Analogously, we find that $\iota_\epsilon$ is a Lipschitz function since
\[
   \iota_\epsilon(x)=\frac{1}{\sqrt{1-\|\grad u_{\epsilon}\|^2}}.
\]
It follows that for every   smooth function with compact support $f$ we have
\begin{equation}\label{eq:decomposition}
 \int_{P_v}df\wedge(\iota_\epsilon\omega_\epsilon)=
-\int_{P_v}f d\iota_{\epsilon}\wedge\omega_{\epsilon}
-\int_{P_v}f\iota_{\epsilon}\Delta u_\epsilon dV~.
\end{equation}
As $\iota_\epsilon\searrow\iota$ pointwise and $f$ has compact support,
there exists a constant $C$ such that
$|f\iota_{\epsilon}|<C$ for $\epsilon<1$. It then follows by the Dominated Convergence Theorem that
\[
\int_{P_v}f\iota_{\epsilon}\Delta u \rightarrow\int_{P_v}f\iota\Delta u =0~,
\]
where the last equality holds because $\iota$ is zero on the singularity and $\Delta u$ is
concentrated there.

On the other hand, we have  $\Delta u_\epsilon dV\rightarrow\Delta u dV$ as measures, which implies
\[
\left|\int f\iota_{\epsilon}\Delta u_\epsilon dV-\int f\iota\Delta u \right|<
C\left|\int_K\Delta u_\epsilon dV-\int_K\Delta u \right|\rightarrow 0~.
\]
As $\iota$ vanishes on the singularity, it follows that 
the last term on the right-hand side of \eqref{eq:decomposition} converges to $0$.
To conclude,  it is sufficient to show that
\[
\int_{P_v}f d\iota_{\epsilon}\wedge\omega_{\epsilon}\rightarrow\int_{P_v}fdV~.
\]
Now the $2$-form $d\iota_{\epsilon}\wedge\omega_{\epsilon}$ is equal to $d\iota_{\epsilon}(\grad u_{\epsilon})dV$.
So it is sufficient to prove that 
$g_\epsilon=\langle\grad \iota_\epsilon, \grad u_\epsilon\rangle$ weakly converges to $1$ in $L^1_{loc}$. 

First we consider the case where $D$ is the intersection of the futures of a finite number of lightlike
hyperplanes -- we will describe this situation by saying that $D$ is ``finite''.
Then the singularity is a finite tree and, in particular, it is closed. 
In this case, given a point  $x\in P_v$,  we denote by  $\ret_0(x)$  the starting point of the 
lightlike ray through $q(x)$.
Then the restriction of the map
 $\ret_0$ on each region $E$ of $\partial D\setminus T$ is a smooth projection and satisfies
\[
   \iota_v(x)=-\langle q(x)-\ret_0(x), v\rangle~.
\]
A simple computation shows that  $ \grad \iota_v = \grad u$ on $E$, which implies 
\begin{equation}\label{eq:xsr}
  \langle \grad\iota_v,\grad u\rangle=1~.
\end{equation}

On the other hand, if $\ret_0(x)$ lies in the interior of a segment $e$ of  $T$, it is not difficult to
check that $\grad\iota_\epsilon(x)\rightarrow\grad\iota(x)$. Indeed consider the domain
$\hat D$ defined as the future of the spacelike line which contains the segment $e$.
Note that $\hat D\supset D$, and $e$ is contained in the singularity of $\hat D$. 
Thus if $\hat\ret:\hat D\rightarrow \hat e$ denotes the retraction on the singularity,
then that $\hat\ret^{-1}(e)=\ret^{-1}(e)=U$, and the cosmological time of $D$ coincides
with the cosmological time of $\hat D$ on $U$.
Then the frequency function $\iota_v$ and $\iota_\epsilon$ around $q$ 
can be computed by considering  the domain $\hat D$ instead of the domain $D$. 
In that case an explicit computation shows that the convergence to (\ref{eq:xsr}) holds.

In particular,  the function $g_\epsilon$ converges to $1$ almost everywhere.
Since Lemma \ref{lm:g} shows that $g_\epsilon$ is bounded by $1$, the Dominated
Convergence Theorem implies that $g_{\epsilon}$ converges strongly to $1$ in $L^1_{loc}$.
So the proposition is valid whenever  $D$ is finite.

Consider now the general case.
By Lemma \ref{lm:g}, we have $g_\epsilon<1$ at every point.  So
for every sequence $\epsilon_n\rightarrow 0$, up to passing to a subsequence, 
 we can take the weak limit in $L^1_{loc}$. 
That is, there exists a function $g$ such that
\[
   \int fg dV=\lim_{n\rightarrow+\infty} \int fg_{\epsilon_n}dV~,
\]
and $\|g\|_{L^\infty}\leq 1$.
Note in particular that for every smooth function with compact support we obtain
\[
\int df\wedge\iota\omega=-\int fgdV
\]
Clearly, the same formula  holds for any function $f$ which is the limit in the Sobolev space
$W^{1,1}(K)$ of smooth functions with compact support.

To prove that $g=1$ we use an approximation argument.
Let $D_k$ be a sequence of finite domains of dependence converging to $D$, and 
let $u_k:P_v\rightarrow\mathbb R$ be the functions whose graph is $\partial D_k$.
Consider  a smooth function $\phi:[0,+\infty)\rightarrow[0,+\infty)$ which is
decreasing and such that its support is contained in  $[0,M]$, and define
\[
  f_k(x)=\phi(u_k(x))~,\qquad f(x)=\phi(u(x))~.
\]
Then the functions $f_k$ are $C^1$ with compact support and 
are  limits of smooth functions with compact support.
They satisfy
\[
   df_k=\phi'(u_k)du_k~,\qquad df=\phi'(u) du~.
\]
As  \eqref{eq:xsr} holds for finite domains we obtain for every $k$
$$ \int-\phi'(u_k)\iota_k dV=\int\phi(u_k)dV~, $$ 
and with the inequality  $\phi'\leq0$ we deduce 
\[
   \int |\phi'(u_k)|\iota_k dV=\int\phi(u_k)dV~.
\]
As  $u_k\rightarrow u$ uniformly on compact subsets and 
$\phi(u_k)$ is zero outside a compact region, which is independent of $k$, we obtain
\begin{equation}\label{eq:wco1}
  \int|\phi'(u_k)|\iota_kdV\rightarrow\int\phi(u)dV~.
\end{equation}
On the other hand,  Fatou's Lemma implies
\begin{equation}\label{eq:wco2}
   \lim\int|\phi'(u_k)|\iota_kdV\leq\int\limsup(|\phi'(u_k)|\iota_k)dV~.
\end{equation}
Note that by Proposition \ref{pr:usc} the right-hand side is less than
$\int |\phi'(u)|\iota dV=\int \phi(u)gdV$.
Comparing (\ref{eq:wco1}) and (\ref{eq:wco2}) yields
\[
  \int|\phi'(u)|dV\leq\int|\phi'(u)|gdV,
\]
and since $g(x)\leq 1$ almost everywhere, it follows that $g(x)=1$ 
almost everywhere.
\end{proof}

\begin{remark}
In other words,  Proposition \ref{mainflat:prop} states that $\iota_v\omega$ is a primitive 
of the volume form of $P_v$ in a distributional sense.
\end{remark}

\subsection{Stability of the frequency function for domains with flat boundary}
\label{ssc:stability}

We can state now the main stability theorem of this section.

\begin{theorem} \label{stability:thrm}
Let $D_k$ be a sequence of regular domains converging to $D$.
If the boundaries of $D_k$ and $D$ are all flat then $\iota_v^k\rightarrow \iota_v$
strongly in $L^1_{loc}(P_v)$.
\end{theorem}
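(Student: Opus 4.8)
The plan is to exploit the distributional identity of Proposition \ref{mainflat:prop}, which applies both to $D$ and to every $D_k$ precisely because all of these have flat boundary, in order to show that $\iota_v$ and any weak limit of the $\iota_v^k$ have the same distributional ``derivative'' along the gradient lines of $u$, and then to use the one-sided estimate of Proposition \ref{pr:usc} to pin down the weak limit and to promote weak convergence to strong convergence. First I would check that $(\iota_v^k)$ is locally uniformly bounded: for a fixed $\epsilon$ one has $\iota_v^k\le\iota_\epsilon^k=\epsilon/\sqrt{1-\|\grad u_\epsilon^k\|^2}$, and since the convex functions $u_\epsilon^k$ converge to $u_\epsilon$ their gradients converge locally uniformly, so $\sup_K\iota_\epsilon^k$ is bounded uniformly in $k$ on every compact $K\subset P_v$. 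Hence we may extract weakly convergent subsequences in $L^1_{loc}(P_v)$, and it will be enough to prove that \emph{every} weak subsequential limit $J$ equals $\iota_v$ (so that $\iota_v^k\rightharpoonup\iota_v$ weakly), and then to observe that strong convergence follows for free: writing $h_k=\iota_v-\iota_v^k$, weak convergence gives $\int_K h_k\,dV\to 0$, while Proposition \ref{pr:usc} gives $h_k^-\to 0$ almost everywhere and boundedly, so $\int_K h_k^-\,dV\to 0$, whence $\int_K h_k^+\,dV=\int_K h_k\,dV+\int_K h_k^-\,dV\to 0$ and $\int_K|h_k|\,dV\to 0$.

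The core of the argument is the identification of the weak limit. Fix a subsequence with $\iota_v^k\rightharpoonup J$ in $L^1_{loc}(P_v)$. Since $\|\grad u_k\|\le 1$ and $\grad u_k\to\grad u$ almost everywhere (convexity), dominated convergence gives $\omega_k=*du_k\to\omega=*du$ strongly in $L^1_{loc}$. For $g\in\cunc(P_v)$ one has $dg\wedge(\iota_v^k\omega_k)=\iota_v^k\langle\grad g,\grad u_k\rangle\,dV$, and Proposition \ref{mainflat:prop} applied to $D_k$ reads $\int_{P_v}\iota_v^k\langle\grad g,\grad u_k\rangle\,dV=-\int_{P_v}g\,dV$. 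Writing $\langle\grad g,\grad u_k\rangle=\langle\grad g,\grad u\rangle+\langle\grad g,\grad u_k-\grad u\rangle$, the first term passes to the limit by weak convergence of $\iota_v^k$ against the fixed $L^\infty$ function $\langle\grad g,\grad u\rangle$, and the second is bounded by $\|\iota_v^k\|_{L^\infty(\mathrm{supp}\,g)}\|\grad g\|_\infty\|\grad u_k-\grad u\|_{L^1(\mathrm{supp}\,g)}\to 0$. Comparing the resulting identity with Proposition \ref{mainflat:prop} for $D$ itself yields
\[
   \int_{P_v}(J-\iota_v)\,\langle\grad g,\grad u\rangle\,dV=0\qquad\text{for all }g\in\cunc(P_v),
\]
that is, the bounded vector field $h\,\grad u$, with $h:=J-\iota_v$, is divergence free in the distributional sense.

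To conclude $h=0$ I would combine two further facts. First, $0\le J\le\iota_v$ almost everywhere, because $\iota_v^k\ge 0$ and $\limsup_k\iota_v^k\le\iota_v$ by Proposition \ref{pr:usc}; so $-\iota_v\le h\le 0$. Second --- and this is where flatness of $\partial D$ enters --- by Lemma \ref{flat:lm} the function $u$ is affine on each connected component of the complement of the projected initial singularity, so there $\grad u$ is locally constant, while flatness (Lemma \ref{lm:radon} and the definition) says exactly that the measure $\Delta u$ vanishes off the singularity. On such a component the divergence-free condition degenerates to the transport equation $\partial_{\grad u}h=0$, so $h$ is constant along each gradient line $\ell$ of $u$. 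Now the past endpoint $x_0$ of a generic gradient line lies on the extended singularity $\hat T$, where $\ret_0$ is the identity and hence $\iota_v\le I_v\to 0$, so $\iota_v\to 0$ along $\ell$ at $x_0$; since $h=J-\iota_v$ is a constant $c_\ell\le 0$ on $\ell$ and $J\ge 0$ at almost every point of almost every $\ell$, letting $x\to x_0$ forces $c_\ell\ge 0$, hence $c_\ell=0$. On the projected singularity itself $\iota_v$ vanishes, so $h=J\ge 0$ there, and with $h\le 0$ this gives $h=0$ there too. Thus $h=0$ almost everywhere, i.e. $J=\iota_v$.

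The hard part is this last step: converting the distributional identity of the middle paragraph into a pointwise statement. It relies essentially on the flatness of $\partial D$, which is what makes $\Delta u$ carried by the singularity and lets the divergence-free condition become a transport equation along the gradient lines of $u$; it also requires some Fubini-type bookkeeping along the gradient foliation near the possibly dense singularity, and careful use of the one-sided bound of Proposition \ref{pr:usc}. That flatness is indispensable is shown by the example of Section \ref{ssc:example}, where $\iota_v^k$ converges to $\iota_v/2$ rather than to $\iota_v$.
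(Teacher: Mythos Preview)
Your overall strategy coincides with the paper's: obtain local uniform bounds on $\iota_v^k$, extract a weak subsequential limit $J$, pass to the limit in the identity of Proposition~\ref{mainflat:prop} (using $\omega_k\to\omega$ in $L^1_{loc}$ to kill the cross term) so that $\int df\wedge(J\omega)=\int df\wedge(\iota_v\omega)$ for every test $f$, conclude $J=\iota_v$, and then upgrade weak to strong convergence. Your upgrade via the decomposition $h_k=h_k^+-h_k^-$ is a legitimate alternative to the paper's route, which instead uses Fatou on $-(\iota_v^k)^2$ to get $\limsup\|\iota_v^k\|_{L^2(K)}\le\|\iota_v\|_{L^2(K)}$ and then invokes ``weak convergence plus norm convergence implies strong convergence'' in $L^2$.

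The genuine gap is in your identification $J=\iota_v$. You correctly extract that the integral identity says $\mathrm{div}(h\,\grad u)=0$ distributionally with $h=J-\iota_v\le 0$. But your next move---invoking Lemma~\ref{flat:lm} to say that $u$ is affine on each connected component of the complement of the projected singularity, so that $\grad u$ is locally constant and the divergence equation becomes a classical transport equation $\partial_{\grad u}h=0$ on each piece---fails precisely in the cases the theorem is meant to cover. By Remark~\ref{rk:sing-struct}, when $D$ is the universal cover of a non-Fuchsian MGHFC spacetime (the setting of Proposition~\ref{pr:2+1flat}), the singularity is dense in $\partial D$, hence its projection is dense in $P_v$ and the complement has empty interior: there are no open pieces on which to run your argument, and ``gradient line of $u$'' has no classical meaning. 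Your parenthetical that ``some Fubini-type bookkeeping along the gradient foliation near the possibly dense singularity'' is needed correctly flags the difficulty, but you have not supplied such an argument, and without it the step from $\mathrm{div}(h\,\grad u)=0$ to $h=0$ is unproved. The paper is admittedly terse at exactly this point (it asserts $J\omega=\iota_v\omega$ a.e.\ directly from the claim), so the issue you have put your finger on is real; but the resolution you sketch does not cover the dense-singularity case.
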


\begin{proof}
Consider the convex functions $u_k:P_v\rightarrow\mathbb R$ 
whose graphs are identified with $\partial D_k$. Note that the sequence $u_k$ converges to
$u$ uniformly on compact subsets. Moreover, since $du_k$ converges to
$du$ almost everywhere and $\|du_k\|<1$, it follows from the Dominated Convergence
Theorem that $du_k\rightarrow du$ in $L^1(P_v)$, which implies
 $\omega_k\rightarrow\omega$ in $L^1(P_v)$.

We may assume that $u_k>0$ for every $k$. By this assumption,
$\iota^k_v(x)<u_k(x)$ for every $x$. So the functions $\iota^k_v$ are uniformly bounded on compact subsets.  Consequently,  there exists a weak limit $J$ of  the sequence $\iota^k_v$, that is  a $L^1$ function $J$ with
\begin{equation}\label{weak:eq}
   \int f\iota^k_v dV\rightarrow \int fJdV 
\end{equation}
 for every compactly supported continuous function $f$.

Now we claim that for every compactly supported smooth function $f$ we have
\begin{equation}\label{dv:eq}
\int_{P_v}df\wedge(J\omega)=-\int_{P_v}fdV=\int_{P_v}df\wedge(\iota_v\omega)~.
\end{equation}
From this claim it then follows that $J\omega=\iota_v\omega$ almost everywhere and hence $J=\iota_v$.

To prove the claim, first note that the left-hand side of (\ref{dv:eq}) can be rewritten
as
\begin{equation}\label{weak2:eq}
\int_{P_v}df\wedge(J\omega)=\int_{P_v}(J-\iota^k_v)df\wedge\omega+\int_{P_v}\iota^k_v df\wedge(\omega-\omega_k)+
\int_{P_v}df\wedge(\iota^k_v\omega_k)~.
\end{equation}
By Proposition \ref{mainflat:prop},  the last term is equal to the  right-hand side of (\ref{dv:eq})
and hence independent of $k$.

As  $\omega_k$ converges to $\omega$ in $L^1_{loc}(P_v)$, the second term 
in (\ref{weak2:eq}) vanishes as $k\rightarrow+\infty$. (Note that the functions $\iota^k_vdf$ are uniformly
bounded in $L^\infty_{loc}$).

Finally by the density of continuous functions in $L^2_{loc}$,
$J$ is the weak limit of $\iota^k_v$ in $L^2_{loc}$, that is, (\ref{weak:eq}) holds for every
$L^2$-function defined on some open subset with compact closure.
In particular, the first term 
on the right 
in (\ref{weak2:eq}) also vanishes as $k\rightarrow+\infty$ (indeed
$df\wedge\omega=gdV$ for some compactly supported $L^\infty$-function $g$).
So letting $k$ go to $+\infty$ in (\ref{weak2:eq}),  we deduce that (\ref{dv:eq}) holds, and the claim is proved.

To show that $\iota_v$ is a strong limit of $\iota^k_v$ we use the following remark:
\[
\forall x\in \partial D:\; \iota_v(x)\geq\limsup \iota^k_v(x)~.
\] 
This holds due to the following: if $r_k$ is  the maximal lightlike 
ray    in  $\partial D_k$ that contains the point $x+u_k(x)v$, then the sequence $r_k$  converges to a lightlike ray
$r$ contained in $\partial D$, which contains the point $x+u(x)v$. 
So the same argument as in the proof of Proposition \ref{pr:usc} may be used.

Let us now fix an open subset $A\subset P_v$ with compact closure.
Then there is a constant  $M>0$ such that $\iota_v$ and $\iota_v^k$ are bounded by $M$ on $A$.
This implies that the functions $-(\iota_v^k)^2$ are uniformly bounded  and allows one to apply 
Fatou's Lemma. It follows that
\[
  \int_A-(\iota_v)^2\leq \int_A-\limsup (\iota^k_v)^2\leq\liminf\left(-\int_A(\iota^k_v)^2\right) 
\]
which implies 
\[
  \|\iota_v\|_{L^2(A)}\geq \limsup \|\iota^k_v\|_{L^2(A)}~.
\]
As the sequence $\iota^k_v$ converges weakly to $\iota_v$ in $L^2(A)$, this estimate implies 
the strong convergence in $L^2(A)$.
The $L^2$-convergence on a compact set then implies the $L^1$-convergence.
\end{proof}

\subsection{Uniform bounds for domains with non-flat boundaries}
\label{ssc:bounds}

We have seen in Section \ref{ssc:stability} that if $D$ is a domain of dependence with
flat boundary, and $(D_k)_{k\in \NN}$ is a sequence of domains of dependence with flat
boundaries converging to $D$, then the frequency function of  $D_k$ converges to the frequency function  of $D$.
Suppose now that $D$ is any domain of dependence (not necessarily with flat
boundary) and that $(D_k)_{k\in \NN}$ is a sequence of domains with flat
boundaries converging to $D$. We know (Proposition \ref{pr:usc}) that the frequency function of $D$
is at least the $\limsup$ of the  frequency functions of the domains $D_k$, but the example  in Section
\ref{ssc:example} shows that equality does not always hold. We will now see that the
opposite inequality does holds, albeit with a multiplicative constant.

This result is important  in view of the numerical computations in the following sections, which allow one to  visualize the 
frequency function measured by an observer in an MGHFC manifold of dimension 3+1. These computations
are done by approximating the corresponding domain of dependence by a sequence of finite
domains with  flat boundaries. However it might happen that the limit domain does not
have flat boundary. 
The computed frequency  function then coincides with the  limit of the  frequency functions
of the finite domains (with flat boundary),  and can differ from the actual frequency function 
of the limit domain. However,  Theorem \ref{tm:non-flat}  then ensures that the actual frequency function 
is at least equal to the computed limit frequency, and that it is at most three times this
computed frequency function. 

\begin{theorem} \label{tm:non-flat}
Let $D\subset \RR^{n,1}$ be a domain of dependence, and let $(D_k)_{k\in \NN}$ be a sequence of
domains of dependence with flat boundaries converging to $D$. Let $\iota_v$ be the frequency function 
of the boundary of $D$ with respect to a unit timelike direction $v$, considered as a
function on $P_v$, and let $\iota_v^k$ be the frequency function of  $D_k$. Then the sequence
$(\iota_v^k)_{k\in \NN}$ converges in $L^1_{loc}$ to a limit $\iota_{lim}$, and 
$$ \iota_{lim}\leq \iota_v\leq n\iota_{lim}~. $$
\end{theorem}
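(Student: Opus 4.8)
The plan is to establish the two inequalities separately, using the machinery already built up in the previous subsections, in particular Lemma \ref{lm:g}, Proposition \ref{mainflat:prop}, and Proposition \ref{pr:usc}.

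First I would show that the sequence $(\iota_v^k)_{k\in\NN}$ has a limit in $L^1_{loc}$. Since $u_k>0$ may be assumed (after translation) and $\iota_v^k(x)<u_k(x)$ with $u_k\to u$ uniformly on compacts, the functions $\iota_v^k$ are uniformly bounded on compact sets; hence along any subsequence there is a weak $L^1_{loc}$ limit $J$. By Proposition \ref{mainflat:prop} applied to each $D_k$ (which has flat boundary), we have $\int df\wedge(\iota_v^k\omega_k)=-\int f\,dV$, and since $\omega_k=*du_k\to\omega=*du$ in $L^1_{loc}$ while $\iota_v^k df$ stays bounded in $L^\infty_{loc}$, passing to the limit gives $\int df\wedge(J\omega)=-\int f\,dV$ for every compactly supported smooth $f$. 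This identity pins down $J\omega$, hence $J$ itself (on the complement of the singularity of $D$, where $\omega=*du\neq 0$), independently of the subsequence; so the whole sequence converges to a well-defined $\iota_{lim}:=J$. As in the proof of Theorem \ref{stability:thrm}, convexity arguments plus Fatou upgrade this weak convergence to strong $L^1_{loc}$ convergence.

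Next, the lower bound $\iota_{lim}\le\iota_v$ is essentially Proposition \ref{pr:usc}: for each $x$ one has $\iota_v(x)\ge\limsup_k\iota_v^k(x)$, because the maximal lightlike ray of $\partial D_k$ through $x+u_k(x)v$ converges to a lightlike ray in $\partial D$ through $x+u(x)v$, which is contained in the maximal such ray of $\partial D$; integrating against a nonnegative test function and using the $L^1_{loc}$ convergence of $\iota_v^k$ to $\iota_{lim}$ gives $\iota_{lim}\le\iota_v$ as $L^1$ functions.

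The harder direction, and the main obstacle, is the upper bound $\iota_v\le n\,\iota_{lim}$. Here the idea is to run the distributional identity of Proposition \ref{mainflat:prop} for $D$ itself — even though $D$ may \emph{not} have flat boundary, so that $\Delta u$ is not concentrated on the singularity — and control the discrepancy using inequality (\ref{eq:gg}) of Lemma \ref{lm:g}, namely $\iota_\epsilon\Delta u_\epsilon\le n-\|\grad u_\epsilon\|^2\le n$. Concretely, for nonnegative $f$ one writes, as in (\ref{eq:decomposition}), $\int df\wedge(\iota_\epsilon\omega_\epsilon)=-\int f\,d\iota_\epsilon\wedge\omega_\epsilon-\int f\,\iota_\epsilon\Delta u_\epsilon\,dV$; the first term tends to $-\int f g\,dV$ with $0\le g\le 1$ (the weak limit of $g_\epsilon=\langle\grad\iota_\epsilon,\grad u_\epsilon\rangle$), and the second term is bounded in absolute value by $n\int f\,dV$. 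On the other hand $\int df\wedge(\iota_\epsilon\omega_\epsilon)\to\int df\wedge(\iota_v\omega)$. Combining, $\iota_v\omega$ is, in the distributional sense, bounded between a primitive of $dV$ and a primitive of $n\,dV$; comparing with the identity $\int df\wedge(\iota_{lim}\omega)=-\int f\,dV$ established for $\iota_{lim}$ above, and testing against functions of the form $f=\phi\circ u$ (so that $df=\phi'(u)\,du$ is parallel to $\omega=*du$ up to the Hodge rotation, hence $df\wedge\omega=\pm\phi'(u)\,dV$) — exactly the trick used at the end of the proof of Proposition \ref{mainflat:prop} — converts these weak inequalities into the pointwise a.e.\ inequality $\iota_v\le n\,\iota_{lim}$. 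The delicate point to get right is the sign bookkeeping in $df\wedge\omega$ and the monotone-convergence/Fatou step needed to pass from the integrated inequality over sublevel sets $\{u\le M\}$ to the pointwise statement, mirroring the argument that gave $g=1$ in the flat case but now only yielding a two-sided bound $1\le \iota_v/\iota_{lim}\le n$.
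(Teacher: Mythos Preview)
Your proposal follows essentially the same route as the paper: existence and identification of the weak limit $J=\iota_{lim}$ via Proposition~\ref{mainflat:prop} applied to the flat $D_k$, the lower bound $\iota_{lim}\le\iota_v$ from Proposition~\ref{pr:usc}, and the upper bound via the decomposition~\eqref{eq:decomposition} together with the Hessian estimate~\eqref{eq:gg} of Lemma~\ref{lm:g}. The overall architecture is correct and matches the paper.

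There is, however, a genuine bookkeeping gap in your upper-bound step. You bound the two terms on the right of~\eqref{eq:decomposition} separately: the gradient term by $0\le g\le 1$, and the Laplacian term by $\iota_\epsilon\Delta u_\epsilon\le n$. Adding these gives only
\[
-(n+1)\int f\,dV\;\le\;\int df\wedge(\iota_v\omega)\;\le\;0,
\]
not the range $[-n\int f,-\int f]$ that you then assert. With your bounds as written you would only obtain $\iota_v\le (n+1)\,\iota_{lim}$. The paper closes this gap by passing to the limit $\epsilon\to 0$ \emph{before} estimating: since $\|\grad u_\epsilon\|^2\to 1$ a.e., the full strength of~\eqref{eq:gg} yields
\[
0\;\le\;\int f\,\iota_v\,\Delta u\;\le\;(n-1)\int f\,dV
\]
for nonnegative $f$, and it is this sharpened $(n-1)$ (not $n$) which, combined with the bound $g\le 1$ on the gradient term, produces the correct two-sided inequality $-n\int f\,dV\le\int df\wedge(\iota_v\omega)\le-\int f\,dV$. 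Once you make this adjustment, the rest of your argument (comparison with $\int df\wedge(\iota_{lim}\omega)=-\int f\,dV$ and the test functions $f=\phi\circ u$) lines up with the paper's.
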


In Section \ref{ssc:coverings}, we will show that all  domains $D$ that arise as universal
covers of   2+1-dimensional   MGHFC spacetimes have flat boundaries.  Consequently, in that case  the inequalities in statement of
the theorem can be improved to an equality. In higher dimensions, it appears unlikely
that the boundary of the universal cover of a MGHFC manifold always has a flat boundary. 
It is conceivable that  for such domains, the inequality can be improved to
$ \iota_{lim}\leq \iota_v\leq (n-1)\iota_{lim}~, $
but we do not pursue this question further here.

\begin{proof}
We proceed as in the proof of Theorem \ref{stability:thrm} and only indicate the steps that differ from that proof.
As already noted, Lemma \ref{lm:g} still holds, but differences occur in the proofs of Proposition 
\ref{mainflat:prop} and of Theorem \ref{stability:thrm}. As  we want to obtain inequalities
on $\iota_v$ we consider a positive test function $f$.
Equation \eqref{eq:decomposition} still holds. However,  the inequality is weakened to
$$ 0\leq \int_{P_v} f\iota_v\Delta u \leq (n-1)\int_{P_v} f dV~.$$
The last inequality descends by the estimate (\ref{eq:gg}), when one takes the limit  $\epsilon\rightarrow 0$ and uses  that
 $\Delta u_\epsilon\rightarrow\Delta u$ as measure and that 
$\iota_\epsilon\searrow\iota_v$.
Following the proof of Proposition \ref{mainflat:prop}, we then obtain the inequality 
$$ -n\int_{P_v} f dV \leq \int_{P_v} df\wedge (\iota_v\omega)\leq -\int_{P_v} fdV~. $$
In the proof of Theorem \ref{stability:thrm}, Equation \eqref{dv:eq} is therefore replaced by 
$$ \int_{P_v} df\wedge (J\omega) = -\int_{P_v}fdV~,~~ 
-n\int_{P_v}fdV \leq \int_{P_v} df\wedge (\iota_v\omega) \leq -\int_{P_v}fdV~. $$
The rest of the proof of Theorem \ref{stability:thrm} goes through  and leads to the
statement.
\end{proof}

\subsection{Universal coverings of MGHFC spacetimes in dimension $2+1$}
\label{ssc:coverings}

Let $M$ be a 2+1-dimensional  MGHC flat spacetime of genus $g\geq 2$. 
We know that the universal covering of $M$ is a domain of dependence $D\subset \mathbb R^{2,1}$. We will assume  in the following that $M$ is not a Fuchsian spacetime. This means that
$D$ is not the future of a point, or, equivalently, that the holonomy  representation of $M$ is not conjugate to a linear representation in $SO(2,1)$.
In this subsection we will prove that the boundary of $D$ is flat.
This is a consequence of the following geometric property of the boundary of $D$.

\begin{prop} \label{pr:measure}
Identify the boundary $\partial \mathbb H^2$ with the set of lightlike directions in $\mathbb R^{2,1}$ and
let $D^*$ be the subset of $\partial \mathbb H^2$ consisting of  lightlike directions parallel to lightlike rays
contained in $\partial D$. Then  the set $D^*$ has Lebesgue measure zero in $\partial \mathbb H^2$.
\end{prop}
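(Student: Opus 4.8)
The plan is to relate the set $D^*$ of lightlike directions appearing on $\partial D$ to the measured geodesic lamination $(L,\mu)$ on $\mathbb H^2$ that encodes $D$ via the Mess construction (Section \ref{ssc:equivariant}). Recall that the leaves of $L$ are exactly the complete geodesics $\mathcal F_r$ associated to the non-vertex points $r$ of the initial singularity $T$, together with the boundary geodesics $\partial\mathcal F_r$ of the regions $\mathcal F_r$ associated to vertices $r$. For a point $q_0\in\partial D$, the maximal lightlike ray through $q_0$ is directed by a lightlike vector $\ell$, and $\ell$ is parallel to one of the two endpoints on $\partial\mathbb H^2$ of the corresponding leaf of $L$ — more precisely, $q_0$ lies over a point of $H^+$ belonging to some leaf $l$, and the lightlike ray through $q_0$ points toward one endpoint of $l$. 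Thus $D^*$ is contained in the set of endpoints at infinity of the leaves of $L$.

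The next step is to observe that $L$ is the $\Gamma'$-equivariant lift of a measured geodesic lamination $\lambda$ on the closed hyperbolic surface $F = \mathbb H^2/\Gamma$ of genus $g\ge 2$, where $\Gamma<SO^+(1,2)$ is the linear part of the holonomy (here I use that $M$ is not Fuchsian, so $\lambda$ is nonempty, but the argument below does not even need that). The key classical fact about measured geodesic laminations on a closed hyperbolic surface is that their \emph{support} is a closed subset of $F$ of Lebesgue measure zero (a lamination is a disjoint union of simple geodesics, nowhere dense, and of zero area). Lifting to $\mathbb H^2$, the support $\tilde\lambda = \pi^{-1}(\mathrm{supp}\,\lambda)$ is a closed, $\Gamma$-invariant subset of $\mathbb H^2$ of Lebesgue measure zero; and $\tilde\lambda$ is precisely the union of the leaves of $L$ (every point of a leaf of $L$ projects into the support of $\lambda$, and conversely). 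I would then use the standard boundary map: the endpoint map sending a unit-speed geodesic to its forward endpoint in $\partial\mathbb H^2$, restricted to the set of leaves, has image the set of endpoints of leaves of $L$, which I will show has measure zero in $\partial\mathbb H^2$.

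For this last point I would argue as follows. Fix a point $o\in\mathbb H^2$ and use geodesic polar coordinates around $o$: each point of $\mathbb H^2\setminus\{o\}$ is $(r,\theta)$ with $\theta\in\partial\mathbb H^2$ the direction. A leaf $l$ of $L$ not through $o$ has two endpoints $\theta^\pm(l)\in\partial\mathbb H^2$, and for each $\theta$ in the (open) arc of $\partial\mathbb H^2$ between $\theta^-(l)$ and $\theta^+(l)$ on the side facing $o$, the geodesic ray from $o$ in direction $\theta$ crosses $l$; conversely a ray crosses at most... well, possibly many leaves, but the crucial observation is that the \emph{endpoint of a leaf is a limit of directions $\theta$ for which the ray from $o$ meets that leaf}. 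Since $\tilde\lambda = \bigcup_{l} l$ has measure zero in $\mathbb H^2$, Fubini in polar coordinates gives that for Lebesgue-a.e.\ direction $\theta\in\partial\mathbb H^2$ the ray from $o$ in direction $\theta$ meets $\tilde\lambda$ in a set of zero one-dimensional measure; in particular, for a.e.\ $\theta$ the ray is not contained in a leaf and meets the closed set $\tilde\lambda$ in a nowhere dense set. An endpoint $\theta_0$ of a leaf $l_0$, however, has the property that arbitrarily near $\theta_0$ there are directions whose rays run \emph{into} $l_0$ and spend a definite proportion of their length (near infinity) close to $l_0$; one makes this quantitative to conclude that the set of endpoints of leaves is contained in the set of $\theta$ for which the ray is "badly" asymptotic to $\tilde\lambda$, a set I claim is null.

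The main obstacle — and the step I would need to be most careful about — is precisely this transfer from "$\tilde\lambda$ has zero area in $\mathbb H^2$" to "the endpoints of its leaves form a null set in $\partial\mathbb H^2$"; zero-measure of a subset of $\mathbb H^2$ does not in general force zero-measure of its set of endpoint-directions (e.g.\ a single geodesic has zero area but its ends are points, fine; but an uncountable family of geodesics with zero total area could a priori have ends filling $\partial\mathbb H^2$). The resolution exploits that $L$ is a \emph{lamination}: its leaves are pairwise disjoint and each is a complete geodesic, so the map $l\mapsto(\theta^-(l),\theta^+(l))$ is injective onto a subset of $(\partial\mathbb H^2\times\partial\mathbb H^2)\setminus\Delta$ consisting of \emph{pairwise unlinked} pairs. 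A set of pairwise unlinked geodesics whose total transverse measure is locally finite and whose union has zero area must have an endpoint set of measure zero — this follows because linking a given direction $\theta_0$ is, for leaves, a monotone/nested condition, so the leaves "separating" $o$ from $\theta_0$ form a totally ordered family, and accumulation of their endpoints at $\theta_0$ forces positive transverse measure there unless $\theta_0$ is isolated among endpoints on one side; a covering argument over $\partial\mathbb H^2$ using the finite total mass of $\mu$ on a fundamental domain then yields the null-set conclusion. An alternative, cleaner route for this obstacle is to invoke directly that the support of a measured geodesic lamination on a closed hyperbolic surface, viewed in $\mathbb H^2$, meets $\partial\mathbb H^2$ in a set of zero measure — this is a well-known property (see e.g.\ the references on laminations already cited, \cite{bonahon:laminations}) following from the fact that the geodesic flow is ergodic and a.e.\ geodesic equidistributes, hence a.e.\ geodesic ray from $o$ is eventually disjoint from any fixed measure-zero lamination support except on a null set of times — and this is what I would ultimately cite or reprove in a line or two. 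Once the endpoint set of $L$ is shown to be null, $D^*\subseteq\{\text{endpoints of leaves of }L\}$ immediately gives $|D^*|=0$ in $\partial\mathbb H^2$, completing the proof.
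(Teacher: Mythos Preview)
Your proposal contains a genuine gap at the very first step: the containment
\[
D^* \subseteq \{\text{endpoints at infinity of leaves of }\tilde\lambda\}
\]
is \emph{false} in general. Take the simplest nontrivial case, a rational lamination $\lambda$ supported on finitely many disjoint simple closed geodesics. Each complementary region of $\lambda$ on the closed surface $F$ is a subsurface with geodesic boundary; its lift $\mathcal F_r\subset\mathbb H^2$ (corresponding to a vertex $r$ of the initial singularity) is a convex region bounded by \emph{infinitely many} geodesics, and its ideal boundary $\overline{\mathcal F_r}\cap\partial\mathbb H^2$ is the limit set of the stabilizer of $\mathcal F_r$, a Cantor set. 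For every $\xi$ in this Cantor set the lightlike plane through $r$ with direction $\xi$ is a support plane of $D$ (it is a limit of spacelike support planes with normals $u_n\in\mathcal F_r$ tending to $\xi$), and the future-directed lightlike ray $r+\mathbb R_{\ge 0}\xi$ then lies in $\partial D$; hence $\xi\in D^*$. But a generic point of this Cantor set is \emph{not} the endpoint of any leaf of $\tilde\lambda$: the endpoints of the boundary geodesics of $\mathcal F_r$ form only a countable dense subset of the Cantor set. So $D^*$ strictly contains points that are not endpoints of leaves, and your reduction collapses. (Note also that the closure of the endpoint set is all of $\partial\mathbb H^2$, by minimality of the cocompact $\Gamma$-action, so no closure argument can rescue this route.)

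The paper's proof avoids this by working with a weaker and correct containment. It introduces the notion of a \emph{nested} point $\xi\in\partial\mathbb H^2$: one for which every geodesic ray towards $\xi$ has infinite transverse intersection with $\tilde\lambda$. Two facts are then established. First, using ergodicity of the geodesic flow on $F$, almost every boundary point is nested. Second, if $\xi\in D^*$ were nested, one takes a maximal lightlike ray $R\subset\partial D$ with direction $\xi$, finds (via a limiting argument on $\ret$) singularity points $r_n$ with spacelike support planes whose normals $u_n\to\xi$, and applies the Mess integral formula
\[
r_n-r_0=\int_{c_n} w(x)\,d\mu_\lambda
\]
along the segment $c_n=[u_0,u_n]$. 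The reverse Cauchy--Schwarz inequality for disjoint geodesics forces $\langle r_n-r_0,e\rangle\ge \iota(\tilde\lambda,c_n)\to+\infty$, contradicting the convergence of $r_n$. Thus $D^*\subseteq\{\text{non-nested points}\}$, a set of measure zero. Your ``cleaner route'' at the end gestures toward ergodicity, which is indeed the right tool, but the statement you would need is the one about nested points, not about endpoints of leaves.
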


We know by the work of Mess \cite{mess} that the linear part of the holonomy representation of $M$ defines a Fuchsian group $\Gamma$, 
which determines a hyperbolic surface $S=\mathbb H^2/\Gamma$. Moreover, there
is a measured geodesic lamination $\lambda$ on $S$ such that $M$ is obtained by
a Lorentzian grafting on the Minkowski cone of $S$.
Denote by $\tilde\lambda$ the lifting of $\lambda$ to the universal covering $\mathbb H^2$.
We say that a point $\xi\in\partial\mathbb H^2$ is {\it nested} for the lamination $\lambda$ if 
for some point $v\in\mathbb H^2$, the intersection of the ray joining $v$ to $\xi$ with $\tilde\lambda$ is $+\infty$.

\begin{lemma}
If $\xi$ is a nested point for $\lambda$ then
\begin{itemize}
\item the point $\xi$ is not the end-point of any leaf of $\tilde\lambda$,
\item the intersection of any ray ending at $\xi$ with $\lambda$ is $+\infty$.
\end{itemize}
\end{lemma}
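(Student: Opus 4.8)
The plan is to unwind the definition of a nested point and exploit the geometry of how the lamination $\tilde\lambda$ sits in $\mathbb H^2$. Fix a nested point $\xi\in\partial\mathbb H^2$, so that there is a point $v\in\mathbb H^2$ with the property that the geodesic ray $[v,\xi)$ meets $\tilde\lambda$ in a set of infinite transverse measure. The first observation is that whether a ray ending at $\xi$ has infinite intersection with $\tilde\lambda$ does not depend on the choice of starting point: given any other $w\in\mathbb H^2$, the rays $[v,\xi)$ and $[w,\xi)$ are asymptotic, so they eventually coincide (or rather, for $t$ large they bound a region of finite area and cross the same leaves); more precisely the symmetric difference of the sets of leaves they cross is finite, because any leaf crossing one but not the other must separate the compact geodesic segment-region between the two rays near their common endpoint, and only finitely many leaves of a geodesic lamination cross a fixed compact transversal with total mass bounded (here one uses that $\lambda$ is a measured geodesic lamination on the \emph{closed} surface $S$, so it has finite total mass and the transverse measure of any compact arc is finite). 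Hence $\mu([w,\xi))=+\infty$ for every $w$, which is exactly the second bullet.

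For the first bullet, suppose for contradiction that $\xi$ is an endpoint of a leaf $\ell$ of $\tilde\lambda$. Let $\eta$ be the other endpoint of $\ell$. Take the starting point for a test ray to be a point $w$ lying on $\ell$ itself. Then the ray $[w,\xi)$ is a sub-ray of the leaf $\ell$: since leaves of a geodesic lamination are disjoint, and two distinct leaves cannot cross, the ray $[w,\xi)\subset\ell$ crosses \emph{no other} leaf of $\tilde\lambda$ transversally (it can only be contained in $\ell$), and it intersects $\tilde\lambda$ in $\ell$ itself, which has transverse measure zero along a ray contained in it. Hence $\mu([w,\xi))=0$, contradicting the conclusion of the previous paragraph that the intersection must be $+\infty$ independently of the starting point. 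Therefore $\xi$ cannot be the endpoint of any leaf.

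The main subtlety — and the step to treat carefully — is the independence-of-basepoint claim in the first paragraph, i.e.\ controlling the difference between $\mu([v,\xi))$ and $\mu([w,\xi))$. The clean way is: the two rays $[v,\xi)$ and $[w,\xi)$ together with the geodesic segment $[v,w]$ bound an ideal triangle-like region $R$ of finite area; a leaf of $\tilde\lambda$ crossing $[v,\xi)$ but not $[w,\xi)$ must enter $R$ and, being a complete geodesic, must exit through $[v,w]\cup[w,\xi)$, so it crosses $[v,w]$; but $[v,w]$ is a compact transversal, and since $\lambda$ descends to a measured lamination on the compact surface $S$, the total transverse measure accumulated on any compact arc is finite, and moreover only finitely many leaves with individual weight exceeding $\delta$ cross it, for each $\delta>0$. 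A mild additional point is that $[v,\xi)$ and $[w,\xi)$ might also cross the \emph{same} leaves but with the ray-germs differing near $\xi$; near $\xi$ both rays are eventually contained in an arbitrarily thin horoball-neighbourhood, and any leaf entering that neighbourhood and meeting one ray either also meets the other or has an endpoint at $\xi$ — but by the first bullet (or directly) only finitely many leaves can have an endpoint in a fixed small arc around $\xi$ with weight bounded below, and a leaf of positive weight with an endpoint at $\xi$ would contribute to neither ray's transverse measure (a ray ending at $\xi$ is eventually disjoint from such a leaf or asymptotic to it). Assembling these estimates shows $|\mu([v,\xi))-\mu([w,\xi))|<\infty$, so one is $+\infty$ iff the other is, completing both parts. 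I expect the bookkeeping near the ideal point $\xi$ to be the only place where real care is needed; everything else is a direct application of the structure of geodesic laminations on compact hyperbolic surfaces.
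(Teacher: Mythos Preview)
Your proof is essentially correct and reaches the same conclusions as the paper, but with the order of the two bullets reversed: you establish basepoint-independence (the second item) first by a separation/triangle argument, and then derive the first item as an immediate corollary by placing the basepoint on the hypothetical leaf $\ell$. The paper instead proves the first item directly --- using a homotopy of transverse arcs onto a fixed compact arc $r_0$ from $v$ to $\ell$ --- and then proves the second item by using the first to exhibit a nested family of leaves $l_n$ bounding a fundamental system of neighbourhoods of $\xi$: once some $l_n$ separates both $v$ and $w$ from $\xi$, every leaf further in must cross both rays.

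The underlying geometry is the same in both approaches. A complete geodesic crossing $[v,\xi)$ has neither endpoint at $\xi$ (two geodesics sharing an ideal endpoint do not cross), hence separates $v$ from $\xi$; if it fails to separate $w$ from $\xi$ it must separate $v$ from $w$ and therefore cross the compact segment $[v,w]$, which has finite transverse measure. Your triangle argument packages this cleanly; the paper's homotopy argument for the first item is essentially the special case $w\in\ell$ of the same observation.

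Two points to clean up. First, your third paragraph is largely unnecessary and at one moment risks circularity: when you worry about leaves with an endpoint at $\xi$ and write ``by the first bullet (or directly)'', invoking the first bullet would be circular in your ordering, since you derive it from the second. The ``directly'' is the correct route and in fact makes the whole detour moot: a leaf with endpoint $\xi$ crosses neither ray $[v,\xi)$ nor $[w,\xi)$, so it contributes nothing to the comparison, and your triangle argument already covers every case without any separate analysis ``near $\xi$''. Second, saying the symmetric difference of leaves crossed is \emph{finite} is not literally true for a general lamination (uncountably many leaves may meet a compact arc); what you need, and what you eventually state, is that its contribution to the transverse measure is finite, because those leaves all meet the compact arc $[v,w]$ and transverse measure is holonomy-invariant along leaves.
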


\begin{proof}
Let us consider the upper half-plane model of $\mathbb H^2$. Without loss
of generality we may assume that $\xi=\infty$.
Suppose there is a leaf $l$ of $\tilde\lambda$ ending at $\xi$, and take any compact ray
$r_0$ joining a point $v\in\mathbb H^2$ to $l$.
Now any sub-arc of the ray $[v,\xi)$ can be deformed through a family of transverse arcs to
a subarc of $r_0$. This implies that the intersection of any subarc of $[v,\xi)$ with $\tilde\lambda$
is uniformly bounded by the intersection of $r_0$ with $\tilde\lambda$.
This proves that $\xi$ is not nested.

For the second statement, consider a point  $v_0\in\mathbb H^2$ such that the intersection of $[v_0,\xi)$ with
$\tilde\lambda$ is $+\infty$.
Take a family of leaves $l_n$ meeting $[v_0,\xi)$ at a point 
$v_n\rightarrow\xi$. With the first statement, it is easy to check that $l_n$ bounds a neighborhood
$U_n$ of $\xi$, and that $\{U_n\}$ is a fundamental family of neighborhoods of $\xi$.

In particular there is a leaf, say $l_1$, cutting both
$[v_0,\xi)$ at a point $v_1$ and $[w,\xi)$ at a point $w_1$. Then 
every leaf of $\tilde\lambda$ cutting $[v_1,\xi)$ must cut also $[w_1,\xi)$
This implies that the intersection of $[w,\xi)$ with $\tilde\lambda$ is bigger than 
the intersection of $[v_1,\xi)$ with $\tilde\lambda$, which is clearly infinite.
\end{proof}

We will see that the set  $D^*$ does not contain any nested points, so the proof of the proposition  is obtained  from the following lemma.

\begin{lemma}
Almost all points in $\partial\mathbb H^2$ are nested for $\lambda$.
\end{lemma}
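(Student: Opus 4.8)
The statement to prove is that almost every point $\xi\in\partial\mathbb H^2$ is nested for $\lambda$, i.e.\ for some (equivalently, by the previous lemma, any) $v\in\mathbb H^2$ the geodesic ray $[v,\xi)$ meets the lift $\tilde\lambda$ in infinite total mass. The plan is to fix a point $v_0$ (say the center of the Poincar\'e disk) and to work with the geodesic flow, exploiting that $\lambda$ is a measured lamination on the \emph{closed} hyperbolic surface $S=\mathbb H^2/\Gamma$ and hence has positive mass (since we assumed $M$ is not Fuchsian, so $\lambda\neq 0$). The key idea is ergodicity: the geodesic flow on the unit tangent bundle $T^1S$ is ergodic with respect to the Liouville measure, and the set of unit vectors whose forward geodesic meets the support of $\lambda$ has positive Liouville measure. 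Therefore almost every forward geodesic returns to a fixed transversal of $\lambda$ infinitely often, each time picking up at least a fixed positive amount of transverse measure, so the total intersection is infinite.

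\textbf{Step 1: reduce to a statement about the geodesic flow.} Fix $v_0\in\mathbb H^2$ and identify $\partial\mathbb H^2$ with $T^1_{v_0}\mathbb H^2$ via $u\mapsto \xi(u)$, the endpoint of the ray from $v_0$ in direction $u$. Under the map $T^1_{v_0}\mathbb H^2\to T^1S$, the visual (Lebesgue) measure class on $\partial\mathbb H^2$ is absolutely continuous with respect to the Liouville measure class on $T^1S$ (concretely, disintegrate Liouville measure along geodesics through $v_0$). So it suffices to show that for Liouville-a.e.\ $w\in T^1S$, the forward geodesic determined by $w$ meets $\lambda$ with infinite total transverse measure.

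\textbf{Step 2: use ergodicity and positivity of $\lambda$.} Since $M$ is not Fuchsian, $\lambda$ is a nonzero measured geodesic lamination on the compact surface $S$; pick a short geodesic arc $c$ transverse to $\lambda$ with $\mu(c)=:a>0$ whose interior misses the (measure-zero) set of leaf endpoints issues, and let $\Sigma\subset T^1S$ be the set of unit vectors based on $c$ and transverse to the leaves through that point; then $\Sigma$ has positive Liouville measure. The geodesic flow $\varphi_t$ on $T^1S$ is ergodic for the (finite, normalized) Liouville measure, so by the Poincar\'e recurrence / Birkhoff ergodic theorem, for a.e.\ $w\in T^1S$ the forward orbit $\{\varphi_t(w): t\geq 0\}$ meets $\Sigma$ at an unbounded sequence of times $t_1<t_2<\cdots$. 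Each such return corresponds to the lifted geodesic ray $[v_0,\xi)$ crossing a leaf-carrying transversal isometric to $c$, contributing at least $a$ to the intersection number (up to the usual care that the measure of an arc is additive over essentially disjoint subarcs, which is where the previous lemma's first bullet — nested points are not leaf endpoints, hence no double-counting pathology — is invoked). Hence $i([v_0,\xi),\tilde\lambda)=+\infty$ for a.e.\ $\xi$.

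\textbf{Main obstacle.} The serious point is not ergodicity itself (that is classical for closed hyperbolic surfaces) but the bookkeeping that links ``the flow returns to a fixed transversal infinitely often'' to ``the \emph{total transverse measure} along the ray is infinite.'' One must make sure the transversal $c$ is chosen so that each return genuinely adds a definite quantum of $\lambda$-mass and that these contributions come from genuinely different portions of the ray (so they can be summed), handling the set of exceptional directions — those hitting the boundary of the support, endpoints of leaves, or the cusps of an arc — which is still Lebesgue-null. A clean way to organize this is: let $g$ be the continuous function on $T^1S$ giving the $\lambda$-mass of a short sub-transversal centered at $w$ of fixed flow-length $\ell$; then $\int_0^T g(\varphi_t(w))\,dt$ is (up to bounded overlap factors depending only on $\ell$) comparable to the intersection number along the length-$T$ piece of the ray, and since $\int_{T^1S} g\,d(\mathrm{Liouville})>0$, Birkhoff's theorem forces $\int_0^T g(\varphi_t(w))\,dt\to+\infty$ for a.e.\ $w$. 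This yields the claim, and together with the preceding lemma and Proposition~\ref{pr:measure} (whose proof only needs that $D^*$ avoids nested points), completes the argument.
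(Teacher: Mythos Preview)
Your argument is correct and rests on the same core ingredient as the paper's proof: ergodicity of the geodesic flow on $T^1S$ together with the compatibility of the visual measure on $\partial\mathbb H^2$ with the Liouville measure (via the disintegration $\int_K \mu_x(E)\,dA$ over a fundamental domain, which is exactly how the paper phrases your Step~1).

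The implementation differs, and the paper's version sidesteps precisely the bookkeeping you flag as the main obstacle. Instead of invoking Birkhoff or recurrence to accumulate a definite quantum of transverse measure at each return, the paper considers directly the set $B_n=\{(x,v)\in T^1S:\iota(r(x,v),\lambda)<n\}$, observes that $\phi_t(B_n)\subset B_n$ for $t>0$, so $\hat B_n=\bigcup_{t>0}\phi_t(B_n)$ is flow-invariant with $\mu(\hat B_n)=\mu(B_n)$, and then applies the ergodicity dichotomy. To rule out $\mu(B_n)=\mu(T^1S)$ it simply exhibits an open set in the complement, using a closed geodesic meeting $\lambda$. This avoids any need to compare $\int_0^T g(\varphi_t w)\,dt$ with the actual intersection number or to worry about overlaps, leaf endpoints, or the construction of a good transversal $\Sigma$. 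Your Birkhoff route works, but the paper's forward-invariance trick is cleaner and shorter.
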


\begin{proof}
The proof is based on the ergodicity property of the geodesic flow on $S$.
For $(x,v)$ in the unit tangent bundle of $S$ let $r(x,v)$ the geodesic ray
$\{\exp_{x}(tv)|T\geq 0\}$. Consider now the following  subset of $T^1(S)$:
\[
B_n=\{(x,v)\in T^1(S)|\iota(r(x,v),\lambda)<n\}~.
\]
We claim that  $B_n$ is a set of measure zero for the Liouville measure.

Before proving the claim, let us show how the claim proves the statement.
Indeed we get that the measure of the set $B=\bigcup B_n$ is zero.
Let  $\tilde B\subset T^1\mathbb H^2$ be the lifting of $B$ 
on the universal covering. By definition we have that $\tilde B$ is
made of pairs $(x,v)$ such that the endpoint
of the ray $\exp_{x}(tv)$ is not nested.
In particular, if $E$ is the complement in  $\partial\mathbb H^2$ 
of nested points, the Liouville measure of $B$ can be computed as
as
\[
   \int_{K}\mu_x(E)dA~,
\]
where $K$ is a fundamental region and $\mu_x$ is the visual measure
from $x$. As the measure of $B$ is zero, it immediately follows that $E$ is
a set of measure zero.

It remains to prove the claim.
Let $\phi_t$ denote the geodesic flow on $T^1S$.
Clearly we have
\[
   \phi_t(B_n)\subset B_n.
\]
More precisely,  $t<s$ implies $\phi_t(B_n)\subset \phi_s(B_n)$.
It follows that $\hat B_n=\bigcup_{t>0}\phi_t(B)=\bigcup_{k\in\mathbb N}\phi_k(B_n)$ is a
subset invariant by the geodesic flow. Moreover its Liouville measure is equal to
\[
  \mu(\hat B_n)=\inf_{k}\mu(\phi_k(B_n))=\mu(B_n)
\]
where the last equality holds 
because $\mu$ is invariant by the geodesic flow.

By the ergodicity of the flow,  we have either  $\mu(B_n)=0$ or $\mu(T^1S\setminus B_n)=0$.
In order to prove that the latter is not true,  it is sufficient to prove that the complement of $B_n$
contains a non-empty open subset.

First note that if $(x,v)$ corresponds to a closed geodesic which intersects $\lambda$, then
the intersection of $\lambda$ with the ray $\exp_x(tv)$ is $+\infty$. In particular $(x,v)\notin B_n$.

Moving $x$ on the ray,  we may assume that it is not on $\lambda$.
Now take $M>0$ so that the intersection of $\lambda$ with the segment 
$r=\{\exp(tv)|t\in[0,M]\}$ is bigger than $2n$ and $\exp_x(Mv)$ is not on the lamination.

We want to show that a neighborhood of $(x,v)$ is contained in the complement of $B_n$.
Indeed if $(x_k,v_k)$ converges to $(x,v)$,  then the intersection 
of $\lambda$ with the segment $r_k=\{\exp_{x_k}(tv_k)|t\in[0,M]\}$ converges to
the intersection of $\lambda$ with $r$. So for $k$ sufficiently large, $(x_k, v_k)$ does not lie on 
$B_n$.
\end{proof}

In order to relate nested points with points in $D^*$ we need the following technical
lemma from Lorentzian geometry.

\begin{lemma}\label{lm:convray}
If $R$ is a lightlike ray contained in $\partial D$ 
which is maximal with respect to the inclusion,
then there is a sequence of points $r_n$ on the singularity $T$ which converges to
a point on $R$, and a sequence of spacelike support planes $P_n$ at $r_n$ which converges to
the lightlike plane containing $R$.
\end{lemma}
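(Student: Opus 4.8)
The statement to prove is that a maximal lightlike ray $R \subset \partial D$ is a limit of spacelike support planes at points of the singularity $T$, or more precisely that there exist $r_n \in T$ converging to a point of $R$ with spacelike support planes $P_n$ at $r_n$ converging to the (necessarily unique) lightlike support plane $\Pi$ containing $R$. The plan is to reduce this to the structure of the initial singularity of a \emph{non-Fuchsian} $2+1$-dimensional MGHFC domain, as summarized via the Mess construction in Section \ref{ssc:construction} and Section \ref{ssc:equivariant}.

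First I would fix a point $q_0 \in R$ and recall that $r_0 := \ret_0(q_0)$ is the past endpoint of $R$, which lies in the extended singularity $\hat T$ (and, since $D$ is a universal cover of a MGHFC spacetime, the map $\ret_0$ has image in $T$ by the argument in the proof of Proposition \ref{pr:cont-int}, so in fact $r_0 \in T$). Then I would use the characterization of $\partial D$ as the graph of a convex $1$-Lipschitz function $u$ on a spacelike plane $P_v$ with $\|\grad u\| = 1$ at every point of differentiability: the ray $R$, projected to $P_v$, is an integral line of $\grad u$ along which $u$ is affine with slope-one gradient, while $r_0$ projects to a point where $u$ fails to be differentiable, i.e.\ where the subdifferential $\partial u(\bar r_0)$ is more than a single point. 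The lightlike plane $\Pi \supset R$ corresponds to the element $\grad u|_R \in \partial u(\bar r_0)$ of unit norm. The key point is then that, since $\bar r_0$ is a point of non-differentiability of a convex function and $\partial u(\bar r_0)$ contains the unit vector $\grad u|_R$ as a boundary point (unit norm is the maximum possible norm of an element of the subdifferential, since $u$ is $1$-Lipschitz), there must be points $\bar r_n \to \bar r_0$ at which $u$ is differentiable with $\grad u(\bar r_n) \to \grad u|_R$. This is a standard fact: the set of differentiability points is dense, and a boundary point of the subdifferential at $\bar r_0$ is approximated by gradients at nearby differentiability points (this follows, e.g., from the upper semicontinuity of the subdifferential map together with the fact that $\grad u|_R$ is an extreme direction of $\partial u(\bar r_0)$ — one approaches $\bar r_0$ from inside the region on which the corresponding lightlike planar piece is a support plane). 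Lifting back to $\partial D$, the points $r_n = \bar r_n + u(\bar r_n)v$ lie on $\partial D$, the spacelike planes $P_n$ with upward normal direction $\grad u(\bar r_n) + v$ (normalized) are support planes at $r_n$ which are spacelike because $\|\grad u(\bar r_n)\| < 1$, and $P_n \to \Pi$ because $\grad u(\bar r_n) \to \grad u|_R$.

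The one gap in the above is that I have produced \emph{differentiability} points $r_n \in \partial D$, not points of the singularity $T$. To fix this I would invoke the description of the singularity for non-Fuchsian domains: by Mess \cite{mess} (Remark \ref{rk:sing-struct}, and Section \ref{ssc:equivariant}), the image of $\ret$ is dense in $\partial D$, and more usefully the leaves $L = \bigcup_r \partial \mathcal F_r$ of the associated measured lamination have the property that through any neighborhood of $\bar r_0$ there pass leaves whose corresponding support-plane normals accumulate on $\grad u|_R$; each such leaf $l$ meets $T$ (its endpoints, or interior points, project into $\ret(D)$). Concretely: perturb each $\bar r_n$ slightly to a nearby point lying on a leaf of $L$ — possible since leaves are dense near $\bar r_0$ — which changes the gradient by an arbitrarily small amount, hence still yields $\grad \to \grad u|_R$; a point on a leaf of $L$ has its $\ret$-image on $T$, and the spacelike support plane there has normal close to $\grad u|_R + v$. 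Taking a diagonal subsequence produces the desired $r_n \in T$ and $P_n \to \Pi$.

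The main obstacle I expect is precisely this last passage: cleanly extracting singular points (elements of $T$) rather than mere differentiability points of $\partial D$ while keeping control of the support-plane directions. One has to use that $\bar r_0$ is a non-differentiability point — so genuinely on the singularity, not just on the extended singularity — together with the real-tree / lamination structure of $T$ in dimension $2+1$, to guarantee that singular points are dense enough near $\bar r_0$ in the relevant directions. The convexity/subdifferential part of the argument is routine; the geometric input that makes singular points abundant near $R$ is the delicate step, and it is exactly here that the non-Fuchsian hypothesis (density of $\ret(D)$ in $\partial D$, per Mess) is used.
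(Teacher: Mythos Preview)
Your approach has a genuine error. You produce differentiability points $\bar r_n$ of $u$ and claim that the associated support plane has normal $\grad u(\bar r_n)+v$ with $\|\grad u(\bar r_n)\|<1$, hence is spacelike. But recall from the description of $\partial D$ (Section~\ref{sc:background}) that the boundary of a domain of dependence is the graph of a convex $1$-Lipschitz function with $\|\grad u\|=1$ at \emph{every} point of differentiability. So your $P_n$ are lightlike, not spacelike, and the construction does not give what the lemma asks for. The subsequent ``fix'' via density of the singularity is vague for the same reason: you need to land on genuine points of $T$ \emph{and} select there a spacelike support plane (i.e.\ an interior point of the subdifferential) whose direction you control, and nothing in your sketch does this.

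The paper's argument sidesteps all of this and is both shorter and more general (no MGHFC hypothesis, no lamination structure, no dimension restriction). Fix any $q\in R$ and any future timelike unit vector $v$; for small $\epsilon>0$ the point $q_\epsilon=q+\epsilon v$ lies in $D$, so $r_\epsilon:=\ret(q_\epsilon)\in T$ automatically, and the plane through $r_\epsilon$ orthogonal to the \emph{timelike} vector $q_\epsilon-r_\epsilon$ is a spacelike support plane of $D$. The $r_\epsilon$ stay in the compact set $\overline{I^-(q_1)}\cap\partial D$, so along a subsequence $r_{\epsilon_n}\to \bar r\in\partial D$; then $q-\bar r=\lim(q_{\epsilon_n}-r_{\epsilon_n})$ is a limit of timelike vectors, hence not spacelike, and achronality of $\partial D$ forces it to be lightlike with $[\bar r,q]\subset\partial D$. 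Maximality of $R$ (via its lightlike support plane) gives $\bar r\in R$, and the support planes $P_n\perp(q_{\epsilon_n}-r_{\epsilon_n})$ converge to the lightlike plane containing $R$. The key idea you were missing is simply to use the retraction $\ret$ from \emph{inside} $D$ rather than trying to navigate along $\partial D$.
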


\begin{proof}
Let $v$ be any future oriented timelike vector.
Let $q$ be a point of $r$ and consider the segment of points $q_\epsilon=q+\epsilon v$ for 
$\epsilon\in[0,1]$ and
the path on $\Sigma$ given by $r_\epsilon=\ret(q_\epsilon)$.

Note that this path is contained in the closure of $D\cap I^-(q_1)$, which is a compact
region of $\mathbb R^{2,1}$. Thus there exists a sequence $\epsilon_n\rightarrow 0$ such that
$r_{\epsilon_n}$ converges to some point $\bar r$. 

We claim that $\bar r$ is contained in $R$.
In order to prove the claim,  note that sequence of vectors 
$q_{\epsilon_n}-r_{\epsilon_n}$ converges to $q-\bar r$.
Since they are timelike, their limit cannot be spacelike. 
But $\partial D$ being achronal forces $q-\bar r$ to be lightlike and the segment
$[\bar r, q]$ to be 
contained in $\partial D$.
As  the lightlike plane $P$ containing $R$ is a support plane for $D$, it follows that
$[\bar r, q]$ is contained in this plane, so in particular is on $R$.

To construct the sequence of lightlike support planes, it is sufficient to set
$P_n$ to be the plane orthogonal to $q_{\epsilon_n}-r_{\epsilon_n}$ passing through
$r_{\epsilon_n}$.
\end{proof}

We are now ready to prove Proposition \ref{pr:measure}.

\begin{proof}[Proof of Proposition \ref{pr:measure}]
We will prove that if $\xi\in D^*$ then $\xi$ is not nested.
Assume by contradiction that $\xi$ is nested, and let  $R$ be a ray parallel to $\xi$. By Lemma \ref{lm:convray},
there exists a sequence of points $r_n$ on the singularity, 
converging to a point
on $R$ and a sequence of spacelike support planes $P_n$ converging to
the lightlike support plane containing $R$. Let $u_n$ be the unit timelike vector
orthogonal to $P_n$. Clearly we have that $u_n\rightarrow\xi$ in $\overline{\mathbb H^2}$.

By Mess' construction \cite{mess},  we have
\[
   r_n-r_0=\int_{c_n }w_n(x)d\mu_{\lambda},
\]
where $c_n$ is the segment joining $u_0$ to $u_n$ and $w_n(x)\in\mathbb R^{2,1}$
is defined to be $0$ if $x$ is not in the support of $\lambda$ and is the unit tangent vector at
$x$ orthogonal to the leaf through $x$ and pointing towards $u_n$ otherwise.

As by the hypothesis $\xi$ is nested, the ray $r$ joining $u_0$ to $\xi$ transversely  
meets the lamination.
In particular, by changing $u_0$ to 
a point on $r\cap \lambda$ 
we may assume that $u_0$ is on the lamination.
Let $e$ be the unit vector at $u_0$ orthogonal to the leaf  $l_0$ through $u_0$ and 
pointing towards $\xi$.
We claim that if $\xi$ is nested then 
\[
\langle r_n-r_0, e\rangle\rightarrow+\infty,
\]
which contradicts the assumption that  the sequence $r_n$ converges in Minkowski space.

First note that since $u_n\rightarrow\xi$, we may assume that $u_n$ is 
on the half-plane bounded by $l_0$ and containing $\xi$.

Thus if $l$ is a leaf that intersects $c(u_0, u_n)$, $l$ disconnects $l_0$ from 
$\xi$, and the scalar product of vectors $e$ and $w(x)$ is positive. Since the corresponding
geodesics are disjoint, the reverse of Schwarz 
inequality holds, that is,
$\langle w(x), e \rangle>1$.
This implies
\[
\langle r_n-r_0, e\rangle\geq\iota(\tilde\lambda, c_n)~.
\]

Let us prove that $\iota(\tilde\lambda, c_n)\rightarrow+\infty$. 
The reason is that for every leaf $l$ of $\tilde\lambda$
cutting the ray $[u_0,\xi)$,   
$u_n$ is definitively contained in the region bounded by $l$ containing $\xi$. So
for every point $x$ on the segment $[u_0,\xi)$, for $n$ sufficiently large, we have
$\iota(\nu, c_n)\geq\iota(\nu, [u_0,x])$.

Since we are assuming that $\nu([u_0, \xi))=+\infty$, we can choose $x$ so that
$\iota(\nu, [u_0,x])$ is arbitrarily big, so the conclusion follows.
\end{proof}

Let us fix a unit timelike vector $v$, and let $u:P_v\rightarrow D$ be
the convex function whose graph is the boundary of $D$. Note that if
$e$ is a unit vector in $P_v$, then $e+v$ is a lightlike
vector. In this way,  the unit circle $S^1$ in $P_v$ is identified to
$\partial H^2$ by the map $e\mapsto [e+v]$.  Under this
identification, the subset $D^*$ corresponds to the image of the map
  \[
    \delta: P_v\setminus T \rightarrow S^1~, ~~\delta(x)=\grad(u)(x)~.
  \]

Fix a unit vector $e$ in $P_v$ and
take linear orthogonal coordinates $(x,y)$ on $P_v$ such that $\partial_x=e$,
and consider the restriction of the function $u$ on each line parallel to $e$.
That is, for $y\in\mathbb R$, let $u_y:\mathbb R\rightarrow\mathbb R$ be defined
as $u_y(x)=u(x,y)$.

Note that whenever $(x,y)$ does not correspond to a point on $T$, then $u_y$ is
differentiable at $x$ and 
\[
   (u_y)'(x)=\langle \grad u(x,y), e\rangle
\]
By Proposition \ref{pr:measure}, at those points, the derivative takes value in a subset of zero measure of $\mathbb R$.

Now to prove that the boundary of $D$ is flat we will proceed in three steps.
\begin{itemize}
\item[Step 1.] We will prove that for a generic choice of the vector $e$, 
if $u_y$ is differentiable at $x$, then  $(x,y)$ does not correspond to a point on $T$.
In particular, the derivative $(u_y)'$ takes value in a subset of zero measure of $\mathbb R$.
\item [Step 2.] We will use this fact to show that for every $y$, the measure $(u_y)''$ is atomic
with support on $T\cap R_y$. 
\item [Step 3.] 
Using a disintegration formula for 
$\partial_{xx}u$ in terms of the family of measures $(u_y)''$
we conclude that this measure $\partial_{xx}u$ is concentrated on $T$.
\end{itemize}

\begin{lemma}\label{lm:step1}
There is a subset $A$ of $S^1$ such that:
\begin{itemize}
\item If $e\in A$ then, for every $y\in\mathbb R$, the points 
$x$ where $u_y$ is differentiable are exactly the points such that $u$ is differentiable at $(x,y)$.
Moreover, at those points,
\[
    (u_y)'=\langle \grad u(x,y), e\rangle~.
\]
\item The measure of 
$S^1\setminus A$ is zero.
\end{itemize}
\end{lemma}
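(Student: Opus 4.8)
The statement to prove is Lemma~\ref{lm:step1}: for almost every unit vector $e\in S^1$ (in the direction $\partial_x$ of the coordinate $(x,y)$ on $P_v$), the restriction $u_y(x)=u(x,y)$ is differentiable in $x$ precisely at those $x$ for which $u$ is differentiable at $(x,y)$ as a function of two variables, and there $(u_y)'(x)=\langle\grad u(x,y),e\rangle$. The key point is that a convex function on $\mathbb R^n$ is differentiable at a point as soon as each of its partial derivatives exists there and is continuous --- but more usefully, the obstruction is exactly the geometry of the singularity $T$: along a leaf of the lamination (equivalently, in the interior of a lightlike segment of $\partial D$ whose lightlike direction is $\xi$), the function $u$ fails to be differentiable only in directions transverse to that segment. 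So if the line $R_y=\{(x,y)\}$ happens to be transverse to all these lightlike directions, the restriction $u_y$ will ``see'' every non-differentiability of $u$.

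First I would make precise the relation between $T$ and lightlike directions. A point $q_0=(x,y,u(x,y))\in\partial D$ lies on a lightlike segment of $\partial D$; let $\xi\in\partial\mathbb H^2\cong S^1$ be its direction, identified with a unit vector $e_\xi\in P_v$ via $e\mapsto[e+v]$. If $q_0$ is not on the initial singularity $T$ then $u$ is differentiable at $(x,y)$ (the singularity is exactly the non-differentiability set, as recalled in the text). If $q_0\in T$, then $\partial D$ contains a whole lightlike segment through $q_0$ in the direction $e_\xi$, and the projection of this segment to $P_v$ is a segment in the direction $e_\xi$; along that projected segment $u$ is affine (its graph being a piece of a lightlike line), so the directional derivative of $u$ in the direction $e_\xi$ exists at $(x,y)$. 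The non-differentiability of $u$ at $(x,y)$ is therefore caused entirely by the behaviour in directions \emph{not parallel} to $e_\xi$. Consequently, if $e$ is \emph{not} parallel to $e_\xi$ for the relevant $\xi$, the restriction of $u$ to the line through $(x,y)$ in direction $e$ is non-differentiable at that point. Conversely, if $e$ is transverse to all lightlike directions occurring along $R_y$, one checks from convexity that $u_y$ is differentiable exactly where $u$ is, and the formula $(u_y)'=\langle\grad u,e\rangle$ holds by definition of the partial derivative.

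So the whole statement reduces to a measure-zero claim about the set of ``bad'' directions $e$: those $e\in S^1$ for which there is some $y$ and some $x$ with $(x,y)$ a non-differentiability point of $u$ whose associated lightlike direction $e_\xi$ is parallel to $e$. That is exactly the requirement that $e$ (as a direction in $S^1$) corresponds to a point of $D^*$. By Proposition~\ref{pr:measure}, $D^*$ has Lebesgue measure zero in $\partial\mathbb H^2$, hence under the identification $e\mapsto[e+v]$ the corresponding set of directions in $S^1$ has measure zero. Taking $A$ to be the complement of this null set (intersected with the directions for which the transversality argument applies, which is the same set up to null sets), gives the desired $A\subset S^1$ with $S^1\setminus A$ of measure zero. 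One small technical care: ``parallel to some lightlike segment along $R_y$ for some $y$'' ranges over all lines $R_y$, but each such parallelism still forces $e_\xi\in D^*$ regardless of $y$, so the union over $y$ does not enlarge the bad set beyond (the image of) $D^*$; this is the point where one must be slightly careful, and where I expect the main (mild) obstacle to lie --- ensuring that the quantifier over all parallel lines $R_y$ is absorbed by the single measure-zero set $D^*$ rather than producing a measure-zero set \emph{for each $y$} that could accumulate. But since the obstruction is a property of the direction $e_\xi$ alone, and $e_\xi\in D^*$ in every case, this works.
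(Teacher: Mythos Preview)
Your identification of the bad set with $D^*$ is incorrect, and this is a genuine gap rather than a technicality. Consider the simplest nontrivial example: $D$ the future of the spacelike line $\{x_0=x_1=0\}$ in $\mathbb R^{2,1}$, with $v=(1,0,0)$. Then $u(x_1,x_2)=|x_1|$, the singularity projects to $\{x_1=0\}$, and the restriction $u_y(t)=u(t\cos\theta,\,y_0+t\sin\theta)=|t\cos\theta|$ is differentiable at $t=0$ exactly when $\cos\theta=0$, i.e.\ when $e=(0,\pm1)$. But $D^*$ here corresponds to $e=(\pm1,0)$: the lightlike directions, not the direction \emph{parallel to the singular line}. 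The bad set and $D^*$ are perpendicular to one another in this example.

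The source of the error is the sentence ``the directional derivative of $u$ in the direction $e_\xi$ exists at $(x,y)$''. A lightlike ray in $\partial D$ gives you affineness of $u$ on a \emph{half}-line only, so only a one-sided derivative. At a singularity point there are at least two such rays, and along the line joining their projected directions the one-sided derivatives disagree, so the two-sided derivative fails. The correct condition for $u_y$ to be differentiable at a singular point is that $e$ be orthogonal (in $P_v$) to the chord spanned by the subdifferential $\partial u(x,y)$---equivalently, in the Klein model centred at $v$, that the geodesic from $v$ in direction $e$ meet the leaf through that point \emph{orthogonally}. This is the set $A^c$ the paper actually uses.

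That set is not $D^*$ and its null-measure does not follow from Proposition~\ref{pr:measure}. The paper proves it is null by a separate argument: for each ray $r$ from $v$, the map sending a point of $r\cap\lambdat$ to the direction orthogonal to the leaf through it extends to a Lipschitz map on $r$ (using the Epstein--Marden Lipschitz extension of the transverse unit field to a lamination), and Lipschitz images of null sets are null. One then covers the bad set by countably many such images. You would need to supply this or an equivalent argument.
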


\begin{proof}
Let $A$ be the set formed by vectors $e$ such that the geodesic in $\mathbb H^2$ 
starting 
from $v$ with direction $e$ does not meet any leaf of $\lambda$ orthogonally.
We will prove that $A$ fulfills the requirements 
of the statement.

First let us prove that the only differentiable points of $u_y$ correspond 
to differentiable points
of $u$. By contradiction suppose that $u_y$ is differentiable at 
a point $x$ so that $(x,y)$ corresponds to a point on the singularity. 
Up to translation we may suppose that $x=y=u(x,y)=0$. 

As  the point $0$ is on the singularity, 
there are two lightlike planes $P_1, P_2$ through $0$, which are support planes
for $D$. Let $V$ be the vertical plane containing $e$ and $v$. Note that $P_1\cap V$ and $P_2\cap V$ are support lines
for $\partial D\cap V$ at the point $0$. As $\partial D\cap V$ corresponds to the graph of  $u_y$, and we are 
assuming that
$u_y$ is differentiable at $x=0$, those support lines must coincide, $P_1\cap V=P_2\cap V$. This implies that $V$ must contain the line
$r=P_1\cap P_2$. Note, however,  that this line is spacelike, and its dual geodesic in $\mathbb H^2$ is a leaf of $l\in\lambda$.
On the other hand $V\cap\mathbb H^2$ is the geodesic $g$  starting from $v$ with direction $e$, so the condition implies that 
$g$ meets orthogonally $l$, contradicting the choice of $e$. 

It remains to  show that the complement of $A$ in $S^1$ is a set of measure zero.
Note that if $e\in A$ then $-e$ is also in $A$, so we may regard $A$ as a subset of
the projective line $P(P_v)$.

We will argue as follows. For any geodesic $l$ of $\mathbb H^2$, let $e(l)$ be the unit tangent vector
at $v$ such that the geodesic $\exp_{v}(te(l))$ hits orthogonally $l$. Note that $e(l)$ is defined up to the sign,
so it should be considered more properly as an element of $P(P_v)$.
The complement of $A$ is the set of unit vectors $e(l)$ where $l$ is a leaf of the lamination $\lambda$.

Let us now  fix any ray $r$ starting 
from $v$ and 
define 
$$ E_r=\{e(l)|\,l\textrm{ is a leaf of }\lambda\textrm{ hitting r }\}~. $$
Note that if $r_n$ is a dense subset of the ray from $v$ we clearly have
\[
   \bigcup_{n} E_{r_n}=P(P_v)\setminus A~.
\]
So in order to argue that the measure of the complement of $A$ is zero, it is sufficient to show that $E_r$ has measure zero.

Now on $r\cap\lambda$ we may define a vector field $w$ by taking for $w(x)$ to be the unique vector orthogonal to the leaf $l$ through
$x$. Note that $e(l)$ coincides with the orthogonal projection of $w(x)$ on $P_v$ up to renormalization.
By a classical result \cite{epstein-marden}, the field $w$ can then be extended to a Lipschitz vector field, still denoted $w$, on the whole line.
In particular,  we obtain a map
\[
   \hat e:r\rightarrow P(P_v)
\]
by defining $\hat e(x)$ to be the projective class of $w(x)$. It is not difficult to show that this map is locally Lipschitz and, 
by definition, $E_{r}=\hat e(r\cap\lambda)$. As the measure of $r\cap\lambda$ is zero, this concludes the proof.
\end{proof}

Lemma \ref{lm:step1} concludes the proof of step 1.
In particular,  note that if $e$ is on the set $A$, then for every $y\in\mathbb R$,
the derivative function $u_y$ takes value on the set
\[
   \{\langle \grad u(x,y), e\rangle\}
\]
which by Proposition \ref{pr:measure} has measure zero.
The proof of step 2 is then based on the following simple lemma on convex functions.

\begin{lemma} \label{lm:step2}
Let $u:\mathbb R\rightarrow\mathbb R$ be a convex function. Suppose that 
the measure of the image $u':\mathbb R\rightarrow\mathbb R$ is zero. Then 
$u''$ is an atomic measure, and its support coincides with the set of discontinuity of $u'$.
\end{lemma}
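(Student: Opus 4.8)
The plan is to reduce everything to the standard one-dimensional structure theory of convex functions. Recall that a convex $u:\mathbb{R}\to\mathbb{R}$ has at every point a right derivative $p:=u'_+$ which is nondecreasing and right-continuous, that $p$ is locally of bounded variation so that $u''$ is exactly the Lebesgue--Stieltjes measure $dp$, and that $u$ is differentiable at $x$ if and only if $p$ is continuous at $x$. Hence the set of discontinuities of $u'$ is precisely the (at most countable) jump set $J$ of $p$, and the ``image of $u'$'' appearing in the hypothesis is $R:=\{u'(x)\mid u\text{ differentiable at }x\}=\{p(x)\mid x\notin J\}=p(\mathbb{R}\setminus J)$. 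The atomic part of $u''=dp$ is automatically $\sum_{x\in J}\bigl(u'_+(x)-u'_-(x)\bigr)\delta_x$, so the whole content of the lemma is that, under the assumption $\mathrm{Leb}(R)=0$, the measure $u''$ has no non-atomic part.

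To prove this I would decompose $p=p_j+p_c$ into its pure-jump part $p_j$ (carrying exactly the jumps over $J$) and a continuous nondecreasing part $p_c$, so that the continuous part of $u''=dp$ is $dp_c$; it suffices to show $p_c$ is constant. The key elementary observation is that for any interval $[a,b]$ the image $p([a,b])$ is obtained from $[p(a),p(b)]$ by removing the jump gaps of $p$ over $(a,b]$, whose total length is $p_j(b)-p_j(a)$, so that
\[
\mathrm{Leb}\bigl(p([a,b])\bigr)=\bigl(p(b)-p(a)\bigr)-\bigl(p_j(b)-p_j(a)\bigr)=p_c(b)-p_c(a).
\]
Since $p(J\cap[a,b])$ is countable, removing it does not change the measure, so $\mathrm{Leb}\bigl(p([a,b]\setminus J)\bigr)=p_c(b)-p_c(a)$ as well; and $p([a,b]\setminus J)\subseteq R$, because at every $x\notin J$ the function $u$ is differentiable with $u'(x)=p(x)$. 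The hypothesis $\mathrm{Leb}(R)=0$ therefore forces $p_c(b)=p_c(a)$ for all $a<b$, i.e. $p_c$ is constant and $dp_c=0$. Thus $u''=dp_j$ is purely atomic, its atoms are exactly the points of $J$, and at each such point $u''(\{x\})=u'_+(x)-u'_-(x)>0$ --- which is precisely the discontinuity set of $u'$.

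The only point requiring a little care --- and what I would regard as the main (minor) obstacle --- is the image-measure identity above: one must bookkeep correctly the jump at the right endpoint $b$ (whose gap lies inside $[p(a),p(b)]$) against the jump at $a$ (whose gap lies below $p(a)$ and is irrelevant), and one should note that this single identity simultaneously excludes an absolutely continuous and a singular continuous part of $dp_c$, so no finer argument is needed. One remark on terminology: for the purely atomic measure $u''$ the ``support'' should be understood as its set of atoms, i.e. the set of $x$ with $u''(\{x\})>0$; the topological support is its closure, which can be strictly larger when the discontinuities of $u'$ accumulate at a point of continuity, so the conclusion is most safely stated as: $u''$ is purely atomic and $u''(\{x\})>0$ exactly when $u'$ is discontinuous at $x$.
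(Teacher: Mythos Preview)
Your proof is correct and follows essentially the same route as the paper: both decompose $u''$ into its atomic and non-atomic parts and then compute that the Lebesgue measure of $u'([a,b])$ equals the non-atomic part evaluated on $[a,b]$, forcing the latter to vanish under the hypothesis. Your presentation via the jump/continuous decomposition $p=p_j+p_c$ of the right derivative is slightly more formal than the paper's, and your closing caveat about ``support'' meaning the set of atoms rather than the topological support is a worthwhile clarification.
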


\begin{proof}
By the 
standard theory of convex functions the measure $u''$ can be split as the 
sum of a measure $\mu$ without atoms and an atomic part, say $\nu$, with
$\nu=\sum_k a_k\delta_{q_k}$, where $\delta_{q_k}$ is the Dirac measure concentrated
on $q_k$ and $\sum a_k$ is an absolutely convergent series.

Now we claim that the measure of  the image  of $u'([a,b])$ is equal to $\mu([a,b])$.
Indeed, note that $u$ is not differentiable exactly on the points $\{q_n|n\in\mathbb N\}$.
Moreover at every point there exists the left derivative and the right derivative that can be expressed as follows. 
Assume that $0$ is a differentiable point of $u$, then 
\[
   u'_l(x)=u'(0)+\phi(x)+\sum_{q_n\in[0,x)}a_n~,~~
   u'_r(x)=u'(0)+\phi(x)+\sum_{q_n\in[0,x]}a_n~,
\]
where we  put $\phi(x)=\mu([0,x])$.

If $u$ is differentiable at some point $x$, then the values
that $u'$ takes on the interval $[0,x]$ can be 
described as
\[
  u'([0,x])= [u'(0), u'(x)]\setminus \bigcup_{q_n\in[0,x]}I_n
\]
where $I_n=[u'_l(q_n), u'r(q_n)]$ is the interval between the left and right derivative
at $q_n$, which are pairwise disjoint.

So the measure of this set $u'([0,x])$ is given by
\[
   \phi(x)\sum_{q_n\in[0,x]}a_n-\sum_{q_n\in[0,x]}a_n=\phi(x)~.
\]

By the assumption, $\phi(x)=0$ for any $x$, so $\mu=0$.
\end{proof}

Finally, in order to prove step 3 we need the following disintegration result
of the measure $\partial^2_{xx}u$ in terms of the measure $(u_y)''$.

\begin{lemma}\label{lm:step3}
Let $x,y$ be coordinates on $\mathbb R^2$ and consider a convex function $u$. For every $y\in\mathbb R^2$ denote by $u_y$
the convex function  $x\mapsto u(x,y)$. 
If $\partial_{xx}u$  
denotes the second derivative of $u$ along the $x$ axis
(which is a Radon measure on $\mathbb R^2$) and $(u_y)''$ denotes the second
derivative of $u_y$ (which is a Radon measure on $\mathbb R$) 
then the following formula holds
\[
\int_{\mathbb R^2} f(x,y)\partial_{xx}u=
\int_{\mathbb R} dy\int_{\mathbb R}f(x,y)(u_y)''
\]
for every bounded Borel  function $f$ with compact support. 
\end{lemma}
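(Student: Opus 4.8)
The plan is to introduce the measure on $\mathbb{R}^2$ obtained by integrating the slicewise second derivatives,
\[
\nu(A)=\int_{\mathbb{R}}(u_y)''(A_y)\,dy,\qquad A_y:=\{x\in\mathbb{R}:(x,y)\in A\},
\]
and to prove that $\nu=\partial_{xx}u$; the asserted identity is then simply the Fubini--Tonelli formula for $\nu$. The first task is to check that $\nu$ is a well-defined, locally finite Borel measure. Since $u$ is finite and convex on $\mathbb{R}^2$ it is continuous and locally Lipschitz, and for each $y$ the function $u_y$ is finite convex on $\mathbb{R}$, so $(u_y)''$ is a positive, locally finite Radon measure. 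For a rectangle $A=(a,b)\times(c,d)$ one has $(u_y)''(A_y)=(u_y)'_{-}(b)-(u_y)'_{+}(a)$ for $y\in(c,d)$ and $0$ otherwise; the one-sided derivatives $(u_y)'_{\pm}$ are monotone pointwise limits of difference quotients of the continuous function $u$, hence Borel functions of $y$, so $y\mapsto(u_y)''(A_y)$ is Borel. A monotone-class argument extends this to arbitrary Borel $A$, and countable additivity of $\nu$ then follows from that of each $(u_y)''$ via monotone convergence. Finally, if $u$ is $L$-Lipschitz on a neighbourhood of the compact set $[a,b]\times[c,d]$ then $(u_y)''([a,b])\le 2L$ for every $y\in[c,d]$, so $\nu([a,b]\times[c,d])\le 2L(d-c)<\infty$; thus $\nu$ is a Radon measure.

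To identify $\nu$ with $\partial_{xx}u$, recall that $u\in L^1_{loc}$ is convex, so $\partial_{xx}u$ is a positive Radon measure characterised by $\int\varphi\,d(\partial_{xx}u)=\int_{\mathbb{R}^2}u\,\partial_{xx}\varphi$ for $\varphi\in C_c^\infty(\mathbb{R}^2)$. Applying this to $\varphi(x,y)=\phi(x)\psi(y)$ with $\phi,\psi\in C_c^\infty(\mathbb{R})$, and using that $u\phi''\psi\in L^1(\mathbb{R}^2)$, Fubini gives
\[
\int\phi\psi\,d(\partial_{xx}u)=\int_{\mathbb{R}}\psi(y)\left(\int_{\mathbb{R}}u_y(x)\phi''(x)\,dx\right)dy .
\]
By the elementary one-variable fact that the distributional second derivative of a convex function is its Stieltjes measure, the inner integral equals $\int_{\mathbb{R}}\phi\,d(u_y)''$; unwinding the definition of $\nu$ (approximate $\phi\psi$ by non-negative simple functions and pull $\psi(y)$ out of the $x$-integral) then shows the right-hand side equals $\int\phi\psi\,d\nu$. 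Hence $\nu$ and $\partial_{xx}u$ agree on the linear span of products $\phi(x)\psi(y)$, $\phi,\psi\in C_c^\infty(\mathbb{R})$. Since every $f\in C_c(\mathbb{R}^2)$ is a uniform limit, with supports inside a fixed compact set, of such products (Stone--Weierstrass together with a fixed cutoff of product form), and both measures are finite on that compact set, they agree on all of $C_c(\mathbb{R}^2)$; by the Riesz representation theorem, $\nu=\partial_{xx}u$.

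For a bounded Borel function $f$ with compact support, write $f=f^{+}-f^{-}$. For each non-negative piece, the defining identity $\int 1_A\,d\nu=\int_{\mathbb{R}}(u_y)''(A_y)\,dy$ extends, by linearity to simple functions and then by monotone convergence, to $\int f^{\pm}\,d\nu=\int_{\mathbb{R}}dy\int_{\mathbb{R}}f^{\pm}(x,y)\,d(u_y)''(x)$, the iterated integrals being finite because $\nu$ is locally finite and $f^{\pm}$ is bounded with compact support. Subtracting and inserting $\nu=\partial_{xx}u$ yields the formula of the lemma. There is no geometry in this argument: once the candidate $\nu$ is in place, everything reduces to Fubini and the one-variable description of $u''$ for $u$ convex. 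The only point that requires genuine care is the bookkeeping of the first paragraph --- the measurability of $y\mapsto(u_y)''(A_y)$ for Borel $A$, which is what makes $\nu$ a legitimate measure, together with its local finiteness; the two invocations of Fubini and the $C_c$-density step identifying the measures are then routine.
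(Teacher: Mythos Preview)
Your argument is correct. You build the candidate measure $\nu$ from the slices, verify it is Radon via the one-sided derivative formula and a monotone-class argument, identify it with $\partial_{xx}u$ by testing against tensor products and invoking Stone--Weierstrass plus Riesz, and then read off the general formula by Fubini--Tonelli. The only places requiring care --- measurability of $y\mapsto(u_y)''(A_y)$ and the local finiteness bound --- are handled cleanly.

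The paper does not give a proof of this lemma at all: it simply cites an external reference (Savar\'e, Theorem~1.3, formula~(1.31)), where the result is stated for the broader class of functions of bounded Hessian. So your route is genuinely different in that it is a self-contained, direct proof tailored to the convex case, whereas the paper defers to a general disintegration theorem for BH functions. What you gain is independence from an external source and a more elementary argument (only basic measure theory and one-variable convex analysis); what the cited result buys is greater generality (BH rather than convex) and a unified framework in which such slicing identities are standard. For the purposes of the paper, where only the convex case is needed, your argument is more than adequate and arguably preferable for readability.
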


The proof of this analytical Lemma can be found in \cite{savare} (Theorem 1.3 formula (1.31))
for the wider class of bounded Hessian functions. 
We are ready now to prove that the boundary of $D$ is flat.

\begin{prop} \label{pr:2+1flat}
If $D$ is the universal covering of a MGHC flat spacetime of dimension $2+1$, 
then  its boundary is flat.
\end{prop}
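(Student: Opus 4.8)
The plan is to assemble the three lemmas just stated (Lemmas \ref{lm:step1}, \ref{lm:step2} and \ref{lm:step3}) according to the three-step scheme above, and then to upgrade the conclusion from one slicing direction to the full Hessian. Recall (Lemma \ref{lm:radon}) that $\partial D$ is flat exactly when the positive Radon measure $d\omega=\Delta u$ is concentrated on the initial singularity $T$; since $\Delta u=\tr(\hess u)$ with $\hess u$ valued in positive semidefinite quadratic forms, it suffices to show that the matrix-valued measure $\hess u$ assigns zero mass to every compact set $K\subset P_v\setminus T$.

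First I would fix a direction $e$ in the full-measure set $A\subset S^1$ provided by Lemma \ref{lm:step1}, take linear coordinates $(x,y)$ on $P_v$ with $\partial_x=e$, and look at the slices $u_y(x)=u(x,y)$. By Lemma \ref{lm:step1}, $u_y$ is differentiable precisely where $u$ is differentiable at $(x,y)$, with $(u_y)'(x)=\langle\grad u(x,y),e\rangle$; hence, by Proposition \ref{pr:measure}, the image of $(u_y)'$ has Lebesgue measure zero. Lemma \ref{lm:step2} then forces $(u_y)''$ to be atomic with support equal to the discontinuity set of $(u_y)'$, which is $\{x:(x,y)\in T\}$. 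Inserting this into the disintegration formula of Lemma \ref{lm:step3} with $f=\mathbf 1_K$ for a compact $K\subset P_v\setminus T$ gives
\[
\partial_{xx}u(K)=\int_{\mathbb R}(u_y)''\bigl(\{x:(x,y)\in K\}\bigr)\,dy=0~,
\]
because each slice of $K$ misses $\mathrm{supp}\,(u_y)''$. Thus $\langle\hess(u)\cdot e,e\rangle$ gives zero mass to every compact subset of $P_v\setminus T$, for every $e\in A$.

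To conclude I would pick two linearly independent directions $e_1,e_2\in A$, which exist since $S^1\setminus A$ has measure zero. For any compact $K\subset P_v\setminus T$ we then have $\langle\hess(u)(K)e_i,e_i\rangle=0$ for $i=1,2$; as $\hess(u)(K)$ is positive semidefinite this yields $\hess(u)(K)e_i=0$, and since $e_1,e_2$ span $P_v$ we get $\hess(u)(K)=0$, hence $\Delta u(K)=0$. By inner regularity of the Radon measure $\Delta u$ this shows $\Delta u(P_v\setminus T)=0$, i.e.\ $\partial D$ is flat.

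I expect the real difficulty to lie entirely upstream of this assembly: the delicate step is Step 1 (Lemma \ref{lm:step1}), where one must rule out that a slice $u_y$ is non-smooth at a point projecting to $T$, using the hyperbolic-geometric fact that a generic geodesic issued from $v$ meets no leaf of $\lambda$ orthogonally, together with the Epstein--Marden Lipschitz extension of the normal field of the lamination. Granting Lemmas \ref{lm:step1}--\ref{lm:step3} and Proposition \ref{pr:measure}, the only genuinely new ingredient in the present proof is the elementary observation that a positive semidefinite quadratic form annihilated by two independent vectors must vanish, which is exactly what lets one pass from a single slicing direction to the full Hessian.
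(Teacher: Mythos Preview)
Your proof is correct and follows the same three-step scheme as the paper: pick $e\in A$, show via Lemmas \ref{lm:step1}, \ref{lm:step2}, \ref{lm:step3} that $\partial_{xx}u$ is concentrated on $T$, then conclude for the full Hessian. The only difference is the final linear-algebra step: the paper picks \emph{three} directions in $A$ (so that the functionals $M\mapsto\langle Me_i,e_i\rangle$ span the dual of the space of symmetric $2\times 2$ matrices), while you use only \emph{two} directions together with the positive semidefiniteness of $\hess(u)(K)$ to kill the off-diagonal part. Your variant is slightly more economical and is perfectly valid, since for convex $u$ the measure $\langle\hess(u)\,e,e\rangle$ is nonnegative for every $e$, so $\hess(u)(K)$ is indeed positive semidefinite on each Borel set.
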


\begin{proof}
We will prove that $\hess(u)$, considered as a matrix-valued 
measure on $P_v$, is supported on $T$.

Indeed it is sufficient to prove that there are three independent directions $e_1, e_2, e_3$ such that
$D^2_{e_i, e_i}(u)$ is a measure supported on $T$.
As the subset $A$ from Lemma \ref{lm:step1} is dense, it is sufficient to prove that
$D^2_{e,e}(u)$ is supported on $T$ for $e\in A$. 

If $x,y$ are the Cartesian coordinates on $P_v$ such that $e=\partial_x$, then
$D^2_{e,e}(u)$ coincides with $\partial^2_{xx}(u)$. So we have to prove that if
$f$ is a measurable function which is zero on $T$, then
\[
   \int f\partial^2_{xx}(u)=0~.
\]
We may compute the integral above using Lemma \ref{lm:step3}, which yields
\[
  \int f\partial^2_{xx}=\int dy\int (f_y)(u_y)''~.
\]
Now, by Lemma  \ref{lm:step2} and Lemma \ref{lm:step1}, $(u_y)''$ is supported on
the discontinuity of $u'_y$ which corresponds to points $x$ such that $(x,y)$ in on the projection of the singularity. 
It follows that $f_y$ is zero on the support of $(u_y)''$, and hence the integral is zero.
\end{proof}


\section{The frequency function for spacetimes constructed from measured geodesic laminations}

The aim  of this section is to understand to what extent and under what conditions an observer 
in a $2+1$-dimensional domain of dependence
can reconstruct the geometry and topology of the ambient space from his observation --- either
at one time or over a fixed time interval --- from the frequency function of the signal emitted by the initial singularity. In particular, we investigate this question for 
 a domain of dependence, which is the universal cover of a (non-Fuchsian) MGHFC 
spacetime $M$ containing a closed Cauchy surface of genus $g$.

A basic remark, which somewhat complicates the statements and the analysis below, is that the observer
can only ``see'' the universal cover of $M$, so he can in no way distinguish $M$ from any of its finite
covers. In other words, the observer can only determine the largest discrete subgroup of $\Iso(\R^{2,1})$
compatible with the signal she observes. Moreover, he can  only be certain to have determined
correctly the fundamental group of $M$ if he knows the genus of $S$, since otherwise it remains possible
that his spacetime is topologically a finite cover of $M$, with a flat metric which is ``almost'' 
lifted from a flat metric on $M$, with only a small change in a region not visible by her. 

In Section \ref{ssc:cosmological} we study the relationship between the frequency function measured
by an observer and her cosmological time (see Proposition \ref{estimate}). Then in Section \ref{ssc:reconstructing},
we show (see Proposition \ref{pr:construct}) that an observer in the universal cover of a non-Fuchsian MGHFC Minkowski spacetime
can reconstruct in finite eigentime the geometry and topology of the space, if the genus of the Cauchy surface
is  known to him. In Section \ref{ssc:higherdim}, we briefly explain how those arguments can be adapted to
higher dimensions.


\subsection{Estimating the cosmological time from the frequency function}
\label{ssc:cosmological}

We consider a domain $D$ and an observer in $D$ given by a point  $p\in D$ and a 
future directed timelike unit vector $v\in \mathbb H^2$. 
We consider the associated rescaled frequency function
$$
\rho_{p,v}^{\mathcal D}: S^1\rightarrow \mathbb R_0^+
$$
defined as in the previous section. 
We will be mostly interested in the case where $D$ is the universal cover of 
a spacetime constructed by grafting along a measured geodesic lamination.
In this case, the observer will see a division of the circle into intervals, 
on which the frequency function behaves like that of a spacelike line, 
and intervals in which the frequency function behaves like the one of a light cone. 
The former correspond to the edges of the singular tree of $D$, the latter to its vertices. 
It follows from the results in the previous sections that the  frequency function is analytic 
on the segments of the circle that correspond to the edges, 
while it is generally not analytic on the segments that correspond to the vertices.
In general, the segments of the circle that 
correspond to vertices of the singular tree form a Cantor set.

We define the {\em maximum frequency} as
$$
\rho_{p,v}^{D,max}=\sup_{u\in S^1} \rho_{p,v}^{D}(u)
$$
To understand its properties, we consider again 
our two main examples.

\begin{example}\label{lcone}
Consider a cone  $D=I^+(q)$ and an observer characterized by $p\in I^+(q)$, $v\in\mathbb H^2$.
Then
$$
\rho^{D,max}_{p,v}=\frac {T(p)}  2 e^{\delta},
$$
where $T(p)=|p-q|$ is the cosmological time of the observer and $\delta=d_{\mathbb H^2}(v,\text{grad}_p\,T)$ is 
the hyperbolic distance between $v$ and $\text{grad}_p\, T$. This follows from Equation \eqref{eq:rho1}.
\end{example}

\begin{example}\label{splikeline}
Consider the future $D=I^+(l)$ of a spacelike line $l$ and an observer with $p\in I^+(l)$, $v\in\mathbb H^2$. 
It follows from Equation \eqref{eq:rho2} that 
$$
\rho^{D,max}_{p,v}=\frac{T(p)} 2  e^{\delta}\cosh\xi ,
$$
where $T(p)$ is the cosmological time of $p$ and $\delta,\xi$ are defined as follows. Denote by $\tilde l$ the geodesic in $\mathbb H$ that is stabilized by the $PSL(2,\mathbb R)$ element that fixes the direction of $l$. Then $\text{grad}_p\; T$ defined a point on $\tilde l$,  $\delta$ is the hyperbolic distance of $v$ and $\tilde l$ and $\xi$ the distance of the projection of $v$ on $\tilde l$ from the point in $\mathbb H^2$ defined by $\text{grad}_p\; T$.
\end{example}

We will now determine an estimate on the cosmological time. The central ingredient is the following proposition. 

\begin{prop} \label{maxint}
The maximum frequency is given by
$$\rho^{D,max}_{p,v}=\sup\{\langle v, y-x\rangle\;:\; x,y\in I^-(p)\cap D, y-x\;\text{future directed and lightlike}\}~.$$
\end{prop}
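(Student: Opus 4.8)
The plan is to establish the two inequalities separately, using the geometric interpretation of the rescaled intensity. Recall that for a direction $u\in T^1_v\mathbb H^2$, the ray $p+\RR(u-v)$ meets $H_\epsilon$ at $q_\epsilon(u)$, and as $\epsilon\to 0$ this converges to a point $q_0(u)\in\partial D$; moreover $\rho_{p,v}(u)=\iota_v(q_0(u))$. By the remark following the proof of Proposition \ref{pr:cont-int}, we have $\iota_v(q_0)\le I_v(q_0)=\langle q_0-\ret_0(q_0),v\rangle$, where $\ret_0(q_0)$ is the past endpoint of the maximal lightlike ray of $\partial D$ through $q_0$. This already gives half of the argument: for the first inequality ($\rho^{D,max}_{p,v}\le\sup\{\dots\}$), I would take any $u$ and set $y=q_0(u)$, $x=\ret_0(q_0(u))$. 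One must check that both $x,y\in I^-(p)\cap D$: the point $q_0(u)$ lies on the ray from $p$, and since $p\in D$ and $D=I^+(T)$, the segment $[x,y]$ together with $y$ all lie in $\overline D\cap I^-(p)$; a small perturbation (pushing slightly into the future) keeps us inside $D\cap I^-(p)$ and changes the scalar product by an arbitrarily small amount, so the supremum over the open condition is not decreased. Then $\rho_{p,v}(u)=\iota_v(q_0(u))\le\langle y-x,v\rangle$, and taking the sup over $u$ gives $\rho^{D,max}_{p,v}\le\sup\{\dots\}$.

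For the reverse inequality, suppose $x,y\in I^-(p)\cap D$ with $y-x$ future-directed lightlike. I would first reduce to the case where $[x,y]$ lies on $\partial D$: replace $x$ by the point $\ret(y')$ for a suitable $y'$, or more directly observe that sliding $x$ back along the lightlike direction $y-x$ can only increase $\langle y-x,v\rangle$ (since $v$ is future timelike and $y-x$ future lightlike, $\langle\cdot,v\rangle<0$ on future lightlike vectors... wait — one must be careful with sign conventions here; with the paper's convention $\langle v,v\rangle=-1$ and the intensity being $-\langle v,\nu\rangle>0$, the quantity $\langle v,y-x\rangle$ is positive for $y-x$ future lightlike, and it increases as we extend the lightlike segment forward). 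Extending $y$ forward and $x$ backward as far as possible while staying in $\overline D$ lands the segment on a maximal lightlike ray $R\subset\partial D$, with $x\to\ret_0(q)$ for $q$ the far endpoint, and one checks $q$ can be taken in $I^-(p)$. Now $q\in\partial D$, and I claim there is a direction $u$ with $q_0(u)=q$: the ray from $p$ through $q$ is of the form $p+\RR(u-v)$ for a unique $u\in T^1_v\mathbb H^2$ provided $q\in I^-(p)$, and its limit point on $\partial D$ is exactly $q$ since $q\in\partial D$ and the ray enters $D$. Then I need $\iota_v(q)=I_v(q)$ at this particular point, i.e.\ that the intensity actually attains the value $\langle q-\ret_0(q),v\rangle$ rather than something smaller.

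The main obstacle is exactly this last point: in general $\iota_v<I_v$ can occur (this is the whole subtlety of Section \ref{ssc:basic} and the example of Section \ref{ssc:example}), so equality $\iota_v(q)=I_v(q)$ need not hold at an arbitrary $q$. The resolution I would pursue is that the supremum on the right-hand side is realized (or approached) at points $q$ where $\ret_0(q)=\ret(q)$ lies on the initial singularity $T$ itself, not merely on the extended singularity $\hat T$; equivalently, the extremal lightlike segments in $\overline D$ with endpoints in $I^-(p)$ have their past endpoint genuinely on $T$. At such a point, running the convergence argument of Proposition \ref{pr:cont-int} (whose conclusion $r_n\to\ret_0(q_0)$ only needed that the limit point lies on $T$) shows $\iota_{\epsilon_n}(x_n)\to I_v(q)$ along the approximating sequence, hence $\iota_v(q)=I_v(q)$ there. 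Combined with $\rho_{p,v}(u)=\iota_v(q_0(u))$ this yields $\rho^{D,max}_{p,v}\ge\langle q-\ret(q),v\rangle=\langle y'-x',v\rangle$ for an extremal pair, and by an approximation/limiting argument over all admissible pairs $(x,y)$ we get $\rho^{D,max}_{p,v}\ge\sup\{\dots\}$. I would double-check the reduction "the sup is attained with past endpoint on $T$" by a compactness argument: the set of admissible pairs $(x,y)$ with $x,y\in\overline{I^-(p)\cap D}$ is compact, $\langle v,y-x\rangle$ is continuous, so a maximizer exists; at a maximizer, maximality forces $x\in T$ (if $x\in\hat T\setminus T$ one could extend further into the future along the ray and strictly increase the scalar product, contradicting maximality) and forces $y$ to be as far forward as allowed, which is consistent with $y=q_0(u)$ for the corresponding $u$.
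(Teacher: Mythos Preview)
Your overall architecture matches the paper's: one inequality is the ``easy'' one, and for the other you pass to a maximizing sequence, use compactness of $\overline{I^-(p)\cap D}$ to extract a limiting pair $(\bar x,\bar y)$, argue that the segment $[\bar x,\bar y]$ is inextensible, and then identify $\langle v,\bar y-\bar x\rangle$ with $\rho_{p,v}(\theta)$ for a suitable direction $\theta$. The paper does exactly this, only more tersely: it asserts that inextensibility forces $\bar x$ onto the tree and $\bar y$ onto $\partial D$, and that the value is then $\rho_{p,v}^D(\theta)$.

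You are right to flag the identity $\iota_v(\bar y)=I_v(\bar y)$ as the crux, since in general $\iota_v\le I_v$ with strict inequality possible. However, your proposed resolution has two gaps. First, the claim that maximality forces $\bar x\in T$ rather than merely $\bar x\in\hat T$ is not argued correctly: you say ``if $x\in\hat T\setminus T$ one could extend further into the future along the ray,'' but forward extension of $y$ is obstructed by $\partial I^-(p)$, and backward extension of $x$ is already obstructed once $x\in\hat T$ --- neither distinguishes $T$ from $\hat T$. (Nor have you excluded the possibility that the extremal segment enters the interior of $D$, so that $\bar y\notin\partial D$ at all.) Second, even granting $\bar x=\ret_0(\bar y)\in T$, your appeal to the proof of Proposition~\ref{pr:cont-int} does not go through: that argument needs the limit $r_0$ of the retractions $r_n=\ret(q_{\epsilon_n})$ to lie in $T$, which there follows from $T$ being closed; knowing only that $\ret_0(q_0)\in T$ does not by itself locate $r_0$ at the past endpoint of the segment $[q_0,\ret_0(q_0)]$.

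In fairness, the paper's own proof glosses over precisely this point --- it asserts $\langle v,\bar y-\bar x\rangle=\rho_{p,v}^D(\theta)$ without explaining why the rescaled intensity $\iota_v$ equals $I_v$ at $\bar y$. So your instinct is sound; the gap is real in both arguments.
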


\begin{proof}
By definition, we have
$$
\rho^{D,max}_{p,v}\leq \sup\{\langle v, y-x\rangle\;:\; x,y\in I^-(p)\cap D, y-x\;\text{future directed and lightlike}\}~.
$$
To show the opposite inequality, 
we choose sequences $(x_n)_{n\in\mathbb N}$, $(y_n)_{n\in\mathbb N}$ with $x_n,y_n\in I^-(p)\cap D$ that satisfy
\begin{align}\label{supcond}
\lim_{n\to\infty} \langle v, y_n-x_n\rangle=\sup\{\langle v, y-x\rangle\;:\; x,y\in I^-(p)\cap D, y-x\;\text{future directed and lightlike}\}~.
\end{align}
As the intersection $I^-(p)\cap \bar D$ is compact, there exist convergent subsequences 
$(x_{n_k})_{k\in\mathbb N}$, $(y_{n_k})_{k\in\mathbb N}$ with limits  
$x_{n_k}\rightarrow \bar x\in I^-(p)\cap \bar D$, $y_{n_k}\rightarrow\bar y\in I^-(p)\cap\bar D$. 

If the segment $[\bar x,\bar y]$ is extensible, i.~e.~ if there exist  $\bar x',\bar y'\in I^-(p)\cap \bar D$ with $[\bar x,\bar y]\subset [\bar x',\bar y']$, then we obtain a contradiction to \eqref{supcond}. Therefore $[\bar x,\bar y]$ is inextensible,  $\bar x$ lies on the tree and $\bar y\in \partial D\cap I^-(p)$. 
This implies that there exists a $\theta\in S^1$ such that the past directed lightlike ray starting at $p$ that is defined by $\theta$ intersects $\partial D$ in $\bar y$ and
$\langle v, \bar y-\bar x\rangle=\rho_{p,v}^D(\theta)$.
\end{proof}

An immediate consequence is that if an  observer moves along a timelike geodesic then the maximum frequency
is increasing with time.
More generally, 
Proposition \ref{maxint} allows us to give estimates for the maximum frequency of domains that are contained in each other.

\begin{cor}\label{domrel}
Let $D,D'$ be domains with  $p\in D\subset D'$. Then for all $v\in\mathbb H^2$:
$$\rho_{p,v}^{D,max}\leq \rho_{p,v}^{D',max}~.$$ 
\end{cor}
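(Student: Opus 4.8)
The plan is to deduce this directly from the variational characterization of the maximum intensity provided by Proposition \ref{maxint}. Since $p\in D\subset D'$, the point $p$ is a legitimate observer position in both domains, so both quantities $\rho_{p,v}^{D,max}$ and $\rho_{p,v}^{D',max}$ are defined, and Proposition \ref{maxint} applies to each of them. This rewrites each maximal intensity as a supremum of the linear functional $(x,y)\mapsto \langle v,y-x\rangle$ over the set of pairs $(x,y)$ with $x,y\in I^-(p)\cap D$ (respectively $x,y\in I^-(p)\cap D'$) such that $y-x$ is future directed and lightlike.

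The key observation is then purely set-theoretic: the inclusion $D\subset D'$ gives $I^-(p)\cap D\subset I^-(p)\cap D'$, and the constraint that $y-x$ be future directed and lightlike does not involve the domain at all. Hence the admissible set of pairs $(x,y)$ entering the supremum for $D$ is contained in the admissible set of pairs entering the supremum for $D'$. Taking the supremum of the same functional over a larger set can only increase (or leave unchanged) its value, so $\rho_{p,v}^{D,max}\le \rho_{p,v}^{D',max}$.

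I do not anticipate any real obstacle: once Proposition \ref{maxint} is in hand, the statement reduces to monotonicity of a supremum under set inclusion. The only point requiring a word of care is to make sure that the two sets being compared are defined with exactly the same lightlikeness/future-orientation condition — which they are, by the statement of Proposition \ref{maxint} — so that no pair admissible for $D$ is excluded when passing to $D'$. This completes the argument in one line beyond the invocation of Proposition \ref{maxint}.
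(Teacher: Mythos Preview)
Your argument is correct and is exactly the paper's approach: the corollary is stated as an immediate consequence of Proposition \ref{maxint}, using that $D\subset D'$ implies $I^-(p)\cap D\subset I^-(p)\cap D'$ and hence the supremum over the larger admissible set dominates.
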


In particular, we can estimate the maximum frequency  for any domain. 

\begin{cor}\label{intestimate}
For a domain $D$ and any observer characterized by $p\in D$ and $v\in\mathbb H^2$, the following inequalities hold:
\begin{align}
&\rho_{p,v}^{D,max}\geq \sup_{q\in I^-(p)\cap D} \rho^{I^+(q),max}_{p,v}~, 
&\rho_{p,v}^{D,max}\leq \inf_{\substack{  {l\;\text{spacelike line} }\\{D\subset I^+(l)}}} \rho^{I^+(l),max}_{p,v}~.
\end{align}
\end{cor}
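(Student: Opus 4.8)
The plan is to deduce both inequalities directly from Corollary \ref{domrel}, i.e.~from the monotonicity $\rho_{p,v}^{D,max}\le\rho_{p,v}^{D',max}$ whenever $p\in D\subset D'$. The only geometric input needed is that a domain of dependence is a \emph{future set}: since $D=I^+(T)$ is the intersection of the futures of a family of lightlike planes, $q\in D$ implies $I^+(q)\subset D$.

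For the lower bound, fix any $q\in I^-(p)\cap D$. Then $I^+(q)$ is itself a domain of dependence (the future of a point, as in Example \ref{domainex}), we have $p\in I^+(q)$ because $q\in I^-(p)$, and $I^+(q)\subset D$ because $q\in D$ and $D$ is a future set. Applying Corollary \ref{domrel} to the pair $I^+(q)\subset D$ gives $\rho_{p,v}^{I^+(q),max}\le\rho_{p,v}^{D,max}$. Since this holds for every admissible $q$, taking the supremum over $q\in I^-(p)\cap D$ yields the first inequality.

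For the upper bound, let $l$ be any spacelike line with $D\subset I^+(l)$. In $\mathbb R^{2,1}$ the future of a spacelike line is a domain of dependence (its complement contains the two non-parallel lightlike planes through $l$), and $p\in D\subset I^+(l)$. Applying Corollary \ref{domrel} to $D\subset I^+(l)$ gives $\rho_{p,v}^{D,max}\le\rho_{p,v}^{I^+(l),max}$, and taking the infimum over all such lines $l$ gives the second inequality; if no such line exists the infimum is $+\infty$ and the bound is vacuous.

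Since both statements are immediate consequences of Corollary \ref{domrel}, there is no real obstacle here; the only point deserving a line of verification is that $I^+(q)$ and $I^+(l)$ qualify as domains of dependence in the sense of Section \ref{ssc:minkowski} (so that their maximum intensity is defined and Corollary \ref{domrel} applies) and that the future-set property of $D$ legitimizes the inclusion $I^+(q)\subset D$.
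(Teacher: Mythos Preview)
Your proposal is correct and matches the paper's approach: the paper states Corollary~\ref{intestimate} immediately after Corollary~\ref{domrel} with the sentence ``In particular, we can estimate the maximum intensity function for any domain'' and no further argument, so the intended proof is exactly the monotonicity application you wrote out. Your added justification that $I^+(q)\subset D$ via the future-set property and that $I^+(q)$, $I^+(l)$ are themselves domains of dependence just makes explicit what the paper leaves to the reader.
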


By applying this corollary to a general domain and using the results of Examples \ref{lcone} and \ref{splikeline} we obtain
the following statement.

\begin{prop} \label{estimate}
Let $D$ be a domain with an observer characterized by $p\in D$ and $v\in\mathbb H^2$. 
Then the following inequalities relate the maximum frequency and the  cosmological time:
\begin{align}
\frac{T(p)} 2\leq \rho^{D,max}_{p,v}\leq T(p)\cosh d_{\mathbb H}(v,\text{grad}_p\; T)~.
\end{align}
\end{prop}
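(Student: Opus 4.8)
The plan is to obtain both inequalities from the formula for $\rho^{D,max}_{p,v}$ in Proposition~\ref{maxint}, combined with the comparison results (Corollaries~\ref{domrel} and~\ref{intestimate}) and the explicit computations of Examples~\ref{lcone} and~\ref{splikeline}. Fix once and for all $r=\ret(p)$ and set $v_0=\grad_p T=\tfrac{1}{T(p)}(p-r)\in\hyp^2$; then $p-r=T(p)\,v_0$ is future directed timelike with $\skp{v}{v_0}=-\ch d_{\mathbb H}(v,v_0)$, so that $-\skp{v}{p-r}=T(p)\,\ch d_{\mathbb H}(v,\grad_p T)$. This is exactly the quantity that must appear as the upper bound.

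For the lower bound I would compare $D$ with the light cone $\fut(r)$. Since $r=\ret(p)$ one has $\fut(r)\subset D$ (an inclusion already used in the proof of Lemma~\ref{lm:df}) and $p\in\fut(r)$, so Corollary~\ref{domrel} gives $\rho^{D,max}_{p,v}\ge\rho^{\fut(r),max}_{p,v}$. In the cone $\fut(r)$ the retraction of $p$ is again $r$, hence the cosmological time of $p$ there equals $|p-r|=T(p)$ and its gradient at $p$ is $v_0$; Example~\ref{lcone} (i.e.\ formula~\eqref{eq:rho1}) then gives $\rho^{\fut(r),max}_{p,v}=\tfrac{T(p)}{2}e^{d_{\mathbb H}(v,v_0)}\ge\tfrac{T(p)}{2}$, which is the desired lower bound. (Alternatively one applies the first inequality of Corollary~\ref{intestimate} along a sequence $q_n\to r$ on the segment $[r,p]$.)

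For the upper bound the cleanest route, which I would take, bypasses Example~\ref{splikeline} and argues straight from Proposition~\ref{maxint}: it is enough to check that $\skp{v}{y-x}\le T(p)\,\ch d_{\mathbb H}(v,\grad_p T)$ for every pair $x,y$ with $y-x$ future directed lightlike. Indeed $(y-x)+(p-r)$ is the sum of a future lightlike vector and the future timelike vector $p-r$, hence future timelike; since $v$ is future timelike this forces $\skp{v}{(y-x)+(p-r)}<0$, i.e.\ $\skp{v}{y-x}<-\skp{v}{p-r}=T(p)\,\ch d_{\mathbb H}(v,\grad_p T)$. Passing to the supremum and invoking Proposition~\ref{maxint} gives the upper bound. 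The only thing to be careful about here is the elementary causality bookkeeping (future null $+$ future timelike is future timelike, and $\skp{v}{\cdot}<0$ on future timelike vectors).

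If instead one wished to follow the route literally suggested by Corollary~\ref{intestimate} and Example~\ref{splikeline}, the main obstacle would be producing a spacelike line $l$ through $r$, orthogonal to $p-r$, with $D\subset\fut(l)$. For such an $l$ the retraction of $p$ in $\fut(l)$ is $r$ and the geodesic $\tilde l$ of Example~\ref{splikeline} passes through $v_0$, so writing $\delta,\xi$ for the parameters of that example one has the hyperbolic Pythagoras relation $\ch\delta\,\ch\xi=\ch d_{\mathbb H}(v,v_0)$, and together with $e^{\delta}\le 2\ch\delta$ this again yields $\rho^{\fut(l),max}_{p,v}\le T(p)\,\ch d_{\mathbb H}(v,\grad_p T)$. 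The delicate point is the existence of $l$: a natural candidate is $l=P\cap\Pi$, where $P$ is a lightlike support plane of $D$ at $r$ and $\Pi$ is the spacelike plane through $r$ orthogonal to $p-r$, but one must then verify that $D$ lies in the future of both lightlike planes through $l$, which is exactly why I prefer the direct argument above.
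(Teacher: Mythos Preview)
Your lower bound is fine and matches the paper's argument: compare $D$ with the cone $\fut(r)$, $r=\ret(p)$, and invoke Example~\ref{lcone}.

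Your ``direct'' upper bound, however, does not work. With the paper's signature, for $v$ future timelike and $y-x$ future lightlike one always has $\langle v,y-x\rangle<0$; the intensity in Proposition~\ref{maxint} is really the supremum of $-\langle v,y-x\rangle$ (the sign in the statement is a typo, as is clear from the computations in Sections~\ref{ssc:ex1}--\ref{ssc:basic}). So what you actually need is $-\langle v,y-x\rangle\le -\langle v,p-r\rangle$, i.e.\ $\langle v,(p-r)-(y-x)\rangle\le 0$. Your computation instead shows $\langle v,(y-x)+(p-r)\rangle<0$, which rearranges to $\langle v,y-x\rangle<-\langle v,p-r\rangle$: a true but vacuous inequality (negative $<$ positive), and in any case the wrong one. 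Note also that your argument never uses the constraint $x,y\in I^-(p)\cap D$; without it the quantity $-\langle v,y-x\rangle$ is unbounded (scale $y-x$), so no ``pure causality'' argument of this type can succeed.

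The paper's route is essentially the one you set aside. It first proves the pointwise inequality $|\langle v,\xi\rangle|\le 2|\langle u,v\rangle|\,|\langle u,\xi\rangle|$ for any lightlike $\xi$ and future unit timelike $u,v$, which via Proposition~\ref{maxint} yields $\rho^{D,max}_{p,v}\le 2\cosh d_{\mathbb H}(v,v_0)\,\rho^{D,max}_{p,v_0}$ with $v_0=\grad_pT$. It then bounds $\rho^{D,max}_{p,v_0}$ by $T(p)/2$ via Corollary~\ref{intestimate} and Example~\ref{splikeline}, using exactly the spacelike line $\tilde l$ through $r$ that you mention; the point you flag as delicate---that $D\subset\fut(\tilde l)$ and that the retraction of $p$ in $\fut(\tilde l)$ is again $r$---is handled by taking $\tilde l$ to be the intersection of two lightlike support planes of $D$ at $r$ whose normals span a plane containing $v_0$ (equivalently, a geodesic of $\mathbb H^2$ through $v_0$ inside $\mathcal F_r$). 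Your ``alternative route'' with hyperbolic Pythagoras and $e^{\delta}\le 2\cosh\delta$ is a valid variant of this same strategy; the direct argument, however, should be dropped.
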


\begin{proof}
From the first inequality in Corollary \ref{intestimate} and Example \ref{lcone} we have
$$
\rho^{D,max}_{p,v}\geq \sup_{q\in I^-(p)\cap D} \rho^{I^+(q),max}_{p,v}=
\frac 1 2 \sup_{q\in I^-(p)\cap D} |p-q| e^{d_{\mathbb H}\left(v, \frac{p-q}{|p-q|}\right)}\geq \frac 1 2 \sup_{q\in I^-(p)\cap D} |p-q|~.
$$
By definition of the cosmological time, $T(p)=\sup_{q\in I^-(p)\cap D} |p-q|$, which proves the first inequality.

To prove the second inequality, 
we use the fact that for any lightlike vector 
$\xi$ and any two  future-directed timelike unit vectors $u,v$, we have
\begin{align}\label{helpid}
|\langle v,\xi\rangle|\leq 2 |\langle u, v\rangle| |\langle u,\xi\rangle|~. 
\end{align}
This can be seen as follows: after applying suitable elements of $SO(2,1)^+$, we can suppose that the vectors $\xi, u,v$ are given by
$$
u=\left(\begin{array}{c} 1\\ 0\\ 0\end{array}\right)\qquad v=
\left(\begin{array}{c} \cosh\alpha\\ \sinh\alpha\\ 0\end{array}\right)\qquad 
\xi=\left(\begin{array}{c} a\\ b\\ c\end{array}\right)\quad\text{with}\quad a^2=b^2+c^2~.
$$
This yields 
$$ |\langle u,\xi\rangle|=|a|~, 
\qquad |\langle v, \xi\rangle|=|a\cosh\alpha-b\sinh\alpha|\leq 2|a|\cosh\alpha~, 
\qquad |\langle u,v\rangle|=\cosh\alpha~, $$
and proves \eqref{helpid}.
By combining \eqref{helpid} with Proposition \ref{maxint}, we obtain for all $p\in D$ and  $u,v\in\mathbb H^2$
$$
\rho^{D,max}_{p,v}\leq 2|\langle u, v\rangle|\rho^{D,max}_{p,u}~.
$$
For $u=\text{grad}_p\; T$ this yields
$$
\rho^{D,max}_{p,v}\leq 2\cosh d_{\mathbb H}(v,\text{grad}_p(T))\rho^{D,max}_{p,\text{grad}_p\; T}~.
$$
For the future of a spacelike line, we have from Example \ref{splikeline}
$$
\rho^{I^+(l), max}_{p,\text{grad}_p\; T}=\frac {T(p)} 2,
$$
because the parameters $\delta,\xi$ in Example \ref{splikeline} vanish. This implies
together with Corollary \ref{intestimate} that 
$$
\rho_{p,v}^{D,max}\leq 2\cosh d_{\mathbb H}(v,\text{grad}_p(T)) \inf_{\substack{  {l\;\text{spacelike line} }\\{D\subset I^+(l)}}} \rho^{I^+(l),max}_{p,\text{grad}_p T}.
$$
By  definition of the domain $D$ there exists a point $q$ in the tree with $T(p)=|p-q|$ and 
two lightlike support planes that contain $q$. 
Let $\tilde l$ be the line obtained by intersecting these support planes. 
Then the cosmological time $\tilde T(p)$ of $p$ with respect to $I^+(\tilde l)$ 
and its gradient $\text{grad}_p\;\tilde T$ at $p$  coincides with the cosmological time $T(p)$ 
with respect to  $D$ and its gradient $\text{grad}_p\; T$. This implies 
$$
\inf_{\substack{  {l\;\text{spacelike line} }\\{D\subset I^+(l)}}} \rho^{I^+(l),max}_{p,\text{grad}_p T} =\rho^{I^+(\tilde l), max}_{p,\text{grad}_p (T)}=\frac {T(p)} 2
$$
and proves the claim.
\end{proof}


\subsection{Reconstructing the holonomy from the frequency function}
\label{ssc:reconstructing}

From the frequency function on the circular segments, 
the observer can reconstruct the relevant data (position of the edges and vertices, his geodesic distance from the edge segments and vertices), 
but only for {\em  the pieces of the singular 
tree he sees}.  
If the observer is very close to the singularity, he will only see a single edge of the tree  
and the picture will  look like the one for a line. 
With time, he moves away from the tree and more and more intervals 
corresponding to the edges and vertices of the tree will appear. 
In the limit where his eigentime and his cosmological time go to infinity, he will see the image of the whole tree. 

This implies that the observer can reconstruct the domain (up to a global Poincar\'e transformation) 
from his measurements if he waits infinitely long. 
From his observations, he can reconstruct the edges of the singular tree, 
and --- if the spacetime is obtained by Lorentzian
grafting on a closed hyperbolic surface --- 
the action of the fundamental group on the tree. 
This amounts to recovering the underlying measured geodesic lamination. 

If the spacetime corresponds to a grafted genus $g$ surface {\em and} the observer knows the associated Fuchsian group 
(i.~e.~the linear part of the holonomy), he can construct the {\em complete domain} in finite eigentime. We will consider
below to what extend the observer can reconstruct the geometry and topology of the spacetime without knowing the linear
part of the holonomy.


We now concentrate on the case, where
$M$ is a maximal flat globally hyperbolic space-time with closed Cauchy surface $S$
of genus $g\geq 2$. The universal covering of $M$ is then isometric to a regular domain $D$.
More precisely, there is a subgroup $G$ of 
$Isom(\mathbb R^{2,1})$ such that
$D$ is invariant under the action of $G$ and $M=D/G$.
Let $\Gamma$ be the subgroup of $SO^+(2,1)$ consisting of the $SO^+(2,1)$ components
of  elements of $G$.
It is known (see \cite{mess}) that $\Gamma$ is a discrete subgroup of $SO^+(2,1)$ and that
$\mathbb H^2/\Gamma$ is a surface diffeomorphic to $S$. Moreover the measured geodesic
lamination $\lambdat$ dual to the initial singularity of $D$ 
 is invariant under the action of $\Gamma$ and induces
a measured geodesic lamination $\lambda$ on $\mathbb H^2/\Gamma$.

The main result we present in this section (Proposition \ref{pr:construct})
states that if $M$ is not a conformally static space-time 
(which would correspond to the empty lamination)
and if its initial singularity is on a simplicial tree, then an observer can construct in 
finite time a finite set of elements of $SO^+(2,1)$ which generates
a finite extension  of $\Gamma$. In other words, we will prove the result only when
the lamination $\lambda$ is rational (that is, its support is a disjoint union of closed
curves). We believe that the result could hold also for a 
general lamination. However, in that case some technical issues arise  which
make the analysis more complex, and we prefer to focus on the simpler case where 
$\lambda$ is rational.

The central  idea is to consider  the isotropy group $\Gamma_0$ of $\lambdat$
\[
  \Gamma_0=\{\gamma\in SO^+(2,1)|\gamma(\lambdat)=\lambdat\}~.
\]
It is clear that $\Gamma_0$ is a discrete subgroup of $SO^+(2,1)$ containing $\Gamma$.
The quotient $\mathbb H^2/\Gamma_0$ is a surface, possibly with  singular points, which
arise from points in $\mathbb H^2$ that are fixed by an element   of $\Gamma_0$.
There is a natural projection map
\[
   \pi:\mathbb H^2/\Gamma\rightarrow\mathbb H^2/\Gamma_0
\]
which 
is  a  finite covering. In particular, the index of $\Gamma$ in $\Gamma_0$
is equal to the cardinality of the fibers of $\pi$ and, consequently, is finite.

\begin{remark}
Any element of $\Gamma_0$ is the linear part of an affine transformation
that preserves the regular domain $D$. 
So elements of $\Gamma_0$ are the linear parts of the elements in the isotropy
group $G_0$ of $D$
\[
  G_0=\{g\in Isom(\mathbb R^{2,1)}~|~g(D)=D\}\,.
\]
It should be noted that in principle there are many subgroups $G'$
of $G_0$ (of finite index) such that $D/G'$ is a MGH spacetime with compact Cauchy surface.
Clearly the frequency function measured by an observer in $M$ is equal to the frequency function
of some observer in such spacetimes. This suggests that an observer
cannot precisely determine the group $G$ (or $\Gamma$), but only
the group $\Gamma_0$.
It should also be  noted that in the generic case, $\Gamma=\Gamma_0$
and it does not contain proper cocompact subgroups.
\end{remark}

We say that a leaf $l$ of $\lambdat$ is {\it seen} by an observer $(p,v)$
if the intersection of $D$ with the support plane orthogonal to some point on $l$
intersects $I^-(p)$. Note that if $x,y\in l$,  then the intersection of $D$ with
the support plane orthogonal to $x$ is equal to the intersection of $D$ with
the support plane orthogonal to $y$.

In the following proposition (and therefore in the final result of this section)
we restrict  attention to a lamination $\lambdat$ with a simplicial dual tree,
although it appears quite likely that the proposition holds for  general laminations.

\begin{prop} \label{pr:rational}
Suppose that the dual tree of the lamination $\lambdat$ is simplicial. 
Then the frequency function of an  observer $(p,v)$
allows one to reconstruct the sublamination $\lambdat_{(p,v)}$ consisting of the
leaves of $\lambdat$ seen by $(p,v)$
\end{prop}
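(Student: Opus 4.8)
The plan is to reconstruct, from the intensity function $\rho_{p,v}^D$, exactly the data needed to identify each leaf $l$ seen by $(p,v)$: namely its weight $a(l)$ (the transverse measure of the leaf, i.e.\ the length of the corresponding edge of the singular tree) and its location in $\mathbb H^2$, encoded by the spacelike line $l_g$ dual to the support plane, or equivalently by a point on the circle $S^1 \cong \partial \mathbb H^2$ together with the geometric parameters $\delta,\xi$ of Example \ref{splikeline}. The starting point is the observation, already used in Section \ref{sc:2+1}, that since the dual tree is simplicial the set $L$ of leaves is locally finite, so an observer sees only finitely many leaves; the circle $S^1$ of lightlike directions at $p$ decomposes into finitely many arcs on which $\rho_{p,v}^D$ is analytic and behaves like the intensity function of the future of a spacelike line (Example 2 of Section \ref{ssc:ex2}), separated by arcs corresponding to the vertices of the tree that $(p,v)$ sees. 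Concretely: for a leaf $l$ seen by $(p,v)$, the light rays from $p$ that hit the timelike portion $D_r$ of $\partial D$ lying over (part of) the geodesic dual to $l$ form an arc of $S^1$, and on that arc the local geometry of $\partial D$ coincides with that of the future of the spacelike line $l_g$ containing the corresponding segment $s_i$ of the initial singularity (this is exactly the comparison used in the proof of Proposition \ref{mainflat:prop}, where one replaces $D$ by the domain $\hat D = I^+(l_g)$).

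First I would make precise the dictionary between arcs of $S^1$ and pieces of the singular tree. Using Proposition \ref{pr:barbot} / the Mess construction, a point $r$ of the initial singularity that lies in the interior of an edge has $\mathcal F_r$ equal to a complete geodesic (a leaf), and $D_r = D \cap (\text{timelike plane containing the two lightlike rays at } r)$. The set of $u \in T^1_v\mathbb H^2$ whose ray $p + \mathbb R(u-v)$ meets such a $D_r$ and retracts to a point of $\ret_0$ lying in the open edge $e$ is an arc $A_e \subset S^1$; on $A_e$ the intensity function $\rho_{p,v}^D$ agrees with the intensity function of $I^+(l_g)$ restricted to the corresponding arc, where $l_g$ is the spacelike line carrying $e$. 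This is where the explicit formulas of Example 2 enter: on $A_e$ the function has the analytic form given by the two branches $\rho_\pm(\theta)$ of Section \ref{ssc:ex2}, with local maxima at $\phi^{max}_\pm = \pm\theta_\xi$, values $\frac{\tau}{2}\cosh\xi\, e^{\pm\delta}$, and zeros at the two directions where the ray $p+\mathbb R(u-v)$ meets $l_g$. Reading off these three or four numbers (the positions and heights of the two local maxima, and the positions of the zeros) on each analytic arc $A_e$, the observer recovers $\tau$, $\delta$, $\xi$ and the direction of $l_g$ relative to $v$ — as remarked at the end of Section \ref{ssc:ex2}, Example 2 already shows this data determines the line. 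Finally the weight $a(l)$ of the leaf — the length of the edge $e$ that is actually seen — is recovered because the transverse measure equals $d_0$-distance between the endpoints of $e$ on the tree, which is the quantity $\iota_v$ computes; alternatively, the length of the arc $A_e$ together with the geometry of $\partial D$ over it pins down the length of the segment $s_i$. Assembling these over all the finitely many analytic arcs yields the list of leaves $l$ with $l \in \lambdat_{(p,v)}$ together with their weights and positions, which is exactly $\lambdat_{(p,v)}$.

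The key steps, in order, are: (1) show that for a simplicial lamination the directions $u \in S^1$ on which $\rho_{p,v}^D$ is analytic and "line-like" are organized into finitely many arcs $A_e$, one for each edge $e$ of the singular tree met by $I^-(p)$ — using local finiteness of $L$ and the description of $D_r$, $\mathcal F_r$ in Section \ref{ssc:equivariant}; (2) on each such arc, prove the local coincidence of $\rho_{p,v}^D$ with the intensity function of $I^+(l_g)$ for the ambient spacelike line $l_g$ carrying $e$ — this is the comparison argument of Proposition \ref{mainflat:prop}, restricted to the relevant neighborhood; (3) invert the explicit formulas of Example 2 to read off, from the analytic germ of $\rho_{p,v}^D$ on $A_e$, the position of $l_g$ (equivalently the endpoints of $e$ on $\partial \mathbb H^2$) and the parameters $\delta,\xi,\tau$, hence the leaf $l$ itself; (4) recover the weight $a(l)$, i.e.\ the length of the visible portion of $e$, from the same data (the $d_0$-distance between the endpoints of the seen part of the edge, which is $\iota_v$-detectable, or from the angular width of $A_e$).

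The main obstacle I anticipate is step (1) together with a subtlety in step (4): one must be careful that an observer may see only \emph{part} of an edge — the ray $p+\mathbb R(u-v)$ retracting to an interior point of $e$ does not guarantee that $\ret_0$ of the whole arc sweeps out all of $e$ — so the "weight" actually recovered is the length of the seen subsegment, not necessarily $a(l)$ in full. For a simplicial lamination, however, each edge has a well-defined finite weight $a_i$, and one has to argue (using that the vertices bounding $e$ correspond to arcs of $S^1$ on which the intensity is not analytic, cf.\ the Cantor-set discussion at the start of Section \ref{ssc:cosmological}) that "seeing the leaf $l$" in the sense defined — the support plane orthogonal to a point of $l$ meets $I^-(p)$ — is precisely equivalent to the arc $A_e$ being nonempty, and that on that nonempty arc the analytic data already determines the \emph{full} line $l_g$ (the line is determined by any open piece of the corresponding portion of $\partial D$, since that portion lies in the lightlike support planes through $l_g$). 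Once that equivalence is in place, the reconstruction of $\lambdat_{(p,v)}$ as a set of weighted leaves is routine bookkeeping over the finitely many arcs.
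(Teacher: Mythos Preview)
Your approach is essentially the same as the paper's: decompose $S^1$ into arcs corresponding to edges and vertices of the singular tree, observe that on each edge-arc the intensity coincides with the analytic function of Example~2 (future of a spacelike line), and invert those formulas to recover the line $l_g$ and hence the leaf. The paper's own proof is much terser --- it simply asserts this decomposition and points to Example~2 for the inversion --- but the content is identical.

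Your step~(4), recovering the weight $a(l)$, is where you create your own difficulty. The paper does not attempt this, and for good reason: as you yourself note, an observer may see only part of an edge, so the visible portion of the arc $A_e$ need not encode the full length $a_i$. What the paper actually claims to reconstruct is the set of \emph{leaves} seen by $(p,v)$ --- geodesics in $\mathbb H^2$ --- and for this the analytic germ of $\rho_{p,v}^D$ on any nonempty open piece of $A_e$ suffices to pin down the full spacelike line $l_g$, hence the leaf. This is exactly what the subsequent lemmas (\ref{td:lm}, \ref{dr:lm}, \ref{finite:lm}) use: the sets $\Gamma_d$ are defined purely in terms of which leaves are present, with no reference to weights. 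So drop step~(4); your steps~(1)--(3) are the proof, and the ``main obstacle'' you anticipate dissolves once you stop trying to recover more than the statement requires.
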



\begin{proof} 
As mentioned above, the frequency function seen by the observer can be split into frequency functions of  different regions which correspond,
respectively, to the edges and to the vertices of the singular tree. 
In the regions corresponding to the edges, the frequency function is analytic and behaves as in Example 2 in Section 
\ref{ssc:ex1}. It is shown there  that knowing the frequency function on an open subset of $S^1$ is sufficient
to determine the positions of the edges, and therefore the leaves of the lamination $\lambdat_{(p,v)}$.
\end{proof}

Let us fix a point $x_0\in\mathbb H^2$, and denote by $B_d$ the ball in $\mathbb H^2$
centered at $x_0$ with radius $d$. For simplicity,  suppose that the point
$x_0$ does not lie in a leaf of $\lambdat$.
We denote by $\lambdat_d$ the sublamination of $\lambdat$ made of leaves
that intersects $B_d$:
\[
    \lambdat_d=
    \bigcup_{\begin{array}{l}l\textrm{ leaf of }\lambdat\\l\cap B_d\neq\emptyset\end{array}}l\,.
\]

\begin{lemma}\label{td:lm}
For any $d>0$ there is a time  $T$ such that for $t\geq T$ 
the observer $(p+tv, v)$ sees all the leaves in $\lambdat_d$,
or, equivalently, $\lambdat_d\subset \lambdat_{p+tv,v}$
\end{lemma}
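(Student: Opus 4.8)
The plan is to reduce the assertion to a statement purely about the part of the singular tree $T$ dual to the leaves meeting $B_d$. Recall that a leaf $l$ of $\lambdat$ is seen by an observer at $q$ exactly when $\overline D\cap P_x$ meets $I^-(q)$, where $P_x$ is the support plane of $D$ with unit normal $x$ and $x$ is any point of $l$ (the set $\overline D\cap P_x$ being independent of this choice). The first step is to observe that $\overline D\cap P_x$ contains the edge $s_l$ of $T$ dual to $l$: since $\lambdat$ has simplicial dual tree, $s_l$ is a compact spacelike segment, and for $r$ in the interior of $s_l$ one has $\mathcal F_r=l$, i.e.\ there is a support plane of $D$ at $r$ with unit normal $x$; as the support plane with a given future timelike unit normal is unique (for a future-complete domain the support function $x\mapsto\sup_{z\in\overline D}\langle z,x\rangle$ is finite, since $D\subset I^+(r_0)$ for any $r_0\in T$), this plane is $P_x$, so $r\in P_x$, and letting $r$ run over $s_l$ gives $s_l\subset\overline D\cap P_x$. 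Hence it suffices to produce a time after which $s_l\cap I^-(p+tv)\neq\emptyset$ for every leaf $l$ of $\lambdat$ meeting $B_d$.

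The second step is that only finitely many leaves of $\lambdat$ meet $B_d$. Indeed $\lambdat$ is the $\Gamma$-invariant lift of the rational lamination $\lambda=\sum_{i=1}^m a_ic_i$ on $\mathbb H^2/\Gamma$, so its leaves are the lifts of the finitely many simple closed geodesics $c_i$, and for each $i$ these lifts form a discrete family of complete geodesics of $\mathbb H^2$. A discrete family of geodesics meets the compact set $\overline{B_d}$ in only finitely many of its members, for otherwise a sequence of distinct such geodesics, with base points and unit tangent vectors lying in the compact set $T^1\overline{B_d}$, would subconverge to a geodesic, contradicting discreteness. Write $l_1,\dots,l_N$ for the leaves meeting $B_d$ and put $K=\bigcup_{j=1}^N s_{l_j}$; this is a finite union of compact segments contained in $\partial D$, hence a compact subset of $\mathbb R^{2,1}$.

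The third step is the passage to the limit. For a fixed point $z\in\mathbb R^{2,1}$ we have
\[
\langle (p+tv)-z,\,(p+tv)-z\rangle=\langle p-z,p-z\rangle+2t\langle p-z,v\rangle-t^2 \longrightarrow -\infty \quad (t\to+\infty),
\]
and the time component of $(p+tv)-z$ tends to $+\infty$ because $v$ is future directed, so $(p+tv)-z$ is future timelike for $t$ large, i.e.\ $z\in I^-(p+tv)$. Since $z\mapsto\langle p-z,p-z\rangle$ and $z\mapsto\langle p-z,v\rangle$ are bounded on the compact set $K$, this holds uniformly: there is $T$ such that $K\subset I^-(p+tv)$ for every $t\ge T$. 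Then for each $j$ we get $s_{l_j}=s_{l_j}\cap I^-(p+tv)\neq\emptyset$, so by the first step every $l_j$ is seen by $(p+tv,v)$; as every leaf of $\lambdat_d$ is one of the $l_j$, we conclude $\lambdat_d\subset\lambdat_{p+tv,v}$.

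The only delicate point is the first step: identifying ``the intersection of $D$ with the support plane orthogonal to a point of $l$'' and checking it contains the entire dual edge $s_l$, which is bounded and does not depend on $x\in l$. This is also where the simplicial hypothesis is essential: in the general case infinitely many leaves meet $B_d$ and the union of their dual pieces of $T$ need not be bounded in Minkowski space, so the uniform exhaustion argument of the third step would break down. Everything else is a standard discreteness argument in $\mathbb H^2$ together with an elementary computation in $\mathbb R^{2,1}$.
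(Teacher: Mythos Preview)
Your proof is correct and follows the same core strategy as the paper: locate a compact subset $K\subset\partial D$ whose points witness every leaf of $\lambdat_d$ being seen, then use that the increasing open family $I^-(p+tv)$ eventually swallows any fixed compact set. The paper's version is shorter: it simply asserts that the union over $x\in B_d$ of the sets $\overline D\cap P_x$ is contained in a compact $K\subset\partial D$, then invokes the exhaustion.

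The main difference is where the compactness of $K$ comes from. You build $K$ explicitly as the finite union $\bigcup_j s_{l_j}$ of dual edges, which requires the simplicial hypothesis to ensure finitely many leaves meet $B_d$ and that each dual piece is a compact segment. The paper's compactness claim, by contrast, does not use simpliciality: it holds for arbitrary laminations, since one may take $K=\ret\big(N^{-1}(\overline{B_d})\big)$ where $N:H_1\to\mathbb H^2$ is the Gauss map of the level surface $H_1$ (which is proper) and $\ret$ is the continuous retraction onto the singularity. So your closing remark that the simplicial hypothesis is \emph{essential} is a little too strong: it is essential for your particular construction of $K$, but not for the lemma itself, as the paper's argument shows. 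What your approach buys is concreteness --- the reader sees exactly which piece of $T$ is being looked at --- at the cost of slightly narrowing the apparent scope of the statement.
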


\begin{proof}
There is a compact subset $K$ of $\partial D$ such that if $x\in B_d$ then 
the support plane orthogonal to $x$ intersects $\partial D$ in $K$.
Since $I^-(p+tv)\cap D$ is an increasing sequence of open subsets that cover $D$,
there is a constant $T$ such that $I^-(p+tv)$ contains $K$ for $t\geq T$.
By definition, we  then have $\lambdat_{p+tv}\subset\lambdat_d$,
and the conclusion follows.
\end{proof}

We now  consider the elements of $SO^+(2,1)$ that send
leaves of $\lambdat_d$ either out of $B_d$ or to other leaves of $\lambdat_d$:
\[
   \Gamma_d=\{\gamma\in SO^+(2,1)~|~\gamma(\lambdat_d)\cap B_d\subset\lambdat_d\}~.
\]
It is easy to check that $\Gamma_0=\bigcap_{d>0}\Gamma_d$.
Note that $\Gamma_d$ is not discrete. Indeed, transformations $\gamma$ such that 
$\gamma(\lambdat_d)\cap B_p=\emptyset$ form an open subset of $SO^+(2,1)$ that is  contained in
$\Gamma_d$.
On the other hand, we will prove that the intersection of a neighborhood of the identity with
$\Gamma_d$ is discrete and that this neighborhood can be chosen arbitrarily large,
by choosing  $d$  sufficiently large.

\begin{lemma}\label{dr:lm}
For any compact neighborhood $H$ of the identity in $SO^+(2,1)$, there is a constant 
$d$ such that $\Gamma_d\cap H$ is finite.
\end{lemma}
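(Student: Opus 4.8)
The statement to prove is: for any compact neighborhood $H$ of the identity in $SO^+(2,1)$, there is a $d$ such that $\Gamma_d \cap H$ is finite. The plan is to argue by contradiction and extract a limiting configuration that forces the lamination to have an accumulating family of leaves near $x_0$, contradicting the local finiteness of the simplicial lamination $\tilde\lambda$. First I would fix a slightly larger ball $B_{d'}$ with $d' > d$ and recall that, because $\tilde\lambda$ is simplicial, only finitely many leaves of $\tilde\lambda$ meet $B_{d'}$; call them $l_1,\dots,l_m$, and let $\mu_d = \iota(\tilde\lambda, B_d)$ denote the finite total weight crossing the ball $B_d$. The key observation is that any $\gamma \in \Gamma_d \cap H$ which does not push $\tilde\lambda_d$ entirely out of $B_d$ must carry the finite configuration $\tilde\lambda_d \cap B_d$ into the finite configuration $\tilde\lambda_{d'} \cap B_{d'}$ (for $d$ large compared to the diameter of $H\cdot B_{d_0}$ for some fixed reference radius $d_0$), and it must do so preserving transverse weights. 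Since there are only finitely many leaves in $\tilde\lambda_{d'}$, and a weight-preserving map sends leaves to leaves, there are only finitely many possible ``combinatorial types'' for such a $\gamma$.

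Next I would upgrade this combinatorial finiteness to genuine finiteness of $\Gamma_d \cap H$. Suppose $\gamma_1, \gamma_2 \in \Gamma_d \cap H$ induce the same assignment of leaves of $\tilde\lambda_d \cap B_d$ to leaves of $\tilde\lambda_{d'}$. Then $\gamma_2^{-1}\gamma_1$ fixes setwise each leaf it touches inside $B_d$; if $\tilde\lambda_d$ contains at least two leaves with distinct endpoints on $\partial\mathbb{H}^2$ (which holds as soon as $d$ is large enough that $B_d$ meets more than one leaf — and here I would invoke that $M$ is not Fuchsian, so $\tilde\lambda$ is nonempty and, being $\Gamma$-invariant for a cocompact $\Gamma$, has leaves with pairwise distinct endpoints passing arbitrarily close to $x_0$), then an element of $SO^+(2,1)$ fixing two such geodesics setwise lies in a finite group (the common stabilizer of two geodesics in $\mathbb{H}^2$ is at most the Klein four-group if they share no endpoint, and is elliptic-or-trivial otherwise). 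Hence each combinatorial type is represented by at most finitely many elements of $\Gamma_d \cap H$, and $\Gamma_d \cap H$ is finite.

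The remaining point — and the one I expect to be the main obstacle — is to justify that $d$ can be chosen so that every $\gamma \in \Gamma_d \cap H$ either evacuates $\tilde\lambda_d$ from $B_d$ or sends $\tilde\lambda_d \cap B_d$ into a fixed finite sublamination $\tilde\lambda_{d'}$. This is where the compactness of $H$ is used: since $H$ is compact, $\bigcup_{\gamma\in H}\gamma(B_{d_0})$ is contained in some ball $B_{R}$, so for $d \geq d_0$ with $d' := R$ we get $\gamma(B_{d_0}) \subset B_{d'}$ for all $\gamma \in H$; but one must be a little careful because the definition of $\Gamma_d$ only constrains $\gamma(\tilde\lambda_d) \cap B_d$, not $\gamma(\tilde\lambda_d) \cap B_{d'}$, so I would instead work with the leaves of $\tilde\lambda$ passing through a fixed small ball $B_{d_0}$ around $x_0$ (finitely many, with distinct endpoints, by simpliciality and non-Fuchsianness), track their images under $\gamma \in H$, and observe these images are constrained to the finite set $\tilde\lambda_{d'}$ while the weight-preservation pins down their combinatorics. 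The bookkeeping of which ball controls which sublamination, and ensuring the two-geodesics-with-distinct-endpoints hypothesis is genuinely available, is the delicate part; everything else is elementary hyperbolic geometry and finiteness.
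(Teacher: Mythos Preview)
Your final paragraph is essentially the paper's proof: fix $d_0$ so that $B_{d_0}$ meets two leaves $l_1,l_2$ bounding the stratum through $x_0$; use compactness of $H$ to find $r$ with $\gamma(B_{d_0})\subset B_{d_0+r}$ for all $\gamma\in H$; set $d\geq d_0+r$ and note that for $\gamma\in H\cap\Gamma_d$ each $\gamma(l_i)$ meets $B_d$ and hence, by the defining condition of $\Gamma_d$, is a leaf of $\tilde\lambda_d$; finally, an isometry of $\mathbb H^2$ is determined (up to a finite ambiguity) by the images of two disjoint geodesics, so $H\cap\Gamma_d$ is finite. Your earlier paragraphs wander before reaching this, and your assertion that elements of $\Gamma_d$ must ``preserve transverse weights'' is not part of the definition of $\Gamma_d$ (which concerns only the support) and is never actually used.

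One genuine difference: you obtain finiteness of the set of possible target leaves directly from local finiteness of the simplicial lamination (so $\tilde\lambda_d$ has finitely many leaves). The paper instead proves a sublemma: for any $a>0$, only finitely many strata $F$ of $\mathbb H^2\setminus\tilde\lambda$ satisfy that $F\cap B_d$ contains a point at distance exactly $a$ from $\partial F$; applying this with $a=d(x_0,l_1)$ bounds the candidates for $\gamma(l_1)$. Under the simplicial hypothesis your shortcut is valid and simpler; the paper's sublemma does not use local finiteness and is evidently written with the general (irrational) lamination case in view, which the authors flag as a likely extension.
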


\begin{sublemma}\label{tec:lm}
For any $a>0$  and $d>0$ there is a finite number of strata $F$ of 
$H^2\setminus \lambdat$
such that $F\cap B_d$ contains a point at distance exactly $a$ from $\partial F$.
\end{sublemma}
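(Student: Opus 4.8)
The plan is a volume‑packing argument in $\mathbb H^2$. The crucial observation is this: if a stratum $F$ of $\mathbb H^2\setminus\lambdat$ contains a point $x\in F\cap B_d$ with $d(x,\partial F)=a$, then $F$ contains the entire open metric ball of radius $a$ centred at $x$, call it $B(x,a)$; and since $x\in B_d$, the triangle inequality gives $B(x,a)\subset B_{d+a}$. Because distinct strata are pairwise disjoint, the balls attached to distinct qualifying strata are pairwise disjoint subsets of the fixed ball $B_{d+a}$, so their number is at most $\area(B_{d+a})/\bigl(2\pi(\cosh a-1)\bigr)=(\cosh(d+a)-1)/(\cosh a-1)<\infty$. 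I would note that this argument uses nothing about $\lambdat$ being simplicial or $\Gamma$‑invariant: it works for an arbitrary geodesic lamination of $\mathbb H^2$.

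In detail I would proceed as follows. Fix a stratum $F$ with the stated property and choose $x_F\in F\cap B_d$ realizing $d(x_F,\partial F)=a$; here $\partial F$ denotes the topological frontier of the open set $F$, i.e.\ the union of leaves of $\lambdat$ bounding $F$ (if $\partial F=\emptyset$ then $\lambdat=\emptyset$ and there is nothing to prove). Step one, show $B(x_F,a)\subset F$: every point of $\partial F$ is at distance at least $a$ from $x_F$, hence $B(x_F,a)$ is disjoint from $\partial F$; but $\mathbb H^2\setminus\partial F$ is the disjoint union of the two open sets $F$ and $\mathbb H^2\setminus\overline F$, and $B(x_F,a)$ is connected and meets $F$ at $x_F$, so it lies entirely in $F$. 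Step two, $B(x_F,a)\subset B_{d+a}$ by the triangle inequality. Step three, the family $\{B(x_F,a)\}$, indexed by the qualifying strata, consists of pairwise disjoint subsets of $B_{d+a}$ (disjointness because the strata themselves are disjoint), so the number of qualifying strata is bounded above by $\area(B_{d+a})$ divided by the area of a hyperbolic ball of radius $a$, which is finite.

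The argument is essentially elementary, and the only point that deserves care — hence the only plausible obstacle — is step one, namely checking that the metric ball of radius $a$ cannot leak out of the stratum $F$ except across $\partial F$. This is settled by the connectedness argument above once one is precise about identifying $\partial F$ with the frontier of $F$ in $\mathbb H^2$; everything else is just the standard fact that a ball of fixed radius in $\mathbb H^2$ contains only boundedly many disjoint balls of a fixed positive radius.
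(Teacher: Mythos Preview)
Your proof is correct and takes a genuinely different route from the paper's. The paper argues by contradiction via compactness: assuming infinitely many qualifying strata $F_n$ with witnesses $x_n\in F_n\cap B_d$, it extracts a convergent subsequence $x_n\to x$; if $x\notin\lambdat$ then eventually all $x_n$ lie in the single stratum through $x$, contradicting distinctness of the $F_n$, while if $x\in\lambdat$ then $d(x_n,\partial F_n)=d(x_n,\lambdat)\to 0$, contradicting $d(x_n,\partial F_n)=a>0$. Your argument is instead a direct packing bound: each qualifying stratum swallows an open ball of radius $a$ centred in $B_d$, these balls are pairwise disjoint and sit inside $B_{d+a}$, so an area comparison finishes it.

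Both arguments apply to an arbitrary geodesic lamination, as you observe. The advantage of your approach is that it yields the explicit bound $(\cosh(d+a)-1)/(\cosh a -1)$ on the number of strata, whereas the paper's compactness argument is nonconstructive. The paper's proof, on the other hand, is marginally softer in that it does not invoke any area estimate and stays purely at the level of the metric and the connected-component structure. Your Step~1 is the only place requiring a moment's care, and the connectedness argument you give (the open ball misses $\partial F$, hence lies in the component of $\mathbb H^2\setminus\partial F$ containing its centre) is clean and correct.
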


\begin{proof}
By contradiction, suppose there are countable many  strata $F_n$ as in the Lemma,
and denote by  $x_n\in F_n\cap B_d$ the points such that $d(x_n,\partial F_n)=a$.

Up to passing to a subsequence, we can suppose that $x_n\rightarrow x$.
If $x$ does not lie in the lamination, then $x_n$ definitively lies in the stratum  $F$ through
$x$, so $F_n=F$, and this contradicts the assumption on $F_n$.

If $x$ lies on $\lambdat$, then $d(x_n,\partial F_n)=d(x_n,\lambdat)\rightarrow 0$, which contradicts
the assumption that this distance is a constant larger than $0$.
\end{proof}

\begin{proof}[Proof of Lemma \ref{dr:lm}]
Let $d_0$ be a fixed number such that $B_{d_0}$ intersects two
two leaves $l_1$ and $l_2$ on the boundary of the stratum $F_0$ through $x_0$.

By the compactness of $H$, there is a constant  $r>0$ such that $d_{\mathbb H^2}(x,\gamma(x))<r$
for any $x\in B_{d_0}$ and $\gamma\in H$.
Note that $\gamma(l_i)$ intersects $B_{d}$ with $d=d_0+r$.

If $\gamma\in H\cap\Gamma_{d}$ with $d>d_0+r$, then
$\gamma$ sends $l_i$ to some leaves $c_1$ and $c_2$ of $\lambdat_d$.
Clearly,  $\gamma(x_0)$ lies in a stratum bounded by $c_1$ and $c_2$, and the
distance between $\gamma(x_0)$ and $c_1$ is the same as the distance
between $x_0$ and $l_1$ (say $a>0$). 
On the other hand, by Sublemma \ref{tec:lm},
there are finitely many strata $F_1,\ldots, F_N$
of $\lambdat_d$ such that $F_i\cap B_{d}$ contains a point at distance $a$ from
$\partial F_i$. Moreover, the boundary of each $F_i$ intersects $B_d$ into a finite 
number of segments.

In particular, there are a finite number of leaves $t_1,\ldots, t_M$ such that
 every $\gamma$  in $H\cap\Gamma_d$ sends $l_i$ to one of the leaves  $t_i$.
However, for  two pairs of geodesics $(l_1, l_2)$ and $(t_1, t_2)$
  in $\mathbb H^2$, there is at most one isometry sending $l_i$ to $t_i$.
  This implies that $H\cap\Gamma_d$ contains at most $2^M$ elements.
 \end{proof}

Let us now fix an observer $(p,v)$.
It then follows from Lemma \ref{td:lm} and Lemma \ref{dr:lm} that for any compact
subset $H\subset SO^+(2,1)$ and $d$ sufficiently large,  there is  a time $T=T(H,d)$
such that the observer at proper time $t>T$ can list the elements of $\Gamma_d\cap H$.
On the other hand, in principle, an observer would have to wait  an infinite amount of  time to determine if a given element
in $\Gamma_{d_0}$ lies also in $\Gamma_0$. Indeed, this amounts to  determining  whether 
such an element
lies also in all $\Gamma_{d}$ for $d>d_0$.
The following lemma ensures that this is not the case and that the observer can be sure after a finite amount of time  that 
elements of $\Gamma_d\cap H$ also lie in $\Gamma_0$.

\begin{lemma}\label{finite:lm}
For any compact subset $H\subset SO^+(2,1)$ there is a constant $d$ such that  $\gamma\in H\cap\Gamma_d$
implies $\gamma\in\Gamma_0$.
\end{lemma}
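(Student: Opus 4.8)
The plan is to obtain this as a short formal consequence of Lemma \ref{dr:lm} together with the identity $\Gamma_0=\bigcap_{d>0}\Gamma_d$, via a diagonal argument. The one structural point to record first is that the family $(\Gamma_d)_{d>0}$ is \emph{nested}: if $d'\ge d$ then $\Gamma_{d'}\subseteq\Gamma_d$. Indeed, $\tilde\lambda_d\subseteq\tilde\lambda_{d'}$ and $B_d\subseteq B_{d'}$, and moreover $\tilde\lambda_{d'}\cap B_d\subseteq\tilde\lambda_d$ because any leaf meeting $B_d$ is, by definition, a leaf of $\tilde\lambda_d$. Hence, for $\gamma\in\Gamma_{d'}$,
\[
\gamma(\tilde\lambda_d)\cap B_d\ \subseteq\ \bigl(\gamma(\tilde\lambda_{d'})\cap B_{d'}\bigr)\cap B_d\ \subseteq\ \tilde\lambda_{d'}\cap B_d\ \subseteq\ \tilde\lambda_d,
\]
so $\gamma\in\Gamma_d$. (The only reason this is not completely automatic is that $\Gamma_d$ is not itself a subgroup.)

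Next I would reduce to the case where $H$ is a compact neighbourhood of the identity, so that Lemma \ref{dr:lm} applies. Since $SO^+(2,1)$ is a connected Lie group, any compact $H$ is contained in a compact neighbourhood $H'$ of the identity: take a relatively compact open neighbourhood $U$ of the identity $e$ and set $H'=(H\cup\{e\})\cdot\overline U$, which is compact and contains both $H$ (as $e\in\overline U$) and the open set $U$. As $\Gamma_d\cap H\subseteq\Gamma_d\cap H'$, it is enough to prove the statement for $H'$; so I assume from now on that $H$ is a compact neighbourhood of the identity. By Lemma \ref{dr:lm} there is then $d_1>0$ with $F:=\Gamma_{d_1}\cap H$ finite.

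Now comes the diagonal step. For each $\gamma\in F$ with $\gamma\notin\Gamma_0$, the identity $\Gamma_0=\bigcap_{d>0}\Gamma_d$ yields some $d_\gamma>d_1$ with $\gamma\notin\Gamma_{d_\gamma}$. Since $F$ is finite, $d:=\max\bigl(\{d_1\}\cup\{d_\gamma:\gamma\in F\setminus\Gamma_0\}\bigr)$ is finite. By the nesting property $\Gamma_d\subseteq\Gamma_{d_1}$, so $\Gamma_d\cap H\subseteq F$; and for $\gamma\in F\setminus\Gamma_0$ we have $\Gamma_d\subseteq\Gamma_{d_\gamma}$ while $\gamma\notin\Gamma_{d_\gamma}$, hence $\gamma\notin\Gamma_d$. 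Therefore $\Gamma_d\cap H\subseteq F\cap\Gamma_0\subseteq\Gamma_0$, which is the assertion.

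I do not expect a real obstacle here: all the genuine work — the "local discreteness of $\Gamma_d$ in an arbitrarily large window" — is already carried out in Lemma \ref{dr:lm} and Sublemma \ref{tec:lm}. The present statement is purely formal once that is in hand, the only mildly delicate point being the verification of the monotonicity $d'\ge d\Rightarrow\Gamma_{d'}\subseteq\Gamma_d$ above, which is what makes the finitely many "bad" elements of $\Gamma_{d_1}\cap H$ drop out simultaneously at a single large radius $d$.
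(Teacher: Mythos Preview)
Your argument is correct and rests on exactly the same ingredients as the paper's proof: the nesting $\Gamma_{d'}\subseteq\Gamma_d$ for $d'\ge d$, Lemma \ref{dr:lm}, and the identity $\Gamma_0=\bigcap_{d>0}\Gamma_d$. The difference is purely one of presentation: the paper argues by contradiction with a convergent sequence $\gamma_n\in(\Gamma_{d_n}\cap H)\setminus\Gamma_0$, shows it is eventually constant inside the finite set $\Gamma_{d_{n_0}}\cap H$, and derives a contradiction from $\gamma_\infty\in\bigcap_n\Gamma_{d_n}=\Gamma_0$; you instead run the direct ``diagonal'' argument, picking a single $d$ that kills the finitely many bad elements of $F=\Gamma_{d_1}\cap H$.

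Your write-up is in fact a bit more careful than the paper on two points it leaves implicit: you verify the monotonicity $\Gamma_{d'}\subseteq\Gamma_d$ (which the paper uses without comment when asserting $\gamma_n\in\Gamma_{d_{n_0}}$ for $n\ge n_0$), and you reduce an arbitrary compact $H$ to a compact \emph{neighbourhood} of the identity so that Lemma \ref{dr:lm} applies as stated. Both are minor, but worth having on record.
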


\begin{proof}
By contradiction,  suppose that 
there is a diverging sequence $d_n$ and a sequence $\gamma_n\in H$ such that
$\gamma_n\in\Gamma_{d_n}\cap H$, but $\gamma_n\notin\Gamma_0$.

Up to passing to a subsequence,  we may suppose that $\gamma_n$ converges to $\gamma_\infty$.
On the other hand, by Lemma \ref{dr:lm}, we can choose 
$n_0$ big enough so that $\Gamma_{d_{n_0}}\cap H$ is a finite set.
Now $\gamma_n\in\Gamma_{d_{n_0}}\cap H$ for $n\geq n_0$, and since it is a convergent sequence
we have that  $\gamma_n=\gamma_\infty$ for $n\geq n_1$. This implies that 
$\gamma_\infty\in\Gamma_0$ and contradicts the assumption on the sequence.
\end{proof}

We can now  state and prove the main result of this section.

\begin{prop} \label{pr:construct}
Let  $(p,v)$ be an observer in a domain of dependence $D$ which is the universal cover of a Minkowski
spacetime obtained by Lorentzian grafting of a closed hyperbolic surface along a rational measured lamination. 
Then the observer can construct a finite set of generators of $\Gamma_0$ in finite time.
\end{prop}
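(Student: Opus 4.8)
The plan is to assemble the statement from four ingredients already at hand --- the reconstruction of the visible leaves (Proposition~\ref{pr:rational}), the exhaustion Lemma~\ref{td:lm}, the local discreteness Lemma~\ref{dr:lm} and the finite--detection Lemma~\ref{finite:lm} --- together with the elementary fact that $\Gamma_0$ is cocompact, hence compactly generated. First I would fix a compact generating set for $\Gamma_0$: since $\Gamma$ has finite index in $\Gamma_0$ and is cocompact, $\Gamma_0$ is cocompact in $SO^+(2,1)$, so there is a compact set $K\subset\mathbb H^2$ with $\Gamma_0\cdot K=\mathbb H^2$, and the standard argument for properly discontinuous cocompact actions shows that, with $H=\{g\in SO^+(2,1):gK\cap K\neq\emptyset\}$ a compact neighbourhood of the identity, the finite set $\Gamma_0\cap H$ generates $\Gamma_0$. (That $\Gamma_0$ is finitely generated at all also follows from $\mathbb H^2/\Gamma_0$ being a closed hyperbolic $2$--orbifold, a finite quotient of the genus $g\ge 2$ surface $\mathbb H^2/\Gamma$.) I fix such an $H$ once and for all.

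Next I would choose the radius $d$. Lemma~\ref{dr:lm} provides a $d$ for which $\Gamma_d\cap H$ is finite, and Lemma~\ref{finite:lm} a $d$ for which $\Gamma_d\cap H\subseteq\Gamma_0$; since the proofs of both in fact give the conclusion for all sufficiently large radii, one may take a single $d$ enjoying both properties. Combining this with $\Gamma_0=\bigcap_{d'>0}\Gamma_{d'}$, which gives $\Gamma_0\cap H\subseteq\Gamma_d\cap H$, yields
\[
\Gamma_d\cap H=\Gamma_0\cap H .
\]
Thus $\Gamma_d\cap H$ is a finite generating set of $\Gamma_0$, and it remains only to show that the observer recovers it in finite eigentime.

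For this I would normalise so that the base point $x_0$ of Lemma~\ref{td:lm} is the velocity vector $v$. Lemma~\ref{td:lm} then gives an eigentime $T$ such that, for $t\ge T$, the observer $(p+tv,v)$ sees every leaf of $\lambdat$ meeting $B_d=B_d(v)$, i.e.\ every leaf of $\lambdat_d$. Rationality of the lamination means its dual tree is simplicial, so Proposition~\ref{pr:rational} applies: from the intensity function measured at eigentime $T$ the observer reconstructs the position in $\mathbb H^2$ (relative to $v$) of each leaf he sees, and filtering out those meeting $B_d$ he obtains $\lambdat_d$ explicitly as a union of geodesics. From this data the set $\Gamma_d\cap H=\{\gamma\in H:\gamma(\lambdat_d)\cap B_d\subseteq\lambdat_d\}$ is determined --- for a candidate $\gamma\in H$ one tests whether $\gamma$ carries each leaf meeting $B_d$ either off $B_d$ or onto another such leaf --- and by the previous step this finite set equals $\Gamma_0\cap H$ and generates $\Gamma_0$.

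The main obstacle is not in this assembly but in the preparatory lemmas it rests on, and in the two places where rationality is genuinely used: it guarantees (via the simplicial dual tree) that each visible leaf occupies a full open arc of $S^1$ carrying the analytic ``spacelike line'' intensity profile, so that Proposition~\ref{pr:rational} can locate it, and it is also what forces $\Gamma_d\cap H$ to be finite through the local finiteness exploited in Sublemma~\ref{tec:lm} and Lemma~\ref{dr:lm}. Conceptually the crux is Lemma~\ref{finite:lm}: it is what converts the non--discrete, only partially observed groups $\Gamma_d$ into finitely much information about the honest discrete group $\Gamma_0$, by excluding spurious symmetries of $\lambdat_d$ that would otherwise drift toward the identity as $d\to\infty$.
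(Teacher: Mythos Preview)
Your proof is correct and follows essentially the same route as the paper's: pick a compact $H$ so that $H\cap\Gamma_0$ generates $\Gamma_0$ (using that $\Gamma_0$ contains the cocompact $\Gamma$ with finite index), then combine Lemma~\ref{finite:lm} with $\Gamma_0\subset\Gamma_d$ to get $\Gamma_d\cap H=\Gamma_0\cap H$ for $d$ large, and use Proposition~\ref{pr:rational} and Lemma~\ref{td:lm} to see that the observer recovers $\lambdat_d$ (hence $\Gamma_d\cap H$) in finite eigentime. The paper's own proof is just the two-sentence skeleton of this; your added observation that $\Gamma_{d'}\subset\Gamma_d$ for $d'\ge d$ (so one $d$ works for both lemmas) and your commentary on where rationality enters are accurate elaborations.
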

 
\begin{proof}
As $\Gamma_0$ is finitely generated, there is a compact subset
$H\subset SO^+(2,1)$ such that $H\cap\Gamma_0$ is a set of generators of $\Gamma_0$.
By Proposition \ref{pr:rational}, Lemma \ref{dr:lm} and Lemma \ref{finite:lm}, the observer can detect elements
of $H\cap\Gamma_0$ in a finite time.
\end{proof}

\subsection{Higher dimensions} \label{ssc:higherdim}

In sections \ref{ssc:cosmological} and \ref{ssc:reconstructing} we focussed on flat spacetimes of dimension $2+1$. 
It appears possible to give  an analogous analysis  in dimension $3+1$
or  higher. However, proving the  results is more involved, since
the structure of MGHFC spacetimes is less well understood and their description is more
complicated than in dimension $2+1$.

In dimension $3+1$, MGHFC spacetimes can still be constructed from a hyperbolic metric
on a $3$-manifolds along with a ``geodesic foliation''. Those foliations, however, are 
different from those occurring on hyperbolic surfaces, since they have two-dimensional
leaves which can possibly meet along one-dimensional strata. 

Still, it appears plausible that the same conclusions can be reached as in dimension 
$2+1$  for observers in a domain of dependence which is the universal cover of 
a MGHFC spacetime in dimension $3+1$:
\begin{itemize}
\item If the  linear part of the holonomy is known, the observer can reconstruct  
 the complete holonomy in finite time.
\item The observer can determine in finite time  the part of the initial singularity
corresponding to the part of space he ``sees'',  which is increasing with time.
\end{itemize}


\section{Examples  in 2+1 dimensions}
\label{sc:2+1}

In this section, we determine explicitly  the frequency function measured by an observer
in a few simple examples of 2+1-dimensional domains of dependence.  

\subsection{Explicit holonomies}
\label{ssc:fig2d}

We consider below three  examples, one based on a reflection in the edges of
a hyperbolic quadrilateral and two based on a hyperbolic punctured torus. In the first
 example involving the punctured torus, the translation part of the holonomy corresponds to a measured
lamination with support on a closed curve.  In the second,  the
support of the measured lamination is more complicated.


\subsubsection{Example 1:  A hyperbolic reflection group}
\label{sssc:quadri_pi3}

In the first example,  we consider a group $\Gamma_{\pi/3}$ which is generated by the
reflections on the edges of a quadrilateral with angles $\pi/3$. (This
angle condition completely determines a presentation of the group, see e.g. \cite{davis:book}.)
The holonomy representation $\rho_t:\Gamma_{\pi/3}\rightarrow \Iso(\R^{1,2})$ depends
on a real parameter $t$. 

We describe first the linear part $\rho_t^l:\Gamma_{\pi/3}\rightarrow O(1,2)$. 
The construction is based on a quadrilateral $p$ with vertices $v_1, \cdots, v_4$.
Consider the hyperbolic plane as a quadric in the Minkowski space  $\R^{2,1}$, and 
let $w_{1},w_{2},w_{3},w_{4}$ be the unit spacelike vectors which are orthogonal
to the oriented plane through $0$ containing the geodesic segments $(v_1,v_2)$, $(v_2,v_3)$, $(v_3,v_4)$ 
and $(v_4,v_1)$. 
The cosine of the exterior angle of $p$ at $v_i$ is then 
equal to the scalar product between $w_{i-1}$ and $w_{i}$, so that $p$ has
interior angles equal to $\pi/3$ if and only if $\langle w_i,w_{i+1}\rangle = -1/2$
for all $i\in \Z/4\Z$. 

The fact that those scalar products are equal means that $(w_1,w_2,w_3,w_4)$ is a
rhombus in the de Sitter plane. In particular it is invariant under the symmetry
with respect to a timelike line in $\R^{2,1}$, corresponding to a point $o\in H^2$
which is the midpoint of both, $(v_1,v_3)$ and $(v_2,v_4)$. 

Choosing a coordinate system compatible with this symmetry, we can write the $w_i$
as
$$ w_{1}=(\sinh(t),\cosh(t),0), w_{2}=(\sinh(t'),0,\cosh(t')), w_3=(\sinh(t),-\cosh(t),0), 
w_4=(\sinh(t'),0,-\cosh(t')) $$
with $t,t'$ satisfying the condition  $\sinh(t)\sinh(t')=1/2$. 
This yields a representation $\rho_t:\Gamma\rightarrow O(2,1)$ which
can be described as follows. $\Gamma$ is generated by the elements $a_1, \cdots, a_4$
corresponding to the reflections in the edges of $p$, with the relations
$$ a_i^2=1, \quad (a_ia_{i+1})^3=1 $$
for all $i\in \Z/4\Z$. The representation $\rho$ sends $a_i$ to the reflection in 
$(v_i,v_{i+1})$, that is
$$ \rho(a_i)(x) = x-2\langle x,w_i\rangle w_i~. $$
The quotient of $\mathbb H^2$ by $\rho(\Gamma)$ is an orbifold. The subgroup $\Gamma_2$ of $\Gamma$ 
of elements $\gamma\in \Gamma$ for which $\rho(\gamma)$ is orientation-preserving
has index two, and the quotient of $\mathbb H^2$ by $\rho(\Gamma_2)$ is a surface.

There is a unique choice of a deformation cocycle associated to $\rho_t$, which is obtained by
varying $t$.  It can be written as $\tau_t=\rho_t^{-1}d\rho_t/dt$. So we obtain  a one-parameter family of domains of dependence parametrized by $t$. In the following, we mainly consider the simplest case, where $t=t_0=\sinh^{-1}(1/\sqrt{2})$,
so that $t'=t_0$. 


\subsubsection{Example 2: A punctured torus with a rational measured lamination}
\label{sssc:torus1}

Another simple example can be constructed, by choosing as the linear part of the holonomy  the holonomy representation of a hyperbolic punctured torus.

Then the group $\Gamma$ is  the free group generated by two elements $a,b$. We consider
the situation with an extra symmetry, corresponding to the condition that the images
of $a,b$ by the linear part $\rho$ of the holonomy are hyperbolic translations with
orthogonal axes. The translation lengths of $\rho(a)$ and $\rho(b)$ can then be
written as $2t_a, 2t_b$,  subject to the conditions that $\sinh(t_a)\sinh(t_b)=1$. This corresponds
to the condition that the image of the commutator of $a,b$ is parabolic.

In the computations below, we choose, somewhat arbitrarily, the parameters $t_a=\sinh^{-1}(2),
t_b=\sinh^{-1}(1/2)$. We also choose the translation component of the holonomy as the
cocycle $\tau$ corresponding, through the relation explained in Section \ref{ssc:equivariant},
to a closed curve corresponding to $b$, with weight $1$. 

The domain of dependence obtained
in this way is shown on the left in Figure \ref{fig:dd}. 
To compute this image  as well as frequency function and ``distances'' to the boundary in
other domains of dependence 
 -- see figures \ref{fig:da_torus1} and \ref{fig:da_torus4} below --- 
we use the description of a domain of dependence as an intersection of half-spaces
bounded by lightlike planes from Section \ref{ssc:reconstructing_holo}. The image
is computed by taking a ball of radius $6$ in $\Gamma$, for the distance defined
by the choice of generators described above, computing for each element of 
$\Gamma$ in this ball the corresponding axis and lightlike hyperplanes, and then
determining their intersection.


\subsubsection{Example 3: A punctured torus with an  irrational lamination}
\label{sssc:torus4}

In this example,  the linear part of the holonomy is the same as in the previous one. However,  the translation part of the holonomy is given
by a cocycle corresponding to a measured lamination which does not have its support on 
a closed curve. The corresponding domain of dependence is given on the right in 
Figure \ref{fig:dd}.


\subsection{Results}

\begin{figure}[ht]
  \begin{center}
  \includegraphics[width=8cm]{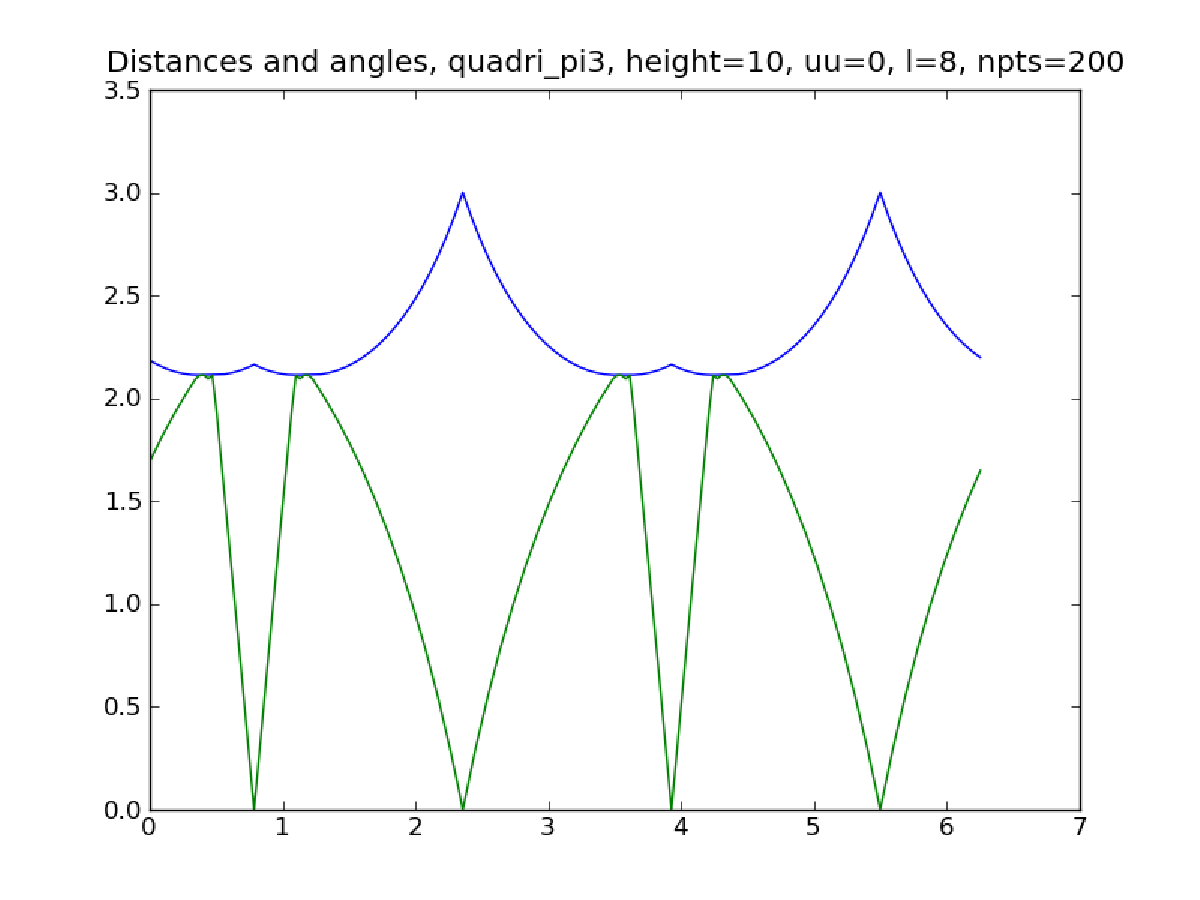}
  \includegraphics[width=8cm]{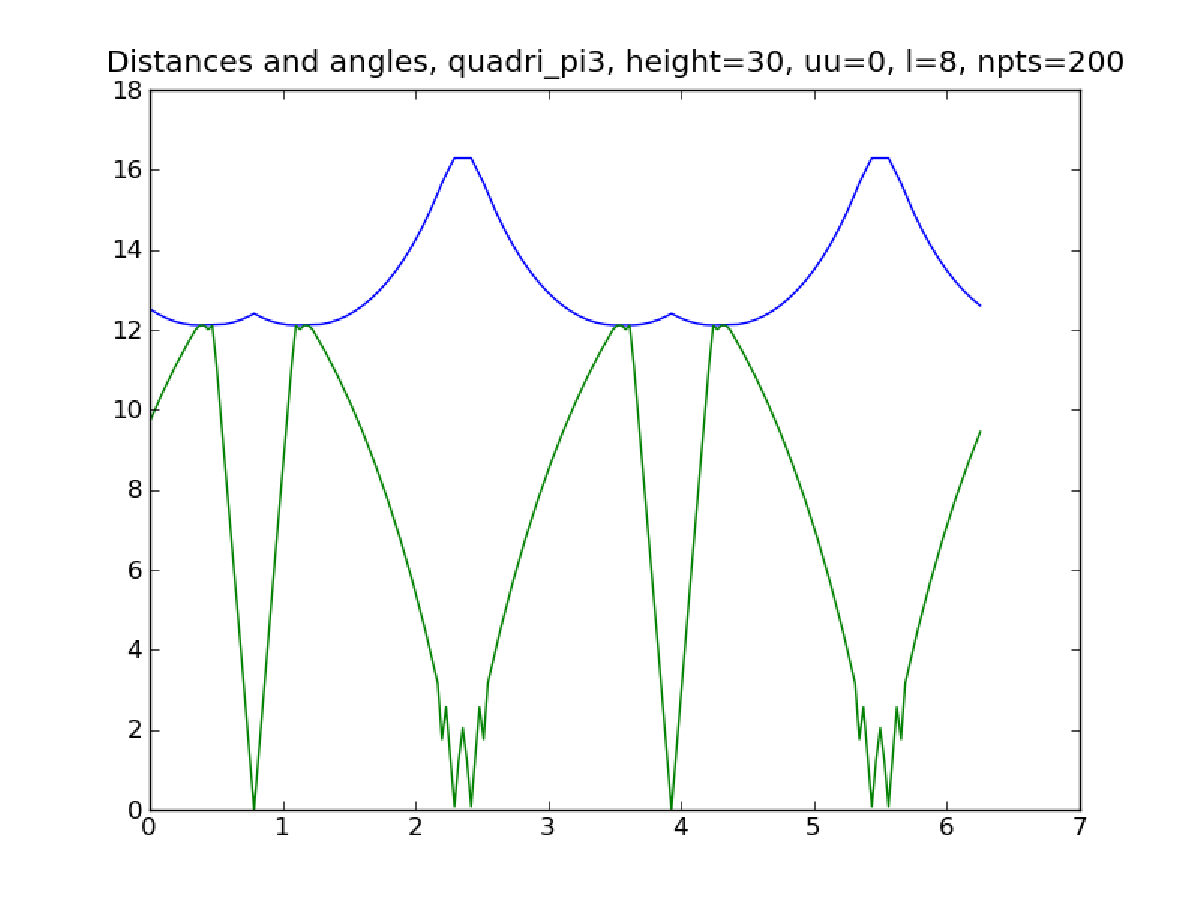}
  \includegraphics[width=8cm]{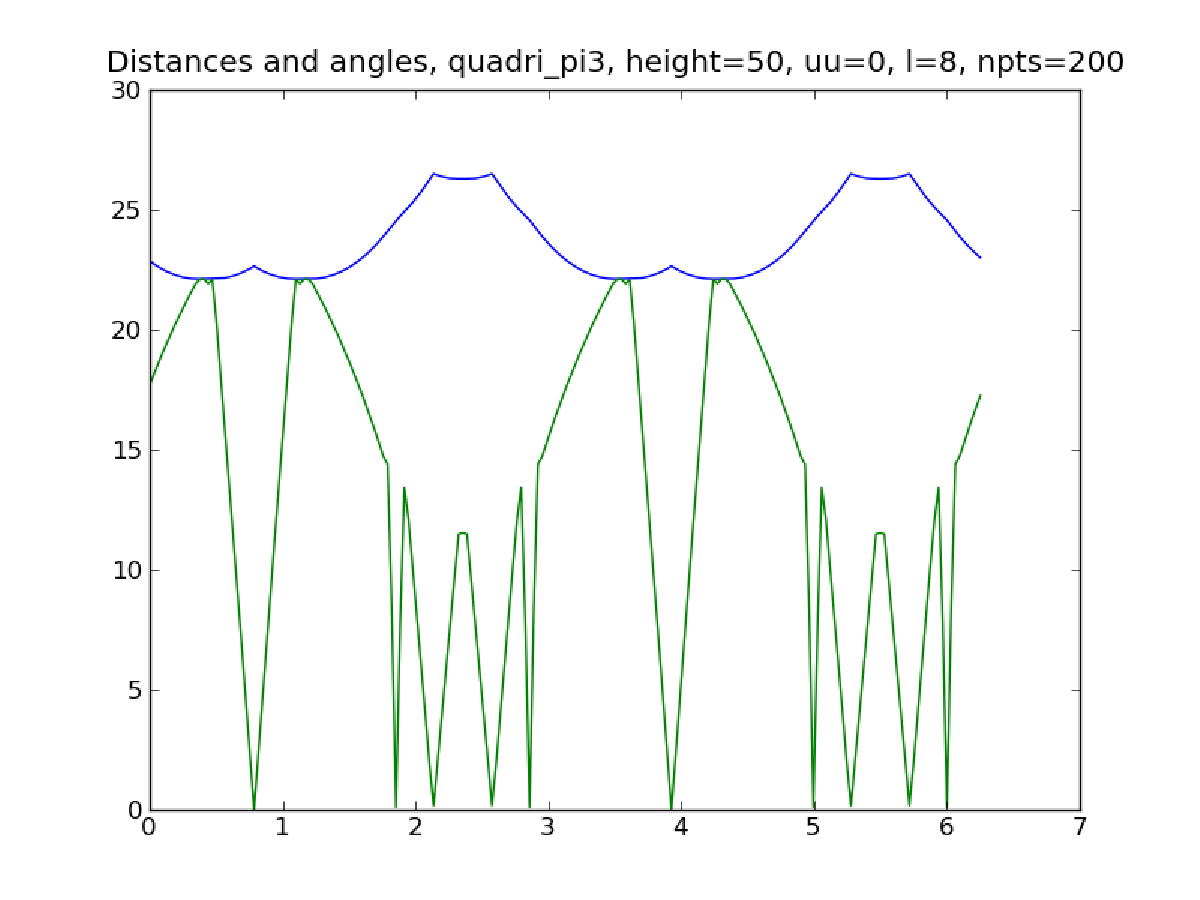}
  \caption{The frequency function (green) and Euclidean distance to the boundary (blue) in a 
domain of dependence based on a quadrilateral with angles $\pi/3$ with a rational lamination, seen from
increasing distance from the initial singularity.}
  \label{fig:da_quadri_pi3}
  \end{center}
\end{figure}

\begin{figure}[ht]
  \begin{center}
  \includegraphics[width=8cm]{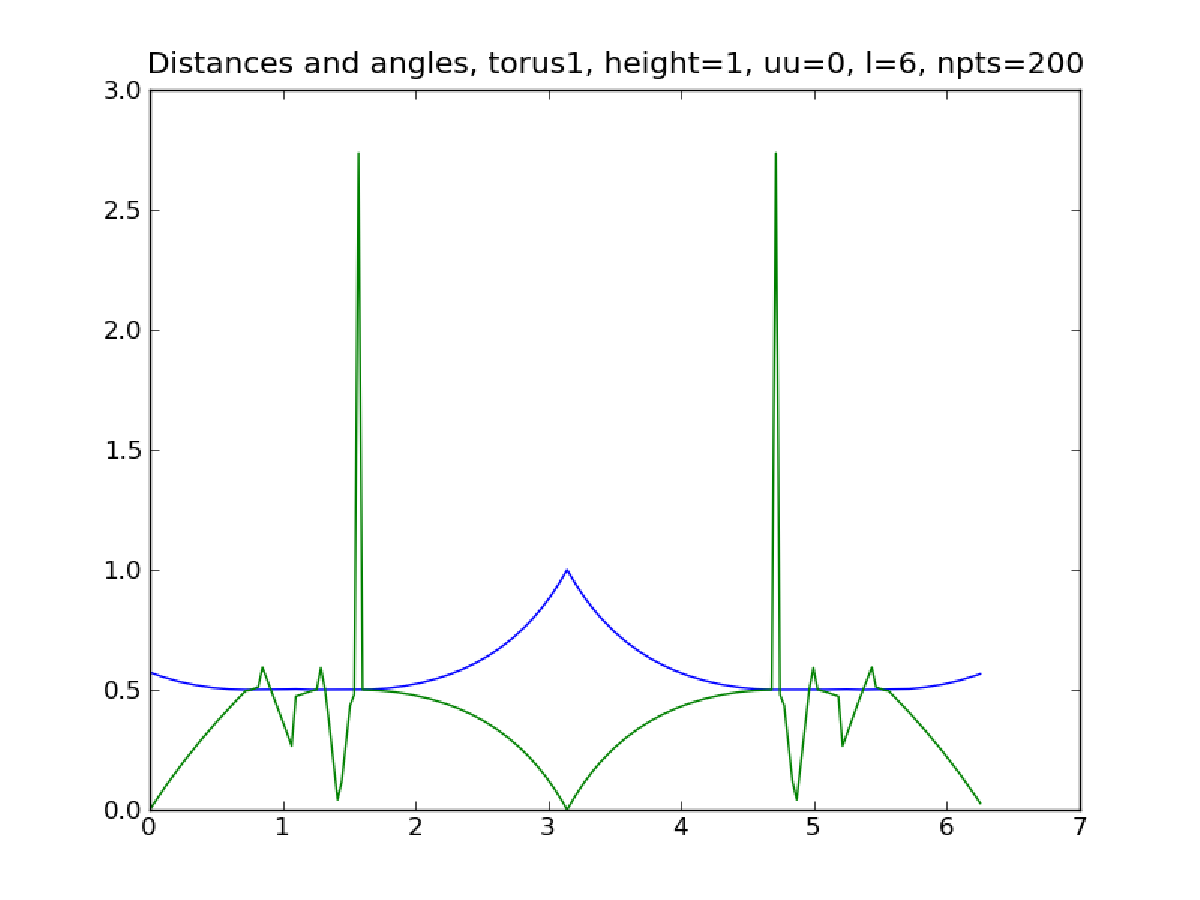}
  \includegraphics[width=8cm]{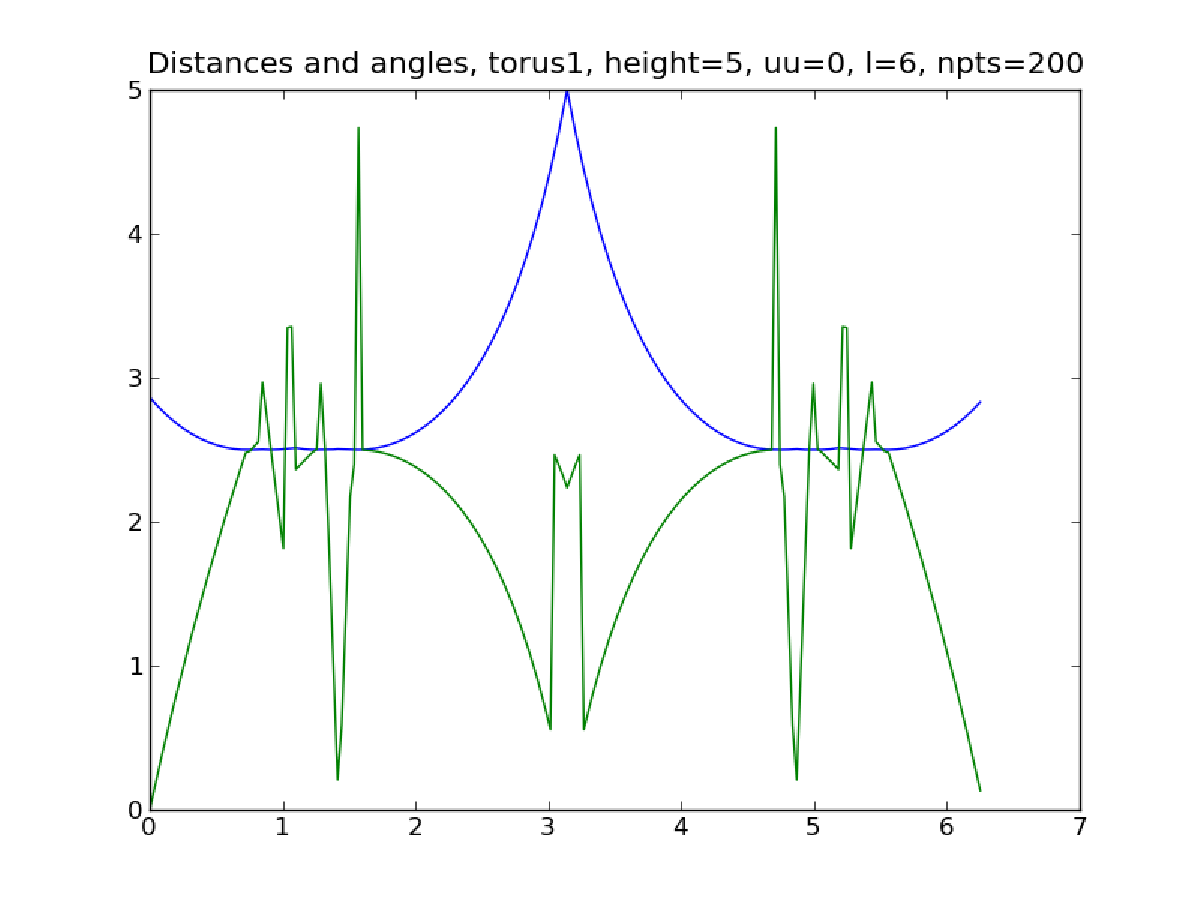}
  \includegraphics[width=8cm]{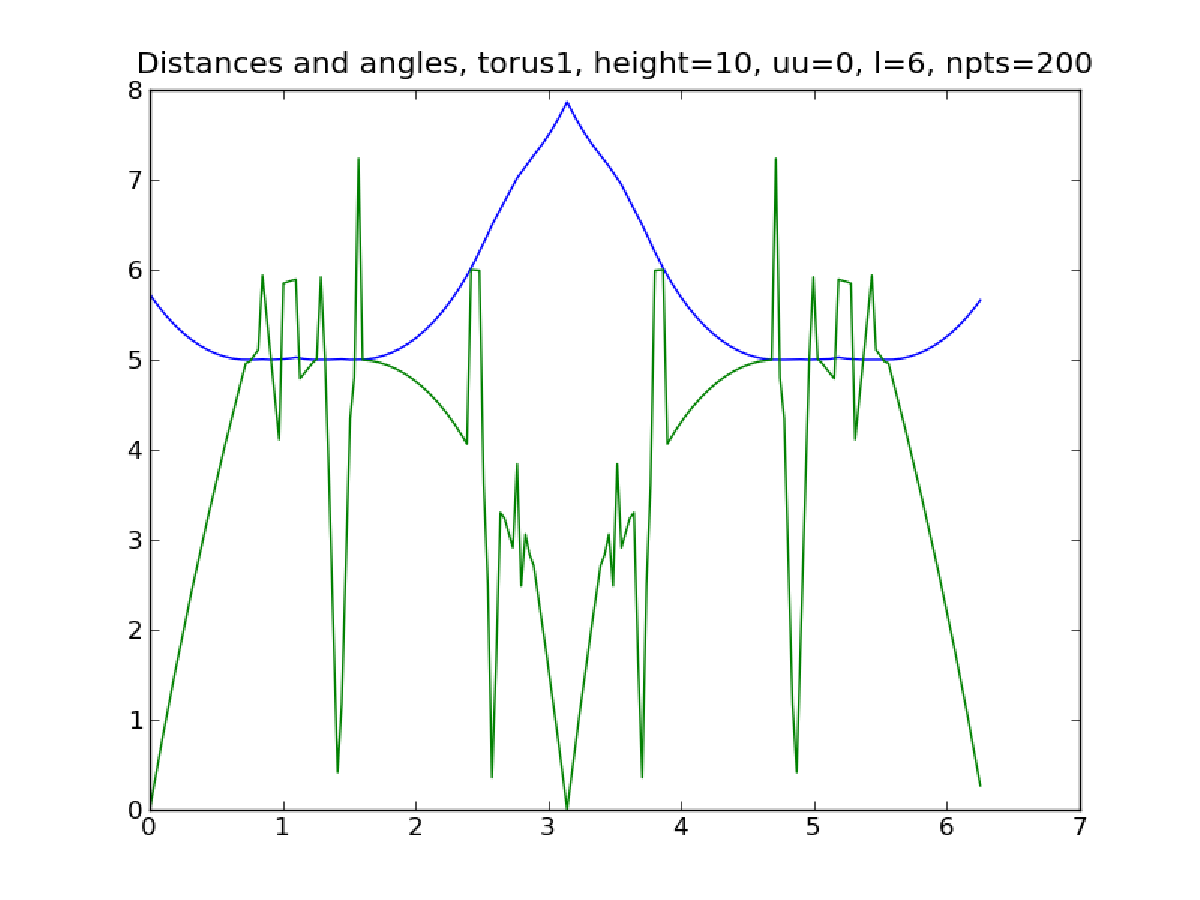}
  \includegraphics[width=8cm]{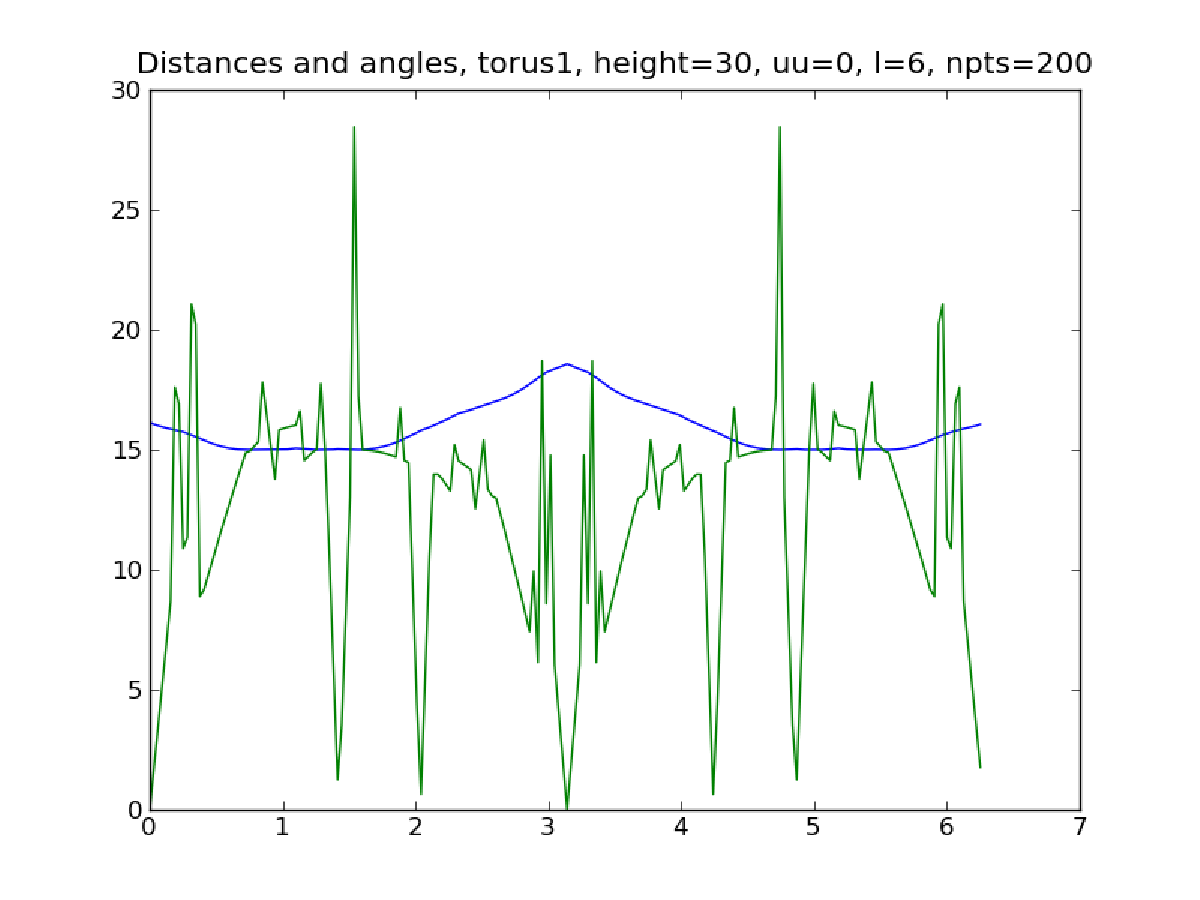}
  \caption{The frequency function  (green) and Euclidean distance to the boundary (blue) in a 
domain of dependence based on a punctured torus with a rational lamination, seen from
increasing distance from the initial singularity.}
  \label{fig:da_torus1}
  \end{center}
\end{figure}

\begin{figure}[ht]
  \begin{center}
  \includegraphics[width=8cm]{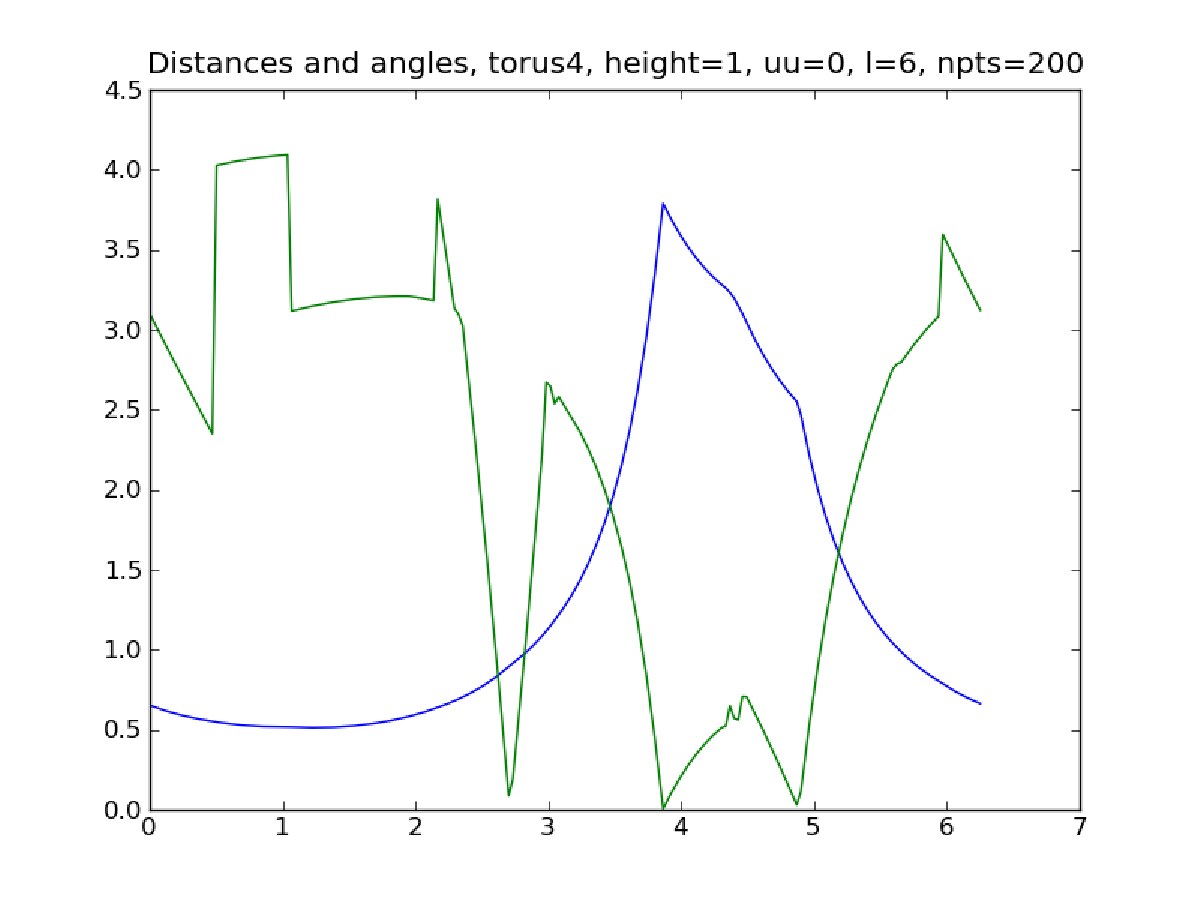}
  \includegraphics[width=8cm]{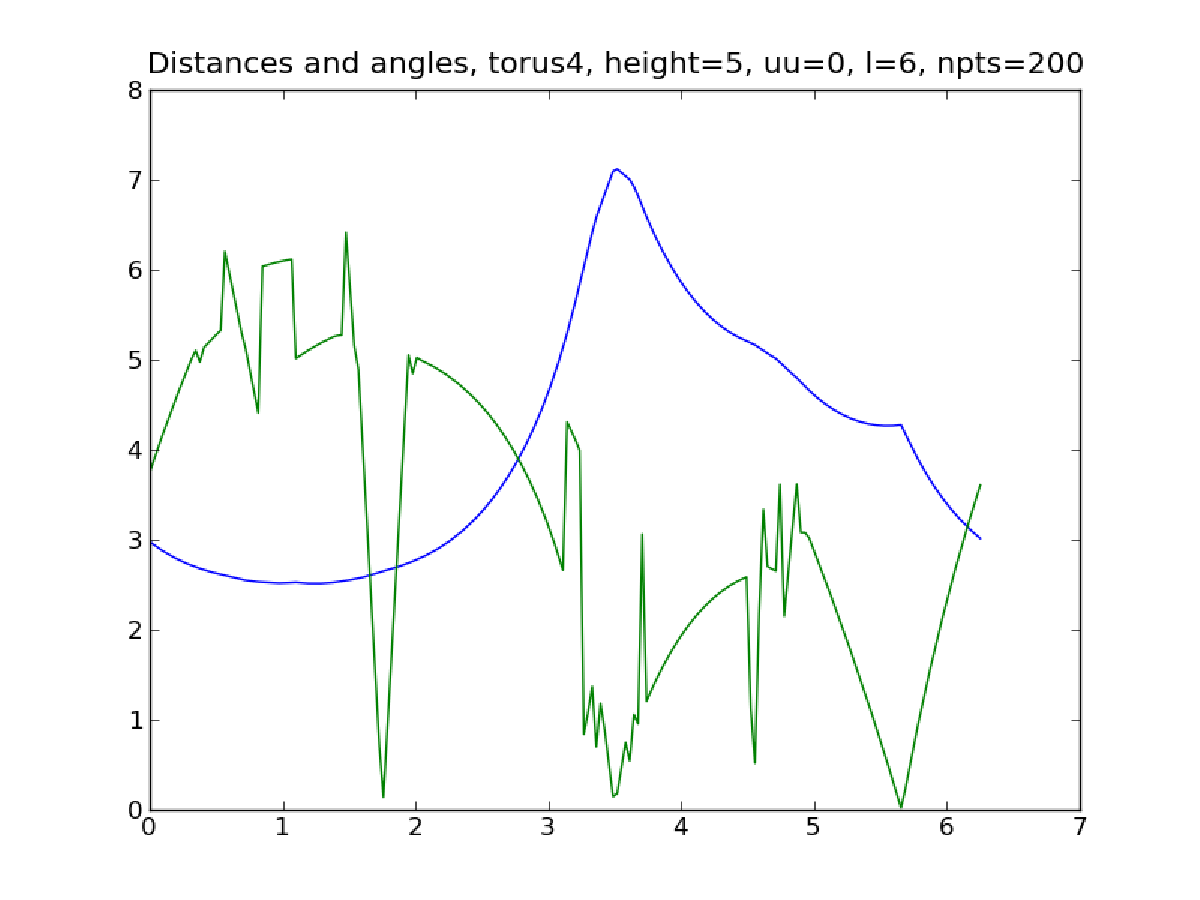}
  \includegraphics[width=8cm]{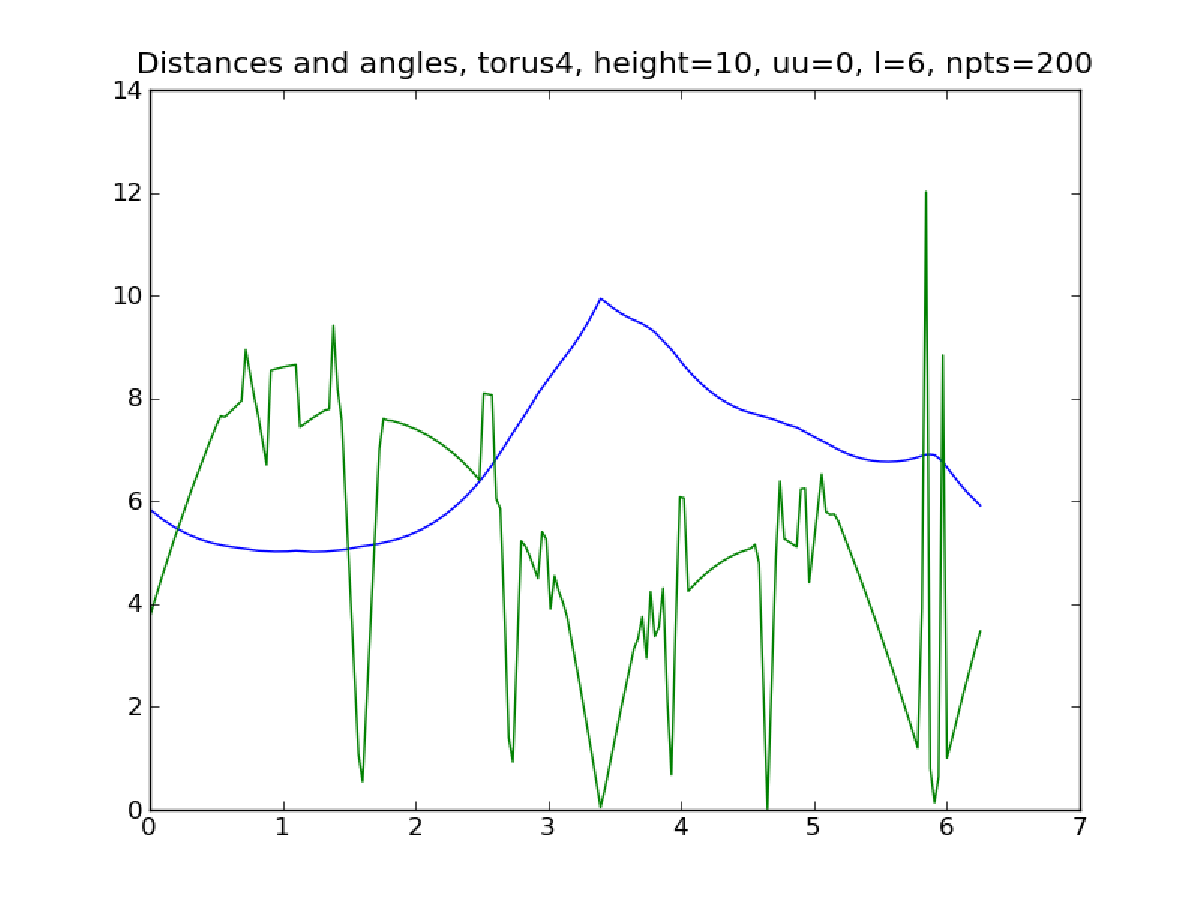}
  \includegraphics[width=8cm]{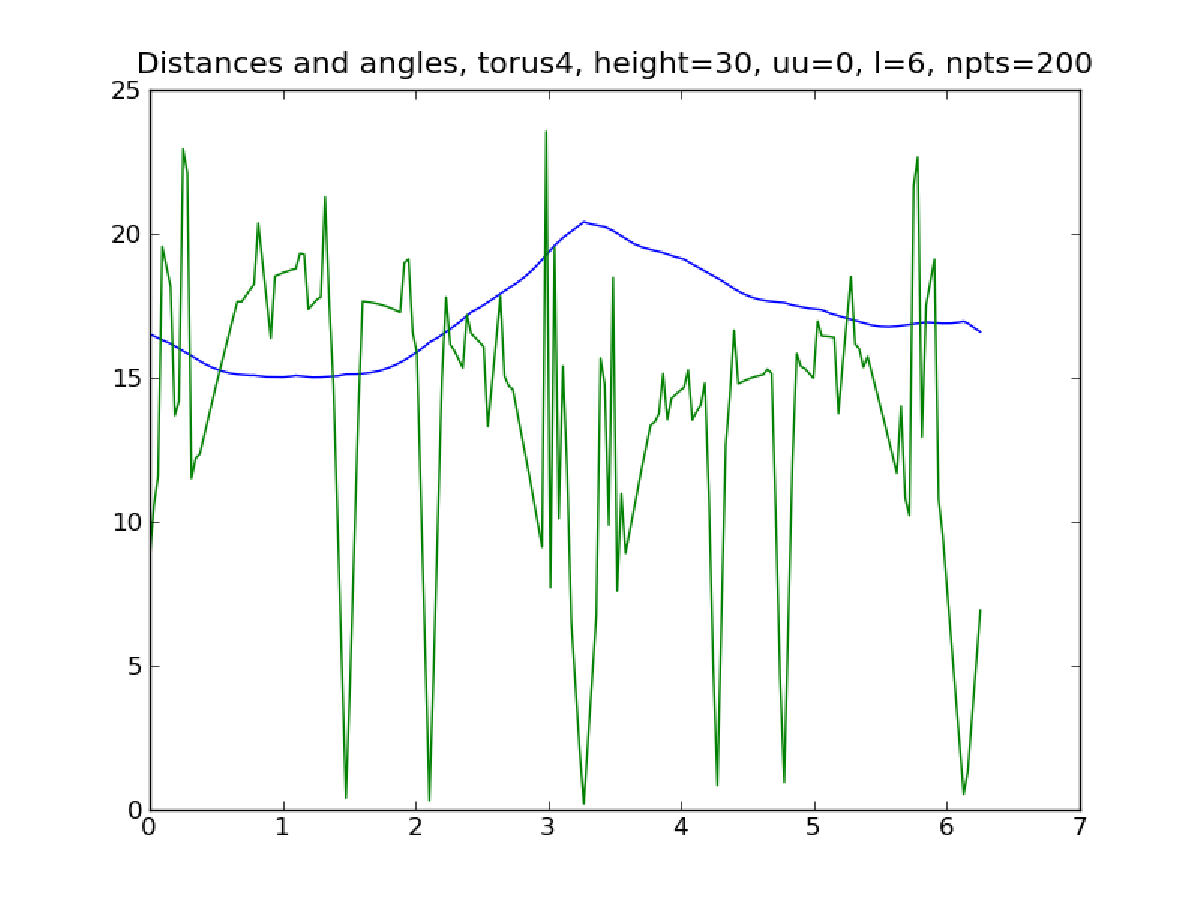}
  \caption{The frequency function (green) and Euclidean distance to the boundary (blue) in a 
domain of dependence based on a punctured torus with an irrational lamination, seen from
increasing distance from the initial singularity. }
  \label{fig:da_torus4}
  \end{center}
\end{figure}

Figure \ref{fig:da_quadri_pi3} presents (in green) the frequency function measured by all directions
by an observer in the $2+1$-dimensional flat spacetime described in Section \ref{sssc:quadri_pi3}.
More precisely, the observer is located above the origin at time distances $10$, 
$30$ and $50$ to
achieve the most ``readable'' results.
The Euclidean length of the lightlike
segments from the observer to the boundary of the domain are drawn in blue. The frequency
function becomes more complex as the cosmological time of the observer increases.

Figure \ref{fig:da_torus1} is the analogue of Figure \ref{fig:da_quadri_pi3} 
for the $2+1$-dimensional flat spacetime described in Section \ref{sssc:torus1}.
More precisely, the observer is  located above the origin at time distances $1$, 
$5$, $10$ and $30$.  Figure \ref{fig:da_torus4} shows the analogous  results, 
for the domain of dependence obtained from an irrational lamination, 
described in Section  \ref{sssc:torus4}. 

The computation of both the Euclidean distance and the frequency function are made for
an approximation of the domain of dependence, as explained above. It follows from 
Section 4 that the frequency function  computed in this way is not reliable as a continuous function,
but only --- possibly at least --- as a $L^1$ function, due to Theorem \ref{stability:thrm}.

Those graphs should be considered as preliminary results, since, even for this relatively
simple setting, the computations needed to obtain the results are quite involved relative to
our programming capabilities and computing equipment. It is possible that heavier 
computations --- in particular, computing a better approximation of the domain of dependence
by using a larger subset of the fundamental group --- could lead to notably different results.
However, it  is already  apparent in those pictures,  that the frequency function behaves
in a very non-smooth way, as explained in Section 3 and Section 4.


\section{Examples  in 3+1 dimensions}
\label{sc:3+1}

In this section,  we consider an  example of a domain of dependence in 3+1 dimensions  and show  that the light emitted from the initial singularity and received by an observer  contains  rich information on its geometry and topology. In the first part,  we describe the domain of dependence in 3+1 dimensions, based on a construction of 
Apanasov \cite{apanasov:deformations}.
The second part contains some images of the light emitted by the initial singularity, as
seen by an observer.


\subsection{An explicit example} \label{ssc:apanasovex}

We consider a particularly interesting example of domain of dependence, which is used
in computations below. The remarkable property of this example is that, for only one 
linear part of the holonomy, there is a four-dimensional space of possible translation
components. 

\subsubsection{The construction of the group}
\label{ssc:group}

The following example is essentially due to Apanasov \cite{apanasov:deformations}.
It is a discrete group $\Gamma$ of $Isom(\mathbb H^3)$ generated by $8$ reflections, so 
that the quotient $\mathbb H^3/\Gamma$ is a (non-orientable) orbifold of finite volume.

On $S^2_\infty=\mathbb C\cup\{\infty\}$ we consider the following circles
\begin{enumerate}
\item $C_k$ with center at $z_k=\sqrt{3}e^{i\frac{k\pi}{3}}$ and radius $1$.
\item $C$ with center at $0$ and radius $1$.
\item $C'$ with center at $0$ and radius $2$.
\end{enumerate}

\begin{figure}[ht]
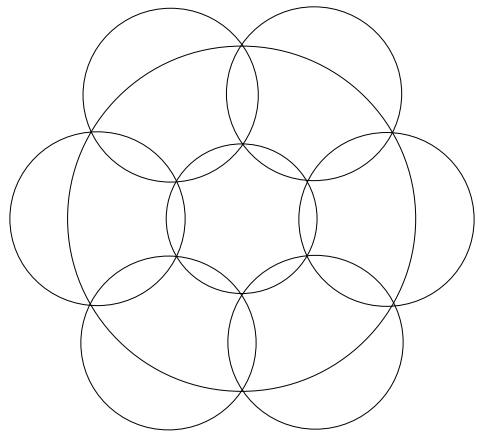

\begin{center}
\input group.pstex_t
\caption{Construction of the linear part of the holonomy}
\end{center}
\end{figure}

It can be shown easily that the configuration of such circles is the one shown in the picture.
Moreover the angle formed by any two circles in the list (that meet each other) is $\pi/3$.

We consider the planes $P_k, P, P'$ in $\mathbb H^3$ that bound at infinity the circles $C_k$, $C$ and $C'$.
We denote by $\gamma_k$ the reflection along $P_k$, by $\gamma$ the reflection along  $P$ and by
$\gamma'$ the reflection along $P'$.

\begin{prop}
The group $\Gamma$ generated by $\gamma_k, \gamma,\gamma'$ is a discrete group in
$Isom(\mathbb H^3)$ and the quotient $\mathbb H^3/\Gamma$ is a non-orientable orbifold of finite volume.
\end{prop}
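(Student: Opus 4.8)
The plan is to realize $\Gamma$ as the reflection group of a Coxeter polyhedron and to apply the Poincar\'e polyhedron theorem. Let $\Delta\subset\mathbb{H}^3$ be the chamber bounded by the eight hyperplanes $P_k$ ($k\in\Z/6\Z$), $P$ and $P'$ --- concretely the region visible in the picture, lying inside $P'$, outside $P$, and outside each $P_k$ --- so that its eight faces are the portions of these hyperplanes on $\partial\Delta$. Since any two of the circles $C_k,C,C'$ that meet do so at angle $\pi/3$, the corresponding hyperplanes, whenever they intersect, meet along a complete geodesic with dihedral angle $\pi/3$; the only disjoint pair is $P,P'$, because $C$ and $C'$ are concentric. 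From the incidence pattern of the circles ($C$ and $C'$ are each adjacent to all six $C_k$ but not to each other, and $C_k$ is adjacent to $C_{k\pm 1}$, $C$, $C'$) one reads off that $\Delta$ has $8$ faces, $18$ edges, and $12$ vertices, each vertex being a point where exactly three of the hyperplanes meet; all of these are the triple points of the circle configuration and hence lie at infinity (six on $C$, of the form $C\cap C_k\cap C_{k+1}$, and six on $C'$). At each such ideal vertex the three incident dihedral angles equal $\pi/3$, so the horospherical link is the compact Euclidean $(3,3,3)$ triangle orbifold.

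I would then check the hypotheses of the Poincar\'e polyhedron theorem for $\Delta$. The face-pairing transformations are the reflections $\gamma_k,\gamma,\gamma'$, each identifying a face of $\Delta$ with itself; along every edge exactly two faces meet, at angle $\pi/3$, so each edge cycle has length one and closes up precisely when the product of the two corresponding reflections has order $3$; and the cross-section at each ideal vertex is a complete (indeed compact) Euclidean orbifold, which settles the condition at the cusps. The theorem then yields simultaneously that $\Gamma$ is discrete, that $\Delta$ is an exact fundamental domain for the action of $\Gamma$ on $\mathbb{H}^3$, and that $\Gamma$ admits the Coxeter presentation with generators $\gamma_k,\gamma,\gamma'$, relations $\gamma_k^2=\gamma^2=\gamma'^2=1$, the relations $(st)^3=1$ for every pair $s,t$ of generators whose defining circles meet, and no relation between two generators whose circles are disjoint.

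Finiteness of the volume of $\mathbb{H}^3/\Gamma$ is then immediate: since $\Delta$ is a fundamental domain it suffices that $\vol(\Delta)<\infty$, and a polyhedron bounded by finitely many hyperplanes all of whose vertices are ideal has finite volume as soon as every horospherical vertex link is a compact Euclidean $2$-orbifold --- here the $(3,3,3)$ orbifold at each of the twelve vertices. Non-orientability is also clear, since $\Gamma$ is generated by reflections and hence is not contained in $\Iso^+(\mathbb{H}^3)$: the quotient carries a nonempty reflector locus (the images of the faces of $\Delta$) and is therefore a non-orientable orbifold. (Alternatively, one recognizes the configuration of the eight circles as, up to a M\"obius transformation, part of a standard fundamental configuration for a Bianchi lattice attached to $\mathbb{Q}(\sqrt{-3})$, commensurable with $\mathrm{PGL}_2(\mathbb{Z}[\omega])$, $\omega=e^{i\pi/3}$; the finiteness of the covolume for this specific example is also treated in \cite{apanasov:deformations}.)

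The main obstacle is the careful combinatorial-geometric bookkeeping underpinning the description of $\Delta$: one must confirm directly from the configuration that the eight circles bound a common chamber, identify its faces, $18$ edges and $12$ ideal vertices exactly as the circle adjacencies predict, verify that the relevant dihedral angles are $\pi/3$ rather than $2\pi/3$, and exclude accidental tangencies or higher incidences among the eight circles. These are elementary but delicate checks in the conformal model; once the picture of $\Delta$ is secured, discreteness, the Coxeter presentation and the finiteness of the covolume all follow mechanically from Poincar\'e's theorem.
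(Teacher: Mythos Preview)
The paper does not give its own proof of this proposition; it attributes the example to Apanasov \cite{apanasov:deformations}, states the result, and then simply records the fundamental region
\[
   K=\bigcap_{k=0}^5E_k\,\cap\, E\,\cap\, I'~.
\]
Your polyhedron $\Delta$ is exactly this $K$, and your argument via the Poincar\'e polyhedron theorem is the standard and correct way to establish discreteness, the fundamental-domain property, and the Coxeter presentation. The combinatorial checks you outline (pairwise adjacencies $C_k\sim C_{k\pm 1}$, $C_k\sim C$, $C_k\sim C'$, $C\not\sim C'$; $18$ edges; the $12$ triple points $e^{i(\pi/6+k\pi/3)}$ on $C$ and $\sqrt{3}\,e^{ik\pi/3}\pm i\,e^{ik\pi/3}$-type points on $C'$; the $(3,3,3)$ Euclidean links) are all straightforward to verify from the explicit centers and radii, and your acknowledgement that these checks are the only place requiring care is accurate. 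In short, your proof supplies precisely the argument the paper omits, by the expected route.
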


A fundamental region for the action of $\Gamma$ can be  obtained as follows. 
Denote for any $k$ by  $E_k$ the exterior of the plane $P_k$, e.~g.~the region of $\mathbb H^3\setminus P$ that
contains $\infty$. Analogously, denote by $E$ the exterior of $P$ and  by $I'$  the interior region
bounded by $P'$. Then a fundamental region for $\Gamma$ is given by
\[
   K=\bigcap_{k=0}^5E_k\,\cap\, E\,\cap\, I'\,. 
\] 

As we are interested in  $\mathbb R^{3,1}$-valued cocycles, we need to determine
explicitly the matrices in $O(3,1)^+$ corresponding to the transformations $\gamma_k, \gamma,\gamma'$.

In $\mathbb R^{3,1}$ we consider coordinates $x_0,x_1,x_2,x_3$ so that the Minkowski metric takes the form  $-dx_0^2+dx_1^2+dx_2^2+dx_3^2$. Given two real numbers $v_0,v_1$ and a complex number $z=x+iy$,  we denote by $(v_0,v_1,z)$ the point $(v_0,v_1,x,y)$ in $\mathbb R^{3,1}$.

We have to fix explicitly an isometry between the half-space model of
$\mathbb H^3$ (denoted by $\Pi$ here) and the hyperboloid
model  denoted by $\mathbb H^3$. Such an isometry
$\phi:\Pi\rightarrow\mathbb H^3$ is given by
\[
   \phi(i)=(1,0,0,0)\qquad\qquad
   \phi_{*,i}(a\frac{\partial}{\partial x}+b\frac{\partial}{\partial y}+c\frac{\partial}{\partial}z)
   =(0,c,a,b)\,.
\]
With this choice,  the plane $P_k$ is  given by
$P_k\mathbb H^3\cap v_k^\perp$, where 
\[
   v_k=(3/2, 1/2, \sqrt{3} e^{i\frac{k\pi}{3}})
\]
is a unit vector.
Analogously, we obtain  $P=\mathbb H^3\cap v^\perp$ and $P'=\mathbb H^3\cap (v')^\perp$ where
\[
  v=(0,1,0,0)\qquad v'=(3/4,5/4,0,0)\,.
\]
The associated transformations in $O^+(3,1)$ then take the form
\[
 \gamma_k(x)=x-2\langle x,v_k\rangle v_k,\qquad\gamma(x)=x-2\langle x, v\rangle v
\qquad\gamma'(x)=x-2\langle x,v'\rangle v'\,.
\]

\subsubsection{The construction of the cocycle}
\label{ssc:cocycle}

\begin{figure}
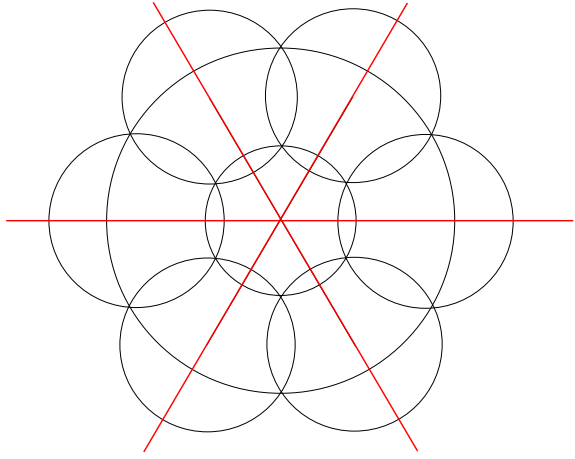

\begin{center}
\input group2.pstex_t
\caption{Construction of the translation cocycle}
\end{center}
\end{figure}

In $S^2_\infty=\mathbb C\cup\{\infty\}$ we consider the three lines through $0$ passing though the centers
of the circles $C_k$ and denote by  $W$ be the union of the  planes in $\mathbb H^3$ bounding these lines.
Clearly,  $W$ is the union of $6$ half-planes which meet  along the geodesic $l_0$ joining $0$ to $\infty$.
We denote these half-planes by  $W_0,W_1, W_2, W_3, W_4, W_5$ where the indices correspond to the ones of the circles in the obvious way.
Let $w_k\in\mathbb R^{3,1}$ be the vector orthogonal to $W_k$ and pointing towards $W_{k+1}$.
A direct computation then shows that $w_k$ is given by
\[
   w_k=(0,0,ie^{i\frac{k\pi}{3}})\,.
\]
Note that $w_{k+3}=-w_k$,  where the index $k$ is considered $\textrm{mod} 6$.

Now the $\Gamma$-orbit of $W$ is a  branched-surface in $\mathbb H^3$.
In particular, the sets  $\hat W=\Gamma\cdot W$ and $\hat W_0=\Gamma\cdot l_0$  have the following properties:
\begin{itemize}
\item
$\hat W_0$ is a disjoint union of geodesics.
\item Every connected component of $\hat W\setminus\hat W_0$ is a convex 
polygon (with infinitely many edges) and every edge is an element of $\hat W_0$.
\item There are exactly six faces up to the action of $\Gamma$. Indeed,  let $F_k$ be the face of $\hat W$
bounding $l_0$ and contained in $W_k$. Then the orbits of $F_0,\ldots, F_5$ are disjoint and
cover $\hat W$.
\item
Every connected component of $\mathbb H^3\setminus\hat W$ is a convex polyhedron.
\end{itemize}

These properties can be proved by considering the  intersection of $W$ with $K$ and using the fact that  $W$ is orthogonal
to the faces of $K$.

We can then  use a general construction explained in \cite{bonsante} to obtain non-trivial cocycles.
Given six  numbers $a_i$ such that
\begin{equation}\label{eq1}
   \sum_{i=0}^5  a_i w_i=0,
\end{equation}
we  obtain a cocycle via the following prescription.
We associate to the face $F_i$ the number $a_i$. In this way a number $a(F)$ is associated to
every face $F$ by requiring that  $a(\alpha(F))=a(F)$ for every $\alpha\in\Gamma$. 

Then we fix a basepoint $x_0$ in $\mathbb H^3$ that does not lie in $\hat W$.
Given a transformation $\alpha\in\Gamma$ we construct a vector in $\mathbb R^{3,1}$ 
in the following way:
we take any path $c$ joining $x_0$ to $\alpha(x_0)$ and avoiding $\hat W_0$. 
The path $c$ intersects some faces $F^1,\ldots, F^n$. We consider the unit vector $w^j\in\mathbb R^{3,1}$
orthogonal to $F^j$ and pointing  towards $\alpha(x_0)$ and set
\[
     \tau(\alpha)=\sum_{j=1}^na(F^j)w^j\,.
\]

It can be easily checked that
\begin{itemize}
\item $\tau(\alpha)$ does not depend on the path $c$ (this essentially follows from (\ref{eq1})).
\item $\tau$ is a $\mathbb R^{3,1}$-valued cocycle 
\item changing the basepoint changes $\tau$ by a coboundary.
\end{itemize}
Let $H\subset \mathbb R^6$ be the subspace  of solutions of (\ref{eq1}), which is of dimension  $\dim H=4$.
From \cite{bonsante} we have that the map
\[
   H\rightarrow H^1(\Gamma,\mathbb R^{3,1})
\]
is injective.
This map can be computed explicitly.
More precisely,  we fix the base point $x_0$ in the region of $K$ between  $W_0$ and $W_1$.

Given a set of  numbers $a_0,a_1,a_2,a_3, a_4, a_5$ that satisfy (\ref{eq1}) we compute
 the corresponding cocycle $\tau$ evaluated  on the generators.
This yields
\[
\begin{array}{l}
\tau(\gamma)=\tau(\gamma')=\tau(\gamma_0)=\tau(\gamma_1)=0\\
\tau(\gamma_2)=w_1-\gamma_2 w_1=3a_1v_2\\
\tau(\gamma_3)=w_1-\gamma_3 w_1+ w_2-\gamma_3w_2=3(a_1+a_2)v_3\\
\tau(\gamma_4)=-w_0+\gamma_4 w_0-w_5+\gamma_4 w_5=-3(a_0+a_5)v_4\\
\tau(\gamma_5)=-w_0+\gamma_5 w_0=-3a_0v_5\,.
\end{array}
\]

\subsection{Computations}

The computations  of the frequency functions were limited by the speed of the available computers, and more complete computations could provide better results. As in 
dimension $2+1$,  we constructed a domain of dependence in $\R^{3,1}$, invariant under
the group actions described above via  the construction in Section \ref{ssc:reconstructing_holo}.
However,  the computations are much heavier in dimension $3+1$, so we only considered the 
elements of the fundamental group in a ball of radius $4$.

Although we only considered one linear part of the holonomy --- the one in Section \ref{ssc:group} ---
we worked with two deformation cocycles, one corresponding to weights $(1,0,0,1,0,0)$ as described
in Section \ref{ssc:cocycle}, the other to the weights $(1,1/2,0,1,1/2,0)$. In both cases,  the 
observer was located at the point of coordinates $(50,0,0,0)$. This is a somewhat arbitrary choice,
made after trying different possibilities, which leads to interesting pictures.

The frequency function measured by the observer for the first choice of cocycle is presented in Figure \ref{fig:intensity1}, with
different colors encoding different values of the frequency. 

\begin{figure}[ht]
  \begin{center}
  \includegraphics[width=8cm]{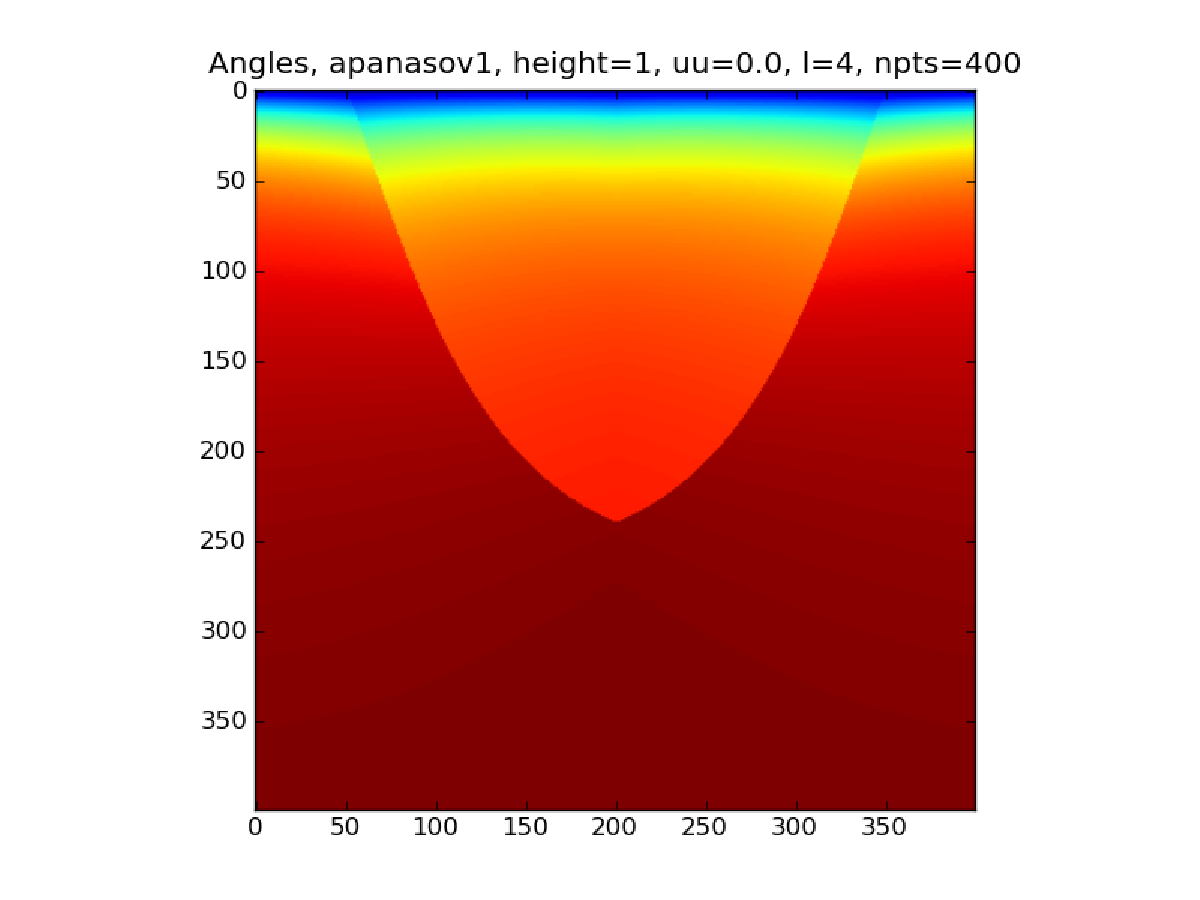}
  \includegraphics[width=8cm]{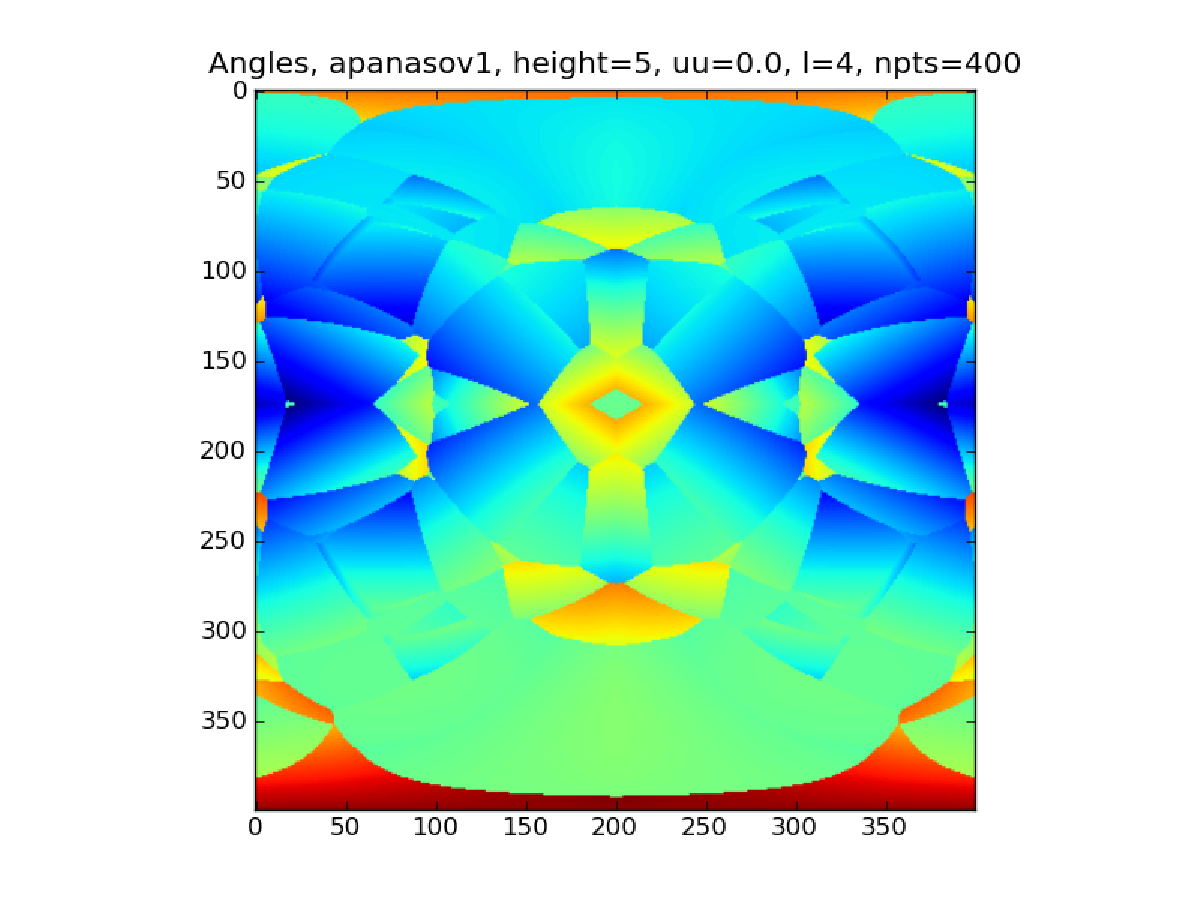}
  \includegraphics[width=8cm]{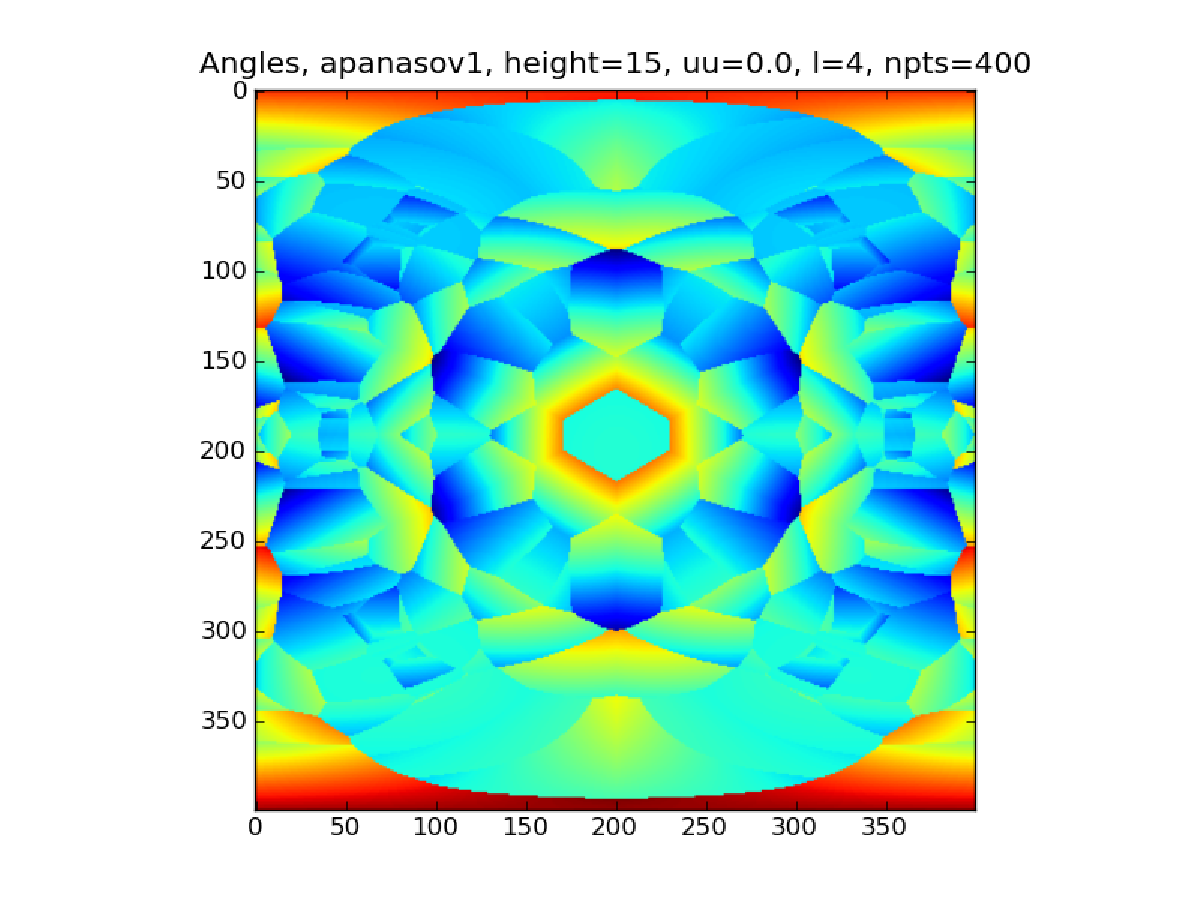}
  \includegraphics[width=8cm]{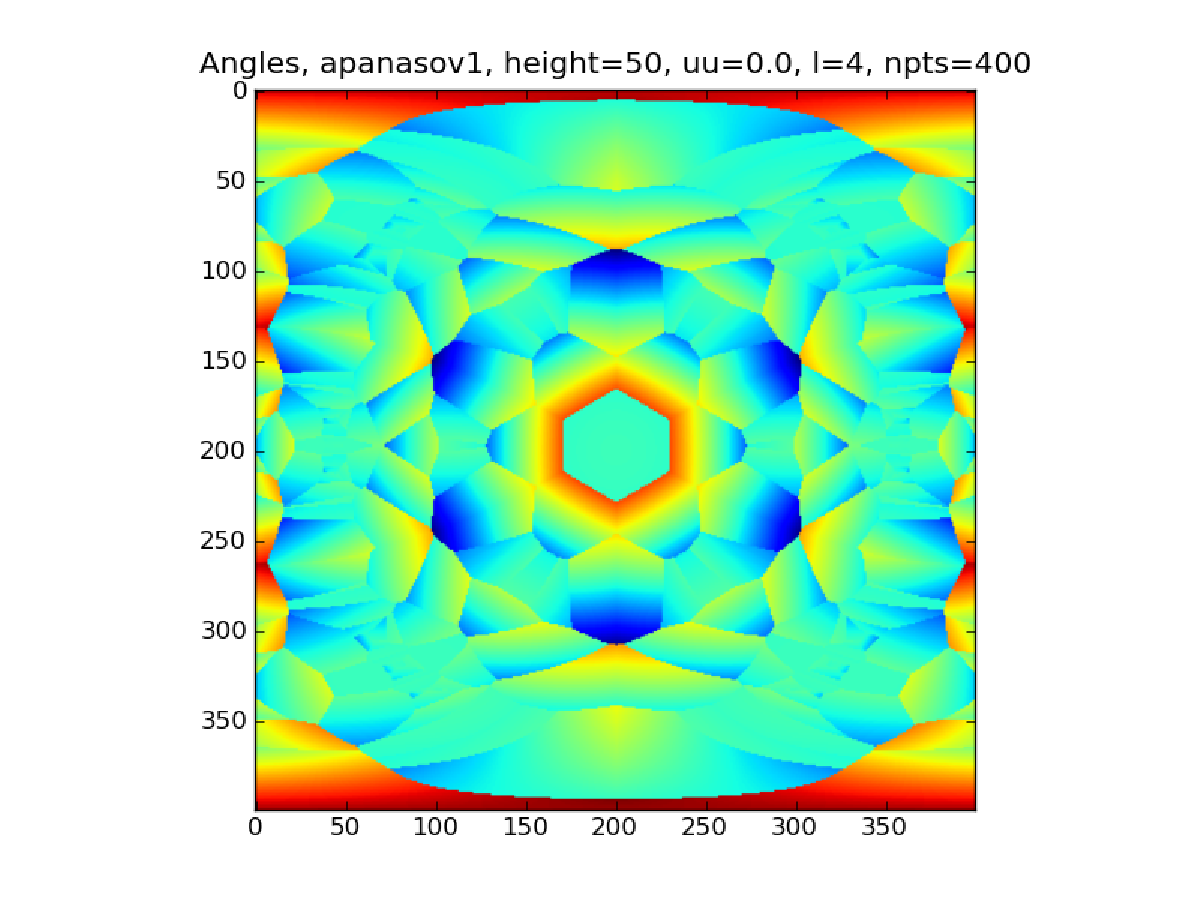}
  \caption{Computed frequency function, with translation coefficients $(1,0,0,1,0,0)$, 
for observers at an increasing distance from the initial singularity}
  \label{fig:intensity1}
  \end{center}
\end{figure}
 
\begin{figure}[ht]
  \begin{center}
  \includegraphics[width=8cm]{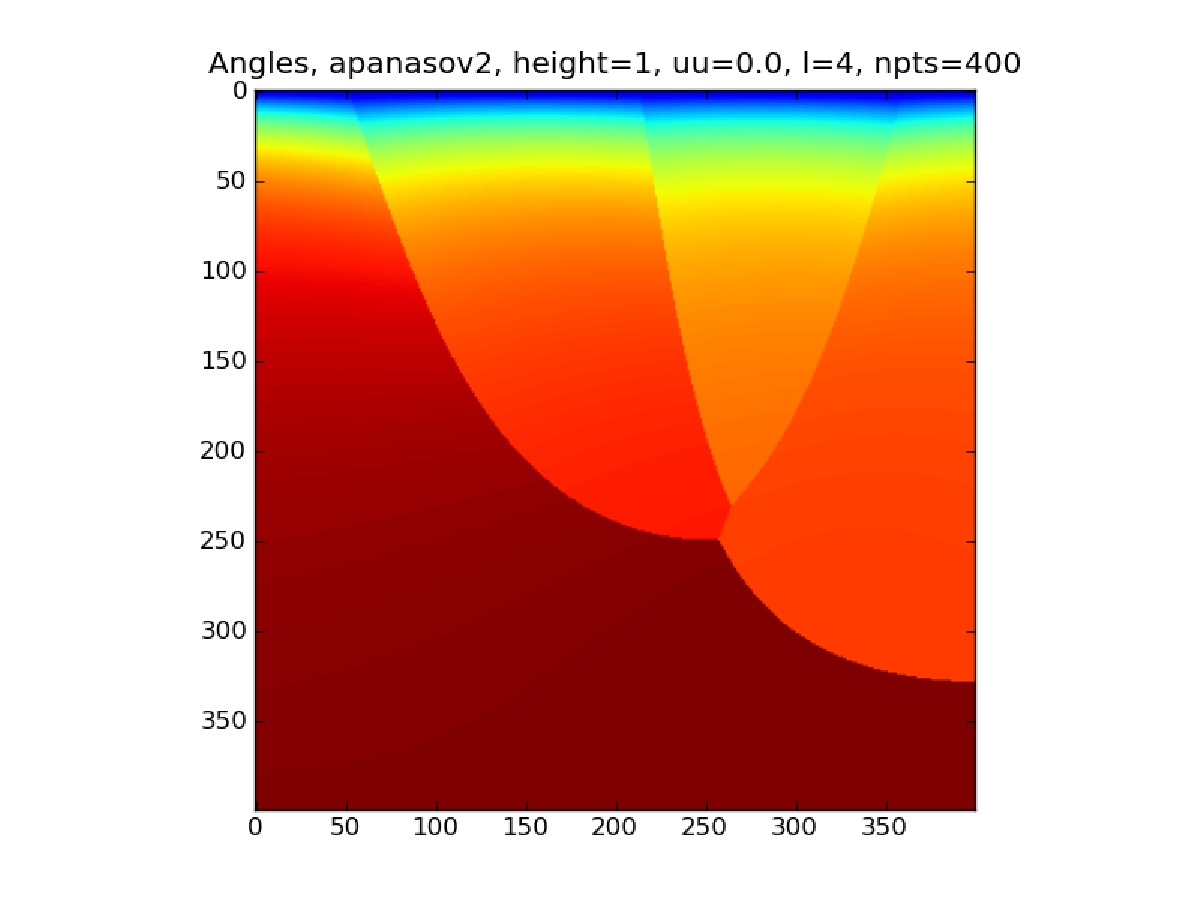}
  \includegraphics[width=8cm]{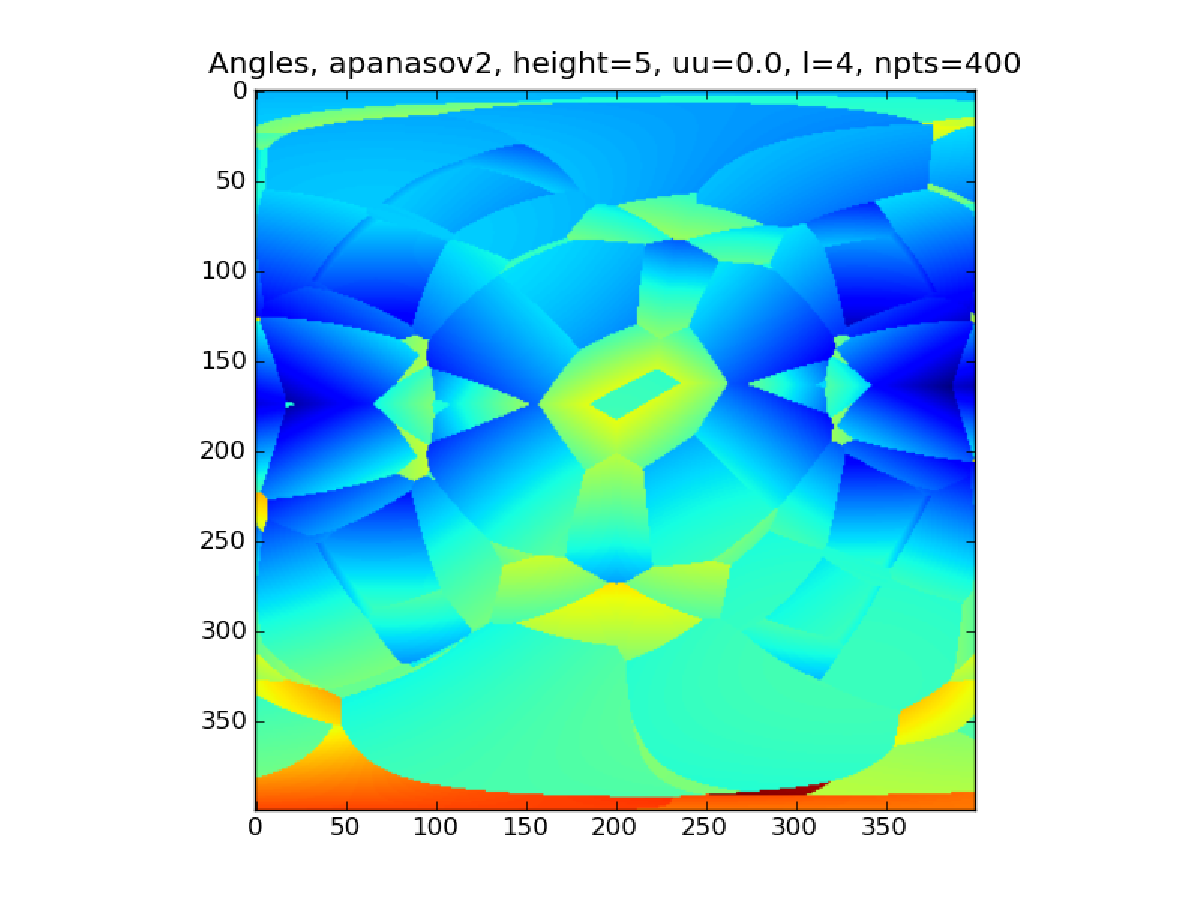}
  \includegraphics[width=8cm]{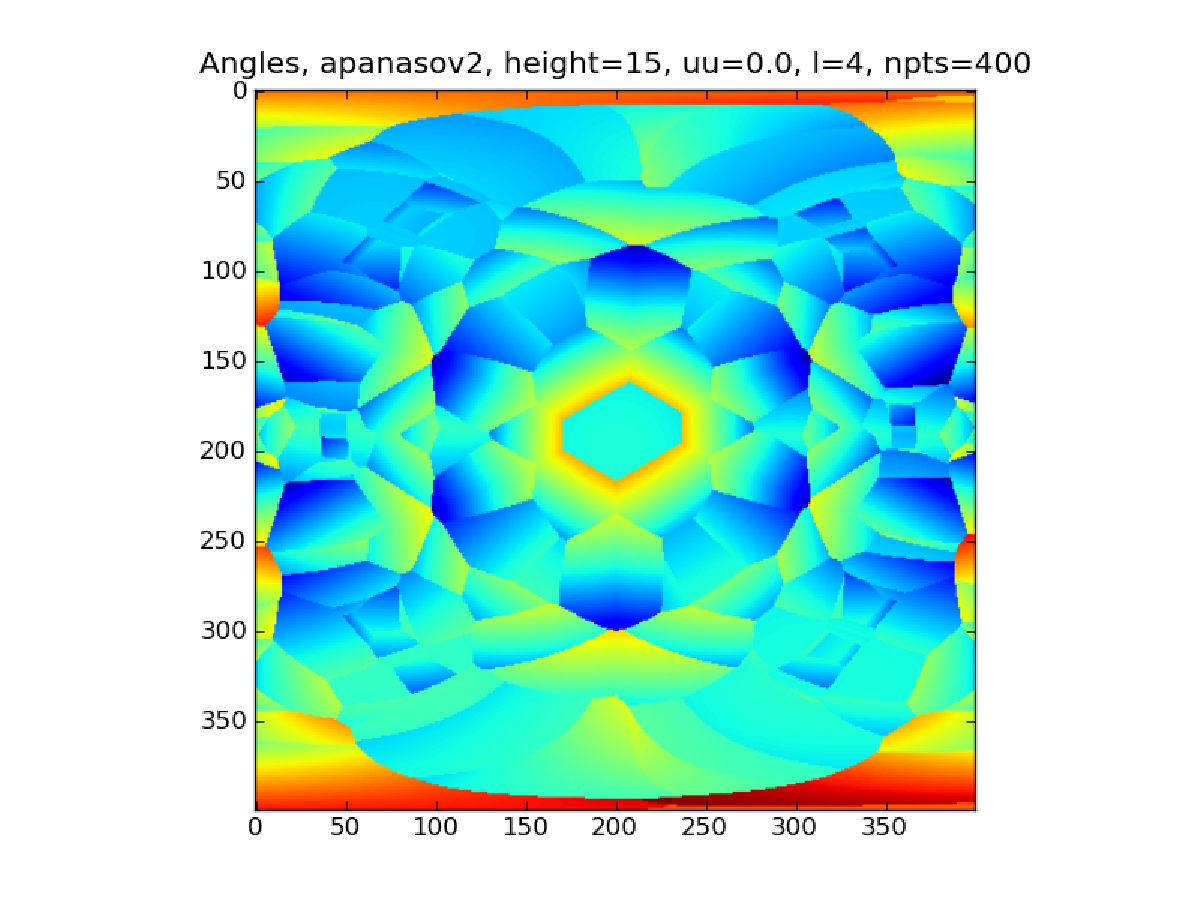}
  \includegraphics[width=8cm]{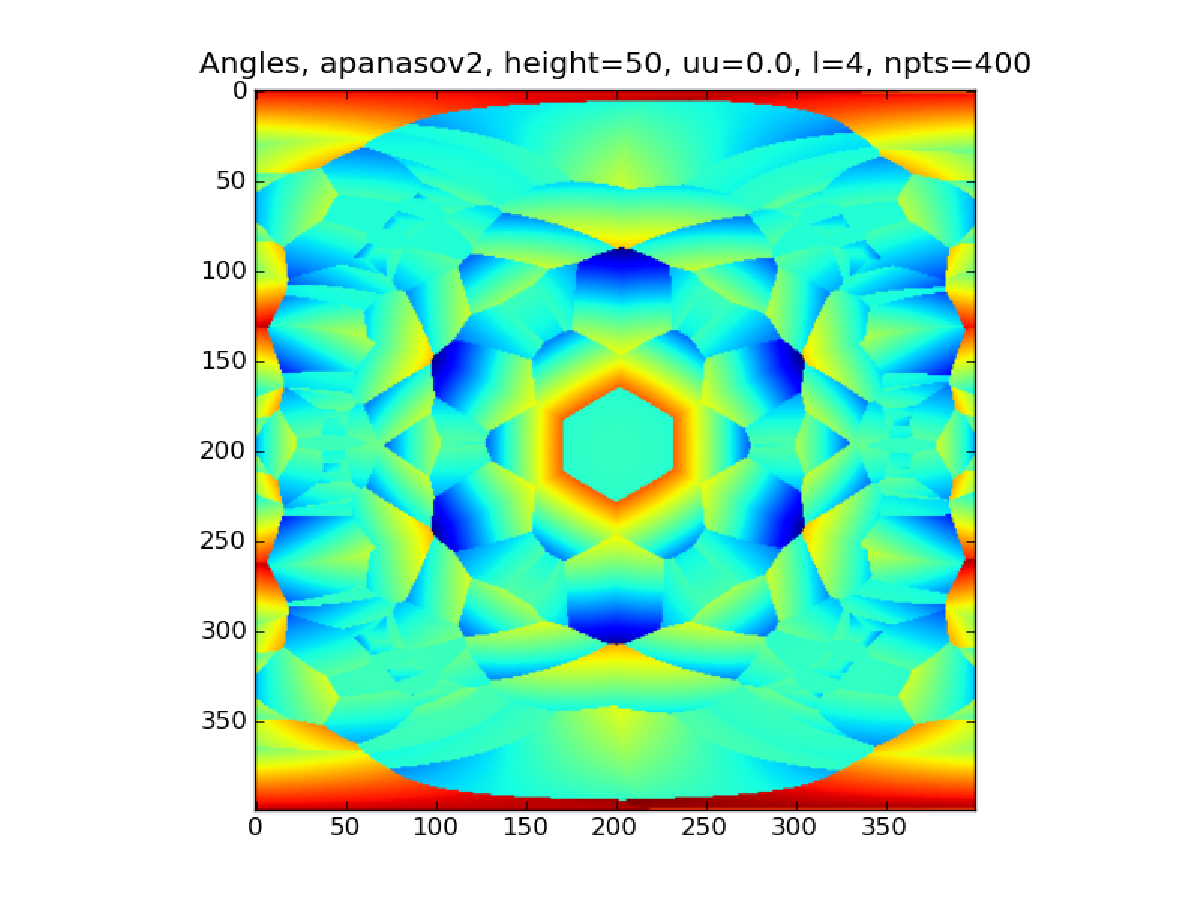}
  \caption{Computed frequency function, with translation coefficients $(1,1/2,0,1,1/2,0)$, 
for  observers at an increasing distance from the initial singularity}
  \label{fig:intensity2}
  \end{center}
\end{figure}

It should be noted that those results are  less certain that those obtained in dimension
$2+1$. This is due to the fact that we compute  the limit frequency function for a decreasing sequence
of finite domains of dependence approximating the domain under examination. In dimension $2+1$,
Theorem \ref{stability:thrm} and Proposition \ref{pr:2+1flat} indicate that the limit frequency function
is the frequency function of the limit, if the limit is the universal cover of a MGHFC spacetime. 
However, in dimension $3+1$, we only know by Theorem \ref{tm:non-flat} that the frequency function
of the limit is at least equal to the limit frequency function, and at most equal to three time the
limit frequency function. 
So the frequency functions computed here are  the limit frequency
(which is a well-defined notion for any domain of dependence, see Theorem \ref{tm:non-flat})
which differs from the ``real'' frequency function by a factor at most three.

The limit frequency function computed for the second choice of cocycle is depicted in Figure \ref{fig:intensity2}.
It is apparent  how the less symmetric cocycle leads to a distortion in the picture. The 
symmetry of degree six,  which is present in the linear part of the holonomy,  is readily apparent  in
Figure \ref{fig:intensity1}. In Figure \ref{fig:intensity2} it remains visible, but with differences in the size
of the corresponding parts of the picture. 

Even for this fairly simple example, it would be interesting to perform more powerful and complete computations,
for instance by computing the domain of dependence with all elements of the fundamental
group in a ball of radius larger than 4. It is conceivable that one would obtain somewhat
different pictures. Additionally, the picture should vary with the
position of the observer.  It should be simpler for an observer close to the initial
singularity, but become increasingly complex  as the observer moves away 
from it.

\bibliographystyle{amsplain}
\bibliography{/home/schlenker/papiers/outils/biblio}
\end{document}